\documentclass[11pt, twoside]{article}

\usepackage[english]{babel}

\usepackage{amssymb}
\usepackage{amsmath}
\usepackage{amsthm}
\usepackage{mathrsfs}
\usepackage{titletoc}
\usepackage{color}
\usepackage{algorithm}
\usepackage{algorithmic}
\usepackage{graphicx}
\usepackage{subfigure}
\usepackage{booktabs}
\usepackage{tabularx}
\usepackage{multirow}
\usepackage{diagbox}
\usepackage{txfonts}
\usepackage{indentfirst}
\usepackage{anysize}

\usepackage{latexsym}

\usepackage[colorlinks=true,
  linkcolor=blue,
  citecolor=red,
  urlcolor=magenta,
  ]{hyperref}

\allowdisplaybreaks

\newtheorem{theorem}{Theorem}[section]
\newtheorem{lemma}[theorem]{Lemma}
\newtheorem{corollary}[theorem]{Corollary}
\newtheorem{proposition}[theorem]{Proposition}
\theoremstyle{definition}
\newtheorem{remark}[theorem]{Remark}

\newtheorem{definition}[theorem]{Definition}
\renewcommand{\appendix}{\par
   \setcounter{section}{0}%
   \setcounter{subsection}{0}%
   \setcounter{subsubsection}{0}%
   \gdef\thesection{\@Alph\c@section}%
   \gdef\thesubsection{\@Alph\c@section.\@arabic\c@subsection}%
   \gdef\theHsection{\@Alph\c@section.}%
   \gdef\theHsubsection{\@Alph\c@section.\@arabic\c@subsection}%
   \csname appendixmore\endcsname
 }

\numberwithin{equation}{section}

\begin{document}

\arraycolsep=1pt

\title{\bf\Large
Stable Image Reconstruction via Two-Parameter Power-Scale Variation
Minimization \footnotetext{\hspace{-0.35cm} 2020 {\it Mathematics Subject Classification}.
Primary 94A08; Secondary 90C31, 90C26, 94A15.
\endgraf {\it Key words and phrases.}
image reconstruction, power-scale variation, compressed sensing, restricted isometry property,
iteratively re-weighted least square.
\endgraf This project is partially supported by the National
Natural Science Foundation of China (Grant Nos. 12431006, 12371093, 12401116, and 12371094),
the Beijing Natural Science Foundation (Grant No. 1262011), and
the Fundamental Research Funds for the Central Universities (Grant No. 2253200028).}}
\author{Ziwei Li, Wengu Chen, Huanmin Ge, Limei Huo and Dachun
Yang\footnote{Corresponding author,
E-mail: \texttt{dcyang@bnu.edu.cn}/{\color{red}\today}/Final version.}
}
\date{}
\maketitle

\vspace{-0.8cm}

\begin{center}
\begin{minipage}{13cm}
{\small{\textbf{Abstract}}\quad In this article,
we introduce a power-scale variation (PSV$_{a,p}$)
with two tunable parameters: the sparsity-inducing exponent
$p\in(0,1]$ and the  scaling factor $a\in(0,\infty)$.
By minimizing the PSV$_{a,p}$, we establish stable reconstructions
in both the gradient and the image domains under the restricted isometry
property (RIP) framework. Furthermore, we design an
iteratively re-weighted least squares algorithm
IRLSPSV to solve the unconstrained
PSV$_{a,p}$ minimization. Numerical experiments
demonstrate its superior performance and broad applicability.
The main novelties are: (i) the PSV$_{a,p}$ minimization enjoys great
flexibility and wide applicability due to its two tunable
parameters $a$ and $p$,
(ii) as $a\to\infty$, the PSV$_{a,p}$ minimization reduces to the $p$-th
power total variation (TV$_p$) minimization and, even in this
limiting case, the established RIP condition for
image reconstruction is also new,
(iii) the derived RIP upper bound $\overline{\delta}$ is proved
to be asymptotically optimal in $a$ for gradient recovery,
(iv) sensitivity analysis confirms the distinct roles of $a$ and $p$,
thereby motivating a practical parameter tuning
scheme for the proposed model.
}
\end{minipage}
\end{center}

\vspace{0.2cm}

\tableofcontents

\section{Introduction}

Digital image reconstruction serves as a fundamental computational  pillar​ across diverse fields
ranging from medical imaging and remote sensing
to computational photography and autonomous systems.
However, reconstructions in the real world are often challenged by noise, incomplete measurements,
or ill-posed inverse problems, making stable and efficient solutions highly desirable.
To address these challenges,  the seminal
\emph{total variation (TV)} minimization model
proposed  by Rudin, Osher, and Fatemi \cite{ROF}
has paved the way for modern image processing.
Its strength lies in strong edge preservation​ and the promotion
of piecewise-constant solutions.
However, images reconstructed by TV method often suffer from staircasing artifacts.
To suppress them, high-order TV models have been extensively studied by
incorporating a higher-order regularization term into the original TV minimization model;
see, for example, \cite{BKP10,CMM01,LLCS17,SS08,TPHD20,YSS08}. Other methods such as
mean curvature regularization (\cite{ZC12}),
fractional or weighted Laplacian regularization (\cite{SX24,XY25}), and fractional  function space regularization
(\cite{XY26}) are also effectively
employed to image processing.

In parallel,
theoretical advances in multidimensional signal representations,
such as sampling theorems and fractional Fourier analysis
(\cite{asf17,ccls,s14,Z18,Z19,Z24}),
have provided important mathematical foundations for image reconstruction.
Beyond spatial regularity, recent efforts have also explored structured priors for multidimensional images, such as adaptive frame-based image denoising (\cite{bshyh2007,SHB16}),
low-rank representation (\cite{DZYZN25,HL25,lxhljml,tlzphjl,ZZWL26}),
fully-connected tensor network decomposition with group sparsity (\cite{TLZHJN25}),
patch-based tensor logarithmic Schatten-$p$ minimization
(\cite{LHLXJZ24}),
and generalized opponent-transformation total variation (\cite{MN25}).

Next, we focus on image reconstruction in the framework of compressed sensing.

\subsection{Image Reconstruction via Compressed Sensing}

It is well known that compressed sensing (CS) offers a powerful approach
for stable reconstruction from undersampled linear measurements
(see \cite{CRT06a, CRT06b, CT05,D06}).

In CS,  a vector $x\in\mathbb{R}^N$ is said to be \emph{$s$-sparse} for some positive integer $s$ if
$
\|x\|_{\ell_0}:=|\mathrm{supp}(x)| \le s\ll N.
$
To recover  $x$ from a  measurement $y\in\mathbb{R}^K$  characterized by $y=Ax+ \xi$,
a direct method is to search for the $\ell_0$ minimizer over the feasible solution set
$\{z\in\mathbb{R}^N: \|Az-y\|_2\le \epsilon\}$. Here, and thereafter,
$A\in\mathbb{R}^{M\times N}$ is a sensing matrix, $\xi$ denotes the noise,
$\epsilon\in[0,\infty)$ denotes the noise level,
and, for any $p\in(0,\infty)$,  $\|\cdot\|_p$ is defined by setting,
for any $x:=(x_1,\ldots,x_N)\in\mathbb{R}^N$,
$$\|x\|_{p}:=\left(\sum_{i=1}^N |x_i|^p\right)^{\frac1p}.$$
To circumvent the NP-hardness of the $\ell_0$ minimization
(see, for instance, \cite[Section 2.3]{FR13}),
 numerous innovative models have emerged as extensions of the
$\ell_1$ convex relaxation,
 offering diverse advantages in terms of sparsity promotion and robustness.
 For instance, the  $\ell_p$ relaxation with $p\in(0,1]$ (see, for example, \cite{C07,FL09,S12,WDZ15,WC13,ZL18}), the
 transformed $\ell_1$ (TL1) (see ​\cite{ZX17,ZX18}), the $\ell_1-\ell_2$ (see \cite{LYHX15,YLHX15}),
 and a  more general $\ell_1-\beta\ell_q$ framework (see \cite{HCGN23}).
To guarantee the effectiveness of these sparse recovery models,
the \emph{restricted isometry property (RIP)}, introduced by Cand\`{e}s   et al.  in \cite{CT05},
serves as a powerful theoretical framework.
Correspondingly,  deriving tight upper bounds for RIP
conditions is of significant interest; see \cite{CZ14,ZL19} for some related work.

Digital images  primarily consist of pixels with slowly varying intensities except at edges.
Consequently, they exhibit sparsity in the discrete gradient domain
due to the low-dimensionality  of the subset of pixels representing edges.
From the perspective of CS, one can recover the image by
searching for  images with the  sparsest gradients,
and the TV minimization serves as its $\ell_1$ convex relaxation in the gradient domain;
see, for example, \cite{NW13,NW13b,PMGC12,P15}.
To be precise, to reconstruct an $N\times N$ image $X$
from a noisy measurement $y=\mathcal{M}(X)+\xi$,
ones solve the following TV minimization
\begin{equation*}
\min_{Z\in\mathbb{R}^{N\times N}} \left\|Z\right\|_{\mathrm{TV}}
\quad\text{subject to}\ \ \left\|\mathcal{M}(Z)-y\right\|_2\le \epsilon,
\end{equation*}
where $\mathcal{M}$ is a linear measurement operator and $\epsilon\in[0,\infty)$.
Moreover, the \emph{TV semi-norm} here has the following two different definitions:
the \emph{anisotropic} version
$$\|X\|^{(\mathrm{a})}_{\mathrm{TV}}:=\sum_{i,j=1}^N
|(\nabla_1 X)_{i,j}|+|(\nabla_2 X)_{i,j}|,\ \ \forall\, X\in\mathbb{R}^{N\times N}$$
and the \emph{isotropic} version
$$\|X\|^{(\mathrm{i})}_{\mathrm{TV}}:=\sum_{i,j=1}^N
\sqrt{(\nabla_1 X)_{i,j}^2+(\nabla_2 X)_{i,j}^2},\ \ \forall\, X\in\mathbb{R}^{N\times N}. $$
Furthermore, ones proposed the variation-based $\ell_p$ relaxation with $p\in(0,1]$.
We refer to \cite{AM15,C07,SCBP15,YL15}
for the $p$-th power total variation (TV$_p$)
and to \cite{PMLC20,PZLA20} for its  weighted variants.
Other variation-based nonconvex relaxations can also be
found in \cite{LMS16,LLCZWY17}.

In the same spirit, Needell and Ward \cite{NW13,NW13b}
introduced the following RIP adapted to linear operators, which
provides a rigorous theoretical
framework for variation-based image reconstruction,
and using the TV minimization they established the  RIP  conditions
for image reconstruction.

\begin{definition}
Let $\mathcal{M}:\ \mathbb{R}^{N_1\times N_2}\to\mathbb{R}^K$ be a linear operator,
and let $s\in\mathbb{N}$ with $1\le s\le N_1N_2$.
The linear operator $\mathcal{M}$ is said to have the  \emph{restricted isometry property} (RIP)  of order $s$
if there exists $\delta\in[0,1)$ such that,
for any  $s$-sparse matrix $X:=[X_{i,j}]\in\mathbb{R}^{N_1\times N_2}$ (treated as a vector),
\begin{equation}\label{eq-RIP}
(1-\delta)\left\|X\right\|_2^2\le\left\|\mathcal{M}(X)
\right\|_2^2\le(1+\delta)\left\|X\right\|_2^2.
\end{equation}
The smallest $\delta$ satisfying \eqref{eq-RIP}
is called the \emph{restricted isometry constant} (RIC)
and denoted by $\delta_s$.
\end{definition}

Here, and hereafter, for any $p\in(0,\infty)$ and $X:=[X_{i,j}]\in\mathbb{R}^{N_1\times N_2}$,
$$\|X\|_p:=\left(\sum_{i,j}|X_{i,j}|^p\right)^{\frac1p}.$$

Moreover, the work \cite{NW13} has been extended to
the transformed total variation (TTV) in \cite{HCGN22}
and the $L_1-\beta L_q$ framework in \cite{HCGN23}.
In this article, we further extend the works in \cite{NW13}
and \cite{HCGN22}.

\subsection{Motivations}

Very recently, a two-parameter minimization model was 
introduced and applied to sparse signal recovery in \cite{lcgy}, i.e.,
 \begin{equation}\label{eq-unPap}
 \min_{x\in\mathbb{R}^N} \lambda P_{a,p}(x) + \frac{1}{2}\left\|Ax-y\right\|_2^2,
\end{equation}
where $a\in(0,\infty)$, $p\in(0,1]$, $\lambda\in(0,\infty)$  is the regularity parameter,
and the penalty function $P_{a,p}$ is defined by setting,
for any $x:=(x_1,\ldots,x_N)\in\mathbb{R}^N$,
\begin{equation*}
P_{a,p}(x):=\sum_{i=1}^N \rho_{a,p}(x_i):=  \sum_{i=1}^N \frac{(a+1)|x_i|^p}{a+|x_i|^p}.
\end{equation*}
This penalty function $P_{a,p}$ combines a rational damping
mechanism inherited from the TL1 and a fractional power nonlinearity inherited from $\ell_p$
because $P_{a,1}(\cdot)=\|\cdot\|_{\mathrm{TL1}}$ and $\displaystyle\lim_{a\to\infty}P_{a,p}(\cdot)=\|\cdot\|_p^p$,
where $\|\cdot\|_{\mathrm{TL1}}$ denotes the TL1 penalty.

Let us consider a power function $\phi_{a,p}(\cdot):=\frac{a+1}{a}|\cdot|^p$ defined on $\mathbb{R}$
with the same parameters $a$ and $p$.
Through a direct calculation, we obtain
\begin{equation*}
\lim_{t\to 0} \frac{\rho_{a,p}(t)}{\phi_{a,p}(t)}= 1\ \
\text{and}\ \
\lim_{t\to 0} \frac{\rho'_{a,p}(t)}{\phi'_{a,p}(t)}= 1.
\end{equation*}
This  implies that, in \eqref{eq-unPap},
 $P_{a,p}$ provides a  regularization strength
comparable to  $\frac{a+1}{a}\|\cdot\|_p^p$ near the origin,
where $p$ determines the non-convexity and $a$ determines
the multiplicative constant (i.e., scaling).
Moreover, in sparse recovery, the regularization effect
is primarily concentrated on the small-magnitude components
of the variable (i.e., $|x_i|\to 0$).
Accordingly, the parameters $a$ and $p$ play distinct roles in $P_{a,p}$,
and we call $P_{a,p}$ as a \emph{power-scale} penalty function,
the parameter $a$ as the \emph{scaling factor}, and the parameter
$p$ as the \emph{sparsity-inducing exponent}.
Note that the power-scale penalty function $P_{a,p}$
differs from  $\frac{a+1}{a}\|\cdot\|_p^p$
because they provide distinct regularization strength
when the optimization variable is away from zero.
To be precise,  away from zero
$\frac{a+1}{a}\|\cdot\|_p^p$ imposes a persistently strong regularization
whereas $P_{a,p}$ imposes a  gradually saturating one,
which can be seen from
$$
\lim_{t\to \infty} \rho_{a,p}(t)=a+1\ \ \text{and}\ \ \lim_{t\to \infty} \phi_{a,p}(t) =\infty.
$$

In image reconstruction, flat regions of images correspond to  near-zero gradients
while edges correspond to large gradients. This and the advantages of the power-scale
penalty function $P_{a,p}$ motivate us to apply the variation based on $P_{a,p}$
to the image reconstruction.

\subsection{Contributions}

In this article, we first introduce a \emph{power-scale variation (PSV$_{a,p}$)}, characterized by
two parameters: the \emph{scaling factor} $a\in(0,\infty)$ and
the \emph{sparsity-inducing exponent} $p\in(0,1]$.
We consider the following \emph{PSV$_{a,p}$ minimization problem}
\begin{equation*}
\min_{Z\in\mathbb{R}^{N\times N}} \left\|Z\right\|_{\mathrm{PSV}_{a,p}}
\quad\text{subject to}\ \ \left\|\mathcal{M}(Z)-y\right\|_2\le \epsilon,
\end{equation*}
where $\mathcal{M}$ denotes a linear measurement operator,
$y$ denotes a noisy measurement, and
$\epsilon\in[0,\infty)$ denotes the noise level.
It exhibits the following two key features:
\begin{itemize}
\item[(i)] the flexibility imparted by two parameters,
\item[(ii)] a gradually saturating penalty strength for large gradient magnitudes,
thereby facilitating edge preservation.
\end{itemize}
We also introduce the following two different definitions of
$\|Z\|_{\mathrm{PSV}_{a,p}}$, i.e.,
the \emph{anisotropic} version
\begin{equation}\label{eq-def-TTVp}
\|Z\|^{(\mathrm{a})}_{\mathrm{PSV}_{a,p}}:=\|\nabla Z\|^{(\mathrm{a})}_{\mathrm{PS}_{a,p}}
:=\sum_{i,j=1}^N \rho_{a,p} \left((\nabla_1 Z)_{i,j}\right)+\rho_{a,p}
\left((\nabla_2 Z)_{i,j}\right)
\end{equation}
and the \emph{isotropic} version
\begin{equation}\label{eq-def-iTTVp}
\|Z\|^{(\mathrm{i})}_{\mathrm{PSV}_{a,p}}:=
\|\nabla Z\|^{(\mathrm{i})}_{\mathrm{PS}_{a,p}}
:=\sum_{i,j=1}^N \rho_{a,p}\left(\sqrt{|(\nabla_1 Z)_{i,j}|^2+|(\nabla_2 Z)_{i,j}|^2}\right).
\end{equation}
They obviously satisfy $\|\cdot\|^{(\mathrm{i})}_{\mathrm{PSV}_{a,p}}
\le\|\cdot\|^{(\mathrm{a})}_{\mathrm{PSV}_{a,p}}
\le 2\|\cdot\|^{(\mathrm{i})}_{\mathrm{PSV}_{a,p}}$ and,
if $p=1$, as $a\to\infty$ they reduce to
$\|Z\|^{(\mathrm{a})}_{\mathrm{TV}}$
and $\|Z\|^{(\mathrm{i})}_{\mathrm{TV}}$, respectively.

Then, by minimizing the PSV$_{a,p}$, we establish stable reconstructions
in both the gradient and the image domains under the RIP framework.
Furthermore, we design an
iteratively re-weighted least squares algorithm
IRLSPSV to solve the unconstrained
PSV$_{a,p}$ minimization. Numerical experiments
demonstrate its superior performance and broad applicability.

The main novelties of these results include
\begin{itemize}
\item[\rm(i)]
the two parameters of the PSV$_{a,p}$ minimization play
different roles: $a$ determines the scaling while $p$ governs the non-convexity,
which  enhances the great flexibility and the wide applicability
of the proposed model across diverse scenarios,

\item[\rm(ii)] as $a\to\infty$, the PSV$_{a,p}$ minimization reduces to the $p$-th
power total variation (TV$_p$) minimization and, even in this
limiting case, the established RIP condition for
image reconstruction is also new,

\item[\rm(iii)] invoking the sparse convex combination technique,
we prove that the derived RIP upper bound $\overline{\delta}$
in \eqref{e2.14} is asymptotically optimal in $a$
for gradient recovery; to be precise, as $a\to\infty$, the reduced bound $\delta_p$ [see \eqref{eq-deltap}] is proved to be the tightest upper bound for
gradient recovery via the TV$_p$  minimization,

\item[\rm(iv)] sensitivity analysis confirms the distinct roles of $a$ and $p$,
thereby motivating a practical parameter tuning
scheme for the proposed model.
\end{itemize}

\subsection{Organization and Notation}

In Section \ref{sec-theory}, using PSV$_{a,p}$ minimization
we establish the stable recovery in the gradient domain under the RIP framework,
and derive its RIP upper bound.
Furthermore, this RIP upper bound in the limiting case
$a\to\infty$ is proved to be tightest for
gradient recovery using the TV$_p$ minimization.
Finally, invoking the strong Sobolev inequality, we
establish the stable image reconstruction.

Section \ref{sec-alg} is devoted to designing an IRLSPSV algorithm to solve
the unconstrained PSV$_{a,p}$ minimization by
using the combination of a modified  iteratively re-weighted least squares (IRLS) method,
the difference of convex functions (DC) algorithm, and the prime dual (PD) algorithm.
Besides, some related convergence analysis is provided.

In Section \ref{sec-exp}, numerical experiments are conducted to
evaluate the performance of the proposed PSV$_{a,p}$ minimization,
including three scenarios: the natural image reconstruction,
the magnetic resonance imaging (MRI) reconstruction,
and the X-ray computed tomography (CT) reconstruction.
Both  sensitivity experiments and comparative experiments
against the TV, the $L_1-\alpha L_2$, and the TTV methods
are also conducted to evaluate the performance of the proposed PSV$_{a,p}$ model.

Conclusions are given in Section \ref{s5}.

Finally, we make some notational conventions.
Throughout this article, we let $a\in(0,\infty)$ and $p\in(0,1]$ be two given parameters.
For any vector $x\in\mathbb{R}^{N}$ and any $s\in\mathbb{N}$ with $0<s\le N$,
we denote by $x_s$ the \emph{best $s$-term approximation} of $x$, which
keeps the largest $s$ elements in magnitude and lets the remainder be zero;
for any subset $T$ of $\{1,\ldots,N\}$, we use $T^\complement$ to denote its complement, and
we denote by $x_T$ the vector in $\mathbb{R}^{N}$, whose entry coincides with $x$
for the index in $T$ and is zero for the index in $T^\complement$.
Moreover, for any matrix $X\in\mathbb{R}^{N_1\times N_2}$,
we also define $X_s$ and $X_T$ by treating $X$ as a vector.
For any matrix $X:=[X_{i,j}]\in\mathbb{R}^{N_1\times N_2}$, we use $\|X\|_{\infty}$
to denote the entry-wise $\ell_\infty$ norm, i.e.,
$$\|X\|_{\infty}:= \max\{|X_{i,j}|:\ 1\le i \le N_1,\ 1\le j\le N_2\}.$$
For any matrices $X$ and $Y$, their inner product $\langle X, Y\rangle$ is defined by setting
$\langle X, Y\rangle:=\mathrm{trace}(X^T Y)$.
We also use $C$ to represent some positive constant
which is independent of the main parameters involved
and use $A\lesssim B$ to represent $A\leq CB$.
In all proofs we consistently retain the notation introduced
in the original theorem (or related statement).

\section{Stable Image Reconstruction Using PSV$_{a,p}$ Minimization }\label{sec-theory}

In this section, we establish the stable recovery in both the gradient
and the image domains, which consists of three subsections.
Subsection \ref{subsec-pre} contains some basic notions and lemmas.
The main results are given in Subsection \ref{subsec-Ima-re}, while
their proofs are presented in Subsection \ref{subsec-proof}.

\subsection{Preliminaries}\label{subsec-pre}

This subsection contains some basic knowledge
and several technique lemmas for image reconstruction.
We begin with some properties of the  function $\rho_{a,p}$,
which are a part of \cite[Lemma 2.4]{lcgy}.

\begin{lemma}\label{lem-TLp,prop}
Let $a\in(0,\infty)$ and $p\in(0,1]$. Then the following assertions hold.
\begin{enumerate}
\item[\rm(i)] $\rho_{a,p}$ is strictly increasing and concave in $[0,\infty)$
with
$\rho_{a,p}(0)=0$ and $\rho_{a,p}(1)=1$.

\item[\rm(ii)] For any $t\in\mathbb{R}$,
$\rho_{a,p}(t)\le \frac{a+1}{a}|t|^p$.

\item[\rm(iii)]
$|t|^p\le\rho_{a,p}(t)\le1$ if and only if $|t|\le 1$.

\item[\rm(iv)] For any $t_1,t_2\in\mathbb{R}$,
\begin{align*}
|\rho_{a,p}(t_1)-\rho_{a,p}(t_2)|&\le
\rho_{a,p}(t_1+t_2)\le\rho_{a,p}(|t_1|+|t_2|)\le \rho_{a,p}(t_1)+\rho_{a,p}(t_2)\\
&\le2 \rho_{a,p}\left(\frac{|t_1|+|t_2|}{2}\right).
\end{align*}
\end{enumerate}
\end{lemma}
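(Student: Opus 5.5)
Every assertion of Lemma \ref{lem-TLp,prop} reduces to the one-variable structure $\rho_{a,p}(t)=g(|t|^p)$, where
$$g(u):=\frac{(a+1)u}{a+u},\qquad u\in[0,\infty).$$
This $g$ is the transformed-$\ell_1$ profile. Direct differentiation gives $g'(u)=\frac{a(a+1)}{(a+u)^2}>0$ and $g''(u)=-\frac{2a(a+1)}{(a+u)^3}<0$, so $g$ is strictly increasing and concave on $[0,\infty)$ with $g(0)=0$.

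\textbf{Assertion (i).} The map $t\mapsto t^p$ is strictly increasing and concave on $[0,\infty)$ for $p\in(0,1]$. Composing the increasing concave $g$ with a concave increasing inner function gives a concave function, and monotonicity is clear. The normalization $\rho_{a,p}(1)=\frac{a+1}{a+1}=1$ is immediate.

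\textbf{Assertion (ii).} Since $a+|t|^p\ge a$, we get $\rho_{a,p}(t)\le\frac{a+1}{a}|t|^p$.

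\textbf{Assertion (iii).} Set $u=|t|^p$. Then
$$g(u)-u=\frac{u(1-u)}{a+u},$$
which is nonnegative exactly when $u\le1$, i.e.\ $|t|\le1$. Also $g(u)\le1$ if and only if $(a+1)u\le a+u$, i.e.\ $u\le1$. Hence both inequalities $|t|^p\le\rho_{a,p}(t)\le1$ hold precisely when $|t|\le1$, which gives both directions of the equivalence.

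\textbf{Assertion (iv).} The key property is subadditivity of $h(s):=\rho_{a,p}(s)$ on $[0,\infty)$. A concave function with $h(0)=0$ satisfies $h(\lambda s)\ge\lambda h(s)$ for $\lambda\in[0,1]$. Applying this with $\lambda=\frac{s_1}{s_1+s_2}$ and $\lambda=\frac{s_2}{s_1+s_2}$ and adding gives $h(s_1)+h(s_2)\ge h(s_1+s_2)$. With this in hand, the chain is proved as follows.

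\begin{itemize}
\item[(a)] The inequality $\rho_{a,p}(t_1+t_2)\le\rho_{a,p}(|t_1|+|t_2|)$ follows from evenness, monotonicity, and $|t_1+t_2|\le|t_1|+|t_2|$.
\item[(b)] The next inequality, $\rho_{a,p}(|t_1|+|t_2|)\le\rho_{a,p}(t_1)+\rho_{a,p}(t_2)$, is exactly subadditivity.
\item[(c)] The last inequality, $\rho_{a,p}(t_1)+\rho_{a,p}(t_2)\le2\rho_{a,p}\left(\frac{|t_1|+|t_2|}{2}\right)$, is midpoint concavity (Jensen's inequality) of $h$ on $[0,\infty)$.
\end{itemize}

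The leftmost inequality, $|\rho_{a,p}(t_1)-\rho_{a,p}(t_2)|\le\rho_{a,p}(t_1+t_2)$, is the delicate step, and as literally written it cannot hold for all real $t_1,t_2$. For example, $t_2=-t_1\ne0$ gives left side $0$ and right side $0$, which is fine; but $t_1=1$, $t_2=-1/2$ gives left side $\rho(1)-\rho(1/2)>0$ while the right side is $\rho(1/2)$, so the inequality reduces to $\rho(1)\le2\rho(1/2)$, which subadditivity does give. The intended argument is most likely the reverse triangle inequality for a subadditive even function. If $|t_1|\ge|t_2|$, then
$$\rho(t_1)\le\rho(|t_1+t_2|+|t_2|)\le\rho(t_1+t_2)+\rho(t_2),$$
so $\rho(t_1)-\rho(t_2)\le\rho(t_1+t_2)$. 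Monotonicity shows this difference is nonnegative, and the case $|t_2|\ge|t_1|$ is symmetric. However, the first step of this chain needs $|t_1|\le|t_1+t_2|+|t_2|$, which is true. So the bound holds, and the main care point is just organizing this case split. Since these facts are quoted from \cite[Lemma 2.4]{lcgy}, I will present the above verifications briefly, or simply cite that lemma.
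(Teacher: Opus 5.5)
Your proof is correct and complete. The paper gives no proof of this lemma; it quotes it from \cite[Lemma 2.4]{lcgy}. Your reduction $\rho_{a,p}(t)=g(|t|^p)$ with $g(u)=\frac{(a+1)u}{a+u}$ is therefore a valid self-contained substitute for that citation. It uses the identity $g(u)-u=\frac{u(1-u)}{a+u}$ for (iii), and concavity with $\rho_{a,p}(0)=0$ to get subadditivity and midpoint concavity for (iv).

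The one thing to fix is your sentence claiming the leftmost inequality in (iv) ``cannot hold for all real $t_1,t_2$''. That claim is false, and your own case split refutes it. The split uses $|t_1|\le|t_1+t_2|+|t_2|$, evenness and monotonicity of $\rho_{a,p}$ in $|t|$, and then subadditivity. This proves $|\rho_{a,p}(t_1)-\rho_{a,p}(t_2)|\le\rho_{a,p}(t_1+t_2)$ for every $t_1,t_2\in\mathbb{R}$, so that remark should simply be deleted.
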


We also recall two technique lemmas for sparse recovery. The first one is
a combination of \cite[Lemma 2.2]{ZL19} and \cite[Remark 2.13]{lcgy},
which is called the sparse convex combination lemma,
and the second one is a part of \cite[Lemma 5.3]{CZ13}.

\begin{lemma}\label{lem-sparse repre}
Let $p\in(0,1]$, $\alpha\in(0,\infty)$, and $s\in\mathbb{N}$.
For any given $u\in\mathbb{R}^N$ with $|{\mathop\mathrm{\,supp\,}}(u)|=n\ge s$,
$\|u\|_p^p\le s\alpha^p$, and $\|u\|_\infty\le \alpha$,
$u$ can be represented as a convex combination of finite $s$-sparse vectors,
i.e., $u=\sum_{i=1}^L \lambda_i v_i$ for some $L \in\mathbb{N}$,
where $\sum_{i=1}^L  \lambda_i =1$ with $\lambda_i\in(0,1]$ and $v_i$ is $s$-sparse with
${\mathop\mathrm{\,supp\,}}(v_i)\subset {\mathop\mathrm{\,supp\,}}(u)$.
Moreover,
\begin{equation*}
\sum_{i=1}^L  \lambda_i \left\|v_i\right\|_2^2\le
\min\left\{\frac{n}{s}\left\|u\right\|_2^2,\|u\|_\infty^p\left\|u\right\|_{2-p}^{2-p}\right\}.
\end{equation*}
\end{lemma}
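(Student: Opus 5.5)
We have to prove the sparse convex combination lemma (Lemma \ref{lem-sparse repre}): an explicit decomposition of $u$ into $s$-sparse vectors, together with the weighted bound on $\sum_i\lambda_i\|v_i\|_2^2$. The plan follows the approach of \cite[Lemma 2.2]{ZL19} in the case $p=1$ and the modification of \cite[Remark 2.13]{lcgy} for $p<1$.

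By permutation and sign symmetry, assume $u\ge0$ and $u_1\ge u_2\ge\cdots\ge u_n>0$ (entries off the support play no role). The construction is by induction on $n$. If $n=s$, take $L=1$ and $v_1=u$. For $n>s$, identify the largest index $k\le s$ such that the set $\{1,\dots,k-1\}$ of large entries must appear in every $v_i$.

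\emph{Construction.} Following \cite{ZL19}, write $u$ as a convex combination of vectors of two kinds. The first kind is $u$ restricted to $\{1,\dots,k-1\}$ plus the tail rescaled: the tail mass $\sum_{j\ge k}u_j^p$ is redistributed onto $s-k+1$ coordinates of the tail, each set to height $\le \alpha$. The hypothesis $\|u\|_p^p\le s\alpha^p$ guarantees enough room. The second kind comes from repeatedly "pushing" one tail coordinate up to the level $\alpha$ or down to $0$. That is, one writes the vector as $\theta w+(1-\theta)w'$ where $w,w'$ each have one fewer nonzero tail entry, or one more entry equal to a cap. Each step preserves the constraints $\|w\|_\infty\le\alpha$ and $\|w\|_p^p\le s\alpha^p$. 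Because $t\mapsto t^p$ is concave for $p<1$, the set $\{\|w\|_p^p\le s\alpha^p\}$ is not convex. I would therefore do the splitting along lines where only two coordinates change, and maintain the invariant via the cap $\alpha$: a vector with all nonzero entries equal to $\alpha$ and at most $s$ of them is automatically $s$-sparse. Induction on the number of non-capped nonzero coordinates then terminates in finitely many steps with $s$-sparse vectors whose supports lie in ${\rm supp}(u)$.

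\emph{The estimate.} The bound $\sum\lambda_i\|v_i\|_2^2\le \frac ns\|u\|_2^2$ is obtained from a symmetric alternative decomposition. Average over all $\binom ns$ supports $S$ of the vectors $\frac ns u_S$; this average is $u$. Note that this only needs the stated inequality, so either construction may be used, and the minimum is handled by choosing whichever decomposition gives the smaller value. This yields weight $\frac{n}{s}\cdot\frac{s}{n}\|u\|_2^2\cdot\frac ns=\frac ns\|u\|_2^2$. The caveat is that these $v_i$ may violate nothing else needed, since no other property is required.

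For the second bound $\|u\|_\infty^p\|u\|_{2-p}^{2-p}$, use the pushing construction. The quantity $\Phi(w)=\sum_i\lambda_i\|v_i\|_2^2$ is linear in the splitting, so track the convex function $\sum_j g(w_j)$ with $g$ chosen so that $g(t)\ge t^2$ on $\{0\}\cup[\,\cdot,\alpha]$ while $g(u_j)=\|u\|_\infty^p u_j^{2-p}$. At each split $w=\theta w+(1-\theta)w'$ along a two-coordinate line, one needs $\theta G(w)+(1-\theta)G(w')\le G(w)$ for $G=\sum g(\cdot_j)$. This holds if $G$ is concave along the splitting direction. Since $t\mapsto t^{2-p}$ is convex, the direction must be chosen to keep $\sum_j w_j^{2-p}$ and the $p$-mass compatible. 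The terminal vectors have entries in $\{0\}\cup$(values at the cap $\|u\|_\infty$), where $t^2=\|u\|_\infty^p t^{2-p}$, which gives the bound.

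\emph{Main obstacle.} The main obstacle is designing the two-coordinate splitting so that, simultaneously, (a) the non-convex constraint $\|w\|_p^p\le s\alpha^p$ is preserved and (b) the potential $\|u\|_\infty^p\sum w_j^{2-p}$ does not increase. I would first try moving mass along the level curve of $\sum w_j^p$, pushing one coordinate to $\|u\|_\infty$ and the other toward $0$. Preserving $\sum w_j^p$ exactly handles (a). For (b), parametrize by $r_j=w_j^p$: then $w_j^{2-p}=r_j^{(2-p)/p}$ is convex in $r$, so the endpoint average on a segment in $r$-space exceeds the midpoint value. This is the wrong direction, and it forces the split to be taken as a chord in $r$-space rather than in $w$-space. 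The resulting interpolating vectors must then be reconciled with linearity in $w$. This reconciliation is exactly where \cite[Remark 2.13]{lcgy} is needed, and I would follow it.
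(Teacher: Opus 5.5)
The paper gives no argument for this lemma (it only cites \cite[Lemma 2.2]{ZL19} and \cite[Remark 2.13]{lcgy}). The real problem is the statement itself: the second bound, with $\|u\|_\infty^p$, is false. So your proof must break somewhere, and it breaks at the claim that the terminal vectors have their nonzero entries at the cap $\|u\|_\infty$.

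To see that the bound fails, take any admissible decomposition and put $\varepsilon_j:=\mathrm{sign}(u_j)$. Each $v_i$ is $s$-sparse, so Cauchy--Schwarz gives $\langle v_i,\varepsilon\rangle^2\le s\|v_i\|_2^2$. Jensen's inequality then gives
\[
\sum_{i=1}^L\lambda_i\|v_i\|_2^2\ \ge\ \frac{1}{s}\sum_{i=1}^L\lambda_i\langle v_i,\varepsilon\rangle^2\ \ge\ \frac{1}{s}\Big(\sum_{i=1}^L\lambda_i\langle v_i,\varepsilon\rangle\Big)^2=\frac{\|u\|_1^2}{s}.
\]
Suppose $|u_j|=c$ on a support of size $n>s$, with $nc^p\le s\alpha^p$. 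Then the right-hand side is $n^2c^2/s$, while $\|u\|_\infty^p\|u\|_{2-p}^{2-p}=nc^2$ is strictly smaller. Concretely, for $p=1$, $s=1$, $\alpha=1$, $u=(\frac12,\frac12)$, every decomposition has $\sum_i\lambda_i\|v_i\|_2^2\ge 1$, whereas the lemma claims $\le\min\{1,\frac12\}$.

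The $s$-sparse pieces must carry the mass of $u$ on only $s$ coordinates. Their entries therefore rise to the level $\alpha$, as your own pushing step says, not to $\|u\|_\infty$. Separately, your monotonicity step needs $G$ to be concave along each split. The natural candidate $g(t)=\|u\|_\infty^p t^{2-p}$ is convex, strictly so for $p<1$; this is the sign problem you flag at the end. Note also that the proof of Proposition~\ref{prop-u} relies on exactly this $\|u\|_\infty^p$ form, in \eqref{eq-moreover} followed by $\|X_{T^\complement}\|_\infty^p\le\|X_T\|_p^p/s$.

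What can be proved is the bound with $\alpha^p\|u\|_{2-p}^{2-p}$ in place of $\|u\|_\infty^p\|u\|_{2-p}^{2-p}$. Your averaging of $\frac{n}{s}u_S$ over all $s$-subsets $S\subset\mathrm{supp}(u)$ is correct for the $\frac{n}{s}\|u\|_2^2$ term. Using whichever decomposition matches the smaller term also correctly handles the minimum.

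The step you defer to \cite{lcgy} is resolved by a fixed diagonal change of variables, which removes the nonconvexity altogether. Put $z_j:=\alpha^{1-p}|u_j|^p$, so that $\|z\|_\infty\le\alpha$ and $\|z\|_1=\alpha^{1-p}\|u\|_p^p\le s\alpha$. The Cai--Zhang polytope lemma \cite{CZ14} writes $z=\sum_i\lambda_i z^{(i)}$, where each $z^{(i)}$ is $s$-sparse, $\mathrm{supp}(z^{(i)})\subset\mathrm{supp}(u)$, $\|z^{(i)}\|_\infty\le\alpha$, and $\|z^{(i)}\|_1=\|z\|_1$. The last property, together with $z\ge 0$, forces $z^{(i)}\ge 0$.

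Now set $v_i:=Dz^{(i)}$ with $D:=\mathrm{diag}(u_j/z_j)$ on $\mathrm{supp}(u)$. Because $D$ is linear and does not depend on $i$, we get $\sum_i\lambda_i v_i=Dz=u$ with supports preserved, and
\[
\sum_{i=1}^L\lambda_i\|v_i\|_2^2=\sum_{j\in\mathrm{supp}(u)}\frac{u_j^2}{z_j^2}\sum_{i=1}^L\lambda_i\big(z^{(i)}_j\big)^2\le\alpha\sum_{j\in\mathrm{supp}(u)}\frac{u_j^2}{z_j}=\alpha^p\|u\|_{2-p}^{2-p}.
\]
This is the reconciliation you were looking for. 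The splitting happens in the convex $\ell_1$-polytope in the $z$-variables. The map back is linear, rather than $r\mapsto r^{1/p}$ applied piece by piece. The only potential needed is the linear bound $\sum_i\lambda_i(z^{(i)}_j)^2\le\alpha z_j$.
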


\begin{lemma}\label{lem-Tc<T}
Let $k,l\in\mathbb{N}$ satisfy $k\le l$, $\lambda\ge0$, and $a_1\ge a_2\ge\ldots\ge a_l\ge0$.
If $\sum_{j=1}^k a_j + \lambda \ge \sum_{j=k+1}^l a_j$, then, for any $\alpha\in [1,\infty)$,
$$\sum_{j=k+1}^l a_j^\alpha  \le k\left[\left(\frac{1}{k}\sum_{j=1}^k a_j^\alpha \right)^{\frac{1}{\alpha}}
+ \frac{\lambda}{k}\right]^\alpha .$$
\end{lemma}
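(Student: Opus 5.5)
This is Lemma \ref{lem-Tc<T}, the elementary inequality from \cite{CZ13}; I would prove it directly. The plan is to reduce the tail sum to a maximization of a convex function over a polytope and then evaluate the extreme points.

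\textbf{Step 1 (set-up).} Put $M:=\left(\frac1k\sum_{j=1}^k a_j^\alpha\right)^{1/\alpha}$. By the power-mean inequality (since $\alpha\ge1$), $\frac1k\sum_{j=1}^k a_j\le M$, so
$$\sum_{j=k+1}^l a_j\le \sum_{j=1}^k a_j+\lambda\le kM+\lambda .$$
Also, by the monotone ordering, each tail entry satisfies $a_j\le a_k\le \left(\frac1k\sum_{i\le k}a_i^\alpha\right)^{1/\alpha}=M$ for $j>k$, because $a_k$ is the smallest of the first $k$ terms.

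\textbf{Step 2 (reduction to an extreme-point problem).} The tail vector $b:=(a_{k+1},\dots,a_l)$ lies in the polytope
$$P:=\left\{b:\ 0\le b_j\le M,\ \sum_j b_j\le kM+\lambda\right\}.$$
The map $b\mapsto\sum_j b_j^\alpha$ is convex, so its maximum over $P$ is attained at a vertex. A vertex has all coordinates equal to $0$ or $M$, except at most one coordinate $r\in[0,M)$. Writing $kM+\lambda=mM+r$ with $m$ the number of coordinates equal to $M$, I get
$$\sum_{j=k+1}^l a_j^\alpha\le mM^\alpha+r^\alpha .$$

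\textbf{Step 3 (comparison with the bound).} Since $r<M$ and $\alpha\ge1$, we have $r^\alpha\le rM^{\alpha-1}$. Hence
$$mM^\alpha+r^\alpha\le (mM+r)M^{\alpha-1}=(kM+\lambda)M^{\alpha-1}=kM^{\alpha}\cdot\frac{kM+\lambda}{kM}.$$
This is at most $k\left(M+\frac{\lambda}{k}\right)^\alpha=kM^\alpha\left(1+\frac{\lambda}{kM}\right)^\alpha$, because $1+x\le(1+x)^\alpha$ for $x\ge0$. That is exactly the claimed inequality.

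The degenerate case $M=0$ forces $a_j=0$ for all $j$, and the claim is trivial.

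\textbf{Main obstacle.} The delicate point is Step 2, namely making the vertex argument rigorous given that the number of tail slots $l-k$ may limit $m$. If $l-k$ is small, the vertex simply has fewer nonzero entries, and the same bound $\sum b_j^\alpha\le M^{\alpha-1}\sum b_j$ still holds. In fact this shows that Step 2 can be bypassed entirely: from $0\le b_j\le M$ we have $b_j^\alpha\le M^{\alpha-1}b_j$ for each $j$, so summing gives Step 3 directly. I would present the proof in this shorter form.
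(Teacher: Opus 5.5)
Your argument is correct, and the short form you settle on at the end is complete. For $j>k$ you have $0\le a_j\le a_k\le M$. The power-mean inequality gives $\sum_{j>k}a_j\le \sum_{j\le k}a_j+\lambda\le kM+\lambda$. Hence $\sum_{j>k}a_j^\alpha\le M^{\alpha-1}\sum_{j>k}a_j\le M^{\alpha-1}(kM+\lambda)\le k(M+\lambda/k)^\alpha$.

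The paper does not prove this lemma; it simply quotes it from \cite[Lemma 5.3]{CZ13}. So there is no in-paper argument to compare with, and your proof is a valid self-contained replacement for the citation.

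Two presentational remarks. First, do drop the vertex argument of Step 2. As written, it tacitly assumes the constraint $\sum_j b_j\le kM+\lambda$ is active at the maximizing vertex. When $l-k$ is small it need not be, and one only gets $mM+r\le kM+\lambda$; that still suffices, but it is not what you wrote. Second, the last step is cleaner as $M^{\alpha-1}(kM+\lambda)=kM^{\alpha-1}(M+\lambda/k)\le k(M+\lambda/k)^{\alpha}$, which avoids dividing by $M$.
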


For any given matrices $X:=[X_{i,j}]\in\mathbb{R}^{N\times N_1}$ and $Y:=[Y_{i,j}]\in\mathbb{R}^{N\times N_2}$,
we use the symbol $[X,Y]$ to denote the \emph{horizontal concatenation matrix} $[Z_{i,j}]\in\mathbb{R}^{N\times (N_1+N_2)}$,
which is defined by setting, for any $i,j\in\mathbb{N}$ with $1\le i \le N$ and $1\le j\le N_1+N_2$,
\begin{align*}
Z_{i,j}:=
\begin{cases}
X_{i,j} \ \ &\text{if}\ 1\le j \le N_1,\\
Y_{i,j-N_1} \ \ &\text{if}\ N_1+1\le j \le N_1+N_2.
\end{cases}
\end{align*}
Then, for any given matrix $X:=[X_{i,j}]\in\mathbb{R}^{N\times N}$,
we have $[X,X^T]=[\overline{X}_{i,j}]=:\overline{X}$,
where
\begin{align}\label{eq-xij}
\overline{X}_{i,j}=\begin{cases}
X_{i,j}\ \ &\text{if}\ 1\le j \le N,\\
X_{j-N,i} \ \ &\text{if}\ N+1\le j \le 2N.\end{cases}
\end{align}
For any matrix $M\in \mathbb{R}^{N_1\times N_2}$, we also use
$M^0$ and $M_0$ to denote the padded matrices obtained
by adding a row of zeros, respectively, to the top and to the bottom of $M$.
Moreover, let $\mathcal{M}:\ \mathbb{R}^{(N-1)\times (2N) }\to\mathbb{R}^K$ be a  linear operator
identified by matrices $\{M_k\}_{k=1}^K$, i.e., defined via components by setting, for each $k\in\{1,\ldots,K\}$
and any matrix $X\in\mathbb{R}^{N_1\times N_2 }$,
$
(\mathcal{M}(X))_k:=\langle M_k,X\rangle.
$
We denote  by $\mathcal{M}^0:\ \mathbb{R}^{N\times (2N)}\to\mathbb{R}^K$
and $\mathcal{M}_0:\ \mathbb{R}^{N\times (2N)}\to\mathbb{R}^K$ the linear operators defined via components by setting,
for each $k\in\{1,\ldots,K\}$ and any matrix $X\in\mathbb{R}^{N\times (2N) }$,
\begin{equation}\label{eq-def,M0}
\big(\mathcal{M}^0(X)\big)_k=\left\langle (M_k)^0,X\right\rangle\ \
\text{and} \ \ \big(\mathcal{M}_0(X)\big)_k=\left\langle (M_k)_0,X\right\rangle.
\end{equation}

For a digital image with $N\times N$ block of pixels,
we denote it by a matrix $X:=[X_{i,j}]\in\mathbb{R}^{N\times N}$
with $X_{i,j}$ being its particular pixel.
Then its \emph{discrete directional derivatives}
are defined  by setting
$$\partial_1 X:=\left[(\partial_1 X)_{i,j}\right]\in \mathbb{R}^{(N-1)\times N}\ \ \text{and}\ \
\partial_2 X:=\left[(\partial_2 X)_{i,j}\right]\in \mathbb{R}^{N\times (N-1)},$$
where, for each $i\in\{1,\ldots,N-1\}$ and $j\in\{1,\ldots,N\}$,
\begin{equation}\label{def-DDx1}
(\partial_1 X)_{i,j}:=X_{i+1,j}-X_{i,j}
\end{equation}
and, for each $i\in\{1,\ldots,N\}$ and $j\in\{1,\ldots,N-1\}$,
\begin{equation}\label{def-DDx2}
(\partial_2 X)_{i,j}:=X_{i,j+1}-X_{i,j};
\end{equation}
moreover, the  \emph{discrete gradient transform}
$\nabla$
on $X$ is  defined by setting
$
\nabla X :=\left(\nabla_1 X,\nabla_2 X\right),
$
where
\begin{align*}
\nabla_1 X :=\left[(\nabla_1 X)_{i,j}\right]:=
\begin{pmatrix}
(\partial_1 X)_{1,1}  & \cdots & (\partial_1 X)_{1,N-1}& (\partial_1 X)_{1,N} \\
\vdots & \ddots & \vdots & \vdots \\
(\partial_1 X)_{N-1,1}  & \cdots  & (\partial_1 X)_{N-1,N-1} & (\partial_1 X)_{N-1,N} \\
0& \cdots & 0 & 0
\end{pmatrix}
\end{align*}
and
\begin{align*}
\nabla_2 X:=\left[(\nabla_2 X)_{i,j}\right]:=
\begin{pmatrix}
(\partial_2 X)_{1,1}& \cdots &(\partial_2 X)_{1,N-1} & 0\\
\vdots & \ddots & \vdots & \vdots \\
(\partial_2 X)_{N-1,1}\ & \cdots\ & (\partial_2 X)_{N-1,N-1}\ & 0\\
(\partial_2 X)_{N,1} & \cdots & (\partial_2 X)_{N,N-1} & 0
\end{pmatrix}.
\end{align*}

We can connect the measurements  of images
with those of their discrete directional derivatives 
through the following ​algebraic expression.

\begin{lemma}\label{lem-M0}
Let  $X\in\mathbb{R}^{N\times N}$,
$\overline{X} := [X,X^T]$,
and $M:=[M_{i,j}]\in\mathbb{R}^{(N-1)\times (2N)}$. Then
$$
\left\langle M, [\partial_1 X,(\partial_2 X)^T]\right\rangle
=\left\langle M^0, \overline{X}\right\rangle-\left\langle M_0,  
\overline{X}\right\rangle.
$$
\end{lemma}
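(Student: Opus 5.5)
The plan is to verify the identity by a direct, entrywise expansion of both inner products. The only ingredients are the definitions of the padded matrices $M^0$ and $M_0$, of $\overline{X}=[X,X^T]$ in \eqref{eq-xij}, and of the discrete directional derivatives \eqref{def-DDx1} and \eqref{def-DDx2}.

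\emph{Step 1: re-index the padded inner products.} Since $M^0\in\mathbb{R}^{N\times(2N)}$ is $M$ with a zero row added on top, we have $(M^0)_{1,j}=0$ and $(M^0)_{i+1,j}=M_{i,j}$ for $1\le i\le N-1$. Likewise $M_0$ is $M$ with a zero row added at the bottom, so $(M_0)_{i,j}=M_{i,j}$ for $1\le i\le N-1$ and $(M_0)_{N,j}=0$. Hence
\begin{equation*}
\left\langle M^0,\overline{X}\right\rangle-\left\langle M_0,\overline{X}\right\rangle
=\sum_{i=1}^{N-1}\sum_{j=1}^{2N} M_{i,j}\left(\overline{X}_{i+1,j}-\overline{X}_{i,j}\right).
\end{equation*}

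\emph{Step 2: identify the differences, splitting the columns at $j=N$.}
\begin{itemize}
\item For $1\le j\le N$, \eqref{eq-xij} gives
$\overline{X}_{i+1,j}-\overline{X}_{i,j}=X_{i+1,j}-X_{i,j}=(\partial_1X)_{i,j}$.
\item For $N+1\le j\le 2N$, \eqref{eq-xij} gives
$\overline{X}_{i+1,j}-\overline{X}_{i,j}=X_{j-N,i+1}-X_{j-N,i}=(\partial_2X)_{j-N,i}$.
This equals $\big((\partial_2X)^T\big)_{i,j-N}$. The indices are admissible, since $1\le j-N\le N$ and $1\le i\le N-1$, and $(\partial_2X)^T\in\mathbb{R}^{(N-1)\times N}$.
\end{itemize}

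\emph{Step 3: conclude.} These are exactly the $(i,j)$ entries of the horizontal concatenation $[\partial_1X,(\partial_2X)^T]\in\mathbb{R}^{(N-1)\times(2N)}$. The double sum from Step 1 is therefore $\langle M,[\partial_1X,(\partial_2X)^T]\rangle$, which is the claim.

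There is no real obstacle; the only point needing care is the index bookkeeping in the second block of columns. There the transpose in $\overline{X}$ turns a difference along rows into a difference along columns, and one must check that it matches $(\partial_2X)^T$ with the correct dimensions.
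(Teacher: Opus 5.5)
Your proposal is correct and follows the same route as the paper: expand $\langle M^0,\overline{X}\rangle-\langle M_0,\overline{X}\rangle$ as $\sum_{i=1}^{N-1}\sum_{j=1}^{2N}M_{i,j}(\overline{X}_{i+1,j}-\overline{X}_{i,j})$. Then split the columns at $j=N$ and identify the two blocks with $(\partial_1X)_{i,j}$ and $(\partial_2X)_{j-N,i}=((\partial_2X)^T)_{i,j-N}$. Your index bookkeeping in the second block, including the dimension check on $(\partial_2X)^T$, is accurate.
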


\begin{proof} Using \eqref{eq-xij}, \eqref{def-DDx1}, and \eqref{def-DDx2},
we conclude that
\begin{align*}
\left\langle M^0, \overline{X}\right\rangle-\left\langle M_0,  \overline{X}\right\rangle
&=\sum_{\genfrac{}{}{0pt}{}{1\le i \le N-1}{1\le j\le N}} M_{i,j}\left(X_{i+1,j}-X_{i,j}\right)
+\sum_{\genfrac{}{}{0pt}{}{1\le i \le N-1}{N+1\le j\le 2N}} M_{i,j}\left(X_{j-N,i+1}-X_{j-N,i}\right)\\
&=\sum_{\genfrac{}{}{0pt}{}{1\le i \le N-1}{1\le j\le N}} M_{i,j}\left(\partial_1 X\right)_{i,j}
+\sum_{\genfrac{}{}{0pt}{}{1\le i \le N-1}{N+1\le j\le 2N}} M_{i,j}\left(\partial_2 X\right)_{j-N,i}\\
&=\left\langle M, \left[\partial_1 X,(\partial_2 X)^T\right]\right\rangle,
\end{align*}
which completes the proof of Lemma \ref{lem-M0}.
\end{proof}

Next, we recall the Haar wavelet transform. It is well known that
wavelets provide a powerful tool
for sparse representation of images;
see, for example, \cite{ss04,W97}.

The univariate Haar wavelet system consists of
a characteristic function  on the unit interval
\begin{equation}\label{def-h0}
h^0(t) :=\mathbf{1}_{[0,1)}(t), \ \ \forall\,t\in\mathbb{R},
\end{equation}
a mother wavelet
\begin{equation}\label{def-h1}
h^1(t) :=\mathbf{1}_{[0,\frac{1}{2})}(t)  - \mathbf{1}_{[\frac{1}{2},1)}(t),
\ \ \forall\,t\in\mathbb{R},
\end{equation}
and dyadic dilations and translates of $h^1$,
i.e., for any $\ell\in\mathbb{N}$ and $k\in\mathbb{Z}_+$ with $0\le k <2^\ell$,
$$
h_{\ell,k}(t):=2^{\ell/2}h^1(2^\ell t - k), \ \ \forall \,t\in\mathbb{R}.
$$
Moreover,
the bivariate Haar wavelet system can be constructed from the
univariate Haar wavelet system by the usual tensor product.
To be precise, let $V:=\{(0,1),(1,0),(1,1)\}$ and,
for any given $e:=(e_1,e_2)\in V$,
we define a bivariate function $h^e$ by setting
\begin{equation*}
h^e(z):=h^{e_1}(z_1)h^{e_2}(z_2),\ \ \forall\, z:=(z_1,z_2)\in\mathbb{R}^2,
\end{equation*}
where $h^0$ and $h^1$ are the same as in \eqref{def-h0} and \eqref{def-h1}.
Then the  bivariate Haar wavelet system
consists of the characteristic function $\mathbf{1}_Q$ of the unit cube $Q:=[0,1)^2$
and the dyadic dilations and translates $\{h^e_{\ell,\nu}:\ e\in V, \ell\in\mathbb{N},\nu\in\mathbb{Z}^2\cap 2^\ell Q\}$
which are defined by setting,
for any $e\in V$, $\ell\in\mathbb{N}$, and $\nu\in\mathbb{Z}^2\cap 2^\ell Q$,
\begin{equation*}
h^e_{\ell,\nu}(z):=2^\ell h^e\left(2^\ell z - \nu\right), \ \ \forall\, z \in\mathbb{R}^2.
\end{equation*}
We note that, for each $e\in V$, $h^e_{\ell,\nu}$ is supported in the dyadic cube
$Q_{\ell,\nu}:=2^{-\ell}(\nu+ Q)$.

Let $N:=2^n$. Then the bivariate Haar wavelet basis,
restricted to $N^2$ basis functions
\begin{equation*}
\left\{h^{0,0}:=\mathbf{1}_Q\right\}\cup
\left\{h^e_{\ell,\nu}: e\in V,\  \ell\in\mathbb{N},
\ 0\le \ell\le n-1,\ \nu\in\mathbb{Z}^2\cap 2^\ell Q\right\},
\end{equation*}
forms  an orthonormal basis of $\mathbb{R}^{N\times N}$.
As we know that the set of discrete images $X:=[X_{i,j}]$ with $N\times N$
block of pixels is isometrically isomorphic to the space
of piecewise-constant functions on $Q$
\begin{equation*}
\Sigma_{N}(Q):=\left\{f:\ f(z_1,z_2)=c_{i,j},\ 
\frac{i -1}{N}\le z_1 <\frac{i }{N},\ \frac{j-1}{N}\le z_2 <\frac{j}{N},
1\le i,j\le N,\ i,j\in\mathbb{N}\right\},
\end{equation*}
via $c_{i,j}=N X_{i,j}$,
we identify these $N^2$ basis functions
as discrete images
$\{H^{0,0}\}\cup\{H^e_{\ell,\nu}\}_{\ell,\nu,e}$.
In what follows, for any image $X\in\mathbb{R}^{N\times N}$,
we use $\mathcal{H}(X)$
to denote the \emph{discrete bivariate Haar transform} on $X$
which is an $N\times N$ matrix consisting of the components
$\{\langle X, H^{0,0}\rangle\}\cup\{\langle X, H^e_{\ell,\nu}\rangle\}_{\ell,\nu,e}$.

We end this subsection by
the following strong Sobolev inequality which
is precisely \cite[Theorem 9]{NW13}.

\begin{lemma}\label{lem-sobolev}
Let $s\in\mathbb{N}$ and
$\mathcal{B}: \mathbb{R}^{N\times N}\to \mathbb{R}^{K}$ be a linear operator
such that $\mathcal{B}\mathcal{H}^{-1}:\ \mathbb{R}^{N\times N}\to \mathbb{R}^{K}$
has the RIP with $\delta_{2s+1}<1$.
Suppose that $D\in\mathbb{R}^{N\times N}$ satisfies the tube constraint
$\|\mathcal{B}(D)\|_2\le \epsilon$ for some $\varepsilon>0$.
Then there exists a positive constant $C$ such that
$$
\|D\|_2 \le C\left[\left(\frac{\|D\|_{\mathrm{TV}}}
{\sqrt{s}}\right)\log\left(\frac{N^2}{s}\right)  + \epsilon\right].
$$
\end{lemma}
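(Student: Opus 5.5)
The plan is to follow the Needell–Ward argument: pass to the Haar domain, where $\mathcal{B}\mathcal{H}^{-1}$ has the RIP, and control the Haar coefficients of $D$ through $\|D\|_{\mathrm{TV}}$. Write $c:=\mathcal{H}(D)$, so that $\mathcal{B}(D)=(\mathcal{B}\mathcal{H}^{-1})(c)$ and, because the Haar basis is orthonormal, $\|D\|_2=\|c\|_2$.

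The analytic input is a weak-$\ell_1$ decay estimate for Haar coefficients of bounded-variation images, due to Cohen, DeVore, Petrushev and Xu in the continuous setting and transferred to the discrete setting by Needell and Ward. It states the following: if $c_{(1)},c_{(2)},\ldots$ are the non-constant Haar coefficients of $D$ (all coefficients except $\langle D,H^{0,0}\rangle$) arranged in nonincreasing magnitude, then
\begin{equation*}
|c_{(k)}|\le C\,\frac{\|D\|_{\mathrm{TV}}}{k}\quad\text{for every } k\ge1.
\end{equation*}
I would invoke this as a black box, since its proof is a delicate isoperimetric/dyadic argument in the continuous setting. This is the main obstacle. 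The only care needed is the normalization $c_{i,j}=NX_{i,j}$ that identifies discrete images with $\Sigma_N(Q)$, so that the discrete TV matches the continuous variation up to constants. The constant coefficient $\langle D,H^{0,0}\rangle$ is not controlled by the TV at all, which is why it must be handled separately by the measurements.

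Next comes a block decomposition. Let $S_0$ consist of the index of $H^{0,0}$ together with the indices of the $s$ largest non-constant coefficients, so $|S_0|=s+1\le 2s+1$. Partition the remaining indices, in order of decreasing magnitude, into blocks $S_1,S_2,\ldots$ of size $s$, the last possibly smaller. From the decay estimate, every entry of $S_j$ with $j\ge1$ is at most $C\|D\|_{\mathrm{TV}}/(js)$, so $\|c_{S_j}\|_2\le C\|D\|_{\mathrm{TV}}/(j\sqrt s)$. There are at most $N^2/s$ blocks, so summing the harmonic series gives
\begin{equation*}
\sum_{j\ge1}\|c_{S_j}\|_2\le C\,\frac{\|D\|_{\mathrm{TV}}}{\sqrt s}\log\left(\frac{N^2}{s}\right).
\end{equation*}
The same pointwise bound gives the tail estimate $\|c_{S_0^\complement}\|_2\le C\|D\|_{\mathrm{TV}}/\sqrt s$, since $\sum_{k>s}k^{-2}\lesssim 1/s$.

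Finally, use the RIP to control the head. Each $c_{S_j}$ is at most $(2s+1)$-sparse, so
\begin{equation*}
\sqrt{1-\delta_{2s+1}}\,\|c_{S_0}\|_2\le\|\mathcal{B}\mathcal{H}^{-1}(c_{S_0})\|_2\le\|\mathcal{B}(D)\|_2+\sum_{j\ge1}\|\mathcal{B}\mathcal{H}^{-1}(c_{S_j})\|_2.
\end{equation*}
The right-hand side is at most $\epsilon+\sqrt{1+\delta_{2s+1}}\sum_{j\ge1}\|c_{S_j}\|_2$. Combining this with the block estimates,
\begin{equation*}
\|D\|_2\le\|c_{S_0}\|_2+\|c_{S_0^\complement}\|_2\le C\left[\frac{\|D\|_{\mathrm{TV}}}{\sqrt s}\log\left(\frac{N^2}{s}\right)+\epsilon\right],
\end{equation*}
where $C$ depends only on $\delta_{2s+1}<1$ and the absolute constant in the decay estimate. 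The only substantive step is the decay estimate; everything else is the standard compressed-sensing block argument.
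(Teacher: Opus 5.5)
Your proposal is correct and follows the Needell--Ward proof of their Theorem 9, which the paper cites rather than reproving: the weak-$\ell_1$ decay $|c_{(k)}|\le C\|D\|_{\mathrm{TV}}/k$ of the non-constant Haar coefficients (Cohen--DeVore--Petrushev--Xu), a head set that includes the constant coefficient and is controlled by the RIP, and harmonic summation over size-$s$ blocks giving the $\log(N^2/s)$ factor. The one step you leave implicit is absorbing $\sum_{j\le N^2/s}1/j\le 1+\log(N^2/s)$ into $C\log(N^2/s)$; this works because RIP of order $2s+1$ requires $2s+1\le N^2$, hence $N^2/s>2$.
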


\subsection{Stable Image Reconstruction under the RIP Framework}\label{subsec-Ima-re}

In this subsection, we apply the proposed PSV$_{a,p}$ model to establish
the stable recovery in both the gradient and the image domains under the RIP framework,
deriving  an asymptotically optimal RIP upper bound in $a$ for gradient recovery.
For simplicity, we only consider the anisotropic case, and let
$\|\cdot\|_{\mathrm{PS}_{a,p}}:=\|\cdot\|^{(\mathrm{a})}_{\mathrm{PS}_{a,p}}$ and
$\|\cdot\|_{\mathrm{PSV}_{a,p}}:=\|\cdot\|^{(\mathrm{a})}_{\mathrm{PSV}_{a,p}}$ [see \eqref{eq-def-TTVp}].
For notational consistency, for any vector $x$, we also use
$\|x\|_{\mathrm{PS}_{a,p}}$ to denote $P_{a,p}(x).$

We start by explaining some crucial notation.
Let $\gamma\in[1,\infty)$.
In what follows,  we let $\eta_0$  be
the \emph{unique positive solution} of the equation
\begin{equation}\label{eq-equation}
\frac{p}{2}\eta^{\frac{2}{p}}+\eta-\left(1-\frac{p}{2}\right)
\left(\frac{a+1}{a}\gamma\right)^{\frac{p}{2-p}}=0
\end{equation}
and rewrite $\eta_0$ as $\overline{\eta}_0$ when $\gamma=1$.
We further define
\begin{equation*}
\mu_0:= \left(\frac{a+1}{a}\gamma \right)^{-\frac{p}{2-p}}\eta_0
\ \ \text{and}\ \
\delta(p,a,\gamma):=\frac{\mu_0}{2-p-\mu_0},
\end{equation*}
and, particularly,
 \begin{equation}\label{e2.14}
\overline{\mu}_0:= \left(\frac{a+1}{a}\right)^{-\frac{p}{2-p}}\overline{\eta}_0
\ \ \text{and}\ \
\overline{\delta}:=\delta(p,a,1).
\end{equation}
The uniqueness of $\eta_0$ follows from the monotonously increasing property  of the function
$f(\eta):=p\eta^{\frac{2}{p}}+2\eta,\ \forall\,\eta\in[0,\infty)$.
We also find $0<\eta_0<(1-\frac p2)(\frac{a+1}{a}\gamma)^{\frac{p}{2-p}}$,
and hence $\mu_0\in(0,1-\frac{p}{2})$.

The image reconstruction is based on a normalization procedure  as follows.
Let $X$ be an image, $\mathcal{M}$ be a linear measurement operator,
and $y=\mathcal{M}(X)+\xi$  be a noisy measurement with $\|\xi\|_2\le \epsilon$ for some $\epsilon\in[0,\infty)$. We claim that
there always exists $\beta\in[1,\infty)$ such that
$\|X_\beta\|_{\mathrm{PSV}_{a,p}}\le \frac{1}{2}$, where
$X_\beta:=\frac{X}{\beta}$.
Indeed, letting
\begin{equation}\label{eq-C}
\beta\ge a^{-\frac{1}{p}} \|\nabla X\|_{\infty} \left[2(a+1) |{\mathop\mathrm{\,supp\,}}(\nabla X) |-1\right]^{\frac{1}{p}},
\end{equation}
then, by Lemma \ref{lem-TLp,prop}(i), we obtain
$$
\|X_\beta\|_{\mathrm{PSV}_{a,p}}=\|\nabla X_\beta\|_{\mathrm{\mathrm{PS}_{a,p}}}
\le\left|{\mathop\mathrm{\,supp\,}}(\nabla X)\right|\rho_{a,p}\left(\frac{\|\nabla X\|_{\infty}}{\beta}\right)
=\left|{\mathop\mathrm{\,supp\,}}(\nabla X)\right|\frac{(a+1)\|\nabla X\|_{\infty}^p}{a\beta^p +\|\nabla X\|^p_{\infty}}
\le\frac{1}{2},
$$
which confirms the above claim.
Define $y_\beta:=\frac{y}{\beta}$ and $\epsilon_\beta:=\frac{\epsilon}{\beta}$.
Then  $X_\beta$ is a feasible solution of the following normalized PSV$_{a,p}$ minimization problem:
\begin{equation}\label{eq-problem}
\min_{Z\in\mathbb{R}^{N\times N}} \left\|Z\right\|_{\mathrm{PSV}_{a,p}}
\quad\text{subject to}\ \ \left\|\mathcal{M}(Z)-y_\beta\right\|_2\le \epsilon_\beta.
\end{equation}

We first establish the following stable recovery in the gradient domain.

\begin{theorem}\label{thm-stableG}
Let $K,n\in\mathbb{N}$, $N:=2^n$,  and $s\in\mathbb{N}$ with $s<N(N-1)$.
Let $\mathcal{A}:\ \mathbb{R}^{(N-1)\times (2N)}\to\mathbb{R}^{K}$ be a
linear operator having the RIP with
$\delta_{2s}< \overline{\delta}$ and
$\mathcal{M}:\ \mathbb{R}^{N\times N}\to\mathbb{R}^{K}$ be a linear operator
defined by setting, for any matrix $Z\in\mathbb{R}^{N\times N}$,
$
\mathcal{M}(Z):=(\mathcal{A}^0-\mathcal{A}_0)([Z,Z^T]).
$
Then, for any given image $X\in\mathbb{R}^{N\times N}$
and any noisy measurement $y=\mathcal{M}(X)+\xi$ with $\|\xi\|_2\le \epsilon$
for some $\epsilon\in[0,\infty)$,
\begin{equation}\label{eq-bound2}
\left\|\nabla X-\nabla \widehat{X}\right\|^p_2\le
C_1  \max\left\{\left[\frac{\|\nabla X-(\nabla X)_{s}\|^{p}_{p}}{\beta^{p-1}\sqrt{s}}\right]^p,
\left[2\left(\frac{a+1}{a}\right)\right]^{1-p} \frac{\|\nabla X-(\nabla X)_{s}\|^p_{p}}{\sqrt{s}^{2-p}}\right\}
+ C_2\epsilon^p
\end{equation}
and
\begin{align}\label{eq-bound1}
&\left\|\nabla X-\nabla \widehat{X}\right\|^p_1 \nonumber\\
&\quad\le
 C_3\max\left\{\left[ \frac{\|\nabla X-(\nabla X)_{s}\|^{p}_{p}}{\beta^{p-1}} \right]^p,
\left[2\left(\frac{a+1}{a}\right)\right]^{1-p} \frac{\|\nabla X-(\nabla X)_{s}\|^p_{p}}{s^{1-p}}\right\}
+ C_4 \sqrt{s}^p\epsilon^p ,
\end{align}
where $\widehat{X}:=\beta\widehat{X}_\beta$
with $\beta$ satisfying \eqref{eq-C} and $\widehat{X}_\beta$
being a minimizer of the normalized minimization problem \eqref{eq-problem}.
Here the positive constants
$$
C_1:=2^p\left(\frac{a+1}{a}\right)^p\left\{1+\widetilde{C}^p
\left[1+\left(\frac{a+1}{a}\right)^{\frac{2}{p}}\right]^{\frac{p}{2}}\right\},
$$
$$C_2:=2^p C^p\left[1+\left(\frac{a+1}{a}\right)^{\frac{2}{p}}\right]^{\frac{p}{2}},$$
$$
C_3:=2^p \left(\frac{a+1}{a}\right)^p \left\{1+\widetilde{C}^p \left[1+\left(\frac{a+1}{a}\right)^{\frac{1}{p}}\right]^p\right\},$$
and
$$C_4:=2^p C^p\left[1+\left(\frac{a+1}{a}\right)^{\frac{1}{p}}\right]^p
$$
with
\begin{align*}
C:=\frac{\sqrt{1+\delta_{2s}} \,\overline{\mu}_0(1-\overline{\mu}_0)(2-p)
+\overline{\mu}_0 (2-p-\overline{\mu}_0) \sqrt{(1-p)(\overline{\delta}-\delta_{2s})}}
{(2-p-\overline{\mu}_0)^2 (\overline{\delta}-\delta_{2s})}
\end{align*}
and
\begin{align*}
\widetilde{C}:=\frac{p[\delta_{2s}(2-p)]^{\frac{2-p}{p}}\overline{\mu}_0^2}
{2(1+\delta_{2s})^{\frac{2-2p}{p}}(2-p-\overline{\mu}_0)^2 (\overline{\delta}-\delta_{2s})}.
\end{align*}
\end{theorem}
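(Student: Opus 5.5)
We reduce the statement to a sparse-recovery problem for the gradient. Set $D:=\widehat{X}_\beta-X_\beta$ and
$$H:=\bigl[\partial_1 D,(\partial_2 D)^T\bigr]\in\mathbb{R}^{(N-1)\times(2N)}.$$
The nonzero entries of $H$ are exactly those of $\nabla D$ rearranged, so $\|H\|_q=\|\nabla D\|_q$ for all $q$ and $\|H\|_{\mathrm{PS}_{a,p}}=\|\nabla D\|_{\mathrm{PS}_{a,p}}$.

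By Lemma \ref{lem-M0}, applied entrywise to the matrices $A_k$ defining $\mathcal{A}$, we have $\mathcal{A}(H)=\mathcal{M}(D)$. Since $X_\beta$ and $\widehat{X}_\beta$ are both feasible for \eqref{eq-problem}, this gives the tube constraint $\|\mathcal{A}(H)\|_2\le 2\epsilon_\beta$.

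Optimality of $\widehat{X}_\beta$ gives $\|\nabla X_\beta+H\|_{\mathrm{PS}_{a,p}}\le\|\nabla X_\beta\|_{\mathrm{PS}_{a,p}}$. Let $T$ be the support of the $s$ largest entries of $\nabla X$. Using the subadditivity and the reverse triangle inequality in Lemma \ref{lem-TLp,prop}(iv), we obtain the cone-type condition
$$\|H_{T^\complement}\|_{\mathrm{PS}_{a,p}}\le\|H_T\|_{\mathrm{PS}_{a,p}}+2\|(\nabla X_\beta)_{T^\complement}\|_{\mathrm{PS}_{a,p}}.$$
The normalization $\|X_\beta\|_{\mathrm{PSV}_{a,p}}\le\frac12$ keeps the entries of $\nabla X_\beta$, and hence of $H$, in the regime $|t|\le 1$. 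There Lemma \ref{lem-TLp,prop}(ii)(iii) sandwiches $\rho_{a,p}$ between $|t|^p$ and $\frac{a+1}{a}|t|^p$, so the cone condition becomes an $\ell_p$ cone condition with the factor $\frac{a+1}{a}$ on the $T$-side:
$$\|H_{T^\complement}\|_p^p\le\tfrac{a+1}{a}\|H_T\|_p^p+2\tfrac{a+1}{a}\|(\nabla X_\beta)_{T^\complement}\|_p^p.$$
This explains why the constant $\gamma=1$ enters $\overline{\delta}$ through $\frac{a+1}{a}$. Undoing the scaling by $\beta$ produces the $\beta^{p-1}$ factors in the statement, since $\|(\nabla X_\beta)_{T^\complement}\|_p^p=\beta^{-p}\|\nabla X-(\nabla X)_s\|_p^p$ while the estimate is later multiplied by $\beta^p$. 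I expect the delicate point here to be showing that $H$ genuinely stays in the region $|t|\le1$, or otherwise handling large entries through the saturation bound $\rho_{a,p}\le a+1$.

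The core step is an RIP argument via the sparse convex combination Lemma \ref{lem-sparse repre}, following \cite{ZL19,lcgy}.

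1. Split $H_{T^\complement}$ by a threshold $\alpha$, chosen as $\alpha^p$ equal to $\frac{1}{s}$ times the right-hand side of the $\ell_p$ cone condition, with a free parameter $\eta$. Write $H_{T^\complement}=H^{(1)}+H^{(2)}$, where $H^{(1)}$ collects the entries of magnitude above $\alpha$. These are few (at most $s$), so $H_T+H^{(1)}$ is $2s$-sparse-controlled.
2. Write $H^{(2)}=\sum_i\lambda_i v_i$ with $s$-sparse $v_i$ and $\sum_i\lambda_i\|v_i\|_2^2\le\alpha^p\|H^{(2)}\|_{2-p}^{2-p}$.
3. Apply the standard identity $\sum_i\lambda_i\|\mathcal{A}(H_T+H^{(1)}+\mu v_i)\|_2^2-\cdots$ with the optimally tuned $\mu$.
4. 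Expand using the RIP with $\delta_{2s}$, and bound $\|H^{(2)}\|_{2-p}^{2-p}$ via Hölder together with Lemma \ref{lem-Tc<T}.

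This yields a quadratic inequality in $\|H_T+H^{(1)}\|_2$. Its leading coefficient is positive exactly when $\delta_{2s}<\frac{\mu_0}{2-p-\mu_0}$. Optimizing over the free parameter leads to the equation \eqref{eq-equation} for $\eta_0$, which is how $\overline{\mu}_0$ and $\overline{\delta}$ arise. Solving the quadratic gives
$$\|H_T+H^{(1)}\|_2\le C\epsilon_\beta+\widetilde{C}\cdot(\text{tail term}).$$
The main obstacle is this step: arranging the convex-combination identity so that the exponents $\frac{2}{p}$ and $\frac{p}{2-p}$ match \eqref{eq-equation} and the constants come out as $C$ and $\widetilde{C}$.

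Finally, we bound the remaining tail part of $H$ by the same $\ell_p$ cone condition, for example $\|H_{T^\complement}\|_2\le\bigl(\frac{a+1}{a}\bigr)^{1/p}\|H_T\|_2+\cdots$. This gives the factors $\bigl[1+\bigl(\frac{a+1}{a}\bigr)^{2/p}\bigr]^{p/2}$ in $C_1,C_2$. For the $\ell_1$ bound we use $\|H_T\|_1\le\sqrt{s}\|H_T\|_2$, which gives the factors $\bigl[1+\bigl(\frac{a+1}{a}\bigr)^{1/p}\bigr]^p$ in $C_3,C_4$. We then raise to the power $p$ using $(u+v)^p\le u^p+v^p$ and $u+v\le2\max\{u,v\}$; this produces the $2^p$ and the max of the two tail expressions, one from the $\ell_p$ norm of the tail and one from its $\ell_{2-p}$ norm. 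Multiplying by $\beta$ to return from $X_\beta$ to $X$ and $\widehat{X}$ gives \eqref{eq-bound2} and \eqref{eq-bound1}.
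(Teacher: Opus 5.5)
Your reduction (the matrix $H$, the tube constraint via Lemma \ref{lem-M0}, the cone condition, the conversion by Lemma \ref{lem-TLp,prop}(ii), and the rescaling by $\beta$) matches the paper. The core step, however, has a genuine gap, and it comes from keeping $T$ as the $s$ largest entries of $\nabla X$ instead of re-selecting it from $H$.

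\textbf{The sparsity count fails.} With your $T$ you have no $\ell_\infty$ control of $H_{T^\complement}$, which is why you split $H_{T^\complement}=H^{(1)}+H^{(2)}$. But $H_T+H^{(1)}$ may have $2s-1$ entries. With $s$-sparse $v_i$, the vectors $H_T+H^{(1)}+\mu v_i$ are then up to $(3s-1)$-sparse, so the convex-combination identity is not controlled by $\delta_{2s}$.

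\textbf{Fixing the count still loses the constant.} You can repair the count in the Cai--Zhang way, decomposing $H^{(2)}$ into $(s-|\mathrm{supp}\,H^{(1)}|)$-sparse pieces; this is allowed since $\|H^{(2)}\|_p^p\le(s-|\mathrm{supp}\,H^{(1)}|)\alpha^p$. But your only bound on $\|H^{(2)}\|_\infty^p$ is $\alpha^p=\frac1s(\frac{a+1}{a}\|H_T\|_p^p+\cdots)$. Lemma \ref{lem-sparse repre} and H\"older then give $\sum_i\lambda_i\|v_i\|_2^2\le[(\frac{a+1}{a})^{2/p}\|H_T+H^{(1)}\|_2^2+\cdots]^{\frac{p}{2-p}}(\|H^{(2)}\|_2^2)^{\frac{2-2p}{2-p}}$.

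In the paper's computation, the coefficient $\kappa$ of the head norm inside $\Pi$ fixes the threshold: $\mu_0$ solves $\frac p2\kappa\mu^{2/p}+\mu=1-\frac p2$, which is \eqref{eq-equation} after substituting $\eta=\kappa^{\frac{p}{2-p}}\mu$. The bound $\overline{\delta}$ corresponds to $\kappa=\frac{a+1}{a}$. Your route gives $\kappa=(\frac{a+1}{a})^{2/p}$, hence a smaller $\mu_0$ and a strictly smaller RIP bound. For $p=1$ you would get $1/\sqrt{1+(\frac{a+1}{a})^2}$ instead of $1/\sqrt{1+\frac{a+1}{a}}$. So neither $\overline{\delta}$ nor $C$, $\widetilde C$ can come out. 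For the same reason, Lemma \ref{lem-Tc<T}, which you invoke in step 4 and for the tail, does not apply: it requires the head to consist of the largest entries.

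\textbf{The missing idea.} This is what the paper does inside Proposition \ref{prop-u}, applied with $\gamma=1$, $\epsilon=2\epsilon_\beta$ and $\sigma=2\|(\nabla X_\beta)_{S^\complement}\|_{\mathrm{PS}_{a,p}}$: re-select $T$ as the $s$ largest entries of $H$. Since $\rho_{a,p}$ is increasing, $\|H_{T^\complement}\|_{\mathrm{PS}_{a,p}}\le\|H_{S^\complement}\|_{\mathrm{PS}_{a,p}}\le\|H_S\|_{\mathrm{PS}_{a,p}}+\sigma\le\|H_T\|_{\mathrm{PS}_{a,p}}+\sigma$, so the cone condition survives. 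Now $\|H_{T^\complement}\|_\infty^p\le\|H_T\|_p^p/s$, with coefficient one. This single fact does two things:
\begin{itemize}
\item It verifies the hypotheses of Lemma \ref{lem-sparse repre} for all of $H_{T^\complement}$ with $\alpha^p=(\|H_T\|_{\mathrm{PS}_{a,p}}+\sigma)/s$. No splitting is needed, and $H_T+\mu v_i$ is $2s$-sparse.
\item It puts only one factor $\frac{a+1}{a}$ into $\Pi$.
\end{itemize}
The free parameter is then $\mu$ itself: one maximizes over $\|H_{T^\complement}\|_2^2$ and sets $\mu=\mu_0$, rather than tuning a threshold.

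\textbf{The bound $\|H\|_\infty\le1$.} This does not follow from bounds on $\nabla X_\beta$ alone. You need $\|\widehat X_\beta\|_{\mathrm{PSV}_{a,p}}\le\|X_\beta\|_{\mathrm{PSV}_{a,p}}\le\frac12$ together with Lemma \ref{lem-TLp,prop}(iv) to get $\|D\|_{\mathrm{PSV}_{a,p}}\le1$, hence $\rho_{a,p}(H_{i,j})\le1$. The same bound turns the term $\sigma s^{-\frac{2-p}{2}}\|H_T\|_2^{2-p}$ into one linear in $\|H_T\|_2$, so the final inequality is genuinely quadratic.
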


In practical applications, natural images often have sparse discrete gradients.
For those images with sparse discrete gradients,
Theorem \ref{thm-stableG} implies the exact  recovery in the gradient domain
from a noiseless measurement. To be precise,
under the assumptions of Theorem \ref{thm-stableG},
if an image $X\in\mathbb{R}^{N\times N}$ has an $s$-sparse 
discrete gradient $\nabla X$ (treated as a vector),
then  $\nabla X=\nabla \widehat{X}$, where $\widehat{X}:=\beta\widehat{X}_\beta$
and $\widehat{X}_\beta$ is the minimizer of the normalized minimization:
\begin{equation}\label{eq-TTVp,exact}
\min_{Z\in\mathbb{R}^{N\times N}} \left\|Z\right\|_{\mathrm{PSV}_{a,p}}
\quad\text{subject to}\ \ y_\beta=\mathcal{M}(Z)
\end{equation}
with $y_\beta$ being a normalized noiseless measurement and $\beta$ satisfying \eqref{eq-C}.

Moreover, as $a\to \infty$, we have $\|\cdot\|_{\mathrm{PS}_{a,p}}\to\|\cdot\|_p^p$.
In this particular case, due to the scaling property of $\|\cdot\|_p$,
the minimization \eqref{eq-TTVp,exact} is equivalent to the following
\emph{$p$-th power total variation (TV$_p$)} minimization:
\begin{equation}\label{eq-TpV}
\min_{Z\in\mathbb{R}^{N\times N}} \left\|Z\right\|_{\mathrm{TV}_p}
\quad\text{subject to}\ \  \mathcal{M}(X)=\mathcal{M}(Z),
\end{equation}
where $\|Z\|_{\mathrm{TV}_p}:=\sum_{i,j} [|(\nabla_1 X)_{i,j}|^p+|(\nabla_2 X)_{i,j}|^p]$
is the anisotropic $p$-th power total variation;
see, for example, \cite{AM15}.
In addition, in this case, $\overline{\eta}_0$ reduces to the 
unique positive solution of the equation
\begin{equation}\label{eq-equation,p}
\frac{p}{2}\eta^{\frac{2}{p}}+\eta-  1 + \frac{p}{2} =0.
\end{equation}
To distinguish this $\overline{\eta}_0$ with the one in \eqref{e2.14},
we rewrite it as  $\zeta_0$.
Then, in this case, $\overline{\delta}$ in \eqref{e2.14} reduces to
\begin{equation}\label{eq-deltap}
\frac{\zeta_0}{2-p-\zeta_0}=:\delta_p.
\end{equation}
These, together with Theorem \ref{thm-stableG},
imply the following exact recovery in the gradient domain
via the TV$_p$ minimization for images with sparse discrete gradients.

\begin{corollary}\label{coro-stableG}
Let $K,n\in\mathbb{N}$, $N:=2^n$,  and $s\in\mathbb{N}$ with $s<N(N-1)$.
Let $\mathcal{A}:\ \mathbb{R}^{(N-1)\times (2N)}\to\mathbb{R}^{K}$ be a
linear operator having the  RIP with
$\delta_{2s}< \delta_{p}$ and
$\mathcal{M}:\ \mathbb{R}^{N\times N}\to\mathbb{R}^{K}$ be a linear operator
defined by setting, for any matrix $Z\in\mathbb{R}^{N\times N}$,
$
\mathcal{M}(Z):=(\mathcal{A}^0-\mathcal{A}_0)([Z,Z^T]).
$
Then, for any image $X\in\mathbb{R}^{N\times N}$ with an $s$-sparse discrete gradient,
$\nabla X=\nabla \widehat{X}$, where $\widehat{X}$ is a minimizer of \eqref{eq-TpV} and
$\delta_p$ is defined in \eqref{eq-deltap}.
\end{corollary}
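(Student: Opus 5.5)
The plan is to prove Corollary \ref{coro-stableG} by reducing it to the null-space/RIP argument behind Theorem \ref{thm-stableG} in the limit $a\to\infty$. A direct TV$_p$ argument is cleaner than passing to a limit of minimizers, since minimizers of \eqref{eq-TTVp,exact} need not converge to minimizers of \eqref{eq-TpV}.

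\textbf{Step 1 (setup).} Let $\widehat X$ be a minimizer of \eqref{eq-TpV}, and set $D:=X-\widehat X$. Then $\mathcal{M}(D)=0$. Put
$$H:=[\partial_1 D,(\partial_2 D)^T]\in\mathbb{R}^{(N-1)\times(2N)}.$$
By Lemma \ref{lem-M0}, applied componentwise with $M=M_k$, we have $\mathcal{A}(H)=(\mathcal{A}^0-\mathcal{A}_0)([D,D^T])=\mathcal{M}(D)=0$. The entries of $H$ are exactly the nonzero-position entries of $\nabla D$. Hence $\|H\|_q=\|\nabla D\|_q$ for all $q$, and $H$ inherits the sparsity pattern of $\nabla X$. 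So $\nabla X$ corresponds to an $s$-sparse $h_X:=[\partial_1X,(\partial_2X)^T]$, and $H=h_X-h_{\widehat X}$.

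\textbf{Step 2 (cone constraint).} Let $T$ be the support of $h_X$, with $|T|\le s$. Minimality gives $\|h_X-H\|_p^p\le\|h_X\|_p^p$. The reverse triangle inequality for $\|\cdot\|_p^p$ then yields $\|H_{T^\complement}\|_p^p\le\|H_T\|_p^p$.

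\textbf{Step 3 (main obstacle: showing $H=0$ under $\delta_{2s}<\delta_p$).} This is the sharp step and follows the proof of Theorem \ref{thm-stableG} with $\frac{a+1}{a}$ replaced by $1$ and $\epsilon=0$. I would reuse that argument rather than redo it: there the proof only uses $\|H_{T^\complement}\|_{\mathrm{PS}_{a,p}}\le\|H_T\|_{\mathrm{PS}_{a,p}}$ together with Lemma \ref{lem-TLp,prop}(ii),(iii), which compare $\rho_{a,p}$ with $|t|^p$ and $\frac{a+1}{a}|t|^p$. In the pure $\ell_p$ case these comparisons become identities.

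Concretely, assume $H\neq0$ and normalize. Take $T_0$ to be the indices of the $s$ largest entries of $H$, and set $\alpha:=\bigl(\|H_{T_0}\|_p^p/s\bigr)^{1/p}$ suitably scaled by a parameter $\eta$. Split $H_{T_0^\complement}$ into the part with entries above $\alpha$ and the remainder $u$; the large part has at most $s$ entries and is absorbed into a $2s$-sparse block. By Lemma \ref{lem-sparse repre}, write $u=\sum\lambda_iv_i$ with each $v_i$ $s$-sparse and
$$\sum\lambda_i\|v_i\|_2^2\le\|u\|_\infty^p\|u\|_{2-p}^{2-p}.$$
Then use the Cai--Zhang identity $\sum\lambda_i\|\mathcal{A}(H_{T_0}+\mu u\ \text{stuff})\|^2$ with $\mathcal{A}(H)=0$ to get a quadratic inequality. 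The RIP lower and upper bounds on $2s$-sparse vectors give $0\ge[\text{positive}]\cdot(\delta_p-\delta_{2s})\|H_{T_0}\|_2^2$. Optimizing the free parameter $\eta$ produces exactly equation \eqref{eq-equation,p}, and hence the threshold $\delta_p=\zeta_0/(2-p-\zeta_0)$. This contradicts $\delta_{2s}<\delta_p$ unless $H_{T_0}=0$, and then the cone constraint forces $H=0$.

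\textbf{Step 4 (conclusion).} $H=0$ means $\partial_1D=0$ and $\partial_2D=0$, so $\nabla X=\nabla\widehat X$.

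Alternatively, if the proof of Theorem \ref{thm-stableG} is organized as a general lemma for any $h$ with $\mathcal{A}(h)=0$ satisfying the $\rho$-cone inequality, the corollary follows by taking $a\to\infty$ in the constants. This works because $\overline{\mu}_0\to\zeta_0$ and $\overline\delta\to\delta_p$, so for large $a$ we have $\delta_{2s}<\overline\delta$. One then checks that the bounds \eqref{eq-bound2} and \eqref{eq-bound1} vanish since $\nabla X=(\nabla X)_s$ and $\epsilon=0$. I would only use this route after verifying that the cone inequality transfers uniformly in $a$.
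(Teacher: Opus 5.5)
Your proposal is correct and matches the paper's argument. The paper obtains Corollary \ref{coro-stableG} by specializing Theorem \ref{thm-stableG}, that is, Proposition \ref{prop-u} with $\gamma=1$, $\sigma=0$, $\epsilon=0$, to the case where $\rho_{a,p}$ is replaced by $|\cdot|^p$ and $\frac{a+1}{a}$ by $1$; your direct TV$_p$ route (Lemma \ref{lem-M0}, the $\ell_p$ cone inequality, the Zhang--Li type identity) is a rigorous form of that specialization, and, as you note, it avoids any limit of minimizers.

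One correction to your concrete sketch concerns the tail and the free parameter. No splitting of the tail is needed: with $T_0$ the indices of the $s$ largest entries and $\alpha^p:=\|H_{T_0}\|_p^p/s$, every entry of $H_{T_0^\complement}$ is already at most $\alpha$. If you did absorb a nonempty large part into the head block, you would have to apply the RIP to $3s$-sparse vectors, and $\delta_{2s}$ would no longer suffice. The parameter fixed by \eqref{eq-equation,p} is not a rescaling of $\alpha$; it is the weight $\mu$ in $H_{T_0}+\mu v_i$, chosen as $\mu=\zeta_0$ after taking the worst case over $\|H_{T_0^\complement}\|_2^2$.
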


Furthermore, the following theorem establishes  the optimality of the upper bound $\delta_{p}$
in Corollary \ref{coro-stableG} for gradient recovery via the TV$_p$ minimization,
which further implies that the upper bound $\overline{\delta}$ in Theorem \ref{thm-stableG}
is asymptotically optimal in $a$.

\begin{theorem}\label{prop-sharp}
For any $\varepsilon\in(0,1)$ and $s\ge\frac{8}{\varepsilon}$ with $\frac{s}{4}\in\mathbb{N}$,
there exist a linear operator $\mathcal{A}$, having the RIP  with
$\delta_{2s}< \delta_{p}+\varepsilon$, and an image $X_0\in\mathbb{R}^{N\times N}$
with an $s$-sparse discrete gradient such that $\nabla X_0$
cannot be recovered  by \eqref{eq-TpV}, where $N$ is big enough,
$\delta_p$ is defined in \eqref{eq-deltap}, and $\mathcal{M}$ in \eqref{eq-TpV} is
defined by setting, for any matrix $Z\in\mathbb{R}^{N\times N}$,
$\mathcal{M}(Z):=(\mathcal{A}^0-\mathcal{A}_0)([Z,Z^T]).$
\end{theorem}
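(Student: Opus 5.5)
The plan is to build the counterexample directly in the gradient domain: first construct an $\ell_p$-type counterexample for a vector of the size of $[\partial_1 Z,(\partial_2 Z)^T]$, then realize its pieces as gradients of actual images.

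\textbf{Reduction to the gradient domain.} By Lemma \ref{lem-M0}, for every $Z\in\mathbb{R}^{N\times N}$ we have
$$\mathcal{M}(Z)=(\mathcal{A}^0-\mathcal{A}_0)([Z,Z^T])=\mathcal{A}\big([\partial_1 Z,(\partial_2 Z)^T]\big).$$
Moreover, $\|Z\|_{\mathrm{TV}_p}=\|[\partial_1 Z,(\partial_2 Z)^T]\|_p^p$, since the zero padding in $\nabla$ adds nothing. So it suffices to find three objects:
\begin{itemize}
\item[(a)] a linear operator $\mathcal{A}$ on $\mathbb{R}^{(N-1)\times(2N)}$ with $\delta_{2s}<\delta_p+\varepsilon$;
\item[(b)] an image $X_0$ whose gradient matrix $G_0:=[\partial_1X_0,(\partial_2X_0)^T]$ is $s$-sparse;
\item[(c)] an image $H$ with $\mathcal{A}(G_H)=0$, where $G_H:=[\partial_1H,(\partial_2 H)^T]$, and $\|G_0+G_H\|_p^p\le\|G_0\|_p^p$ with $G_0+G_H\neq G_0$.
\end{itemize}
Then $X_0+H$ is feasible for \eqref{eq-TpV} and is at least as good as $X_0$, so $\nabla X_0$ is not (uniquely) recovered.

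\textbf{Realizing gradients by bumps.} I would take $H$ to be a sum of signed multiples of isolated interior pixel indicators $E_{i,j}$, placed far apart; this is possible once $N$ is large enough. Each such bump contributes exactly $4$ nonzero gradient entries of equal magnitude with disjoint supports. The set $S$ will consist of $s/4$ bumps carrying value $1$ (so $4\cdot(s/4)=s$ entries of magnitude $1$). Its complement will consist of $t/4$ further bumps carrying value $-\tau$, for a small $\tau>0$ and some multiple $t$ of $4$ to be chosen. Set $X_0:=-(\text{the } S\text{-part of }H)$, so $G_0=-G_{H,S}$ is $s$-sparse. Then $G_0+G_H=G_{H,S^\complement}$, and the competitor condition becomes
$$t\,\tau^p\le s .$$

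\textbf{Choosing the operator.} Let $h:=G_H$, viewed as a vector, and define
$$\mathcal{A}:=c\Big(I-\frac{h h^T}{\|h\|_2^2}\Big)$$
for a normalizing constant $c>0$, possibly modified by an additional rescaling on the support of $h$ as in the sharp $\ell_p$ constructions of \cite{ZL19}. This operator kills $h$. For a $2s$-sparse $u$,
$$\|\mathcal{A}u\|_2^2=c^2\Big(\|u\|_2^2-\frac{\langle u,h\rangle^2}{\|h\|_2^2}\Big),$$
so the RIC is governed by
$$\theta:=\max_{|T|\le 2s}\frac{\|h_T\|_2^2}{\|h\|_2^2}.$$
Choosing $c$ optimally gives $\delta_{2s}=\theta/(2-\theta)$.

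\textbf{Optimizing the parameters (main obstacle).} The main step is to choose $t$ and $\tau$ so that three things hold at once: $t\tau^p=s$ (equality is the extremal case), $\theta$ is as small as possible, and the resulting $\delta_{2s}$ matches $\delta_p=\zeta_0/(2-p-\zeta_0)$.

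Since $|S|=s$ and the $2s$-sparse maximizer $T$ takes $S$ plus $s$ entries of size $\tau$,
$$\theta=\frac{s+s\tau^2}{s+t\tau^2}.$$
Substituting $t=s\tau^{-p}$ and minimizing over $\tau$ leads to a stationarity equation in $\eta$ (a suitable power of $\tau$). I expect this equation to reduce to \eqref{eq-equation,p}, and hence to $\delta_{2s}=\delta_p$ in the continuum limit.

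If the plain projection does not reach $\delta_p$, I would try first the two-level operator of \cite{ZL19}, which treats $S$ and $S^\complement$ with different weights. Either way, I plan to verify this optimization by direct computation.

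Finally, the divisibility constraints ($s/4\in\mathbb{N}$, $t\in4\mathbb{N}$) force rounding of $t$. This perturbs $\theta$ by $O(1/s)$, and the assumption $s\ge 8/\varepsilon$ is exactly what should absorb this perturbation into $\delta_{2s}<\delta_p+\varepsilon$. To keep the rounding from violating $t\tau^p\le s$, I would round $t$ down and slightly increase $\tau$.
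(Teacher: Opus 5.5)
Your plan is correct and is essentially the paper's proof. The paper uses the same rank-one projection $\mathcal{A}(Z)=\sqrt{\tfrac{2-p}{2-p-\zeta_0}}\,\bigl(Z-\langle\overline{\Sigma}_0,Z\rangle\overline{\Sigma}_0\bigr)$, which is your optimally scaled operator. Its kernel vector has (before normalization) $s$ entries of magnitude $1$ and $\ell\approx s/\zeta_0$ entries of magnitude $\zeta_0^{1/p}$, and it is realized by L-shaped pixel blocks instead of your isolated bumps.

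The optimization you left unchecked does close. With $\eta=\tau^p$, your stationarity equation is exactly \eqref{eq-equation,p}, and the minimum value is $\theta=\frac{1+\zeta_0^{2/p}}{1+\zeta_0^{2/p-1}}=\frac{2\zeta_0}{2-p}$. This gives $\delta_{2s}=\theta/(2-\theta)=\delta_p$, so the two-level operator of \cite{ZL19} is not needed.

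When rounding, do not enlarge $\tau$. Keep $\tau=\zeta_0^{1/p}$ and take $t$ to be a multiple of $4$ strictly below $s/\zeta_0$. Then $t\zeta_0<s$, so $X_0$ is not even a minimizer, as in the paper, rather than merely a non-unique one.
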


Finally, by invoking the strong Sobolev inequality,
we establish the following stable  image reconstruction.

\begin{theorem}\label{thm-stable}
Let $n,K_1,K_2\in\mathbb{N}$, $N:=2^n$,  $K:=K_1 +K_2$, and $s\in\mathbb{N}$ with $2s+1<N^2$.
Let $\mathcal{A}:\ \mathbb{R}^{(N-1)\times (2N)}\to\mathbb{R}^{K_1}$ be a
linear operator having the RIP with $\delta_{2s}< \overline{\delta}$,
$\mathcal{B}:\ \mathbb{R}^{N\times N}\to\mathbb{R}^{K_2}$
be a linear operator such that the composite operator
$\mathcal{B}\mathcal{H}^{-1}:\ \mathbb{R}^{N\times N}\to\mathbb{R}^{K_2}$
has the RIP with $\delta_{2s+1}<1$, and
$\mathcal{M}:\ \mathbb{R}^{N\times N}\to\mathbb{R}^{K}$ be a linear operator
defined by setting, for any matrix $Z\in\mathbb{R}^{N\times N}$,
$\mathcal{M}(Z):=((\mathcal{A}^0-\mathcal{A}_0)
[Z,Z^T],\mathcal{B}(Z)).$
Then, for any given image $X\in\mathbb{R}^{N\times N}$
and any noisy measurement $y=\mathcal{M}(X)+\xi$ with $\|\xi\|_2\le \epsilon$ for some $\epsilon\in[0,\infty)$,
\begin{align}\label{eq-stablePSV}
\left\|X-\widehat{X}\right\|_2 \le
C_0\log\left(\frac{N^2}{s}\right)\left[\max\left\{ \frac{\beta^{1-p}\|\nabla X-(\nabla X)_{s}\|^{p}_{p}}{ \sqrt{s}},
\frac{\|\nabla X-(\nabla X)_{s}\|_{p}}{\sqrt{s}^{\frac{2}{p}-1}}\right\}
+  \epsilon\right],
\end{align}
where $\widehat{X}:=\beta\widehat{X}_\beta$ with $\widehat{X}_\beta$
being a minimizer of the normalized minimization  \eqref{eq-problem}
and $\beta$ satisfying \eqref{eq-C} and where $C_0$ is a positive constant depending only on
$a$, $p$, $s$, $N$, and $\mathcal{A}$.
\end{theorem}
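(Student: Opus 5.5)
The plan follows Needell--Ward: reduce the image-domain error to a gradient-domain error via Lemma \ref{lem-sobolev}, and control the gradient-domain error by Theorem \ref{thm-stableG}. Set $D:=X-\widehat{X}$; it suffices to work at the normalized level with $D_\beta:=X_\beta-\widehat{X}_\beta$ and then multiply by $\beta$.

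\textbf{Step 1: measurement estimates.} Both $X_\beta$ and $\widehat{X}_\beta$ are feasible for \eqref{eq-problem}. Hence
\[
\|\mathcal{M}(D_\beta)\|_2\le 2\epsilon_\beta .
\]
Since $\mathcal{M}(Z)=((\mathcal{A}^0-\mathcal{A}_0)[Z,Z^T],\mathcal{B}(Z))$, both block components are bounded by $2\epsilon_\beta$. In particular,
\[
\|\mathcal{B}(D_\beta)\|_2\le 2\epsilon_\beta
\quad\text{and}\quad
\|(\mathcal{A}^0-\mathcal{A}_0)[D_\beta,D_\beta^T]\|_2\le 2\epsilon_\beta .
\]
By Lemma \ref{lem-M0}, the second bound is $\|\mathcal{A}([\partial_1 D_\beta,(\partial_2 D_\beta)^T])\|_2\le 2\epsilon_\beta$, the tube constraint used in the proof of Theorem \ref{thm-stableG}. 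Since $\widehat X_\beta$ is a minimizer of \eqref{eq-problem} with the larger operator $\mathcal{M}$, the argument of Theorem \ref{thm-stableG} goes through unchanged. That argument uses only minimality of $\|\cdot\|_{\mathrm{PSV}_{a,p}}$, feasibility of $X_\beta$, and the RIP of $\mathcal{A}$ with $\delta_{2s}<\overline\delta$ acting on the rearranged gradient. Therefore \eqref{eq-bound1} holds for this $\widehat{X}$ with the same constants.

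\textbf{Step 2: apply the Sobolev inequality.} Applying Lemma \ref{lem-sobolev} to $D$ with tube constraint $\|\mathcal{B}(D)\|_2\le 2\epsilon$ gives
\[
\|D\|_2\le C\Big[\frac{\|\nabla D\|_1}{\sqrt{s}}\log\frac{N^2}{s}+\epsilon\Big].
\]
Here $\|D\|_{\mathrm{TV}}=\|\nabla D\|_1$ in the anisotropic sense.

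\textbf{Step 3: bound $\|\nabla D\|_1$ and combine.} Take the $p$-th root of \eqref{eq-bound1}, using $(u+v)^{1/p}\le 2^{1/p-1}(u^{1/p}+v^{1/p})$. This yields
\[
\|\nabla D\|_1\lesssim \max\Big\{\frac{\sigma^p}{\beta^{p-1}},\ \frac{\sigma}{s^{(1-p)/p}}\Big\}+\sqrt{s}\,\epsilon,
\]
where $\sigma:=\|\nabla X-(\nabla X)_s\|_p$ and the implicit constant depends on $a,p,\delta_{2s}$. Dividing by $\sqrt{s}$ gives exactly the two terms in \eqref{eq-stablePSV}:
\[
\frac{\beta^{1-p}\sigma^p}{\sqrt s}
\quad\text{and}\quad
\frac{\sigma}{\sqrt{s}^{\,2/p-1}},
\]
the latter because $s^{(1-p)/p}\sqrt s=\sqrt s^{\,2/p-1}$. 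The $\epsilon$-contribution becomes $\epsilon\log(N^2/s)$. Since $\log(N^2/s)\ge\log 2$ when $2s+1<N^2$, the stand-alone $\epsilon$ term of Step 2 can also be absorbed into $C_0\log(N^2/s)\,\epsilon$.

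\textbf{Main obstacle.} The delicate point is the scaling bookkeeping between the normalized problem and $\widehat{X}=\beta\widehat{X}_\beta$. One must check that Theorem \ref{thm-stableG} is applied in its rescaled form, whose error is measured for $X$ itself rather than $X_\beta$, which is why $\beta^{p-1}$ appears. One must also check that the Sobolev step, which is $1$-homogeneous, is applied to $D$ and not to $D_\beta$, so that no stray power of $\beta$ remains. All constants from Theorem \ref{thm-stableG} and Lemma \ref{lem-sobolev} are then collected into $C_0$.
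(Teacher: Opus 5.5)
Your proposal is correct and matches the paper's proof. You apply Lemma \ref{lem-sobolev} to $X-\widehat{X}$ using $\|\mathcal{B}(X-\widehat{X})\|_2\le\|\mathcal{M}(X-\widehat{X})\|_2\le 2\epsilon$, bound $\|\nabla X-\nabla\widehat{X}\|_1$ via \eqref{eq-bound1} and take $p$-th roots, and you usefully make explicit two points the paper leaves implicit: Theorem \ref{thm-stableG} still applies with the augmented operator $\mathcal{M}$, and the lone $\epsilon$ term is absorbed because $\log(N^2/s)>\log 2$.
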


\begin{remark}
\begin{enumerate}
\item[\rm(i)]
When $p=1$, the PSV$_{a,p}$ minimization  reduces to the TTV minimization
introduced in \cite{HCGN22}. In this case, the RIP condition
becomes $\delta_{2s}<\frac{1}{\sqrt{1+\frac{a+1}{a}}}$
which coincides with the one in \cite[Theorem 3.3]{HCGN22};
moreover, as $a\to\infty$, the RIP condition
approaches the condition $\delta_{2s}<\frac{\sqrt{2}}{2}$,
which improves the one
$\delta_{5s}<\frac{1}{3}$ obtained in \cite[Theorem 5]{NW13}.

\item[\rm(ii)] In \eqref{eq-stablePSV}, to achieve a better error bound,
we choose $\beta\in[1,\infty)$ as the smallest positive constant satisfying \eqref{eq-C}.

\item[\rm(iii)]
Let $\widehat{X}_\beta$ be a minimizer of  \eqref{eq-problem}.
Note that, in the limiting case $a\to \infty$,
$\widehat{X}:=\beta\widehat{X}_\beta$ is also a minimizer of
the following TV$_p$ minimization with noise:
\begin{equation}\label{eq-TVpnoise}
\min_{Z\in\mathbb{R}^{N\times N}} \left\|Z\right\|_{\mathrm{TV}_p}
\quad\text{subject to}\ \  \left\|\mathcal{M}(Z)-y\right\|_2\le \epsilon,
\end{equation}
where $y$ is a noisy measurement, $\mathcal{M}$ is a linear measurement
operator satisfying the same assumptions
as in Theorem \ref{thm-stable} with $\overline{\delta}$ replaced by $\delta_p$,
and $\epsilon\in[0,\infty)$ denotes the noise level.
Then, by Theorem \ref{thm-stable}, any minimizer of  \eqref{eq-TVpnoise}
satisfies \eqref{eq-stablePSV}.
Here, by the scaling property of $\|\cdot\|_{\mathrm{TV}_p}$,
we choose $\beta:=2^{\frac{1}{p}}\|\nabla X\|_p$ to ensure $\|X_\beta\|_{\mathrm{PSV}_{a,p}}\le \frac{1}{2}$.
\end{enumerate}
\end{remark}

\subsection{Proofs of Theorems \ref{thm-stableG}, \ref{prop-sharp}, and \ref{thm-stable}}\label{subsec-proof}

In this subsection, we focus on the detailed proofs of
Theorems \ref{thm-stableG}, \ref{prop-sharp}, and \ref{thm-stable}.

We first show the following key proposition.

\begin{proposition}\label{prop-u}
Let $\epsilon\in[0,\infty)$, $\gamma\in[1,\infty)$, $\sigma\in[0,\infty)$,
and $\mathcal{M}:\ \mathbb{R}^{N_1\times N_2}\to \mathbb{R}^K$ be a linear operator
having the RIP of order $2s$ with
$\delta_{2s}<\delta(p,a,\gamma)$ for some $s\in\mathbb{N}$ with $2s< N_1N_2$.
For any matrix $X\in\mathbb{R}^{N_1\times N_2}$ with $\|X\|_{\infty}\le 1$,
if $X$ satisfies the tube constraint $\|\mathcal{M}(X)\|_2\le \epsilon$
and  the  cone constraint
$\|X_{S^\complement}\|_{\mathrm{\mathrm{PS}_{a,p}}}\le \gamma \|X_{S}\|_{\mathrm{\mathrm{PS}_{a,p}}}+\sigma$
 for some index set $S$ with $|S|\le s$,
then there exist positive constants $c_1$, $c_2$, $c_3$, and $c_4$ such that
\begin{equation}\label{eq-bound2'}
\left\|X\right\|^p_2\le
c_1 \max\left\{ \frac{\sigma^p}{\sqrt{s}^{p}}, \frac{\sigma}{\sqrt{s}^{2-p}}\right\}
+ c_2\epsilon^p
\end{equation}
and
\begin{align}\label{eq-bound1'}
\left\|X\right\|^p_1\le
c_3\max\left\{\sigma^p,\frac{\sigma}{s^{1-p}}\right\}
+ c_4 \sqrt{s}^p \epsilon^p,
\end{align}
where
$$
c_1:=1+\widetilde{c}^p\left[1+\left(\frac{a+1}{a}\gamma\right)^{\frac{2}{p}}\right]^{\frac{p}{2}} ,\ \ c_2:=c^p\left[1+\left(\frac{a+1}{a}\gamma\right)^{\frac{2}{p}}\right]^{\frac{p}{2}},
$$
$$
c_3:=1+ \widetilde{c}^p\left[1+\left(\frac{a+1}{a}\gamma\right)^{\frac{1}{p}}\right]^p,
\ \
c_4:=c^p \left[1+\left(\frac{a+1}{a}\gamma\right)^{\frac{1}{p}}\right]^p ,
$$
\begin{align}\label{eq-def-c}
c:=\frac{\sqrt{1+\delta_{2s}} \,\mu_0(1-\mu_0)(2-p)
+\mu_0 (2-p-\mu_0) \sqrt{(1-p)[\delta(p,a,\gamma)-\delta_{2s}]}}
{(2-p-\mu_0)^2 [\delta(p,a,\gamma)-\delta_{2s}]},
\end{align}
and
\begin{align}\label{eq-def-c'}
\widetilde{c}:=\frac{p[\delta_{2s}(2-p)]^{\frac{2-p}{p}}\mu_0^2}
{2(1+\delta_{2s})^{\frac{2-2p}{p}}(2-p-\mu_0)^2 [\delta(p,a,\gamma)-\delta_{2s}]}.
\end{align}
\end{proposition}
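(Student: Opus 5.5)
The argument adapts the sparse convex combination approach of \cite{ZL19,lcgy} to the penalty $\|\cdot\|_{\mathrm{PS}_{a,p}}$ on the cone. Treat $X$ as a vector and assume without loss of generality that $S$ indexes its $s$ largest entries in magnitude; the cone constraint persists, since replacing $S$ by the top-$s$ support only helps. Because $\|X\|_\infty\le 1$, Lemma \ref{lem-TLp,prop}(iii) gives $|X_{i}|^p\le \rho_{a,p}(X_i)$. Lemma \ref{lem-TLp,prop}(ii) gives $\rho_{a,p}(X_i)\le \frac{a+1}{a}|X_i|^p$. Combining these with the cone constraint yields
\[
\|X_{S^\complement}\|_p^p\le \tfrac{a+1}{a}\gamma\|X_S\|_p^p+\sigma .
\]
So $X$ satisfies an $\ell_p$ cone condition with constant $\kappa:=\frac{a+1}{a}\gamma$ and slack $\sigma$. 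This is where $\kappa^{p/(2-p)}$ in \eqref{eq-equation} and in $\mu_0$ comes from. I will only need the weaker entrywise bound $\|X_{S^\complement}\|_\infty\le \min_{i\in S}|X_i|\le (\|X_S\|_p^p/s)^{1/p}$.

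\textbf{Step 1 (convex decomposition).} Choose a threshold $\alpha$ with $\alpha^p s=\kappa\|X_S\|_p^p+\sigma$, possibly enlarged by a free parameter $\mu$ through $\alpha=\cdot\,/\mu$-type scaling that will later be optimized. Split $X_{S^\complement}$ into the part with entries larger than $\alpha$, which is few in number and gets absorbed into an enlarged support $T\supset S$ with $|T|\le 2s$ (or handled via a counting argument), and the remainder $u$. The remainder satisfies $\|u\|_\infty\le\alpha$ and $\|u\|_p^p\le s\alpha^p$. Lemma \ref{lem-sparse repre} then writes $u=\sum\lambda_i v_i$ with $s$-sparse $v_i$ and
\[
\sum\lambda_i\|v_i\|_2^2\le \|u\|_\infty^p\|u\|_{2-p}^{2-p}\le \alpha^p\,\|u\|_\infty^{2-2p}\|u\|_p^p\cdots.
\]
In practice I bound this by $\alpha^{2-p}\|u\|_p^p$, which is the $\ell_{2-p}$ estimate producing the exponents $2-p$ and $2/p$ in the conclusion.

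\textbf{Step 2 (RIP identity).} Following \cite{CZ14,ZL19}, set $\beta_i:=X_S+\mu v_i$ for a parameter $\mu\in(0,1)$, and expand
\[
\sum_i\lambda_i\|\mathcal{M}(\beta_i)\|_2^2
\]
using the identity that expresses $\sum\lambda_i\|\mathcal{M}((1-\mu)X_S-\mu\,\cdot+\mu v_i)\|^2$ via $\langle\mathcal{M}(X),\cdot\rangle$. Apply RIP of order $2s$ to each $2s$-sparse vector. The $\langle\mathcal{M}(X),\cdot\rangle$ terms are controlled by $\epsilon\sqrt{1+\delta_{2s}}\|X_S\|_2$. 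The result is a quadratic inequality in $\|X_S\|_2$ of the form
\[
\bigl[(1-\delta_{2s})(\cdots)-\delta_{2s}(\cdots)\bigr]\|X_S\|_2^2\le c'\epsilon\|X_S\|_2+\text{terms in }\alpha^{2-p}\|u\|_p^p .
\]
Here I bound $\|X_S\|_p^p\le s^{1-p/2}\|X_S\|_2^p$ by Hölder. Because $\alpha^{2-p}\|u\|_p^p$ scales like $\|X_S\|_2^2$ times a power of $\kappa$, optimizing over $\mu$ together with an auxiliary Young-inequality split produces the equation \eqref{eq-equation}. Its root $\eta_0$ gives the optimal $\mu_0$, and the leading coefficient is positive exactly when $\delta_{2s}<\mu_0/(2-p-\mu_0)$. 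The $\sigma$-terms, after Young's inequality with exponents $2/p$ and $2/(2-p)$, give the $\widetilde c$ contribution $\max\{\sigma^p s^{-p/2},\sigma s^{-(2-p)/2}\}$. The case split between the two terms reflects whether $\sigma$ dominates $\kappa\|X_S\|_p^p$.

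\textbf{Step 3 (conclusions).} From the bound on $\|X_S\|_2$, obtain $\|X_{S^\complement}\|_2$ via Lemma \ref{lem-Tc<T} with $\alpha=2/p$ applied to $|X_i|^p$, which yields the factor $[1+\kappa^{2/p}]^{p/2}$. The $\ell_1$ bound \eqref{eq-bound1'} follows similarly with $\alpha=1/p$, together with $\|X_S\|_1\le\sqrt s\|X_S\|_2$, giving $[1+\kappa^{1/p}]^p$. Finally use $(x+y)^p\le x^p+y^p$ to reach the stated form.

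\textbf{Main obstacle.} The hardest step is Step 2: choosing $\mu$ and the Young-split so that the resulting constant is exactly $\delta(p,a,\gamma)$ via \eqref{eq-equation}, and tracking $c,\widetilde c$. I would first carry out the noiseless, $\sigma=0$ computation to identify the optimal $\mu$, and then add the perturbation terms.
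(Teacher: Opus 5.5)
\textbf{There is a genuine gap in Steps 1--2.} The estimate you settle on, $\sum_i\lambda_i\|v_i\|_2^2\le\alpha^{2-p}\|u\|_p^p$, discards exactly the information that produces the threshold $\delta(p,a,\gamma)$. Step 2 (``an auxiliary Young-inequality split'') does not say how that information would be recovered.

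In the paper the relation used is the Cai--Zhang identity with weight $p/2$ from \cite{ZL19}:
\[
\sum_i\lambda_i\Big\|\mathcal M\Big(\sum_j\lambda_jZ^{(j)}-\frac p2Z^{(i)}\Big)\Big\|_2^2+\frac{1-p}{2}\sum_{i,j}\lambda_i\lambda_j\big\|\mathcal M(Z^{(i)}-Z^{(j)})\big\|_2^2=\Big(1-\frac p2\Big)^2\sum_i\lambda_i\big\|\mathcal M(Z^{(i)})\big\|_2^2,
\]
with $Z^{(i)}=X_T+\mu X^{(i)}$.

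\emph{The negative term.} Since $\sum_{i,j}\lambda_i\lambda_j\|X^{(i)}-X^{(j)}\|_2^2=2(\sum_i\lambda_i\|X^{(i)}\|_2^2-\|X_{T^\complement}\|_2^2)$, the pairwise term contributes the \emph{negative} quantity $-(1+\delta_{2s})(1-p)\mu^2 t$, where $t:=\|X_{T^\complement}\|_2^2$.

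\emph{How the paper exploits it.} The paper bounds $\|X_{T^\complement}\|_{2-p}^{2-p}$ by H\"older interpolation between $\ell_p$ and $\ell_2$. This gives $\sum_i\lambda_i\|X^{(i)}\|_2^2\le\Pi$ with $\Pi$ proportional to $t^{(2-2p)/(2-p)}$ (see \eqref{eq-est1}). It then maximizes $2(1-\frac p2)^2\delta_{2s}\mu^2\Pi-(1+\delta_{2s})(1-p)\mu^2t$ over $t$, which produces the factor $\frac p2[(2-p)\delta_{2s}]^{(2-p)/p}(1+\delta_{2s})^{-(2-2p)/p}$. Then it uses $\frac{(2-p)\delta_{2s}}{1+\delta_{2s}}<\mu_0$ and sets $\mu=\mu_0$, i.e. $\frac p2\kappa\mu_0^{2/p}+\mu_0=1-\frac p2$ with $\kappa=\frac{a+1}{a}\gamma$ (equivalent to \eqref{eq-equation}). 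The coefficient of $\|X_T\|_2^2$ then collapses to $(1-\frac p2)[(2-p-\mu_0)\delta_{2s}-\mu_0]$.

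\emph{What your route gives instead.} Your bound contains no $t$, so the negative term can only be dropped. The best you can then get is the condition $(1+\delta_{2s})(q-\mu)^2-(1-\delta_{2s})q^2+2q^2\kappa^{2/p}\delta_{2s}\mu^2<0$, with $q:=1-\frac p2$. For $a\to\infty$, $\gamma=1$, $p=\frac12$ this allows $\delta_{2s}$ only up to about $0.55$, and with the weight $\frac12$ of \cite{CZ14} it caps at $1/\sqrt2$. By contrast, $\delta_p\approx0.86$. So neither the stated threshold nor the constants $c,\widetilde c$ come out of your route.

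\textbf{Two smaller points.}

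First, no splitting is needed in Step 1. Your own bound gives $\|X_{S^\complement}\|_\infty^p\le\|X_S\|_p^p/s\le\alpha^p$ because $\kappa>1$, so Lemma \ref{lem-sparse repre} applies to $u=X_{S^\complement}$ directly. Actually absorbing entries into a $T$ of size $2s$ would make $X_T+\mu v_i$ up to $3s$-sparse and would require $\delta_{3s}$.

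Second, for the $\sigma$-term the paper does not use Young's inequality. It uses $\|X\|_\infty\le1$ a second time, $\|X_T\|_2^{2-p}\le\sqrt s^{\,1-p}\|X_T\|_2$, so the inequality becomes a genuine quadratic in $\|X_T\|_2$. The quadratic formula then gives exactly $\|X_T\|_2\le c\epsilon+\widetilde c\,\sigma/\sqrt s$. A Young split would eat into the leading coefficient and change $c_1,c_3$. The $\max$ in the conclusion comes from adding $\widetilde c^{\,p}\sigma^p/\sqrt s^{\,p}$ to the term $\sigma/\sqrt s^{\,2-p}$ produced by Lemma \ref{lem-Tc<T}, not from a case split.

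Your Step 3 otherwise matches the paper.
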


\begin{remark}
Let $c$ and $\widetilde{c}$ be defined the same as in
\eqref{eq-def-c} and \eqref{eq-def-c'}.
When $p=1$, \eqref{eq-bound2'} and \eqref{eq-bound1'} reduce, respectively, to
\begin{equation*}
\left\|X\right\|_2\le\left[1+\widetilde{c}\sqrt{1
+\left(\frac{a+1}{a}\gamma\right)^{2}}\right]\frac{\sigma}{\sqrt{s}} +
 c\sqrt{1+\left(\frac{a+1}{a}\gamma\right)^{2}} \epsilon
\end{equation*}
and
\begin{align*}
\left\|X\right\|_1\le \left[1+\widetilde{c} \left(1+\frac{a+1}{a}\gamma\right)\right] \sigma
+ c \left(1+\frac{a+1}{a}\gamma\right)   \sqrt{s }\epsilon,
\end{align*}
where the positive constants $c$ and $\widetilde{c}$ are  simplified  as
\begin{align*}
c:=\frac{\sqrt{1+\delta_{2s}} \,\mu_0}
{(1-\mu_0) [\delta(1,a,\gamma)-\delta_{2s}]}
\ \ \text{and}\ \
\widetilde{c}:=\frac{\delta_{2s}\mu_0^2}
{2(1-\mu_0)^2 [\delta(1,a,\gamma)-\delta_{2s}]}.
\end{align*}
Note that, in this case,
$\frac{\mu_0}{1-\mu_0}=\frac{1}{\sqrt{1+\frac{a+1}{a}\gamma}}$
and
$\delta(1,a,\gamma)=\frac{1}{\sqrt{1+\frac{a+1}{a}\gamma}}.$
Inserting them into the above estimates yields
\begin{align*}
\left\|X\right\|_2\le
\frac{\sqrt{1+(\frac{a+1}{a}\gamma)^2}\sqrt{1+\delta_{2s}}}
{1-\sqrt{1+\frac{a+1}{a}\gamma}\,\delta_{2s}} \epsilon
+ \left[1+ \frac{\sqrt{1+(\frac{a+1}{a}\gamma)^2} \,\delta_{2s}}
{2\sqrt{1+\frac{a+1}{a}\gamma}-2(1+\frac{a+1}{a}\gamma)\,\delta_{2s}}\right] \frac{\sigma}{\sqrt{s}},
\end{align*}
which recovers the upper bound with respect to the corresponding
$\ell_2$-norm in \cite[Proposition 3.1]{HCGN22},
and a tighter upper bound
\begin{align*}
\left\|X\right\|_1\le
\frac{(1+\frac{a+1}{a}\gamma) \sqrt{1+\delta_{2s}}}{1-\sqrt{1+\frac{a+1}{a}\gamma}\,\delta_{2s}} \sqrt{s }\epsilon
+ \left[1+ \frac{(1+\frac{a+1}{a}\gamma ) \sqrt{1+\frac{a+1}{a}\gamma}\,
\delta_{2s}}{2-2\sqrt{1+\frac{a+1}{a}\gamma}\ \delta_{2s}}\right] \sigma
\end{align*}
compared with the upper bound with respect to the corresponding
$\ell_1$-norm in \cite[Proposition 3.1]{HCGN22}.
\end{remark}

\begin{proof}[Proof of Proposition \ref{prop-u}]
Let $T$ be the index set of the largest $s$
entries of $X$ in magnitude.
By the assumptions and the increasing property of $\rho_{a,p}$,
we have
\begin{equation*}
\left\|X_{T^\complement}\right\|_{\mathrm{\mathrm{PS}_{a,p}}}\le \gamma \left\|X_{T}\right\|_{\mathrm{\mathrm{PS}_{a,p}}} +\sigma.
\end{equation*}
From this, $\|X\|_\infty\le 1$,
Lemma \ref{lem-TLp,prop}(iii), the choice of $T$,
$\gamma\ge 1$,  and $\sigma\ge 0$,
we infer that
\begin{equation}\label{eq-DScfz}
\left\|X_{T^\complement}\right\|^p_\infty \le\rho_{a,p}
\left(\|X_{T^\complement}\|_\infty\right)\le\frac{\|X_{T}\|_{\mathrm{\mathrm{PS}_{a,p}}}}{s}
\le\frac{\gamma \left\|X_{T}\right\|_{\mathrm{\mathrm{PS}_{a,p}}} + \sigma}{s}
\end{equation}
and
\begin{equation}\label{eq-DSc1}
\left\|X_{T^\complement}\right\|_p^p\le \left\|X_{T^\complement}\right\|_{\mathrm{\mathrm{PS}_{a,p}}}
\le\gamma \left\|X_{T}\right\|_{\mathrm{\mathrm{PS}_{a,p}}} + \sigma.
\end{equation}
Combining \eqref{eq-DScfz} and \eqref{eq-DSc1} and
applying Lemma \ref{lem-sparse repre} with $\alpha:=(\frac{\gamma \|X_{T}\|_{\mathrm{\mathrm{PS}_{a,p}}}+\sigma}{s})^{\frac1p}$,
we obtain a convex combination
$X_{T^\complement}=\sum_{i=1}^L \lambda_i X^{(i)}$, where
$L \in\mathbb{N}$, $\sum_{i=1}^L  \lambda_i =1$ with each $\lambda_i\in(0,1]$,
$X^{(i)}$ is $s$-sparse, $\mathrm{supp}\,(X^{(i)})\subset T^\complement$, and
\begin{equation}\label{eq-moreover}
\sum_{i=1}^L  \lambda_i \left\|X^{(i)}\right\|_2^2
\le \left\|X_{T^\complement}\right\|_\infty^p\left\|X_{T^\complement}\right\|_{2-p}^{2-p}.
\end{equation}
By the choice of $T$ and H\"{o}lder's  inequality, we find that
$$
\left\|X_{T^\complement}\right\|^p_\infty \le\frac{\|X_T\|^p_p}{s}\ \ \text{and}\ \
\left\|X_{T}\right\|_{p}^p\le s^{1-\frac{p}{2}}\left\|X_{T}\right\|_{2}^p.
$$
Applying these, H\"{o}lder's  inequality, \eqref{eq-DSc1}, and Lemma \ref{lem-TLp,prop}(ii),
\eqref{eq-moreover} can be further estimated as follows
\begin{align}\label{eq-est1}
\sum_{i=1}^L  \lambda_i \left\|X^{(i)}\right\|_2^2
&\le \frac{\|X_{T}\|_{p}^{p}}{s}
\left(\|X_{T^\complement}\|_{p}^{p}\right)^{\frac{p}{2-p}}
\left(\|X_{T^\complement}\|_{2}^{2}\right)^{\frac{2-2p}{2-p}}\nonumber\\
&\le \frac{\|X_{T}\|_{p}^{p}}{s} \left(\gamma \left\|X_{T}\right\|_{\mathrm{\mathrm{PS}_{a,p}}}+\sigma\right)^{\frac{p}{2-p}}
\left(\|X_{T^\complement}\|_{2}^{2}\right)^{\frac{2-2p}{2-p}}\nonumber\\
&\le \frac{\|X_{T}\|_{p}^{p}}{s} \left(\frac{a+1}{a}\gamma\|X_{T}\|^p_p+\sigma\right)^{\frac{p}{2-p}}
\left(\|X_{T^\complement}\|_{2}^{2}\right)^{\frac{2-2p}{2-p}}\nonumber\\
&\le \left(\frac{a+1}{a}\gamma\|X_{T}\|^2_2
+\frac{\sigma}{s^{\frac{2-p}{2}}}\|X_{T}\|_{2}^{2-p} \right)^{\frac{p}{2-p}}
\left(\|X_{T^\complement}\|_{2}^{2}\right)^{\frac{2-2p}{2-p}}=:\Pi .
\end{align}

Let $\mu\in\mathbb{R}$ be a constant which is determined later
and, for each $i\in\{1,\ldots,L \}$, define $Z^{(i)}:=X_T+\mu X^{(i)}$.
By some simple computations, we obtain
\begin{align}\label{eq-equality}
&\sum_{i=1}^L  \lambda_i\left\|\mathcal{M}\,\left(\sum_{j=1}^L  \lambda_j Z^{(j)} -\frac p2 Z^{(i)} \right)\,\right\|_2^2
+\frac{1-p}{2}\sum_{i,j=1}^L  \lambda_i \lambda_j  \left\|\mathcal{M}\left(Z^{(i)}-Z^{(j)}\right)\right\|_2^2 \nonumber\\
&\quad=(1-p)\left\|\mathcal{M}\left(\sum_{j=1}^L  \lambda_j Z^{(j)}\right)\right\|_2^2
+\frac{p^2}{4}\sum_{i=1}^L  \lambda_i \left\|\mathcal{M}(Z^{(i)})\right\|_2^2\nonumber\\
&\quad\quad+(1-p)\left[\sum_{i=1}^L \lambda_i \left\|\mathcal{M}(Z^{(i)})\right\|_2^2
-\left\|\mathcal{M}\left(\sum_{i=1}^L  \lambda_i Z^{(i)}\right)\right\|_2^2\right]\nonumber\\
&\quad=\left(1-\frac{p}{2}\right)^2\sum_{i=1}^L \lambda_i \left\|\mathcal{M} (Z^{(i)})\right\|_2^2,
\end{align}
where we used the fact $\sum_{j=1}^L  \lambda_j =1$ in the first equality.
For simplicity, we denote by $\mathrm{LHS}$ the left-hand side
and by $\mathrm{RHS}$ the right-hand side of \eqref{eq-equality}.
As $\sum_{j=1}^L  \lambda_j =1$, we have, for each $i\in\{1,\ldots,L \}$,
$$
\sum_{j=1}^L  \lambda_j Z^{(j)} -\frac p2 Z^{(i)}=\mu X +\left(1-\frac p2 -\mu\right) X_T -\frac{p\mu}{2}X^{(i)}.
$$
Thus, from this,  $X_{T^\complement}=\sum_{i=1}^L \lambda_i X^{(i)}=X-X_T$,
the Cauchy--Schwarz inequality,
the fact that $X^{(i)}-X^{(j)}$ is $2s$-sparse, the RIP, and the tube constraint, it follows that
\begin{align*}
\mathrm{LHS}
&=\sum_{i=1}^L  \lambda_i\left\|\mathcal{M}\,\left(\left[1-\frac p2 -\mu\right] X_T -\frac{p\mu}{2}X^{(i)}\right)\,\right\|_2^2
+\frac{1-p}{2}\mu^2\sum_{i,j=1}^L  \lambda_i \lambda_j  \left\|\mathcal{M}(X^{(i)}-X^{(j)})\right\|_2^2\\
&\quad+\mu\langle \mathcal{M}(X), \mu(1-p) \mathcal{M}(X)+(2-p)(1-\mu)\mathcal{M}(X_T)\rangle\\
&\le(1+\delta_{2s})\left[\left(1-\frac p2 -\mu\right) ^2\left\|X_T\right\|_2^2
+\frac{p^2\mu^2}{4}\sum_{i=1}^L  \lambda_i\left\|X^{(i)}\right\|_2^2
+\frac{1-p}{2}\mu^2\sum_{i,j=1}^L  \lambda_i \lambda_j  \left\|X^{(i)}-X^{(j)}\right\|_2^2\right]\\
&\quad+\mu^2(1-p) \epsilon^2
+\mu(1-\mu)(2-p) \epsilon\sqrt{1+\delta_{2s}}  \left\|X_T\right\|_2\\
&=(1+\delta_{2s})\left[\left(1-\frac p2 -\mu\right) ^2\left\|X_T\right\|_2^2
+\frac{p^2\mu^2}{4}\sum_{i=1}^L  \lambda_i\left\|X^{(i)}\right\|_2^2\right]\\
&\quad+(1+\delta_{2s})(1-p)\mu^2\left(\sum_{i=1}^L   \lambda_i  \left\|X^{(i)}\right\|_2^2
-\left\|X_{T^\complement}\right\|_2^2\right)\\
&\quad+\mu^2(1-p) \epsilon^2
+\mu(1-\mu)(2-p) \epsilon\sqrt{1+\delta_{2s}}  \left\|X_T\right\|_2.
\end{align*}
For the right-hand side of \eqref{eq-equality},
by $\mathrm{supp}\,(X^{(i)})\subset T^\complement$, we have
\begin{align*}
\mathrm{RHS}
\ge (1-\delta_{2s}) \left(1-\frac{p}{2}\right)^2\sum_{i=1}^L  \lambda_i\left\| Z^{(i)}\right\|_2^2
= (1-\delta_{2s}) \left(1-\frac{p}{2}\right)^2\left(\left\| X_T \right\|_2^2
+\mu^2\sum_{i=1}^L  \lambda_i\left\|X^{(i)}\right\|_2^2\right).
\end{align*}
Collecting these two estimates and using \eqref{eq-est1}, we conclude that
\begin{align*}
0&\le2\left(1-\frac p2\right)^2 \delta_{2s}\mu^2 \Pi
-(1+\delta_{2s})(1-p)\mu^2\left\|X_{T^\complement}\right\|_2^2
+\sqrt{1+\delta_{2s}} \,\mu(1-\mu)(2-p)\epsilon\left\|X_T\right\|_2\\
&\qquad+\mu^2(1-p) \epsilon^2+
\left[ (1+\delta_{2s})\left(1-\frac p2 -\mu\right)^2-(1-\delta_{2s})
\left(1-\frac{p}{2}\right)^2\right]\left\|X_T\right\|_2^2,
\end{align*}
where the right-hand side   is a function of $\|X_{T^\complement}\|_{2}^{2}$.
By calculating the derivative of this function, we find that it can achieve its maximum
only when
$$
\left\|X_{T^\complement}\right\|_{2}^{2}=
\left(\frac{a+1}{a}\gamma\left\|X_{T}\right\|_{2}^2 + \frac{\sigma}{s^{\frac{2-p}{2}}}\left\|X_{T}\right\|_{2}^{2-p}\right)
\left[\frac{(2-p)\delta_{2s}}{1+\delta_{2s}}\right]^{\frac{2-p}{p}}.
$$
Thus, we obtain
\begin{align*}
0&\le\left[(1+\delta_{2s})\left(1-\frac p2 -\mu\right)^2 -(1-\delta_{2s})
\left(1-\frac{p}{2}\right)^2
+\frac{a+1}{a}\gamma\frac p2\frac{[\delta_{2s}(2-p)]^{\frac{2-p}{p}}}
{(1+\delta_{2s})^{\frac{2-2p}{p}}}\mu^2\right]\left\|X_T\right\|_2^2\\
&\quad+ \frac p2\frac{[\delta_{2s}(2-p)]^{\frac{2-p}{p}}}
{(1+\delta_{2s})^{\frac{2-2p}{p}}}\mu^2 \frac{\sigma}{s^{\frac{2-p}{2}}}\left\|X_T\right\|_2^{2-p}
+\sqrt{1+\delta_{2s}} \,\mu(1-\mu)(2-p)\epsilon\left\|X_T\right\|_2+\mu^2(1-p) \epsilon^2.
\end{align*}
By $\delta_{2s}<\delta(p,a,\gamma)$, we have $\frac{(2-p)\delta_{2s}}{1+\delta_{2s}}<\mu_0$.
Using this and \eqref{eq-equation}
and setting
$$
\mu:=\mu_0=\left(\frac{a+1}{a}\gamma\right)^{-\frac{p}{2-p}}\eta_0,
$$
we conclude that
\begin{align*}
0&\le\left\{\left[\left(1-\frac{p}{2}-\mu_0\right)^2-\left(1-\frac{p}{2}\right)^2
+\frac p2 \frac{a+1}{a}\gamma\mu_0^{\frac{2+p}{p}}\right]\right.\nonumber\\
&\quad\left.
+\left[\left(1-\frac{p}{2}-\mu_0\right)^2+\left(1-\frac{p}{2}\right)^2
+\frac p2 \frac{a+1}{a}\gamma\mu_0^{\frac{2+p}{p}}\right]\delta_{2s}\right\}\left\|X_T\right\|_2^2\nonumber\\
&\quad+ \sqrt{1+\delta_{2s}} \,\mu_0(1-\mu_0)(2-p)\epsilon\left\|X_T\right\|_2\nonumber\\
&\quad+\frac p2\frac{[\delta_{2s}(2-p)]^{\frac{2-p}{p}}}
{(1+\delta_{2s})^{\frac{2-2p}{p}}}\mu^2\frac{\sigma}{s^{\frac{2-p}{2}}}\left\|X_T\right\|_2^{2-p}
+\mu_0^2(1-p) \epsilon^2\nonumber\\
&=\left(1-\frac{p}{2}\right) \left[-\mu_0+(2-p-\mu_0)\delta_{2s}\right]\left\|X_T\right\|_2^2
+ \sqrt{1+\delta_{2s}} \,\mu_0(1-\mu_0)(2-p)\epsilon\left\|X_T\right\|_2\nonumber\\
&\quad+\frac p2\frac{[\delta_{2s}(2-p)]^{\frac{2-p}{p}}}
{(1+\delta_{2s})^{\frac{2-2p}{p}}}\mu_0^2\frac{\sigma}{s^{\frac{2-p}{2}}}\left\|X_T\right\|_2^{2-p}
+\mu_0^2(1-p) \epsilon^2\nonumber\\
&\le -(2-p-\mu_0)^2 \left[\delta(p,a,\gamma)-\delta_{2s}\right]\left\|X_T\right\|_2^2
+ \sqrt{1+\delta_{2s}} \,\mu_0(1-\mu_0)(2-p)\epsilon\left\|X_T\right\|_2\nonumber\\
&\quad+\frac p2\frac{[\delta_{2s}(2-p)]^{\frac{2-p}{p}}}
{(1+\delta_{2s})^{\frac{2-2p}{p}}}\mu_0^2\frac{\sigma}{s^{\frac{2-p}{2}}}\left\|X_T\right\|_2^{2-p}
+\mu_0^2(1-p) \epsilon^2.
\end{align*}
Since $\|X_T\|_2^{2-p}\le(\sqrt{s}\|X\|_\infty)^{1-p}\|X_T\|_2\le\sqrt{s}^{1-p}\|X_T\|_2$,
it then follows that
\begin{align*}
0&\le-(2-p-\mu_0)^2 \left[\delta(p,a,\gamma)-\delta_{2s}\right]\left\|X_T\right\|_2^2
+\mu_0^2(1-p) \epsilon^2\\
&\quad+ \left[\sqrt{1+\delta_{2s}} \,\mu_0(1-\mu_0)(2-p)\epsilon
+\frac p2\frac{[\delta_{2s}(2-p)]^{\frac{2-p}{p}}}
{(1+\delta_{2s})^{\frac{2-2p}{p}}}\mu_0^2
\frac{\sigma}{\sqrt{s}}\right]
\left\|X_T\right\|_2,
\end{align*}
which, together with the quadratic formula and the elementary inequality
\begin{equation}\label{eq-elementary}
(x_1+x_2)^q\le x_1^q +x_2^q,\quad\forall\,q\in(0,1],\ \forall\,x_1, x_2 \in[0,\infty)
\end{equation}
with $q=\frac{1}{2}$, yields
\begin{align}\label{eq-XT}
\left\|X_T\right\|_2&\le
\frac{\sqrt{1+\delta_{2s}} \,\mu_0(1-\mu_0)(2-p)
+\mu_0 (2-p-\mu_0) \sqrt{(1-p)[\delta(p,a,\gamma)-\delta_{2s}]}}
{(2-p-\mu_0)^2 [\delta(p,a,\gamma)-\delta_{2s}]}\epsilon\nonumber\\
&\quad+\frac{p[\delta_{2s}(2-p)]^{\frac{2-p}{p}}\mu_0^2}
{2(1+\delta_{2s})^{\frac{2-2p}{p}}(2-p-\mu_0)^2 [\delta(p,a,\gamma)-\delta_{2s}]}
\frac{\sigma}{\sqrt{s}}\nonumber\\
&=c \epsilon + \widetilde{c} \frac{\sigma}{\sqrt{s}}.
\end{align}
By \eqref{eq-DSc1} and Lemma \ref{lem-TLp,prop}(i), we have
\begin{equation*}
\left\|X_{T^\complement}\right\|_p^p
\le\gamma \left\|X_{T}\right\|_{\mathrm{\mathrm{PS}_{a,p}}}+\sigma
\le\frac{a+1}{a}\gamma \|X_{T}\|_p^p+\sigma,
\end{equation*}
which, combined with Lemma \ref{lem-Tc<T} respectively in the cases
$\alpha=2/p$ and $\alpha=1/p$,
implies that
 \begin{equation}\label{eq-2/p}
\left\|X_{T^\complement}\right\|^2_2
\le\left(\frac{a+1}{a}\gamma \|X_{T}\|_2^p+ \frac{\sigma}{s^{1-\frac{p}{2}}}\right)^{\frac{2}{p}}
\end{equation}
and
 \begin{equation}\label{eq-1/p}
\left\|X_{T^\complement}\right\|_1
\le\left(\frac{a+1}{a}\gamma \|X_{T}\|_1^p+ \frac{\sigma}{s^{1-p}}\right)^{\frac{1}{p}}.
\end{equation}
Thus, combining \eqref{eq-2/p}, \eqref{eq-XT}, \eqref{eq-elementary}
with $q=p$ and the triangle inequality that
\begin{equation}\label{eq-tri'}
\left\|x^{(1)}+x^{(2)}+x^{(3)}\right\|_{\frac{2}{p}}\le \left\|x^{(1)}\right\|_{\frac{2}{p}}
+\left\|x^{(2)}\right\|_{\frac{2}{p}} +\left\|x^{(3)}\right\|_{\frac{2}{p}}
\end{equation}
for any vectors $x^{(j)}:=(x^{(j)}_1,x^{(j)}_2)$  with each $x^{(j)}_i\ge 0$,
we conclude that
\begin{align*}
\left\|X\right\|_2
&=\sqrt{\left\|X_{T}\right\|^2_2+\left\|X_{T^\complement}\right\|^2_2}\\
&\le \left\{\left(c \epsilon + \widetilde{c} \frac{\sigma}{\sqrt{s}}\right)^{p\cdot\frac{2}{p}}
+ \left[\frac{a+1}{a}\gamma \left( c \epsilon + \widetilde{c} \frac{\sigma}{\sqrt{s}}\right)^p
+ s^{\frac{p}{2}-1}\sigma\right]^{\frac{2}{p}}\right\}^{\frac{1}{2}} \\
&\le \left\{\left(c^p \epsilon^p + \widetilde{c}^p \frac{\sigma^p}{\sqrt{s}^p}\right)^{\frac{2}{p}}
+ \left(\frac{a+1}{a}\gamma c^p \epsilon^p + \frac{a+1}{a}\gamma\widetilde{c}^p \frac{\sigma^p}{\sqrt{s}^p}
+ s^{\frac{p}{2}-1}\sigma\right)^{\frac{2}{p}}\right\}^{\frac{p}{2}\cdot\frac{1}{p}} \\
&\le\left\{\left[1+\left(\frac{a+1}{a}\gamma\right)^{\frac{2}{p}}\right]^{\frac{p}{2}}c^p\epsilon^p
+ \left[1+\left(\frac{a+1}{a}\gamma\right)^{\frac{2}{p}}\right]^{\frac{p}{2}}  \widetilde{c}^p
\frac{\sigma^p}{\sqrt{s}^p} + s^{\frac{p}{2}-1}\sigma \right\}^{\frac{1}{p}},
\end{align*}
which implies \eqref{eq-bound2'}.

Furthermore, from \eqref{eq-1/p}, H\"{o}lder's inequality,
\eqref{eq-XT}, \eqref{eq-elementary} with $q=p$ therein, and \eqref{eq-tri'} with $\frac{2}{p}$ replaced by $\frac{1}{p}$,
we deduce that
\begin{align*}
\left\|X\right\|_1&
\le \left\|X_T\right\|_1+\left\|X_{T^\complement}\right\|_{1}
\le \sqrt{s }\left\|X_T\right\|_2 +\left(\frac{a+1}{a}\gamma \sqrt{s}^{p} \|X_{T}\|_2^p
+ \frac{\sigma}{s^{1-p}}\right)^{\frac{1}{p}}\\
&\le\left[\left(c^p \sqrt{s}^p \epsilon^p + \widetilde{c}^p \sigma^p \right)^{\frac{1}{p}}
+ \left(\frac{a+1}{a}\gamma c^p \sqrt{s}^p \epsilon^p + \frac{a+1}{a}\gamma \widetilde{c}^p \sigma^p
+  \frac{\sigma}{s^{1-p}}\right)^{\frac{1}{p}}\right]^{p\cdot\frac{1}{p}}\\
&\le \left\{\left[1+\left(\frac{a+1}{a}\gamma\right)^{\frac{1}{p}}\right]^p c^p \sqrt{s}^p \epsilon^p
+ \left[1+\left(\frac{a+1}{a}\gamma\right)^{\frac{1}{p}}\right]^p \widetilde{c}^p \sigma^p
+ \frac{\sigma}{s^{1-p}}\right\}^{\frac{1}{p}},
\end{align*}
which implies \eqref{eq-bound1'}.
This finishes the proof of Proposition \ref{prop-u}.
\end{proof}

Next, we prove Theorem \ref{thm-stableG}.

\begin{proof}[Proof of Theorem \ref{thm-stableG}]
Let $X_\beta:=\frac{X}{\beta}$ and define $D:=\widehat{X}_\beta-X_\beta$.
Let $S$ be the index set of the largest $s$ components of $\nabla X_\beta$ in magnitude.
Since $\widehat{X}_\beta$ is a minimizer, then, from the choice of $\beta$, we deduce that
$\|\widehat{X}_\beta\|_{\mathrm{PSV}_{a,p}}\le \|X_\beta\|_{\mathrm{PSV}_{a,p}}\le \frac{1}{2}$
and hence, by Lemma \ref{lem-TLp,prop}(iv),
$$
\|D\|_{\mathrm{PSV}_{a,p}}\le\left \|\widehat{X}_\beta\right\|_{\mathrm{PSV}_{a,p}}+ \|X_\beta\|_{\mathrm{PSV}_{a,p}}
\le 1.
$$
This, together with  Lemma \ref{lem-TLp,prop}(i), implies
 $\|\nabla D\|_{\infty}\le 1$.

We first show the cone constraint.
Using  the separability of $\|\cdot\|_{\mathrm{\mathrm{PS}_{a,p}}}$
and Lemma \ref{lem-TLp,prop}(iv),
we find that
\begin{align*}
&\left\|\left(\nabla X_\beta\right)_S\right\|_{\mathrm{\mathrm{PS}_{a,p}}}+
\left\|\left(\nabla X_\beta\right)_{S^\complement}\right\|_{\mathrm{\mathrm{PS}_{a,p}}}\\
& \quad=\|X_\beta\|_{\mathrm{PSV}_{a,p}}\ge\big\|\widehat{X}_\beta\big\|_{\mathrm{PSV}_{a,p}}=
\|\nabla X_\beta + \nabla D\|_{\mathrm{\mathrm{PS}_{a,p}}}\\
& \quad=\|(\nabla X_\beta)_S + (\nabla D)_S\|_{\mathrm{\mathrm{PS}_{a,p}}}
+\left\|(\nabla X_\beta)_{S^\complement} + (\nabla D)_{S^\complement}\right\|_{\mathrm{\mathrm{PS}_{a,p}}}\\
& \quad\ge\left\|\left(\nabla X_\beta\right)_S\right\|_{\mathrm{\mathrm{PS}_{a,p}}}
-\left\|\left(\nabla D\right)_S\right\|_{\mathrm{\mathrm{PS}_{a,p}}}
+\left\| \left(\nabla D\right)_{S^\complement}\right\|_{\mathrm{\mathrm{PS}_{a,p}}}
-\left\|\left(\nabla X_\beta\right)_{S^\complement}\right\|_{\mathrm{\mathrm{PS}_{a,p}}},
\end{align*}
which implies
\begin{equation}\label{eq-cone,D}
\left\| \left(\nabla D\right)_{S^\complement}\right\|_{\mathrm{\mathrm{PS}_{a,p}}}\le
\left\|\left(\nabla D\right)_S\right\|_{\mathrm{\mathrm{PS}_{a,p}}}
+2\left\|\left(\nabla X_\beta\right)_{S^\complement}\right\|_{\mathrm{\mathrm{PS}_{a,p}}}.
\end{equation}
Define $G:=[\partial_1 D,(\partial_2 D)^T]$.
Then $G$ has the same nonzero components as $\nabla D$ (treated as vectors)
and hence $\|G\|_{\infty}\le1$.
We use $L$ to denote the mapping of indices,
which maps the index of each nonzero component of $\nabla D$ to the corresponding  index of $G$.
By these, $|L(S)|\le|S|\le s$,
 and \eqref{eq-cone,D}, we further obtain the following cone constraint for $G$
\begin{align}\label{eq-cone,G}
\left\| G_{L(S)^\complement}\right\|_{\mathrm{\mathrm{PS}_{a,p}}}
&=\left\| (\nabla D)_{S^\complement}\right\|_{\mathrm{\mathrm{PS}_{a,p}}}
\le \left\|\left(\nabla D\right)_S\right\|_{\mathrm{\mathrm{PS}_{a,p}}}
+2\left\|\left(\nabla X_\beta\right)_{S^\complement}\right\|_{\mathrm{\mathrm{PS}_{a,p}}}\nonumber\\
&=\left\|G_{L(S)}\right\|_{\mathrm{\mathrm{PS}_{a,p}}}
+2\left\|\left(\nabla X_\beta\right)_{S^\complement}\right\|_{\mathrm{\mathrm{PS}_{a,p}}}.
\end{align}

Next, we prove the tube constraint.
By $\|y_\beta-\mathcal{M}(X_\beta)\|_2\le\epsilon_\beta$ and
$\|y_\beta-\mathcal{M}(\widehat{X}_\beta)\|_2\le\epsilon_\beta$,
we have
\begin{equation}\label{eq-tube1}
\|\mathcal{M}(D)\|_2\le \left\|y_\beta-\mathcal{M}(X_\beta)\right\|_2+
\left\|y_\beta-\mathcal{M}(\widehat{X}_\beta)\right\|_2\le2\epsilon_\beta.
\end{equation}
We let $\mathcal{A}$ be identified by matrices $\{A_k\}_{k=1}^{K}$ and let
$\mathcal{A}_0$ and $\mathcal{A}^0$  be the linear operators identified by
the padded matrices $\{(A_k)_0,(A_k)^0\}_{k=1}^{K}$ in the way of  \eqref{eq-def,M0}.
Applying Lemma \ref{lem-M0},  we  have, for any $k\in\{1,\ldots,K\}$,
\begin{equation}\label{eq-Ak1}
\left|\left\langle A_k, G\right\rangle\right|^2
=\left|\left\langle (A_k)^0-(A_k)_0, [D,D^T]\right\rangle\right|^2.
\end{equation}
Thus, using \eqref{eq-tube1} and  \eqref{eq-Ak1},
we obtain the tube constraint for $G$,
\begin{align*}
\left\| \mathcal{A} \left(G\right) \right\|_2^2
=\sum_{k=1}^{K} \left|\left\langle A_k,G\right\rangle\right|^2
= \sum_{k=1}^{K}\left|\left\langle (A_k)^0-(A_k)_0, [D,D^T]\right\rangle\right|^2
=\left\|\mathcal{M}(D)\right\|_2^2\le 4\epsilon_\beta^2.
\end{align*}
Applying this, $\|G\|_{\infty}\le1$, \eqref{eq-cone,G}, and Proposition  \ref{prop-u} with $X=G$,
$\mathcal{M}=\mathcal{A}$, $S=L(S)$, $\epsilon=2\epsilon_\beta$,
$\sigma=2\|(\nabla X_\beta)_{S^\complement}\|_{\mathrm{\mathrm{PS}_{a,p}}}$, and $\gamma=1$ therein,
 we conclude that
\begin{equation*}
\left\|\nabla X_\beta-\nabla \widehat{X}_\beta\right\|^p_2 =\|G\|^p_2\le
2^p c_1  \max\left\{\frac{\|(\nabla X_\beta)_{S^\complement}\|^p_{\mathrm{\mathrm{PS}_{a,p}}}}{\sqrt{s}^{p}},
\frac{2^{1-p}\|(\nabla X_\beta)_{S^\complement}\|_{\mathrm{\mathrm{PS}_{a,p}}}}{\sqrt{s}^{2-p}}\right\}
+ 2^p c_2\epsilon_\beta^p
\end{equation*}
and
\begin{align*}
\left\|\nabla X_\beta-\nabla \widehat{X}_\beta\right\|^p_1
&=\|G\|^p_1\\
&\le  2^p c_3\max\left\{\left\|(\nabla X_\beta)_{S^\complement}\right\|_{\mathrm{\mathrm{PS}_{a,p}}}^p,
\frac{2^{1-p}\|(\nabla X_\beta)_{S^\complement}\|_{\mathrm{\mathrm{PS}_{a,p}}}}{s^{1-p}}\right\}
+ 2^p c_4 \sqrt{s}^p\epsilon_\beta^p ,
\end{align*}
where $c_1,$ $c_2,$ $c_3$, and $c_4$ are the same as in Proposition  \ref{prop-u}.
Furthermore, using Lemma \ref{lem-TLp,prop}(ii),
we obtain
\begin{equation*}
\left\|\nabla X-\nabla \widehat{X}\right\|^p_2\le
2^p c_1  \max\left\{\left[\frac{a+1}{a}\beta^{1-p}\frac{\|(\nabla X_\beta)_{S^\complement}\|^{p}_{p}}{\sqrt{s}}\right]^p,
\frac{a+1}{a}\frac{2^{1-p}\|(\nabla X)_{S^\complement}\|^p_{p}}{\sqrt{s}^{2-p}}\right\}
+ C_2\epsilon^p
\end{equation*}
and
\begin{align*}
\left\|\nabla X-\nabla \widehat{X}\right\|^p_1 \le
2^p c_3\max\left\{\left[\frac{a+1}{a}\beta^{1-p} \|(\nabla X_\beta)_{S^\complement}\|^{p}_{p} \right]^p,
\frac{a+1}{a}\frac{2^{1-p}\|(\nabla X)_{S^\complement}\|^p_{p}}{s^{1-p}}\right\}
+ C_4 \sqrt{s}^p\epsilon^p,
\end{align*}
which implies both \eqref{eq-bound2} and \eqref{eq-bound1}.
This finishes the proof of Theorem \ref{thm-stableG}.
\end{proof}

Using Theorem \ref{thm-stableG}, we further show Theorem \ref{thm-stable}.

\begin{proof}[Proof of Theorem \ref{thm-stable}]
Let $S$ be the index set of the largest $s$ components of $\nabla X$ in magnitude.
Applying Lemma \ref{lem-sobolev}, $\|\mathcal{B}(D)\|_2\le\|\mathcal{M}(D)\|_2\le2\epsilon$,
and \eqref{eq-bound1}, we conclude that
\begin{align*}
\left\|X-\widehat{X}\right\|_2&\lesssim
\left(\frac{\|\nabla X-\nabla\widehat{X}\|_{1}}{\sqrt{s}}\right)\log\left(\frac{N^2}{s}\right)  + 2\epsilon\\
&\lesssim  \log\left(\frac{N^2}{s}\right)\left[\max\left\{\left[ \frac{\|\nabla X-(\nabla X)_{S}\|^{p}_{p}}{ \sqrt{s}\beta^{p-1}} \right]^p,
\left[2\left(\frac{a+1}{a}\right)\right]^{1-p} \frac{\|\nabla X-(\nabla X)_{S}\|^p_{p}}{\sqrt{s}^{2-p}}\right\}
+  \epsilon^p\right]^{\frac{1}{p}} + 2\epsilon\\
&\lesssim  \log\left(\frac{N^2}{s}\right)\left[\max\left\{\left[ \frac{\|\nabla X-(\nabla X)_{S}\|^{p}_{p}}{ \sqrt{s}\beta^{p-1}} \right]^p,
\frac{\|\nabla X-(\nabla X)_{S}\|^p_{p}}{\sqrt{s}^{2-p}}\right\}
+  \epsilon^p\right]^{\frac{1}{p}} ,
\end{align*}
which implies \eqref{eq-stablePSV}. This finishes the proof of Theorem \ref{thm-stable}.
\end{proof}

Finally, we prove Proposition \ref{prop-sharp} by a constructive proof inspired
by the proof of \cite[Theorem 1.4]{ZL19}.

\begin{proof}[Proof of Theorem \ref{prop-sharp}]
Let $\varepsilon\in(0,1)$, $k\in\mathbb{N}$ be  such that
$k\ge\frac{4}{\varepsilon}$, and $s:=4k$.
Then it suffices to show that there exist two images
$X_0,\widehat{X}\in\mathbb{R}^{N\times N}$ with $X_0 \neq\widehat{X}$
and a linear operator $\mathcal{A}$ satisfying $\delta_{2s}<\delta_p+\varepsilon$
such that $X_0$ has an $s$-sparse discrete gradient and
$\mathcal{M}( X_0)=\mathcal{M}(\widehat{X})$,
but $\|\widehat{X}\|_{\mathrm{TV}_p}\le\|X_0\|_{\mathrm{TV}_p}$,
where $\delta_p:=\frac{\zeta_0}{2-p-\zeta_0}$ with $\zeta_0$ being
the unique positive solution of \eqref{eq-equation,p} and
$\mathcal{M}$ is defined by setting, for any matrix $Z\in\mathbb{R}^{N\times N}$,
$$\mathcal{M}(Z):=(\mathcal{A}^0-\mathcal{A}_0)
\left(\left[Z,Z^T\right]\right).$$

Let $\widetilde{\ell}:=\frac{s}{\zeta_0}$ and $\ell$ be the largest integer smaller than
$\widetilde{\ell}$ such that $l:=\ell/4\in\mathbb{N}$.
Then $0\le\widetilde{\ell}-\ell<4$. By $\zeta_0\in(0,1)$, we also have $s<\widetilde{\ell}$.
Define $\overline{\Sigma}_0:=[\Sigma_0,\Sigma_0]$, where

\begin{equation*}
\Sigma_0:=\frac{1}{\sqrt{s+\ell \zeta_0^{\frac{2}{p}}}}\times
\begin{aligned}
&\hspace{2cm}
\begin{matrix}
{\scriptstyle k+1}& \qquad \qquad \quad\ & {\scriptstyle k+l+1}\\
\downarrow & \qquad \qquad \quad\ & \downarrow
\end{matrix} \\
& \begin{pmatrix} 0 & 1  & 1 & \cdots & 1 & 0 &0 &\cdots & 0 & 0& \cdots&0\\
0 & 0  & -1 &\cdots & -1 & 0 & 0 &\cdots &0 & 0& \cdots&0\\
\vdots & \vdots  & \vdots & \ddots & \vdots &\vdots &\vdots&\ddots & \vdots & \vdots& \ddots&\vdots\\
0 & 0  & 0 &\cdots & 0 & 0 & 0 & \cdots & 0 & 0& \cdots&0\\
0 & -1  & 0 &\cdots & 0 & \zeta_0^{\frac{1}{p}}& \zeta_0^{\frac{1}{p}} & \cdots & \zeta_0^{\frac{1}{p}}& 0& \cdots&0\\
0 & 0  & 0 &\cdots & 0 & 0 & -\zeta_0^{\frac{1}{p}} & \cdots & -\zeta_0^{\frac{1}{p}}& 0& \cdots&0\\
0 & 0  & 0 &\cdots & 0 & 0 & 0 & \ldots & 0& 0& \cdots&0\\
\vdots & \vdots  & \vdots &\ddots & \vdots & \vdots & \vdots& \ddots & \vdots& \vdots& \ddots&\vdots\\
0 & 0  & 0& \cdots& 0 & -\zeta_0^{\frac{1}{p}} & 0 & \cdots & 0& 0& \cdots&0\\
0 & 0  & 0 &\cdots & 0 & 0 & 0 & \cdots & 0 & 0& \cdots&0\\
\vdots & \vdots  & \vdots & \ddots & \vdots &\vdots &\vdots&\ddots & \vdots & \vdots& \ddots&\vdots\\
0 & 0  & 0 &\cdots & 0 & 0 & 0 & \cdots & 0 & 0& \cdots&0
\end{pmatrix}
\begin{matrix}
\quad \\
\leftarrow {\scriptstyle k+1}\\
\quad \\  \quad \\
\quad \\  \quad \\
\leftarrow {\scriptstyle k+l+1}
\end{matrix}
\end{aligned},
\end{equation*}
and define
$$
\mathcal{A}:\left\{
\begin{array}{rll}
\mathbb{R}^{(N-1)\times 2N}&\to&\mathbb{R}^{2N(N-1)},\\
Z&\mapsto& \mathcal{A}(Z):=\sqrt{\frac{2-p}{2-p-\zeta_0}}\left(Z-\left\langle
\overline{\Sigma}_0, Z\right\rangle \overline{\Sigma}_0\right).
\end{array}\right.
$$
Then, by $s=4k$ and $\ell=4l$, we have $\|\overline{\Sigma}_0\|_2=1$
and, moreover,
for any $2s$-sparse   matrix $Z\in\mathbb{R}^{(N-1)\times 2N}$,
\begin{align}\label{eq-sharp1x}
\left\| \mathcal{A} (Z)\right\|_2^2
&=\frac{2-p}{2-p-\zeta_0}\left(\left\|Z\right\|_2^2
-\left|\left\langle \overline{\Sigma}_0, Z\right\rangle\right|^2\right).
\end{align}
From  the Cauchy--Schwartz inequality,
$\zeta_0<1-\frac{p}{2}<1$, and $s>4$,
we infer that
\begin{align*}
0&\le\left|\left\langle \overline{\Sigma}_0, Z\right\rangle\right|^2
\le \left\|(\overline{\Sigma}_0)_{\mathrm{supp}\,(Z)}\right\|_2^2\left\|Z\right\|_2^2\\
&\le \frac{s+s \zeta_0^{\frac{2}{p}}}{s+\ell \zeta_0^{\frac{2}{p}}}\left\|Z\right\|_2^2
= \frac{1+ \zeta_0^{\frac{2}{p}}}{1+ \zeta_0^{\frac{2}{p}-1}}
\frac{s+s\zeta_0^{\frac{2}{p}-1}}{s+\ell\zeta_0^{\frac{2}{p}}} \left\|Z\right\|_2^2  \\
&=\frac{1+ \zeta_0^{\frac{2}{p}}}{1+\zeta_0^{\frac{2}{p}-1}}
\frac{1}{1-\frac{(\widetilde{\ell}-\ell)s^{\frac{2}{p}-1}}
{\widetilde{\ell}^{\frac{2}{p}}+\widetilde{\ell} s^{\frac{2}{p}-1}}}\left\|Z\right\|_2^2
=\frac{1+ \zeta_0^{\frac{2}{p}}}{1+\zeta_0^{\frac{2}{p}-1}}
\frac{1}{1-\frac{\widetilde{\ell}-\ell}
{s(\zeta_0^{-\frac{2}{p}}+\zeta_0^{-1})}}\left\|Z\right\|_2^2.
\end{align*}
Thus, when $\widetilde{\ell}-\ell\in(0,4)$,
\begin{align}\label{eq-sharp,RIP}
0&\le\left|\left\langle \overline{\Sigma}_0, Z\right\rangle\right|^2
<\frac{1+\zeta_0^{\frac{2}{p}}}{1+ \zeta_0^{\frac{2}{p}-1}}
\frac{1}{1-\frac{2}{s}}\left\|Z\right\|_2^2\notag\\
&=\frac{2\zeta_0}{2-p}\left(1+\frac{2}{s-2}\right)\left\|Z\right\|_2^2
\le\left(\frac{2\zeta_0}{2-p}+\frac{4}{s}\right)\left\|Z\right\|_2^2
\end{align}
and, when $\widetilde{\ell}=\ell$,
\begin{align}\label{eq-sharp,RIP'}
0&\le\left|\left\langle \overline{\Sigma}_0, Z\right\rangle\right|^2
\le \frac{1+ \zeta_0^{\frac{2}{p}}}{1+\zeta_0^{\frac{2}{p}-1}}\left\|Z\right\|_2^2
= \frac{2\zeta_0}{2-p} \left\|Z\right\|_2^2.
\end{align}
By inserting \eqref{eq-sharp,RIP} and \eqref{eq-sharp,RIP'} to \eqref{eq-sharp1x} and using
$\zeta_0<1-\frac{p}{2}$ and $s\ge\frac{16}{\varepsilon}$,
we further obtain
\begin{align*}
(1+\delta_p)\left\|Z\right\|_2^2&=\frac{2-p}{2-p-\zeta_0} \left\|Z\right\|_2^2 \ge
\left\|\mathcal{A}(Z)\right\|_2^2\\
&>\frac{2-p}{2-p-\zeta_0}\left(1-\frac{2\zeta_0}{2-p}-\frac{4}{s}\right)\left\|Z\right\|_2^2\\
&>\left(1-\frac{\zeta_0}{2-p-\zeta_0}-\frac{8}{s}\right)\left\|Z\right\|_2^2
\ge\left(1-\delta_p-\frac{\varepsilon}{2}\right)\left\|Z\right\|_2^2,
\end{align*}
which implies that the  operator
$\mathcal{A}$ has the RIP of order $2s$ with
$\delta_{2s}\le\delta_p+\frac{\varepsilon}{2} < \delta_p +\varepsilon$.

Now, we define two diagonal matrices $X_0:=[X_{i,j}]\in\mathbb{R}^{N\times N}$ and
$\widehat{X}:=[\widehat{X}_{i,j}]\in\mathbb{R}^{N\times N}$ by setting
$$
X_0:=\frac{1}{\sqrt{s+\ell \zeta_0^{\frac{2}{p}}}}E_0\ \ \text{and}\ \
\widehat{X}:=\frac{1}{\sqrt{s+\ell \zeta_0^{\frac{2}{p}}}}\widehat{E},
$$
where
\begin{equation*}
E_0:=
\begin{aligned}
&\hspace{2.2cm}
\begin{matrix}
{\scriptstyle k+1}\\
\downarrow
\end{matrix} \\
& {\setlength{\arraycolsep}{2pt}\begin{pmatrix}
0 & 0  & 0 & \cdots & 0 & 0 &\cdots & 0 \\
0 & -1 & -1 &\cdots & -1 & 0 &\cdots &0 \\
0 & -1 & 0 &\cdots & 0& 0 &\cdots &0 \\
\vdots & \vdots  & \vdots & \ddots & \vdots  &\vdots&\ddots & \vdots \\
0 & -1  & 0 &\cdots & 0 & 0 & \cdots & 0  \\
0 & 0  & 0 &\cdots & 0 & 0 & \cdots & 0 \\
\vdots & \vdots  & \vdots &\ddots & \vdots & \vdots & \ddots & \vdots\\
0 & 0  & 0 &\cdots & 0 & 0 & \cdots & 0
\end{pmatrix}}
\end{aligned}
\end{equation*}
and
\begin{equation*}
\widehat{E}:=
\begin{aligned}
&\hspace{1.4cm}
\begin{matrix}
{\scriptstyle k+2}& \qquad \  \quad & {\scriptstyle k+l+1}\\
\downarrow & \qquad \  \quad & \downarrow
\end{matrix} \\
&
{\setlength{\arraycolsep}{2pt}\begin{pmatrix}
0 & \cdots  & 0 & 0 & 0 & \cdots & 0 & 0 &\cdots & 0\\
\vdots & \ddots & \vdots & \vdots&\vdots &\ddots & \vdots & \vdots &\ddots &\vdots \\
0 & \cdots  & 0 & 0 & 0 & \cdots & 0 & 0 &\cdots & 0\\
0 & \cdots & 0 & \zeta_0^{\frac{1}{p}}&\zeta_0^{\frac{1}{p}} & \cdots &\zeta_0^{\frac{1}{p}} & 0 &\cdots & 0\\
0 & \cdots  & 0 & \zeta_0^{\frac{1}{p}}& 0 & \cdots & 0 & 0 &\cdots & 0\\
\vdots & \ddots  & \vdots &\vdots & \vdots &\ddots & \vdots & \vdots & \ddots & \vdots\\
0 & \cdots  & 0 & \zeta_0^{\frac{1}{p}}& 0&\cdots & 0 & 0 &\cdots & 0\\
0 & \cdots  & 0 & 0&0& \cdots &0 & 0 &\cdots & 0\\
\vdots & \ddots  & \vdots &\vdots & \vdots& \ddots & \vdots & \vdots & \ddots & \vdots\\
0 & \cdots  & 0 & 0 & 0 & \cdots & 0 & 0 &\cdots & 0
\end{pmatrix}}
\end{aligned}
\end{equation*}
By a direct   calculation, we have
\begin{equation*}
\partial_2 E_0=
\begin{aligned}
&\hspace{2.1cm}
\begin{matrix}
{\scriptstyle k+1}\\
\downarrow
\end{matrix} \\
& {\setlength{\arraycolsep}{3pt} \begin{pmatrix}
0 & 0  & 0 & \cdots & 0 & 0 &\cdots & 0 \\
-1 & 0 & 0 &\cdots &  1 & 0 &\cdots &0 \\
-1 & 1 & 0 &\cdots & 0& 0 &\cdots &0 \\
\vdots & \vdots  & \vdots & \ddots & \vdots  &\vdots&\ddots & \vdots \\
-1 & 1  & 0 &\cdots & 0 & 0 & \cdots & 0  \\
0 & 0  & 0 &\cdots & 0 & 0 & \cdots & 0 \\
\vdots & \vdots  & \vdots &\ddots & \vdots & \vdots & \ddots & \vdots\\
0 & 0  & 0 &\cdots & 0 & 0 & \cdots & 0
\end{pmatrix}}
\end{aligned}
\end{equation*}
and
\begin{equation*}
\partial_2\widehat{E}=
\begin{aligned}
&\hspace{1.4cm}
\begin{matrix}
{\scriptstyle k+1}& \qquad  \qquad \quad \ & {\scriptstyle k+l+1}\\
\downarrow & \qquad  \qquad \quad \ & \downarrow
\end{matrix} \\
&
{\setlength{\arraycolsep}{2pt}\begin{pmatrix}
0 & \cdots  & 0 & 0 & 0  & 0 & \cdots & 0 & 0 &0 &\cdots & 0\\
\vdots & \ddots & \vdots & \vdots &\vdots&\vdots &\ddots & \vdots & \vdots & \vdots &\ddots &\vdots \\
0 & \cdots  & 0 & 0 & 0 & 0 & \cdots & 0 & 0 & 0 &\cdots & 0\\
0 & \cdots & 0 & \zeta_0^{\frac{1}{p}}& 0  & 0& \cdots & 0 & -\zeta_0^{\frac{1}{p}} & 0 &\cdots & 0\\
0 & \cdots  & 0 & \zeta_0^{\frac{1}{p}}& -\zeta_0^{\frac{1}{p}} & 0 & \cdots & 0&0 &0&\cdots & 0\\
\vdots & \ddots  & \vdots &\vdots & \vdots &\vdots&\ddots & \vdots & \vdots&\vdots & \ddots & \vdots\\
0 & \cdots  & 0 & \zeta_0^{\frac{1}{p}}& -\zeta_0^{\frac{1}{p}}& 0&\cdots & 0 & 0 &0 &\cdots & 0\\
0 & \cdots  & 0 & 0&0& 0&\cdots &0 & 0 & 0 &\cdots & 0\\
\vdots & \ddots  & \vdots &\vdots & \vdots &\vdots & \ddots & \vdots & \vdots &\vdots & \ddots & \vdots\\
0 & \cdots  & 0 & 0 & 0 & 0& \cdots & 0 & 0 &0 &\cdots & 0
\end{pmatrix}}
\end{aligned}
\end{equation*}
as well as
$\partial_1 E_0 = (\partial_2 E_0)^T$ and $\partial_1 \widehat{E} = (\partial_2 \widehat{E})^T.$
Note that
$\partial_1(\widehat{X}-X_0)=\Sigma_0$ and $\partial_2 (\widehat{X}-X_0 )=\Sigma_0^T$.
Applying  this, Lemma \ref{lem-M0},
and  $\|\overline{\Sigma}_0\|_2=1$, we obtain
\begin{align*}
\mathcal{M}(\widehat{X})-\mathcal{M}(X_0)
&=\left(\mathcal{A}^0-\mathcal{A}_0\right)\left(\left[\widehat{X}-X_0,(\widehat{X}-X_0)^T\right]\right)\\
&=  \mathcal{A} \left(\left[\partial_1\left(\widehat{X}- X_0\right),
\left[\partial_2\left(\widehat{X}- X_0\right)\right]^T\right]\right)
=\mathcal{A} \left(\overline{\Sigma}_0 \right)=0.
\end{align*}
However, since $\ell\zeta_0<s$, we have
$\|\nabla \widehat{X}\|_{p}^p/\|\nabla X_0\|_{p}^p =\ell\zeta_0/s<1$
and hence $\|\nabla \widehat{X}\|_{p}^p<\|\nabla X_0\|_{p}^p ,$
which completes the proof of Theorem \ref{prop-sharp}.
\end{proof}

\section{Algorithm}\label{sec-alg}

This section consists of two subsections.
It is well known that the \emph{iteratively re-weighted least squares}
(IRLS) method is a classical algorithm
for the $\ell_p$ ($p\in(0,1]$)  minimization; see, for example,  \cite{DDFG10,LXY13}.
Recently, a modified IRLS method was employed  in \cite{lcgy}
to solve the unconstrained two-parameter $P_{a,p}$ minimization.
Inspired by this, in Subsection \ref{sec-Alo}, based on the PSV$_{a,p}$
we design an iteratively re-weighted least squares algorithm
IRLSPSV to solve the following unconstrained PSV$_{a,p}$ minimization
\begin{equation}\label{eq-unproblem}
\min_{X\in\mathbb{R}^{N\times N}} Q(X) :=
\min_{X\in\mathbb{R}^{N\times N}}
\lambda\left\|X\right\|_{\mathrm{PSV}_{a,p}}+ \frac{1}{2}\left\|\Phi X-y\right\|_2^2,
\end{equation}
where  $\lambda\in(0,\infty)$ is the regularity parameter, $\Phi$ is a
measurement matrix, and $y$ is a measurement.
Subsection \ref{subsec-convergIRLS} is devoted to establishing
some related convergence results.

In the following algorithm and subsequent experiments,
we focus on the isotropic version of the
semi-norm $\|\cdot\|_{\mathrm{PSV}_{a,p}}$, namely
$\|X\|_{\mathrm{PSV}_{a,p}}:=\|X\|^{(\mathrm{i})}_{\mathrm{PSV}_{a,p}}$ [see \eqref{eq-def-iTTVp}],
due to its higher directional neutrality
compared to the anisotropic case.

\subsection{IRLSPSV Algorithm for Unconstrained PSV$_{a,p}$ Minimization}\label{sec-Alo}

For any given $a\in(0,\infty)$ and $p\in(0,1]$,
we define a \emph{functional $\mathcal{G}_{a,p}$} by setting,
for any image $X\in {\mathbb R}^{N\times N}$, any
$\omega:=[\omega_{i,j}]\in\mathbb{R}^{N\times N}$
with each $\omega_{i,j}>0$, and any $\epsilon>0$,
\begin{equation*}
\mathcal{G}_{a,p}(X,\omega,\epsilon)
:=\frac{p(a+1)}{2}\left[\sum_{i,j=1}^{N} \frac{|(\nabla X)_{i,j}|^2 +\epsilon^2}
{(a+|(\nabla X)_{i,j}|^p)^{\frac 2p}}\omega_{i,j}
+ \frac{2-p}{p}\omega_{i,j} ^{-\frac{p}{2-p}}\right].
\end{equation*}
Here, and thereafter,
$|(\nabla X)_{i,j}|:=\sqrt{(\nabla_1 X)_{i,j}^2+(\nabla_2 X)_{i,j}^2}$.

\begin{algorithm}[H]
\caption{IRLSPSV for  unconstrained PSV$_{a,p}$ minimization }\label{alg1}
\hspace*{1em}\textbf{Input:} $\Phi\in {\mathbb R}^{K\times N^2}$, $y\in {\mathbb R}^K$,
$\delta\in(0,1]$, and $s\in\mathbb N$ \\
\hspace*{1em}\textbf{Define:}  $k_{\mathrm{out}}\in\mathbb{N}$ and $\tau\ge1$\\
\hspace*{1em}\textbf{Initialize:} $X^0=\mathbf{0}$, $\epsilon_0=1$, and $n=0$ \\
\hspace*{1em}\textbf{While} $n<k_{\mathrm{out}}$ and \textbf{not converged}\\
\hspace*{2em} $w_{i,j}^{n}= \left[\left|(\nabla X^n)_{i,j}\right|^2+\epsilon_n^2\right]^{\frac{p-2}{2}}$ \\
\hspace*{2em} $X^{n+1}=\displaystyle\mathop{\arg\min\,}_{X\in\mathbb{R}^{N\times N}}
  \lambda (a+1)\sum_{i,j=1}^{N}w_{i,j}^{n}\frac{|(\nabla X)_{i,j}|^2}{a+|(\nabla X)_{i,j}|^p}+ \frac{1}{2}\|\Phi X-y\|_2^2$ \\
\hspace*{2em} $\epsilon_{n+1}=\min\left\{\epsilon_n ,  \delta r(\nabla X^{n+1})_{s+1}\right\}$ \\
\hspace*{2em} \textbf{if}    $\|\nabla X^{n+1}-\nabla X^{n}\|_2/\max\{1,\|\nabla X^{n}\|_2\}>\tau $\\
\hspace*{2em} \textbf{then return} $X^n$  \quad \textbf{break}\\
\hspace*{2em} $n=n+1$ \\
\hspace*{1em} \textbf{End while}\\
\hspace*{1em}\textbf{Output:} $X_{\mathrm{new}}=X^{n+1}$
\end{algorithm}

To solve \eqref{eq-unproblem}, the IRLSPSV algorithm adopts an alternating minimization
and  an approximation process from \cite{lcgy}​, as detailed in Algorithm \ref{alg1}.
Here, $r(\nabla X^{n+1})_{s+1}$ denotes the $(s+1)$-th largest component  in magnitude of
 the decreasing rearrangement of $\nabla X^{n+1}$ (treated as a vector).
Notably,  if $\epsilon_{n_0}=0$ for some $n_0\in\mathbb{N}$,
we stop the algorithm and define $X^k:=X^{n_0-1}$ and $w^k:=w^{n_0-1}$ for all $k\ge n_0$.
Then   $\{w^n\}_{n\in\mathbb{N}}$ is $\ell_\infty$-bounded because
$\|w^{n}\|_\infty \le  \epsilon_{n_0 -1}^{-(2-p)}$.

In each iteration, to update $X$, we need to
solve an unconstrained weighted sub-problem
\begin{equation}\label{eq-subproblem}
\min_{X\in \mathbb{R}^{N\times N}}  F_{w}(X)
:=\min_{X\in \mathbb{R}^{N\times N}}  \lambda (a+1)\sum_{i,j=1}^{N} w_{i,j}\frac{|(\nabla X)_{i,j}|^2}{a+|(\nabla X)_{i,j}|^p}
+ \frac{1}{2}\left\|\Phi X-y\right\|_2^2,
\end{equation}
where  $w:=[w_{i,j}]$ is an updated weight.
Although it is convex, since its explicit solution is hard to obtain,
we employ the difference of convex functions algorithm (DCA) (see, for example,
\cite{DL98}). To be precise, define
\begin{equation*}
\Psi_{w}(X) :=\sum_{i,j=1}^{N} w_{i,j} \frac{|(\nabla X)_{i,j}|^{p+2}}{a(a+|(\nabla X)_{i,j}|^p)},
\ \ \forall\,X:=[X_{i,j}]\in \mathbb{R}^{N\times N}.
\end{equation*}
Then $F_{w}$ has a DC (difference of convex functions) decomposition
\begin{equation}\label{eq-DCdecom}
F_{w} = G_{w}-H_{w},
\end{equation}
where,
for any $X:=[X_{i,j}]\in\mathbb{R}^{N\times N}$,
\begin{equation}\label{eq-def-Gw}
G_{w}(X):= \frac{\lambda (a+1)}{a}\sum_{i,j=1}^{N}w_{i,j}|(\nabla X)_{i,j}|^2
+ \frac{1}{2}\left\|\Phi X-y\right\|_2^2 + c\|X\|_2^2
\end{equation}
and
\begin{equation}\label{eq-def-Hw}
H_{w}(X):=\frac{\lambda (a+1)}{a} \Psi_{w}(X) + c\|X\|_2^2.
\end{equation}
The additional term $c\|X\|_2^2$ with $c\in(0,\infty)$  is used
to promote the convexity of $G_{w}$ and $H_{w}$.
Moreover, as both $G_{w}$ and $H_w$ are differentiable,
we can calculate their gradients
\begin{align}\label{eq-PG}
\partial G_{w}(X)
=\frac{2\lambda(a+1)}{a}\nabla^*\left(w\odot \nabla_1 X,w\odot \nabla_2 X\right)
+\Phi^T \Phi X-\Phi^T y + 2c X
\end{align}
and
\begin{align}\label{eq-PH}
\partial H_{w}(X)
&=2cX+\frac{2\lambda(a+1)}{a}\nabla^*\left(w\odot \nabla_1 X,w\odot \nabla_2 X\right)\nonumber\\
&\quad-2\lambda(a+1) \nabla^*\left(\sum_{i,j=1}^{N} \frac{w_{i,j} (\nabla_1 X)_{i,j}}{a+|(\nabla X)_{i,j}|^p}E_{i,j},
\sum_{i,j=1}^{N} \frac{w_{i,j} (\nabla_2 X)_{i,j}}{a+|(\nabla X)_{i,j}|^p}E_{i,j}\right)\nonumber\\
&\quad+\lambda p(a+1) \nabla^*\left(\sum_{i,j=1}^{N}\frac{w_{i,j} |(\nabla_1 X)_{i,j}|^p(\nabla_1 X)_{i,j}}{(a+|(\nabla X)_{i,j}|^p)^2}E_{i,j},
\sum_{i,j=1}^{N}\frac{w_{i,j} |(\nabla_2 X)_{i,j}|^p(\nabla_2 X)_{i,j}}{(a+|(\nabla X)_{i,j}|^p)^2}E_{i,j}\right),
\end{align}
where $E_{i,j}$ denotes the matrix with 1 at the $(i,j)$-entry and 0 elsewhere,
$\odot$ denotes the \emph{element-wise multiplication},
 and $\nabla^*$ denotes the adjoint operator of the gradient.

\begin{algorithm}[H]
\caption{DCA for the unconstrained weighted sub-problem  \eqref{eq-subproblem}}
\label{alg2}
\hspace*{1em}\textbf{Input:} $\Phi\in {\mathbb R}^{K\times N^2}$, $y\in {\mathbb R}^K$,
and $w^n \in \mathbb{R}^{N\times N}$\\
\hspace*{1em}\textbf{Define:} $\varepsilon_{\mathrm{mid}}>0$ and  $k_{\mathrm{mid}}\in\mathbb{N}$\\
\hspace*{1em}\textbf{Initialize:} $X^0=X^n$ \\
\hspace*{1em}\textbf{For} $k=0,1,2,\ldots,k_{\mathrm{mid}}$ \quad \textbf{do} \\
\hspace*{2em} $V^k =\partial H_{w^n}(X^k) $ \\
\hspace*{2em} $X^{k+1}=\displaystyle\mathop{\arg\min\,}\left\{X\in \mathbb{R}^{N^2} :\ G_{w^n}(X)-\langle X, V^k\rangle\right\}$ \\
\hspace*{2em} \textbf{if} $\|X^{k+1}-X^k\|_{2}/\max\{\|X^k\|_{2},1\}<\varepsilon_{\mathrm{mid}}$ \quad \textbf{then break} \\
\hspace*{1em}\textbf{Output:} $X_{\mathrm{new}}=X^{k+1}$
\end{algorithm}

The DCA for the unconstrained weighted sub-problem \eqref{eq-subproblem} is presented
 in  Algorithm \ref{alg2}.
In Algorithm \ref{alg2},  $G_{w^n}$ and $H_{w^n}$ are defined as in  \eqref{eq-def-Gw} and \eqref{eq-def-Hw}
with $w$ therein replaced by $w^n$. To update $X^{k+1}$,
we use the primal-dual (PD) algorithm (see, for example, \cite{SJP12}).
To be precise, let both $w^n:=[w^n_{i,j}]$ and $V^k$ be the updated
values generated in Algorithms \ref{alg1} and \ref{alg2}.
Then $G_{w^n}(X)-\langle X, V^k\rangle$ can be rewritten as
$f_1(u)+f_2(Z)+g(X),$ where
$$
f_1(u):=\frac{1}{2}\left\|u-y\right\|_2^2, \ \
f_2(Z):=\frac{\lambda (a+1)}{a}\sum_{i,j=1}^{N}w^n_{i,j}|Z_{i,j}|^2,\ \ \text{and}\ \
g(X):=c\|X\|_2^2 -\langle X, V^k\rangle
$$
with
$u=\Phi X$ and  $Z=\nabla X.$
The PD algorithm consists of four iterations:
\begin{align*}
\begin{cases}
P^{n+1}=\mathrm{Prox}_\sigma f_1^*\left(P^n+\sigma \Phi\overline{X}^n\right)\\
Q^{n+1}=\mathrm{Prox}_\sigma f_2^*\left(Q^n+\sigma \nabla\overline{X}^n\right)\\
X^{n+1}=\mathrm{Prox}_\tau g \left(X^n-\tau\Phi^T P^{n+1}-\tau\nabla^* Q^{n+1}\right)\\
\overline{X}^{n+1}=X^{n+1}+\theta(X^{n+1}-X^n)
\end{cases},
\end{align*}
where Prox denotes the classical proximal operator and
$f_1^*$ and $f_2^*$ denote the conjugate functions of $f_1$ and $f_2$, respectively.
We present the concrete PD algorithm in Algorithm \ref{alg3}.

\begin{algorithm}
\caption{PD for approximate solution $X^{k+1}$}
\label{alg3}
\hspace*{1em}\textbf{Input:} $\Phi\in {\mathbb R}^{K\times N^2}$, $y\in {\mathbb R}^K$,
$w^n:=[w^n_{i,j}]\in \mathbb{R}^{N\times N}$, $V^k\in \mathbb{R}^{N\times N}$, and $X^k\in \mathbb{R}^{N\times N}$\\
\hspace*{1em}\textbf{Define:} $\varepsilon_{\mathrm{inn}}>0$, $\theta>0$, $\sigma>0$,
$\tau>0$, and  $k_{\mathrm{inn}}\in\mathbb{N}$\\
\hspace*{1em}\textbf{Initialize:} $w:=w^n$,
$\overline{X}^0=X^k$, $P^0= \mathbf{0}$, $Q_x^0=\mathbf{0}$,
$Q_y^0=\mathbf{0}$, and $X_{\mathrm{prev}}^0=X^k$\\
\hspace*{1em}\textbf{For} $\ell=0,1,2,\ldots$ \quad \textbf{do} \\
\hspace*{2em} $P^{\ell+1}=  \frac{1}{1+\sigma}\left[P^\ell+\sigma (\Phi \overline{X}^\ell -y)\right]$ \\
\hspace*{2em} $(Q_x^{\ell+1})_{i,j}=\frac{2\lambda(a+1)w_{i,j}}{2\lambda(a+1)w_{i,j}+a\sigma}
\left[(Q_x^\ell)_{i,j}+\sigma(\nabla_1\overline{X}^\ell)_{i,j}\right]$ \\
\hspace*{2em} $(Q_y^{\ell+1})_{i,j}=\frac{2\lambda(a+1)w_{i,j}}{2\lambda(a+1)w_{i,j}+a\sigma}
\left[(Q_y^\ell)_{i,j}+\sigma(\nabla_2\overline{X}^\ell)_{i,j}\right]$ \\
\hspace*{2em} $X_{\mathrm{prev}}^{\ell+1}=\frac{1}{1+2\tau c}
\left\{X_{\mathrm{prev}}^\ell-\tau\left[\Phi^T P^{\ell+1}+\nabla^*\left(Q_x^{\ell+1},Q_y^{\ell+1}\right)\right]+\tau V^{k}\right\}$ \\
\hspace*{2em} $\overline{X}^{\ell+1}=X_{\mathrm{prev}}^{\ell+1}+\theta
\left(X_{\mathrm{prev}}^{\ell+1}-X_{\mathrm{prev}}^\ell\right)$ \\
\hspace*{2em} \textbf{if} $\|X_{\mathrm{prev}}^{\ell+1}-X_{\mathrm{prev}}^\ell\|_{2}/\max\{\|X_{\mathrm{prev}}^\ell\|_{2},1\}|< \varepsilon_{\mathrm{inn}}$
\quad \textbf{then break} \\
\hspace*{1em}\textbf{Output:} $X_{\mathrm{new}}=\overline{X}^{\ell+1}$
\end{algorithm}

\subsection{Convergence of DCA}\label{subsec-convergIRLS}

In this subsection, we do some convergence analysis on Algorithm \ref{alg2}
by using \cite[Proposition A.1]{DL98} on the DCA,
which is based on the following moduli of strong convexity (see \cite[(5)]{DL98}).

\begin{definition}
Let $f$ be a convex function defined on $\mathbb{R}^N$. The \emph{modulus of strong convexity}
$m(f)$ of $f$ is defined by setting
$$
m(f):=\sup\left\{\rho\in[0,\infty):\ f(\cdot)-\frac{\rho}{2}\|\cdot\|_2^2 \text{ is convex on } \mathbb{R}^N\right\}.
$$
\end{definition}

\begin{lemma}\label{lem-DC1}
Let $f=g-h$ be a DC decomposition with $m(g)>0$ and $m(h)>0$,
and let $\{X^n\}_{n\in\mathbb{N}}$ be a sequence generated by DCA. Then, for any $n\in\mathbb{N}$,
$$
\left\|X^{n+1}-X^n\right\|_2^2 \le \frac{2}{m(g)+m(h)} \left[ f(X^n) - f(X^{n+1})\right].
$$
\end{lemma}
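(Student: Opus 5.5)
This is the descent inequality for DCA, and I will prove it by adding two strong-convexity estimates. Recall the DCA step. Given $X^n$, choose $V^n\in\partial h(X^n)$; then $X^{n+1}$ minimizes $g(X)-\langle X,V^n\rangle$, so $V^n\in\partial g(X^{n+1})$.

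Next, use the characterization of the modulus $m(\cdot)$. If $f-\frac{\rho}{2}\|\cdot\|_2^2$ is convex, then for every subgradient $V\in\partial f(Y)$ and every $X$,
$$f(X)\ge f(Y)+\langle V,X-Y\rangle+\frac{\rho}{2}\|X-Y\|_2^2 .$$
This holds for each $\rho<m(f)$, and letting $\rho\uparrow m(f)$ gives it with $\rho=m(f)$. The only fact needed is that $\partial\big(f-\frac{\rho}{2}\|\cdot\|_2^2\big)(Y)=\partial f(Y)-\rho Y$, which is immediate in the differentiable case relevant here, where $G_w$ and $H_w$ are differentiable.

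Apply this twice:
\begin{align*}
g(X^n)&\ge g(X^{n+1})+\langle V^n,X^n-X^{n+1}\rangle+\frac{m(g)}{2}\|X^n-X^{n+1}\|_2^2,\\
h(X^{n+1})&\ge h(X^n)+\langle V^n,X^{n+1}-X^n\rangle+\frac{m(h)}{2}\|X^{n+1}-X^n\|_2^2 .
\end{align*}
The first uses $V^n\in\partial g(X^{n+1})$ and the second uses $V^n\in\partial h(X^n)$. Adding the two inequalities, the inner-product terms cancel, leaving
$$g(X^n)-h(X^n)\ge g(X^{n+1})-h(X^{n+1})+\frac{m(g)+m(h)}{2}\|X^{n+1}-X^n\|_2^2 .$$
Since $f=g-h$, this is exactly $f(X^n)-f(X^{n+1})\ge\frac{m(g)+m(h)}{2}\|X^{n+1}-X^n\|_2^2$. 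Because $m(g)+m(h)>0$, dividing through gives the claimed bound.

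The one step needing care is the strong-convexity subgradient inequality with the supremum value $m(f)$, which I handle by the limiting argument above. Alternatively, the whole lemma can be cited as \cite[Proposition A.1]{DL98}.
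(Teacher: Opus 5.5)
Your proof is correct. The paper gives no argument of its own for this lemma; it simply imports it from \cite[Proposition A.1]{DL98}.

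What you wrote is the standard proof behind that cited result. It combines the optimality condition $V^n\in\partial g(X^{n+1})$ from the convex subproblem with $V^n\in\partial h(X^n)$. Each inclusion then yields a strong-convexity subgradient inequality, and when the two are added the inner products cancel. Compared with the bare citation, your version is self-contained. It also makes explicit that, beyond convexity of $g$ and $h$, only $m(g)+m(h)>0$ is used, namely for the final division.

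You also need not restrict the subdifferential identity to the differentiable case. Take any convex $\varphi$ finite on $\mathbb{R}^N$, any $\rho\ge 0$, and any $V\in\partial\varphi(Y)$. For every direction $d$,
\[
\bigl(\varphi-\tfrac{\rho}{2}\|\cdot\|_2^2\bigr)'(Y;d)=\varphi'(Y;d)-\rho\langle Y,d\rangle\ge\langle V-\rho Y,d\rangle .
\]
Hence $V-\rho Y\in\partial\bigl(\varphi-\frac{\rho}{2}\|\cdot\|_2^2\bigr)(Y)$ whenever that function is convex. So your argument covers the lemma for general, possibly nonsmooth, convex $g$ and $h$, which is the generality in which it is stated.

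The limiting step $\rho\uparrow m(\cdot)$ is valid. It can also be skipped: pointwise limits of convex functions are convex, so the supremum defining $m(\cdot)$ is attained.
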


Some convergence results on Algorithm \ref{alg2} read as follows.

 \begin{theorem}\label{thm-con-DCA}
Let $w:=[w_{i,j}]\in\mathbb{R}^{N\times N}$ be the input weight with each $w_{i,j}>0$,
$y$ the input measurement in Algorithm \ref{alg2},
and $\{X^n\}_{n\in\mathbb{N}}$
a sequence generated from Algorithm \ref{alg2}. Then
the following assertions hold.
\begin{enumerate}
\item[\rm(i)] $\{F_w(X^n)\}_{n\in\mathbb{N}}$ is decreasing and convergent.

\item[\rm(ii)] $\{X^n\}_{n\in\mathbb{N}}$ and $\{\nabla X^n\}_{n\in\mathbb{N}}$
have the asymptotic regularity, i.e.,
$$
\lim_{n\to\infty}\left\|X^{n+1}-X^n\right\|_2=0
\ \ \text{and}\ \
\lim_{n\to\infty}\left\|\nabla X^{n+1}-\nabla X^n \right\|_2=0.
$$
\item[\rm(iii)] If $\lambda>\frac{\|w^{-1}\|_\infty \| y \|_2^2}{2(a+1)}$
with $w^{-1}:=[w_{i,j}^{-1}]$,
then $\{\nabla X^n\}_{n\in\mathbb{N}}$ is $\ell_\infty$-bounded
and, for any accumulation point $X^*$ of $\{X^n\}_{n\in\mathbb{N}}$,
$\partial F_w(X^*)=0.$
\end{enumerate}
\end{theorem}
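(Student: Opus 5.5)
Our plan is to apply Lemma \ref{lem-DC1} to the DC decomposition \eqref{eq-DCdecom} and then pass to the limit. First we check the hypotheses. By \eqref{eq-def-Gw}, $G_w$ is a sum of a convex quadratic in $\nabla X$, the convex function $\frac12\|\Phi X-y\|_2^2$ and $c\|X\|_2^2$, so $m(G_w)\ge 2c>0$. For $H_w$ we need $\frac{\lambda(a+1)}{a}\Psi_w$ to be convex. We argue that the scalar map $t\mapsto \frac{t^{p+2}}{a(a+t^p)}=\frac{t^2}{a}-\frac{t^2}{a+t^p}$ on $[0,\infty)$ is increasing and convex. Composing it with the convex map $X\mapsto |(\nabla X)_{i,j}|$ then gives a convex function. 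With $c\|X\|_2^2$ added, this yields $m(H_w)\ge 2c>0$.

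If convexity of $\Psi_w$ turns out to be delicate for small $p$, we would instead enlarge $c$. Since $\Psi_w$ has locally bounded Hessian on bounded sets, a large $c$ makes $H_w$ convex there. This fallback only works on bounded sets, so it requires boundedness of the iterates, which is exactly what we prove below. We expect this verification to be the main technical point.

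Lemma \ref{lem-DC1} then gives
\[
\|X^{n+1}-X^n\|_2^2\le \frac{2}{m(G_w)+m(H_w)}\left[F_w(X^n)-F_w(X^{n+1})\right].
\]
Hence $\{F_w(X^n)\}$ is nonincreasing. It is bounded below by $0$, since each summand of $F_w$ is nonnegative, so it converges; this is (i). Summing the displayed inequality over $n$ telescopes to a bound by $F_w(X^0)$. Therefore $\sum_n\|X^{n+1}-X^n\|_2^2<\infty$, and so $\|X^{n+1}-X^n\|_2\to0$. Since $\nabla$ is a bounded linear operator, $\|\nabla X^{n+1}-\nabla X^n\|_2\le\|\nabla\|\,\|X^{n+1}-X^n\|_2\to0$, which gives (ii).

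For (iii), monotonicity gives $F_w(X^n)\le F_w(X^0)$. Provided Algorithm \ref{alg2} is started at $X^0=\mathbf{0}$ (or the previous iterate is handled in the same way), we have $F_w(\mathbf{0})=\frac12\|y\|_2^2$. Hence, for each $(i,j)$,
\[
\lambda(a+1)w_{i,j}\frac{|(\nabla X^n)_{i,j}|^2}{a+|(\nabla X^n)_{i,j}|^p}\le \frac12\|y\|_2^2 .
\]
Put $t:=|(\nabla X^n)_{i,j}|$. The left side grows like $t^{2-p}$, so $t$ stays bounded. More precisely, if $t\ge1$ then $\frac{t^2}{a+t^p}\ge \frac{t^{2-p}}{a+1}$, and this yields $t^{2-p}\le \frac{\|w^{-1}\|_\infty\|y\|_2^2}{2\lambda}$. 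The assumption on $\lambda$ then makes this bound explicit: it gives $t\le1$ or $t$ bounded by a constant. Either way $\{\nabla X^n\}$ is $\ell_\infty$-bounded.

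Finally, let $X^*$ be an accumulation point, $X^{n_k}\to X^*$. The optimality condition of the convex subproblem is $\partial G_w(X^{n_k+1})=\partial H_w(X^{n_k})$. By (ii), $X^{n_k+1}\to X^*$ as well. Both gradients \eqref{eq-PG} and \eqref{eq-PH} are continuous; in particular the terms involving $|\cdot|^p$ are continuous, including at $0$. Passing to the limit gives $\partial G_w(X^*)=\partial H_w(X^*)$, that is, $\partial F_w(X^*)=0$.
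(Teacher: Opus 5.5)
Your proposal is correct and follows the paper's proof essentially step by step: Lemma~\ref{lem-DC1} with $m(G_w),m(H_w)\ge 2c$ gives monotonicity and, by telescoping, (i)--(ii); the bound $F_w(X^n)\le F_w(\mathbf{0})=\frac12\|y\|_2^2$ gives the $\ell_\infty$-boundedness; and one passes to the limit in $\partial G_w(X^{n_k+1})=\partial H_w(X^{n_k})$. Two small remarks: the convexity you assert does hold, since $\frac{d^2}{dt^2}\frac{t^{p+2}}{a+t^p}=\frac{(p+1)(p+2)a^2t^p+(p+1)(4-p)at^{2p}+2t^{3p}}{(a+t^p)^3}>0$ for $t>0$, so the fallback of enlarging $c$ is unnecessary (and would be circular, since the boundedness in (iii) rests on the descent property); and your $t^{2-p}$ estimate gives $\ell_\infty$-boundedness of $\{\nabla X^n\}$ for every $\lambda>0$, and for any starting point with $\frac12\|y\|_2^2$ replaced by $F_w(X^0)$, whereas the paper's argument needs the hypothesis on $\lambda$.
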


\begin{proof}
To prove (i), we let $F_w=G_w-H_w$ be a DC decomposition as in \eqref{eq-DCdecom}.
Then, by the definitions of $G_w$ and $H_w$,
we easily conclude $m(G_w)\ge 2c$ and $m(H_w)\ge 2c$ with $c>0$.
From this and Lemma \ref{lem-DC1}, it follows that, for any $n\in\mathbb{N}$,
$F_w(X^n) - F_w(X^{n+1})\ge 0$,
which implies the decreasing property of $\{F_w(X^n)\}_{n\in\mathbb{N}}$.
By this and  $F_w(X^n)\ge0$ for any $n\in\mathbb{N}$,
we obtain the convergence of $\{F_w(X^n)\}_{n\in\mathbb{N}}$, which proves (i).

Furthermore, by (i) and Lemma \ref{lem-DC1},
we conclude that
$$
\left\|X^{n+1}-X^n\right\|_2^2 \le \frac{F_w(X^n) - F_w(X^{n+1})}{2c} \to 0 
\quad\text{as}\ n\to\infty,
$$
and hence
$$
\left\|\nabla X^{n+1}-\nabla X^n\right\|_2^2 \le 2\left\|X^{n+1}-X^n\right\|_2^2 \to 0    \quad\text{as}\ n\to\infty,
$$
which implies (ii).

Finally, we prove (iii). By the decreasing property of $\{F_w(X^n)\}_{n\in\mathbb{N}}$
and $X^0=\mathbf{0}$, we have
$$
\lambda\sum_{i,j=1}^{N}w_{i,j} \frac{(a+1)|(\nabla X^n)_{i,j}|^2}{a+|(\nabla X^n)_{i,j}|^p}
+ \frac{1}{2}\left\|\Phi X^n-y\right\|_2^2 =F_w(X^n)\le F_w(X^0)=\frac{1}{2}\left\| y\right\|_2^2.
$$
This  implies that, for each $i,j\in\{1,\ldots,N\}$,
$$
 \frac{|(\nabla X^n)_{i,j}|^2}{a+|(\nabla X^n)_{i,j}|^p} \le\frac{\|w^{-1}\|_\infty \| y \|_2^2}{2\lambda (a+1)}.
$$
Thus, if $\lambda>\frac{\|w^{-1}\|_\infty \| y\|_2^2}{2(a+1)}$,
we have, for each $i,j\in\{1,\ldots,N\}$,
either $|(\nabla X^n)_{i,j}|\le 1$ or, by $|(\nabla X^n)_{i,j}|^2>|(\nabla X^n)_{i,j}|^p$ when $|(\nabla X^n)_{i,j}|>1$,
$$
|(\nabla X^n)_{i,j}|^2\le\frac{a\|w^{-1}\|_\infty \| y \|_2^2}{2\lambda (a+1)-\|w^{-1}\|_\infty \| y \|_2^2},
$$
which means that $\{\nabla X^n\}_{n\in\mathbb{N}}$ is $\ell_\infty$ bounded.

To prove the remainder, let $\{X^{n_k}\}_{k\in\mathbb{N}}$ be a subsequence of
$\{X^n\}_{n\in\mathbb{N}}$ with the limit point $X^*$.
Then, in the $(n_k -1)$-th step of Algorithm \ref{alg2}, by \eqref{eq-PG} and \eqref{eq-PH}, we have
\begin{align*}
0&= \partial G_w(X^{n_k})-V^{n_k-1}= \partial G_w(X^{n_k})- \partial H_w(X^{n_k-1})\\
&=\Phi^T (\Phi X^{n_k}-Y) + 2c (X^{n_k}-X^{n_k -1})
+\frac{2\lambda(a+1)}{a}\nabla^*\left(w\odot \nabla_1 X^{n_k},w\odot \nabla_2 X^{n_k}\right)\\
&\quad-\lambda(a+1)\left[\frac{2}{a}\nabla^*\left(w\odot \nabla_1 X^{n_k-1},w\odot \nabla_2 X^{n_k-1}\right)\right.\nonumber\\
&\quad+2\nabla^*\left(\sum_{i,j=1}^{N} \frac{w_{i,j} (\nabla_1 X^{n_k-1})_{i,j}}{a+|(\nabla X^{n_k-1})_{i,j}|^p}E_{i,j},
\sum_{i,j=1}^{N} \frac{w_{i,j} (\nabla_2 X^{n_k-1})_{i,j}}{a+|(\nabla X^{n_k-1})_{i,j}|^p}E_{i,j}\right)\nonumber\\
&\quad\left.- p \nabla^*\left(\sum_{i,j=1}^{N}
\frac{w_{i,j} |(\nabla_1 X^{n_k-1})_{i,j}|^p(\nabla_1 X^{n_k-1})_{i,j}}{[a+|(\nabla X^{n_k-1})_{i,j}|^p]^2}E_{i,j},
\sum_{i,j=1}^{N}\frac{w_{i,j} |(\nabla_2 X^{n_k-1})_{i,j}|^p(\nabla_2 X^{n_k-1})_{i,j}}{[a+|(\nabla X^{n_k-1})_{i,j}|^p]^2}E_{i,j}\right)\right] .
\end{align*}
Since $\|X^{n+1}-X^n\|_2\to0 $ as $n\to\infty$
and $\|X^{n_k}-X^*\|_\infty\to0 $  as $k\to \infty$,
we infer that
$0= \partial G_w(X^*) - \partial H_w(X^*)=\partial F_w(X^*),$
which completes the proof of Theorem \ref{thm-con-DCA}.
\end{proof}

\section{Numerical Experiments}\label{sec-exp}

In this section, we evaluate the performance of the proposed PSV$_{a,p}$ minimization
across three distinct scenarios: natural image reconstruction in Subsection \ref{subsec-NI},
magnetic resonance imaging (MRI) reconstruction in Subsection \ref{subsec-MRI},
and X-ray computed tomography (CT) reconstruction in Subsection \ref{subsec-CT}.
We employ the \emph{peak signal-to-noise ratio (PSNR)}, the
\emph{structural similarity (SSIM)} (see \cite{WBSS04}),
and the \emph{gradient magnitude similarity deviation (GMSD)}
(see \cite{XZMB14} ) as quality metrics
to quantitatively evaluate the fidelity of ​each reconstructed image.
Here, the PSNR measures pixel-wise accuracy, whereas the SSIM and the GMSD
evaluate the reconstruction quality
from a perceptual perspective that a higher SSIM means
greater global structure similarity
and a lower GMSD indicates
less local detail distortion.

We select the zero vector as the initial point of the IRLSPSV algorithm and,
if not otherwise specified, we adopt the following parameter settings: $\tau=1$,
$s=\lfloor0.9N^2+0.5\rfloor$, and $\delta$ adaptive in Algorithm \ref{alg1},
$c=0$ in Algorithm  \ref{alg2},
and $\theta=1$ and $\sigma=0.5=\tau$ in  Algorithm  \ref{alg3}.
Here, and thereafter, the symbol $\lfloor \alpha\rfloor$
denotes the largest integer not greater than $\alpha$.
The stopping criteria (maximum iteration steps $k_{\mathrm{inn}}$, $k_{\mathrm{mid}}$, $k_{\mathrm{out}}$,
and relative tolerances $\varepsilon_{\mathrm{inn}}$ and $\varepsilon_{\mathrm{mid}}$)
are tailored according to the specific application scenario.
In addition, two crucial parameters of the proposed model, $a$ and $p$, are
manually selected​ based on the specific testing scenario and image content
with the recommended ranges $a\in[0.1,50]$ and $p\in[0.1,1]$.
The regularity parameter $\lambda$ is also tuned manually,
whose value is given in scientific notation (e.g., 1.0e-3 = $1.0\times10^{-3}$)
with the recommended range from 1.0e-6  to 1.0e-2.
It is interesting to develop some adaptive algorithms
that automatically tune these parameters.

With the \emph{zero-padding (ZP)} reconstruction  as a baseline,
comparative  experiments are conducted between the proposed PSV$_{a,p}$ model against
the \emph{total variation (TV)} (see \cite{ROF}), the $L_1-\alpha L_2$ (see \cite{HCGN23,LLCZWY17}),
and the \emph{transformed total variation (TTV)} (see \cite{HCGN22}),
where the $L_1-\alpha L_2$ and the TTV models are solved by the DC algorithm.
Furthermore, for fair comparison, the regularity parameters for all competing methods
are empirically tuned to  attain comparable reconstruction performance across all experiments in this section.

All  experiments are conducted in MATLAB on a Thinkpad desktop with 32 GB of RAM
and 13-th Generation Intel Core i9-13900H Processor.

\subsection{Natural Image Reconstruction}\label{subsec-NI}

In this subsection, we apply the proposed PSV$_{a,p}$ model to the natural image reconstruction.
Both parameter-sensitivity experiments and comparative experiments
are conducted to evaluate the proposed model in terms of
\begin{enumerate}
\item[\rm(i)]  broad applicability through two-parameter optimization,
\item[\rm(ii)] consistently strong performance​ across multiple   metrics under various testing scenarios, and
\item[\rm(iii)] a reasonable trade-off  between performance and efficiency.
\end{enumerate}

The measurements are obtained by the Gaussian random sampling  in the Fourier domain,
using three natural images from set12: Cameraman, Starfish, and Bird. To be precise,
the operator $\Phi$ in \eqref{eq-unproblem} represents a composed operator $\Phi=SF$,
where $F$ denotes the Fourier matrix and $S$ the Gaussian measurement matrix.
Notably, in all experiments, a fixed Gaussian measurement
matrix is used to ensure reproducibility and fair comparison.
The stopping criteria of the IRLSPSV algorithm are $k_{\mathrm{inn}}=20=k_{\mathrm{mid}}$ and
$\varepsilon_{\mathrm{inn}}=10^{-5}=\varepsilon_{\mathrm{mid}}$ for the inner two loops
and $k_{\mathrm{out}}=20$ for the outer loop.

To assess the image dependency​ and the parameter sensitivity of the proposed model,
we conduct two sets of experiments on Cameraman, Starfish, and Bird
under sampling rates of 40\% and 60\%
in a noise-free  setting. To be precise, we examine the influence of $a$ and $p$
by fixing one parameter while varying the other.

\begin{table}[H]
\centering
\resizebox{0.96\textwidth}{!}{
\renewcommand{\arraystretch}{1.3}
\begin{tabular}{c|c|cccc|cccc}
\hline
\multirow{2}{*}{Image}&  &\multicolumn{4}{c|}{40\% sampling rate} & \multicolumn{4}{c}{60\% sampling rate} \\
\cline{2-10}
&$(a,p)$& $(0.5,1)$ & $(1,1)$& $(5,1)$ & $(50,1)$ & $(0.5,1)$ & $(1,1)$& $(5,1)$ & $(50,1)$ \\
\hline
\multirow{4}{*}{Cameraman} & $\lambda$& 5.0e-4&8.0e-4 & 1.2e-3& 1.3e-3
& 2.2e-4 & 5.0e-4& 1.0e-3 & 1.0e-3\\
& PSNR & $29.1604$ & 29.2213& \textbf{29.4173}& 29.2669& 34.0192& \textbf{34.9579}& 34.8449 & 33.9629\\
& SSIM & $0.8597$ & 0.8675& 0.8812& \textbf{0.8842} & 0.9368 &0.9475& \textbf{0.9536} & 0.9496\\
& GMSD & $0.1597$ & 0.1529& 0.1430 & \textbf{0.1407} & 0.0964 & 0.0902& 0.0885& \textbf{0.0853}\\
\hline
\multirow{4}{*}{Starfish} & $\lambda$ & 8.0e-4 & 1.3e-3& 2.0e-3
& 2.5e-3 & 8.0e-4 & 9.0e-4& 2.0e-3 & 2.5e-3\\
& PSNR & \textbf{18.4760} & 18.4255& 18.2247&18.1666& \textbf{26.1779} & 26.0045& 25.4824 & 25.1236\\
& SSIM & 0.7030 & 0.7067&\textbf{0.7131}&0.7129&0.8828&\textbf{0.8870}&0.8822&0.8785\\
& GMSD &0.2035&0.2041&\textbf{0.2029}&0.2039&0.1525&0.1482&\textbf{0.1475}&0.1476\\
\hline
\multirow{4}{*}{Bird} & $\lambda$ & 5.0e-4 & 7.0e-4& 1.5e-3
& 1.8e-3 & 1.0e-4 & 5.0e-4& 1.2e-3& 1.5e-3\\
& PSNR & $21.0641$ & 21.2201& 22.2660 & \textbf{22.4461} & 31.0837& 31.1717& 32.2085 & \textbf{32.6829}\\
& SSIM & 0.8164&0.8263&0.8438&\textbf{0.8479} &0.9468&0.9529&0.9561&\textbf{0.9570}\\
& GMSD & 0.1809 &0.1759 &0.1697&\textbf{0.1680}& 0.1145&0.1014&0.0982&\textbf{0.0977}\\
\hline
\end{tabular}}
\caption{Sensitivity analysis on parameter $a$ with $p=1$,
evaluated on noise-free measurements under sampling rates of 40\% and 60\%.}\label{table-a}
\end{table}

\begin{table}[H]
\centering
\resizebox{0.96\textwidth}{!}{
\renewcommand{\arraystretch}{1.3}
\begin{tabular}{c|c|cccc|cccc}
\hline
\multirow{2}{*}{Image}&  &\multicolumn{4}{c|}{40\% sampling rate} & \multicolumn{4}{c}{60\% sampling rate} \\
\cline{2-10}
&$(a,p)$& $(1,1)$ & $(1,0.7)$& $(1,0.6)$ & $(1,0.5)$ & $(1,1)$ & $(1,0.7)$& $(1,0.6)$ & $(1,0.5)$\\
\hline
\multirow{4}{*}{Cameraman} & $\lambda$& 8.0e-4&1.2e-4 & 6.0e-5& 5.0e-5
& 5.0e-4 & 1.0e-4& 6.0e-5 & 3.0e-5\\
& PSNR & $29.2213$ & \textbf{30.6976}& 29.4443 & 27.8209 & 34.9579 & 38.9139& \textbf{39.5400} & 37.9420\\
& SSIM & $0.8675$ & 0.9174& \textbf{0.9228} & 0.9035 & 0.9475 &0.9731& \textbf{0.9761} & 0.9698\\
& GMSD & $0.1529$ & \textbf{0.1230}& \textbf{0.1230} & 0.1317 & 0.0902 & 0.0716& 0.0694 & \textbf{0.0644}\\
\hline
\multirow{4}{*}{Starfish} & $\lambda$ & 1.3e-3 & 3.0e-4& 2.0e-4
& 1.0e-4 & 9.0e-4 & 3.0e-4& 2.0e-4 & 1.0e-4\\
& PSNR & 18.4255 & 19.4001& 19.5952&\textbf{19.7168}& 26.0045 & 27.2889& \textbf{27.5102} & 27.1477\\
& SSIM & 0.7067 & 0.7592&0.7647&\textbf{0.7787}&0.8870&0.9143&\textbf{0.9191}&0.9186\\
& GMSD &0.2041&0.2059&\textbf{0.2034}&0.2088&0.1482&0.1347&0.1325&\textbf{0.1275}\\
\hline
\multirow{4}{*}{Bird} & $\lambda$ & 7.0e-4 & 1.8e-4& 1.0e-4
& 6.0e-5 & 5.0e-4 & 1.6e-4& 1.0e-4& 5.0e-5\\
& PSNR & $21.2201$ & \textbf{22.9646}& 22.5327 & 22.0887 & 31.1717 & 33.8538& 34.4672 & \textbf{34.6306}\\
& SSIM & 0.8263&0.8874&0.8929&\textbf{0.8942} &0.9529&0.9717&0.9746&\textbf{0.9773}\\
& GMSD & 0.1759 &0.1516 &\textbf{0.1496}&0.1504& 0.1014&0.0809 &0.0785&\textbf{0.0726}\\
\hline
\end{tabular}}
\caption{Sensitivity analysis on parameter $p$ with $a=1$,
evaluated on noise-free measurements under sampling rates of  40\% and 60\%.}\label{table-p}
\end{table}

As shown in Tables \ref{table-a} and \ref{table-p},
no single fixed parameter setting achieves the best reconstruction performance​
across all quality metrics and all test images simultaneously.
For instance, under the sampling rate of 40\%, while the parameter pair $(1,0.5)$
yields the highest PSNR for Starfish,
it fails to match​ the performance on Cameraman and Bird compared to their respective best settings.
This observation confirms the image dependency of the proposed model,
while also demonstrating its broad applicability, i.e.,​ it
can achieve high-quality reconstruction for diverse images​ simply by tuning the parameters​ accordingly.

\begin{figure}[t]
  \centering
  \begin{tabular}{ccccc}
  \includegraphics[width=2.5cm]{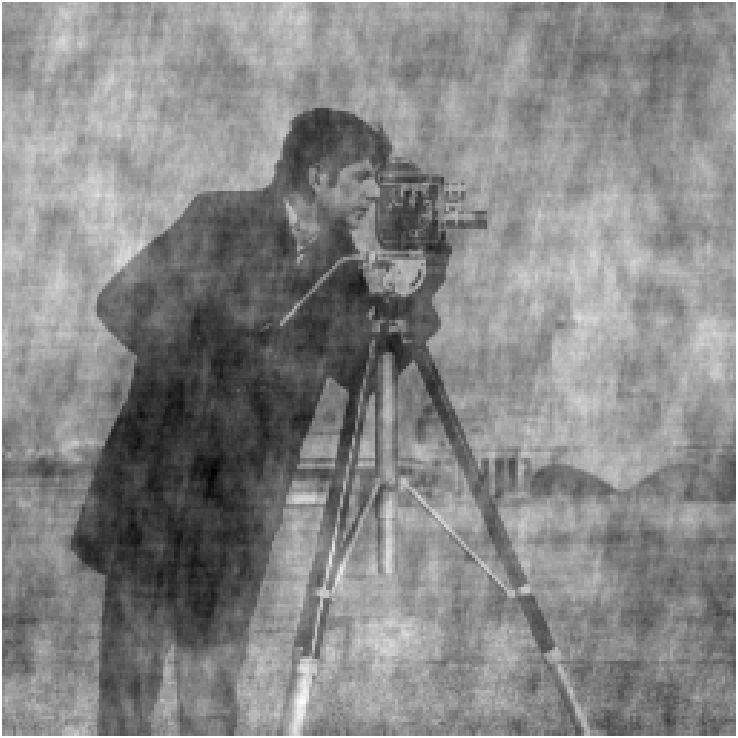}&
  \includegraphics[width=2.5cm]{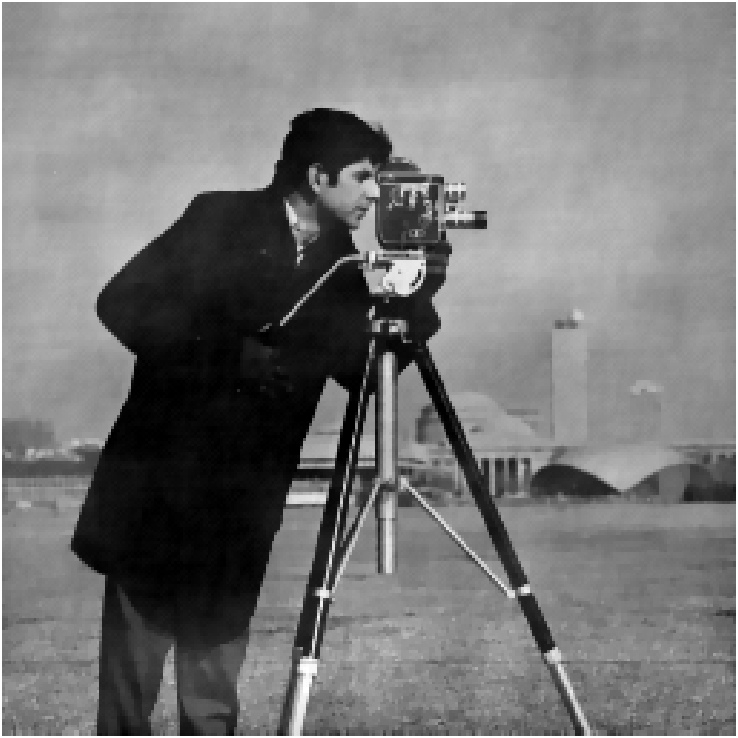}&
  \includegraphics[width=2.5cm]{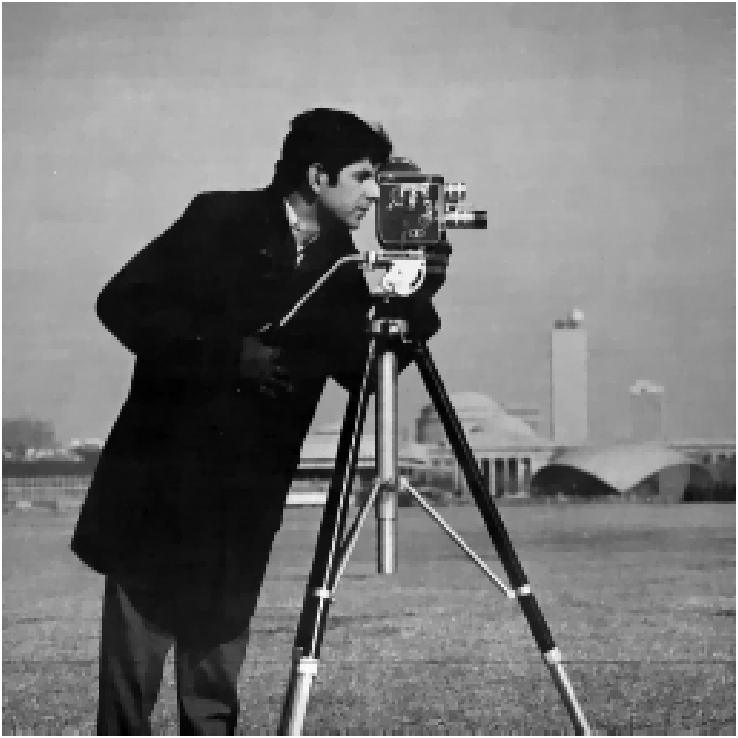}&
  \includegraphics[width=2.5cm]{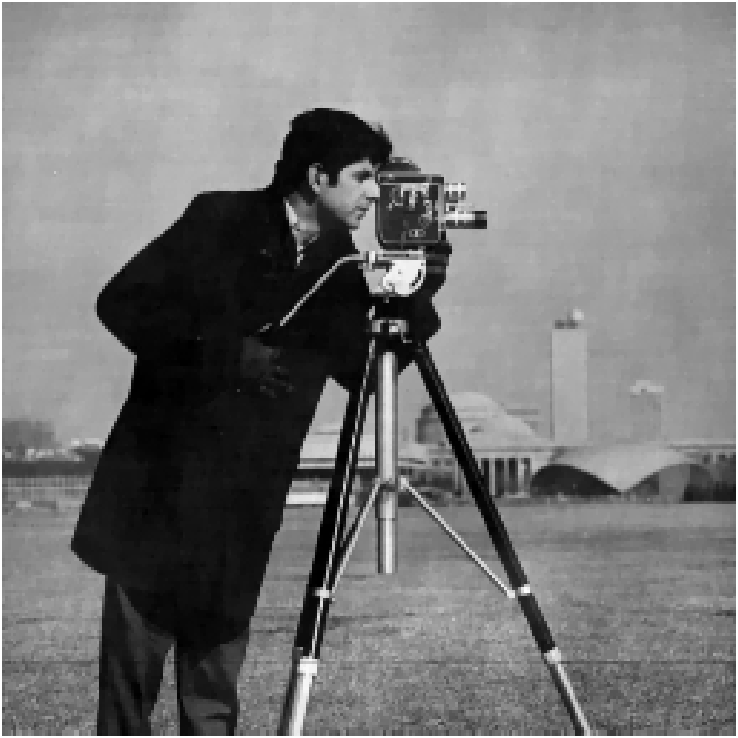}&
  \includegraphics[width=2.5cm]{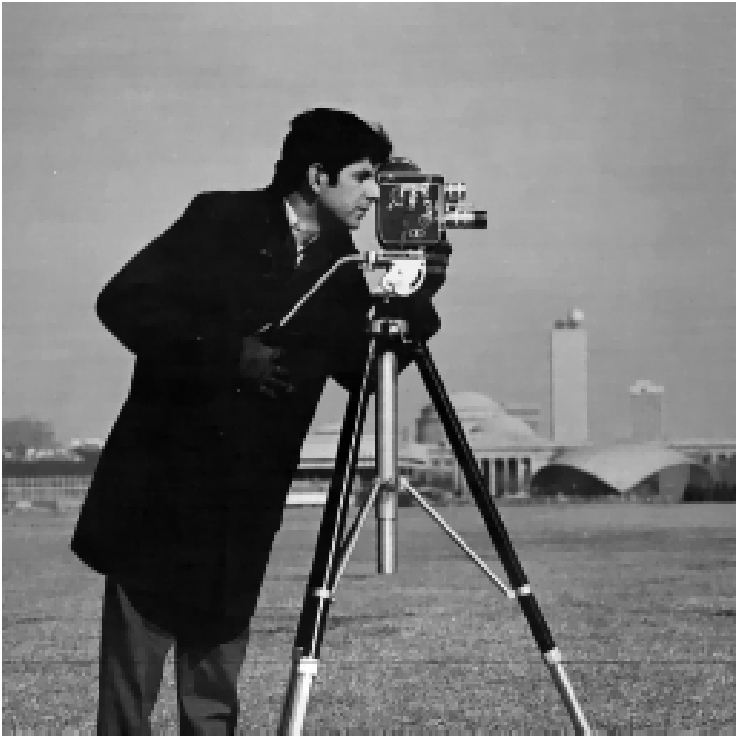}\\
  \includegraphics[width=2.5cm]{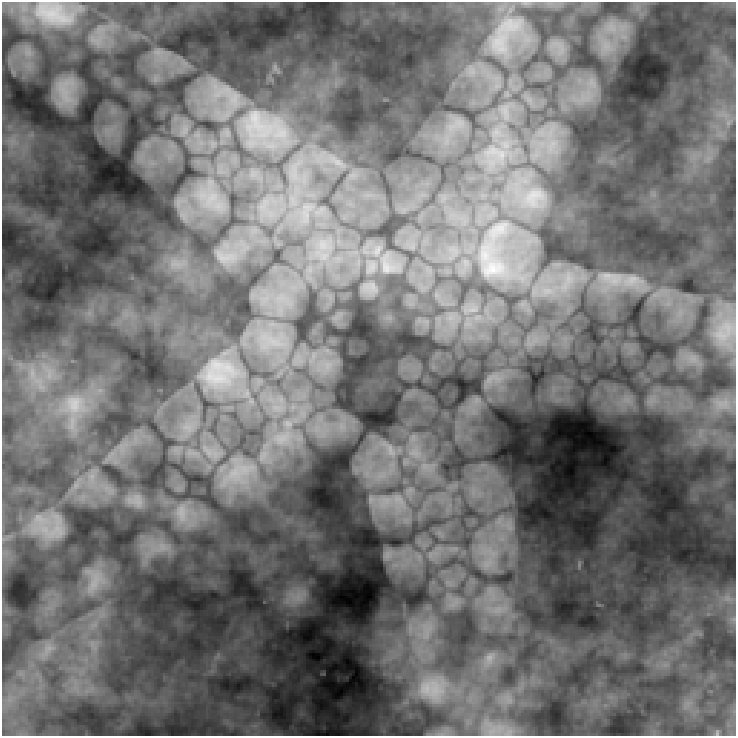}&
  \includegraphics[width=2.5cm]{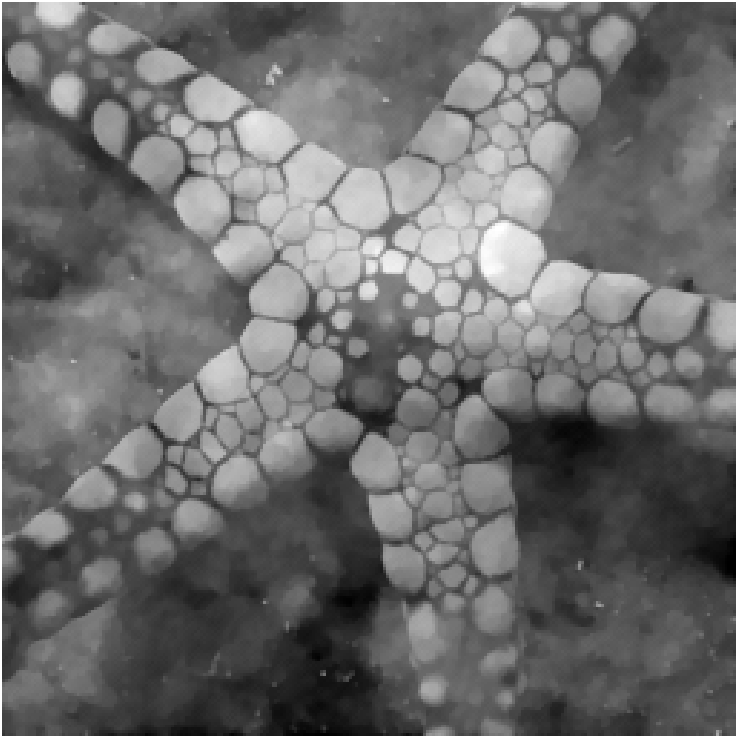}&
  \includegraphics[width=2.5cm]{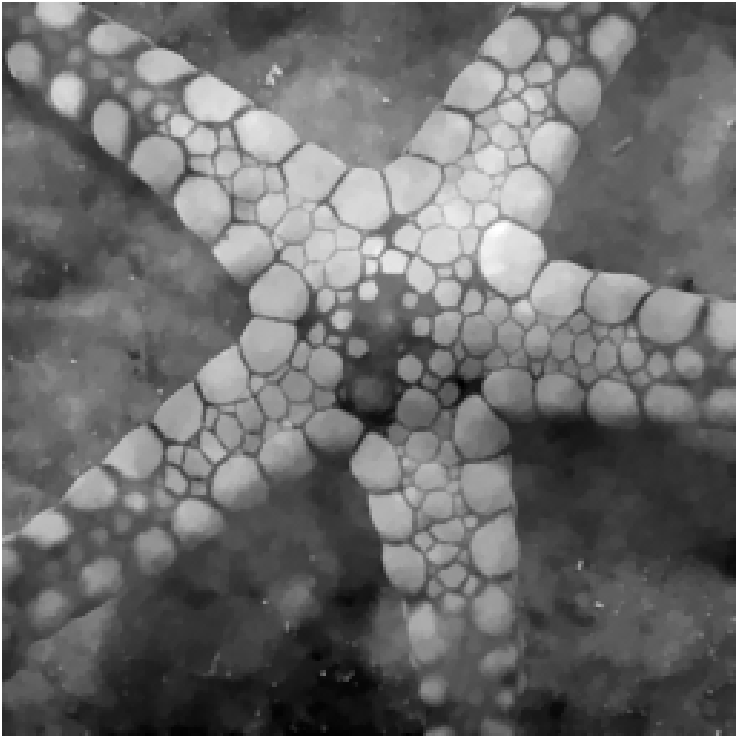}&
  \includegraphics[width=2.5cm]{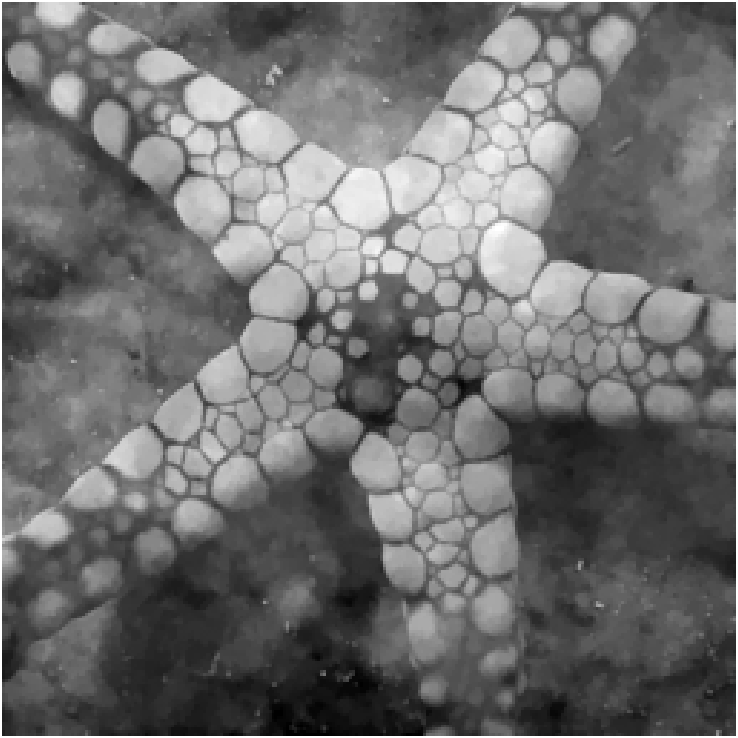}&
  \includegraphics[width=2.5cm]{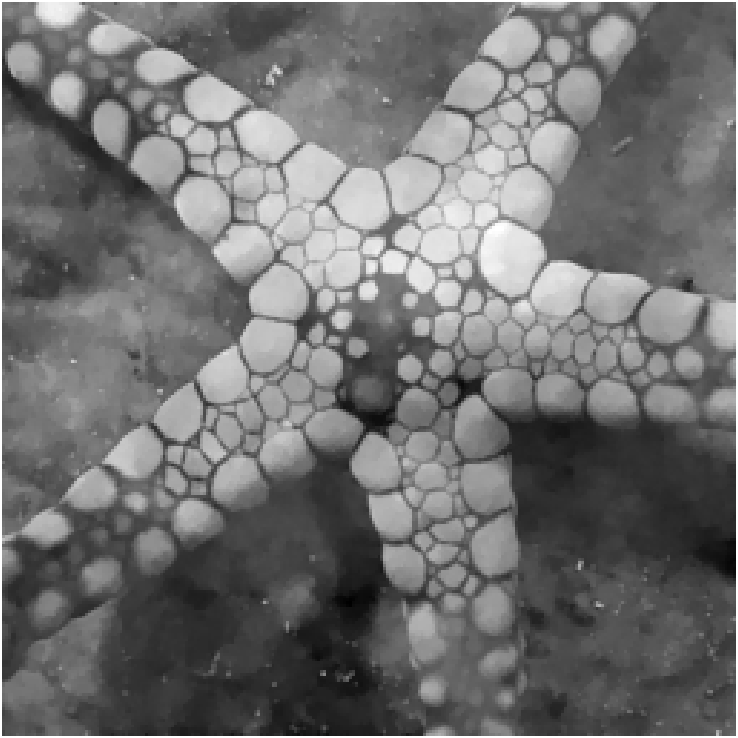}\\
  \includegraphics[width=2.5cm]{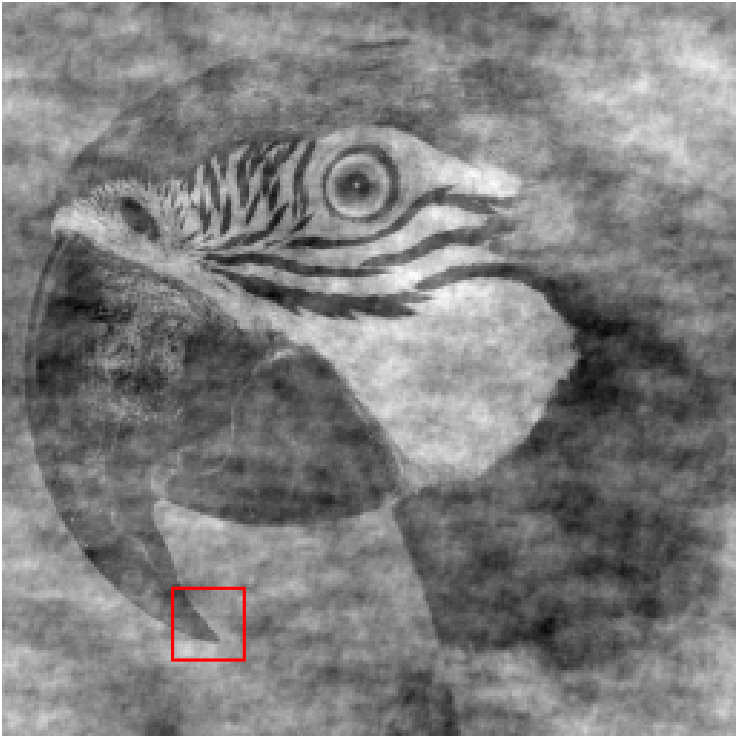}&
  \includegraphics[width=2.5cm]{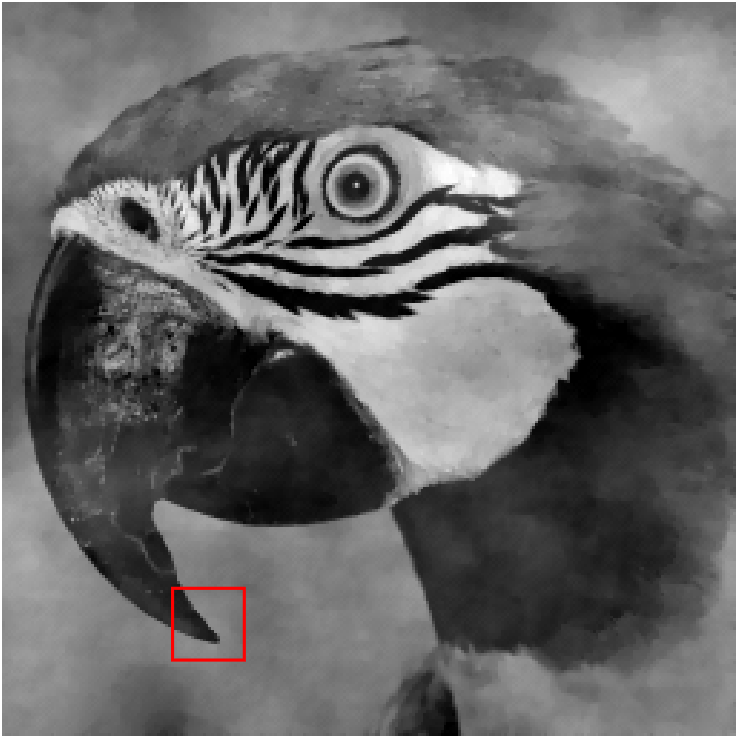}&
  \includegraphics[width=2.5cm]{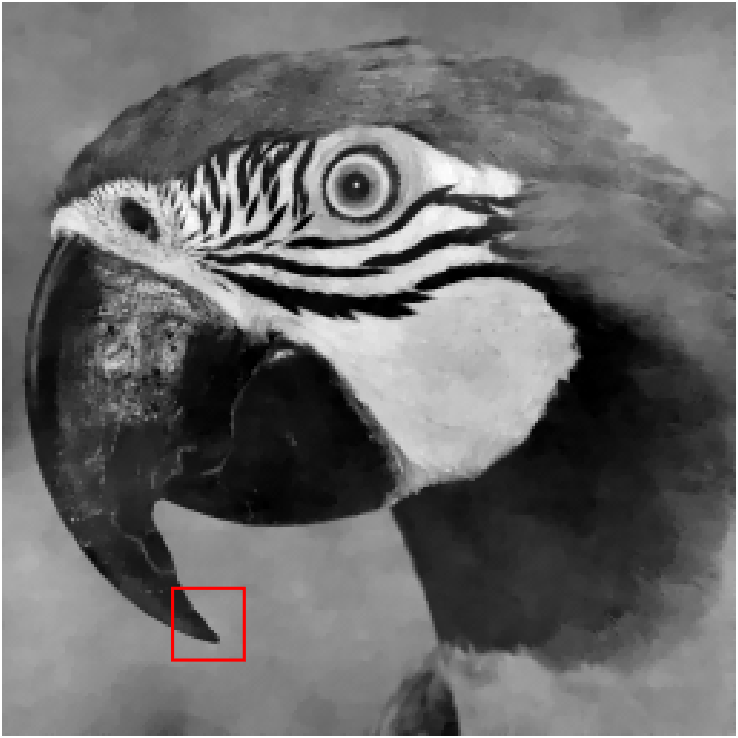}&
  \includegraphics[width=2.5cm]{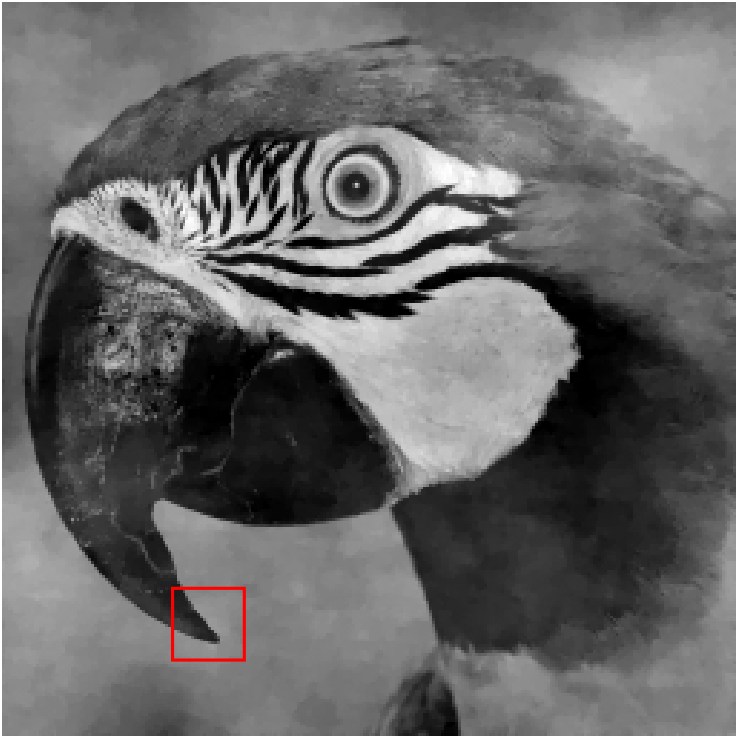}&
  \includegraphics[width=2.5cm]{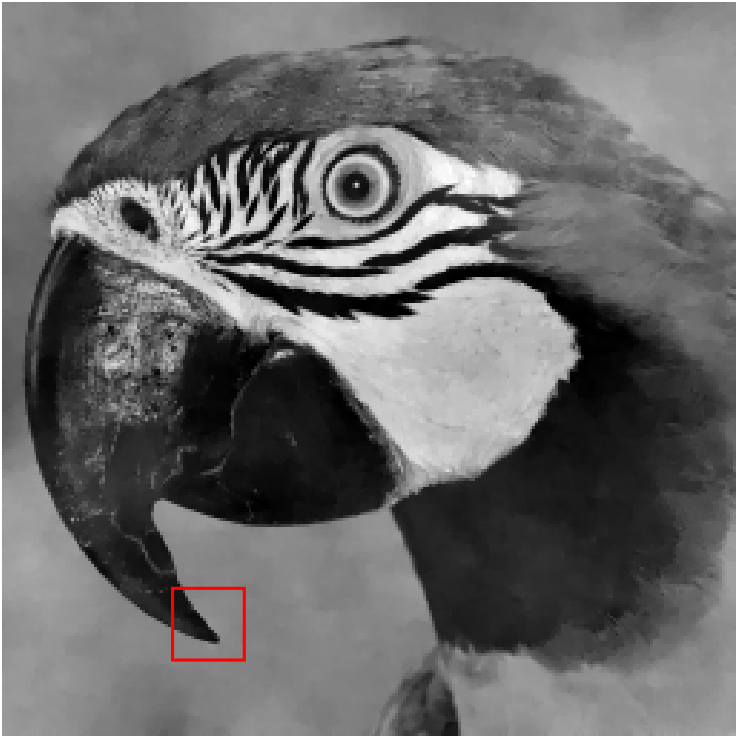}\\
  ZP& TV &$L_1-\alpha L_2$&TTV&PSV$_{a,p}$
  \end{tabular}
  \caption{Reconstructed images from noise-free data under 40\% sampling rates via  different methods. }\label{fig-noiseless40}
\end{figure}

\begin{figure}[H]
  \centering
  \begin{tabular}{ccccc}
  \includegraphics[width=2.5cm]{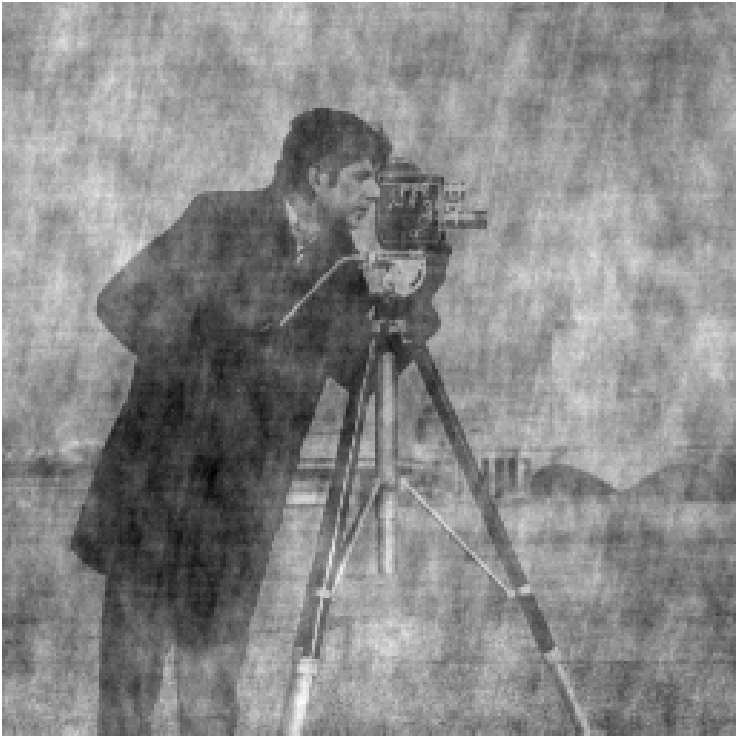}&
  \includegraphics[width=2.5cm]{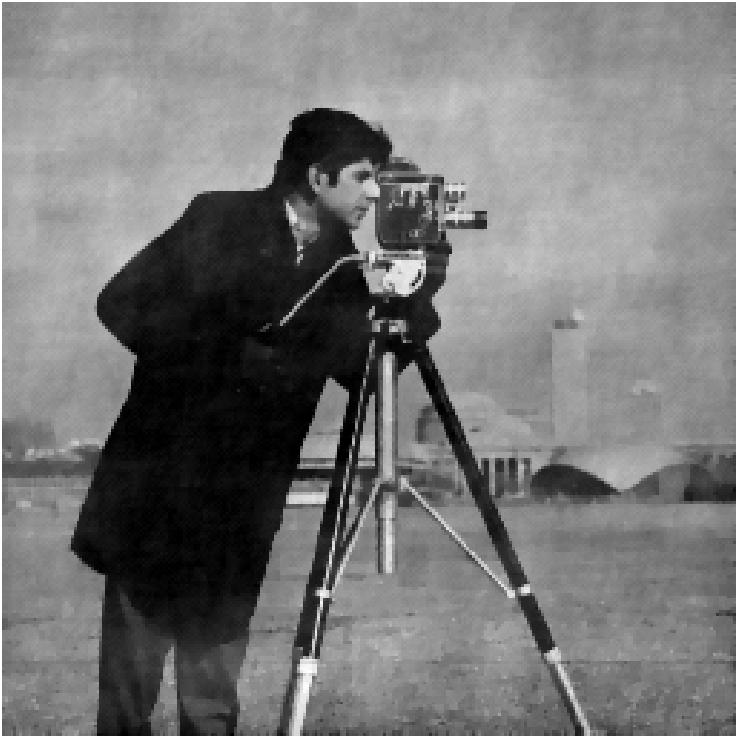}&
  \includegraphics[width=2.5cm]{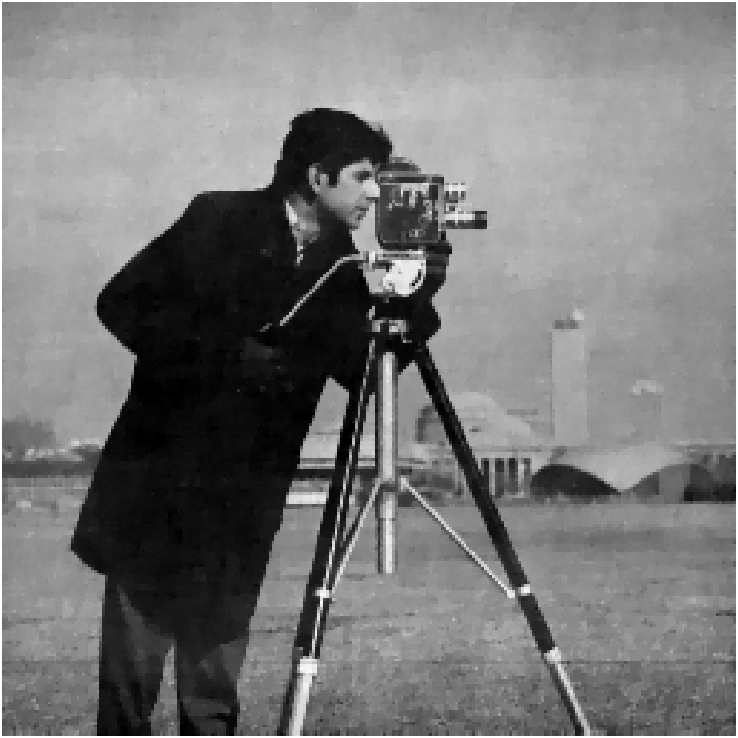}&
  \includegraphics[width=2.5cm]{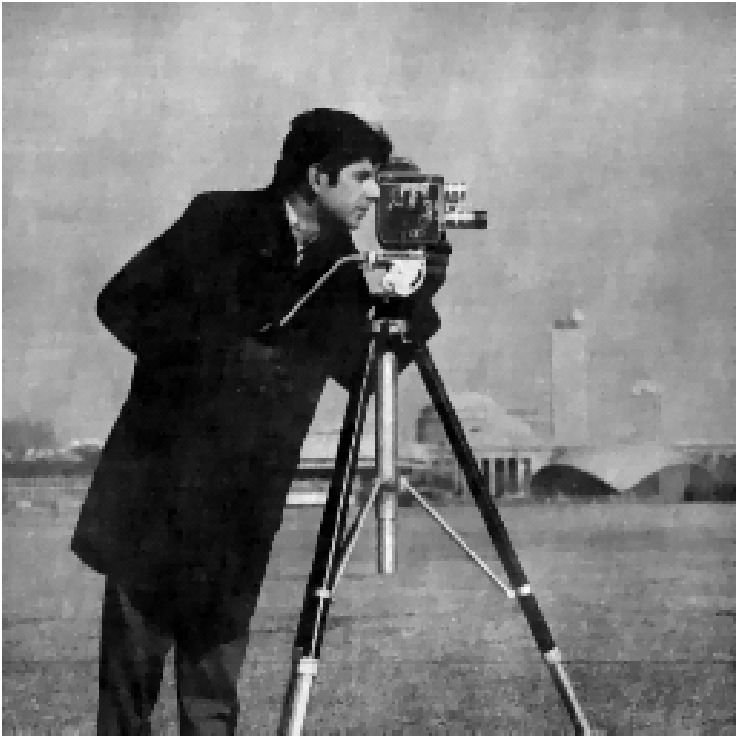}&
  \includegraphics[width=2.5cm]{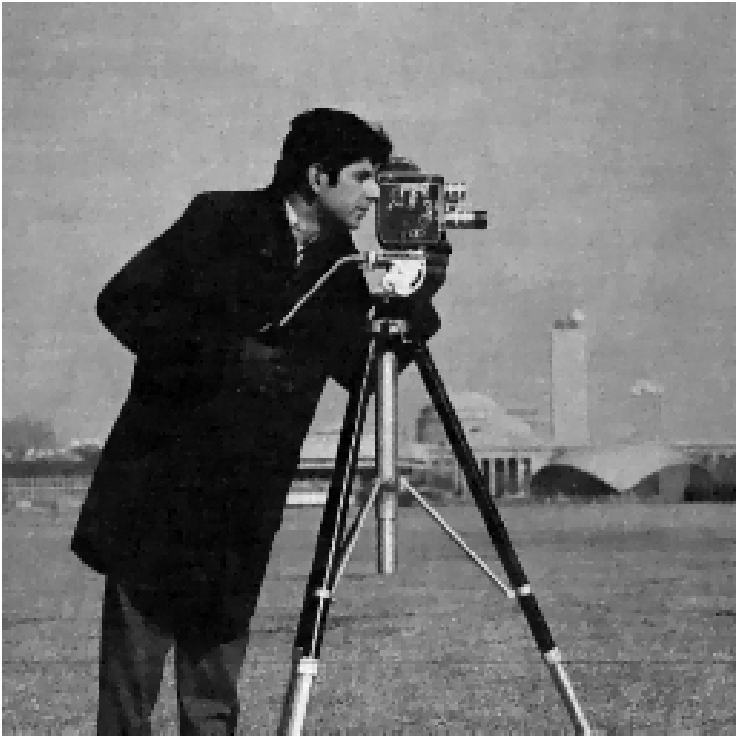}\\
  \includegraphics[width=2.5cm]{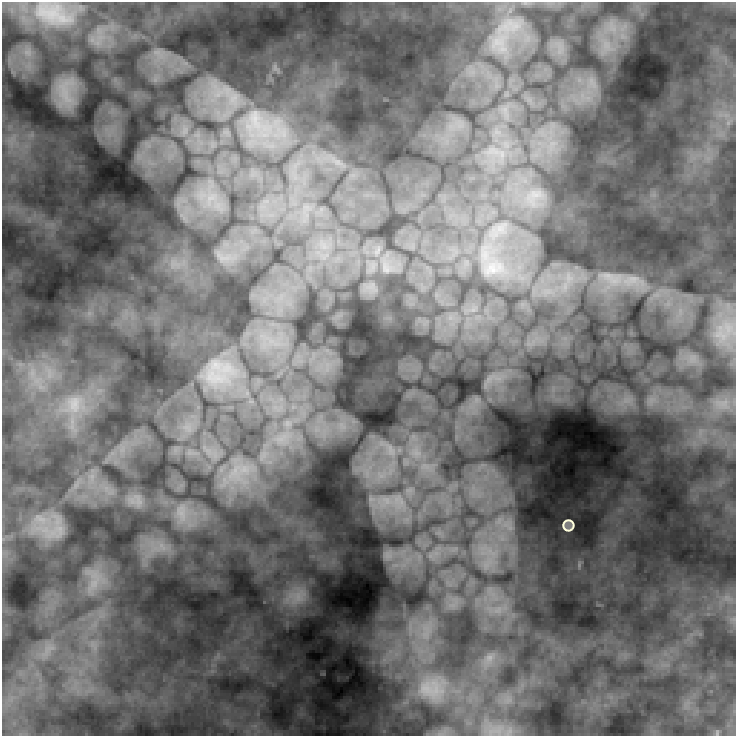}&
  \includegraphics[width=2.5cm]{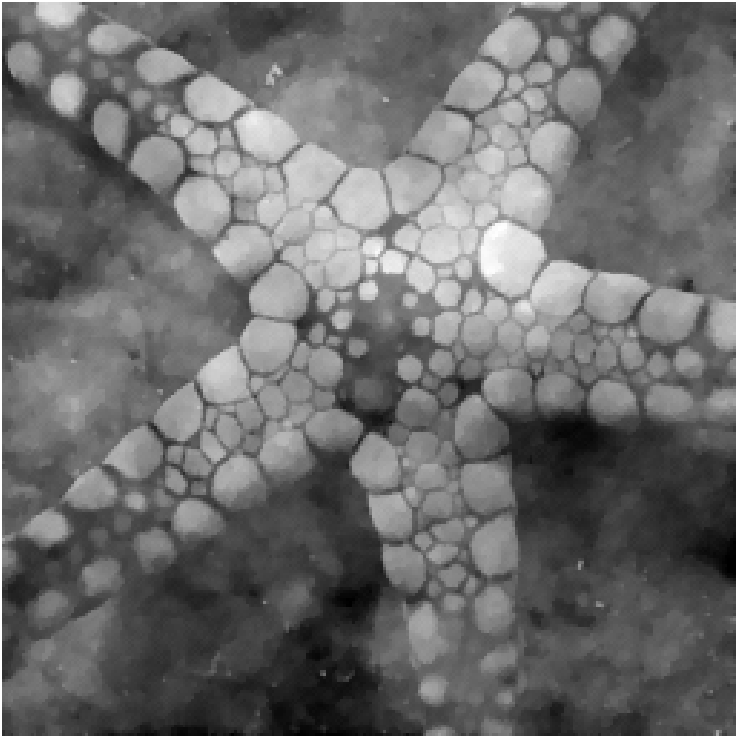}&
  \includegraphics[width=2.5cm]{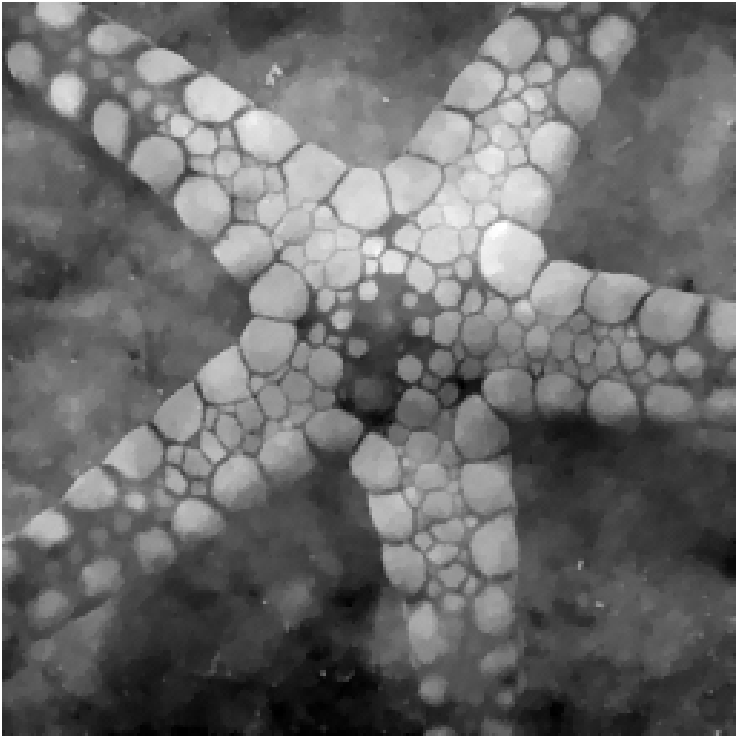}&
  \includegraphics[width=2.5cm]{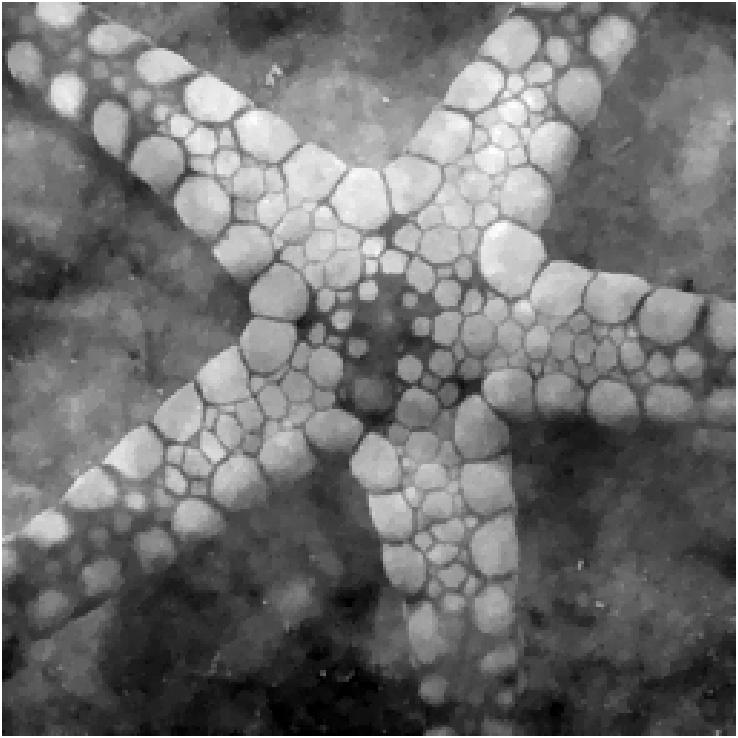}&
  \includegraphics[width=2.5cm]{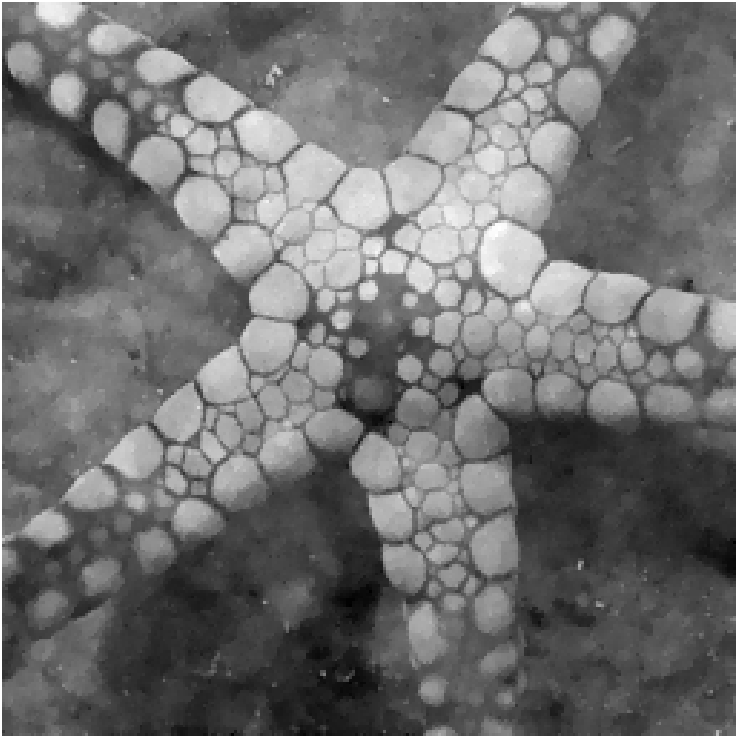}\\
  \includegraphics[width=2.5cm]{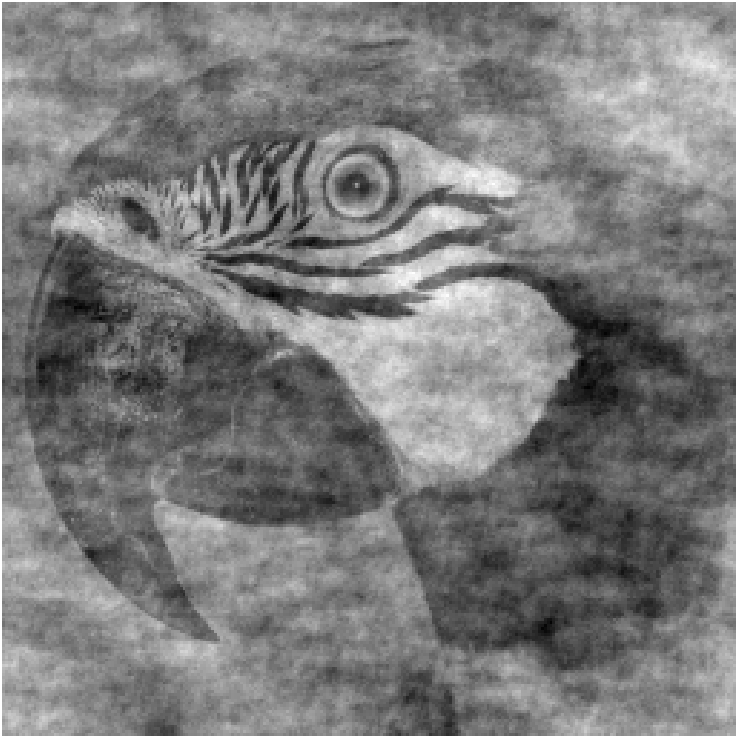}&
  \includegraphics[width=2.5cm]{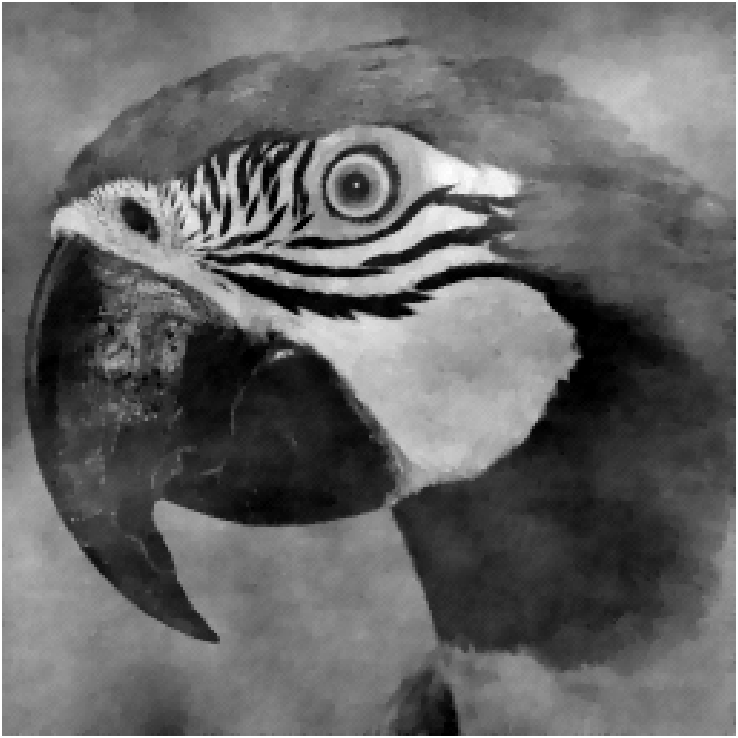}&
  \includegraphics[width=2.5cm]{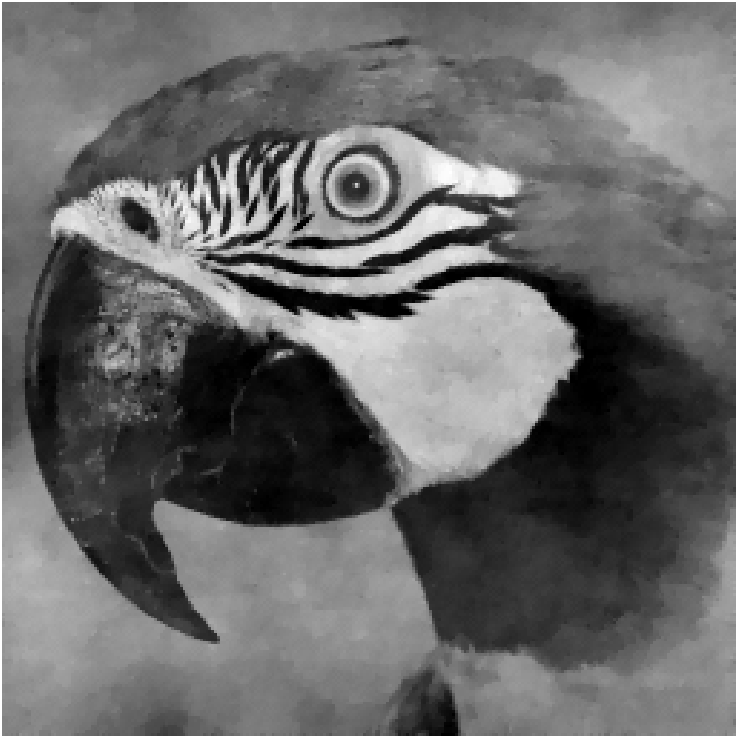}&
  \includegraphics[width=2.5cm]{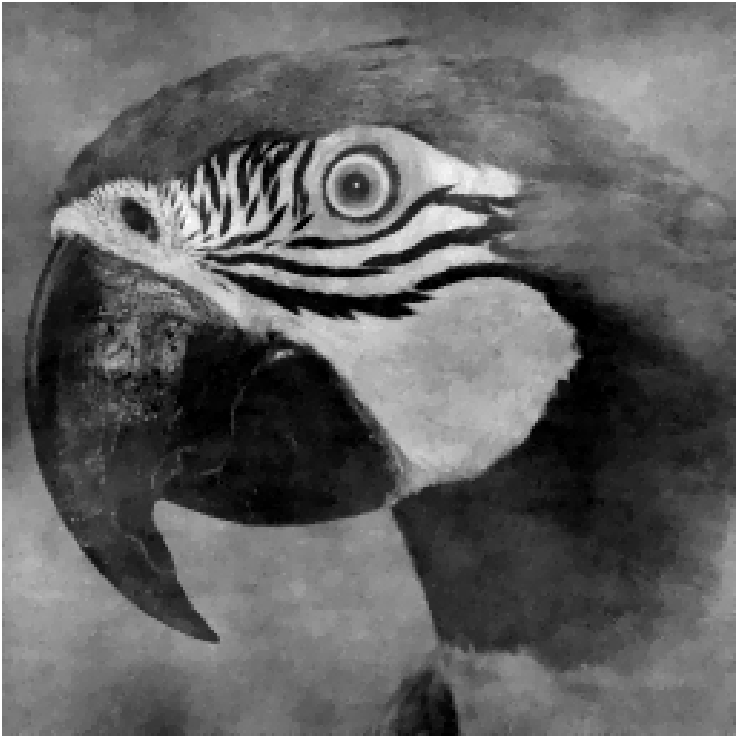}&
  \includegraphics[width=2.5cm]{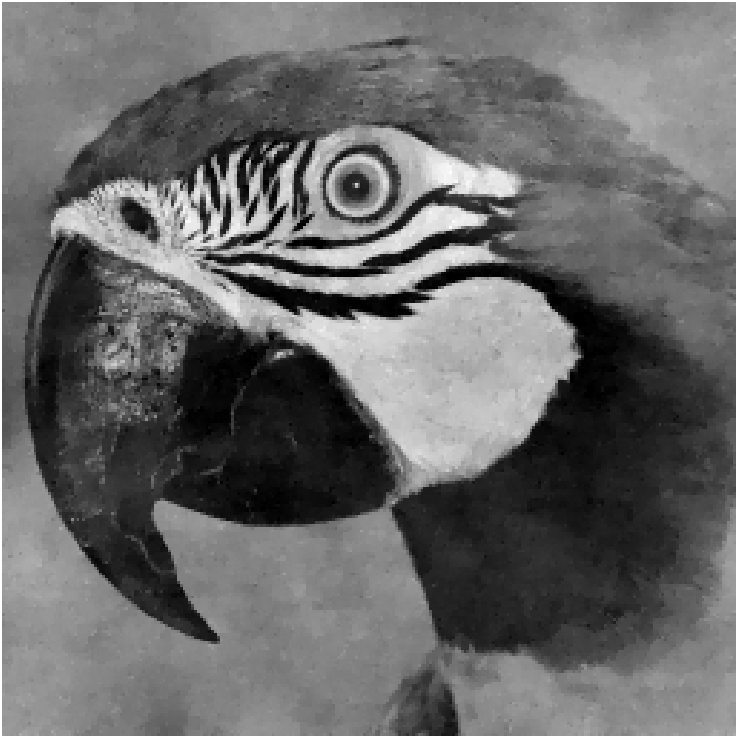}\\
  ZP& TV &$L_1-\alpha L_2$&TTV&PSV$_{a,p}$
  \end{tabular}
  \caption{Reconstructed images from data with Gaussian
  noise under 40\% sampling rates via  different methods.}\label{fig-Gnoisy40}
\end{figure}

In addition, by comparing Tables \ref{table-a} and \ref{table-p},
we find that, with $p=1$, varying $a$ yields limited changes​ in reconstruction performance.
In particular, when $a$ is already large, slight adjustments to $a$
barely affect the performance.
In contrast​, an appropriate selection of $p$ yields notable performance gains.
As shown in Table \ref{table-p}, with $(a,p)=(1,0.6)$,
the PSNR of Cameraman​ under the sampling rate of 60\% is increased to 39.5400.
These findings demonstrate that the PSV$_{a,p}$ model is more sensitive to $p$ than $a$,
which are consistent with the aforementioned theoretical analysis.
It also suggests a \emph{parameter tuning scheme}
that one should  primarily tune $p$ to achieve
the best overall reconstruction performance based on the  image content,
followed by fine-tuning of $a$ to achieve targeted optimization of specific metrics.

Next, to evaluate the overall performance and robustness of the proposed PSV$_{a,p}$ model,
taking the 40\% sampling rate as a case study, we conduct comparative experiments
between the PSV$_{a,p}$ against the ZP, the TV, the $L_1-\alpha L_2$, and the TTV
in both noise-free and noisy  scenarios.
Here, and thereafter, the \emph{(relative) noise level} is
defined as the relative error in the transformed domain
$\frac{\|y_{\mathrm{noisy}}-y_{\mathrm{orig}}\|_2}{\|y_{\mathrm{orig}}\|_2}$.
We  consider two types of noises: Gaussian noise and Gaussian--Poisson mixed noise.
In these experiments,  the Gaussian noise level is about $5\%$ while the mixed noise level is about
$6\%$ for Cameraman and Starfish and about $7\%$ for Bird.
To achieve favorable  performance,  the parameter for the $L_1-\alpha L_2$ is set to
$\alpha=0.5$ while the  parameter for the TTV  is set to $a=1$.
Notably, the parameter $p$ of the PSV$_{a,p}$ is manually tuned
while the other $a$ is uniformly set to 1 and no longer tuned.​
Quantitatively, the values of the PSNR, the SSIM, and the GMSD
of each reconstructed image are collected in Table \ref{table-NIR40}.
Visually, the reconstructed images are presented in Figures \ref{fig-noiseless40}, \ref{fig-Gnoisy40}, and \ref{fig-GBnoisy40}.
Temporally, the total runtime per method is reported in Table \ref{table-time1}.

\begin{table}[H]
\centering
\resizebox{0.95\textwidth}{!}{
\renewcommand{\arraystretch}{1.2}
\begin{tabular}{c|c|ccc|ccc|ccc}
\hline
\multirow{2}{*}{Method}& Noise &\multicolumn{3}{c|}{Noiseless} & \multicolumn{3}{c|}{Gaussian noise}
& \multicolumn{3}{c}{Gaussian--Poisson noise} \\
\cline{2-11}
& Image & Cameraman & Starfish & Bird & Cameraman & Starfish & Bird & Cameraman & Starfish & Bird\\
\hline
\multirow{3}{*}{ZP}
& PSNR & 15.7844 &15.4803 &15.0305 &15.5774 & 15.4116& 15.1774& 15.8203&15.4856 & 15.0232 \\
& SSIM & 0.4461 & 0.5251 & 0.4353 & 0.4297 &0.5128 & 0.4257& 0.4462& 0.5252 & 0.4352\\
& GMSD & 0.2816 & 0.2344 & 0.2825 & 0.2841&0.2363& 0.2848& 0.2815& 0.2346 & 0.2822\\
\hline
\multirow{4}{*}{TV} & $\lambda$ & 6.0e-3 & 7.0e-3&  6.0e-3&  1.0e-2&
9.0e-3& 9.0e-3& 7.0e-3 & 7.0e-3& 9.0e-3 \\
& PSNR & 24.2748 & 17.6811 &18.6855 &23.7082 &17.4788 & 18.3922& 24.2432&17.7260 & 18.5081\\
& SSIM & 0.8139 & 0.6970&0.7701 & 0.7446& 0.6727 & 0.7283 & 0.7964& 0.6977&0.7308 \\
& GMSD  & 0.1744  & 0.2118&0.1930& 0.2116& 0.2145 & 0.2085 & 0.1852& 0.2116 &0.2053\\
\hline
\multirow{5}{*}{$L_1-\alpha L_2$}
& $\lambda$ & 2.0e-3 & 3.0e-3 & 3.0e-3 &5.0e-3 &
6.0e-3& 6.0e-3 & 2.0e-3 &3.0e-3& 3.0e-3\\
& PSNR &27.9457 &18.2336& 22.1634 &25.2320&17.5634&20.1035&27.9247&18.2493&22.0846\\
& SSIM &0.8920&0.7280&0.8485 &0.8104&0.6797&0.7750&0.8912&0.7278&0.8474\\
& GMSD  &0.1408 &0.2178&0.1766&\textbf{0.1873}&0.2218&0.2025&0.1422&0.2176&0.1771\\
\hline
\multirow{4}{*}{TTV}
& $\lambda$ & 8.0e-4 & 9.0e-4 & 1.0e-3& 3.0e-3
& 2.0e-3 & 2.0e-3& 1.0e-3& 8.0e-4& 1.0e-3\\
& PSNR & 28.1226& 18.7618& 19.5340& 25.3004&18.5846&18.0636& 27.9243 & 18.6976& 19.4161\\
& SSIM & 0.8778&0.7381&0.8126&0.7806&0.6935&0.7252&0.8751&0.7371&0.8104\\
& GMSD&0.1499&0.2048&0.1836&0.2054&0.2141& 0.2168&0.1511&0.2040&0.1843\\
\hline
\multirow{5}{*}{PSV$_{a,p}$} & $(a,p)$ & (1,0.7) & (1,0.5) & (1,0.7) & (1,0.7) & (1,0.5) & (1,0.7) & (1,0.9) & (1,0.5) & (1,0.7)\\
& $\lambda$ & 1.2e-4 & 7.0e-5 & 1.8e-4 & 3.4e-4 & 2.0e-4 &
3.8e-4 & 4.0e-4 & 7.0e-5 & 1.8e-4\\
& PSNR & \textbf{30.6976}& \textbf{19.2616} & \textbf{22.9646} & \textbf{28.1043} & \textbf{19.1856}
& \textbf{21.8626} & \textbf{28.3286} & \textbf{19.2554} & \textbf{22.9513} \\
& SSIM &\textbf{0.9174}&\textbf{0.7752}&\textbf{0.8874}& \textbf{0.8242}&\textbf{0.7286}
&\textbf{0.8100}&\textbf{0.8923}&\textbf{0.7744}&\textbf{0.8857}\\
& GMSD &\textbf{0.1230}& \textbf{0.2014}&\textbf{0.1516}&0.1910&\textbf{0.2091}&
\textbf{0.1931}&\textbf{0.1361}&\textbf{0.2007}&\textbf{0.1520}\\
\hline
\end{tabular}}
\caption{Comparison of quality metrics for reconstructed images
under 40\% sampling rates across multiple noise scenarios via different methods.}
\label{table-NIR40}
\end{table}

As shown in Table \ref{table-NIR40}, with properly tuned parameters,
the proposed model  achieves
the highest PSNR and SSIM  and the lowest GMSD values in the vast majority of cases
across all test images and noise conditions, demonstrating its broad applicability.
From Figures \ref{fig-noiseless40}, \ref{fig-Gnoisy40}, and \ref{fig-GBnoisy40}, the images reconstructed by the ZP suffer from
severe noise contamination and staircasing artifacts
while other methods effectively suppress such effects.
In particular, the zoom-in view of the regions marked by the red box of Bird are shown in Figure \ref{fig-zoomin}.
In comparison, the proposed method further mitigates the staircase effect and preserves sharp edges.
Moreover, combined with Table \ref{table-time1}, we find that
(i) while the ZP reconstruction offers the fastest speed, it lacks reconstruction quality,
(ii) while the TV method incurs low computational cost,
its reconstruction quality still requires further improvement,
(iii) although the $L_1-\alpha L_2$ and the TTV method notably improve the reconstruction quality,
it takes approximately 9 to 20 times longer than the TV method,
(iii) the proposed PSV$_{a,p}$ requires computational cost
comparable to those of both the $L_1-\alpha L_2$ and the TTV,
while achieving a further improvement across multiple metrics under various testing scenarios.
In this sense, the proposed PSV$_{a,p}$ model demonstrates a favorable trade-off
between reconstruction quality and computational cost.

\begin{figure}[t]
  \centering
  \begin{tabular}{ccccc}
  \includegraphics[width=2.5cm]{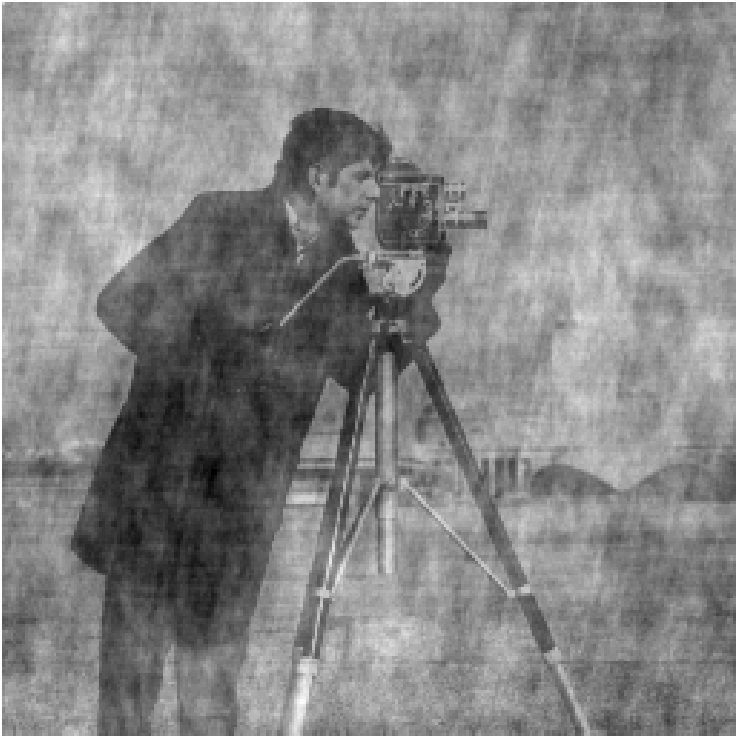}&
  \includegraphics[width=2.5cm]{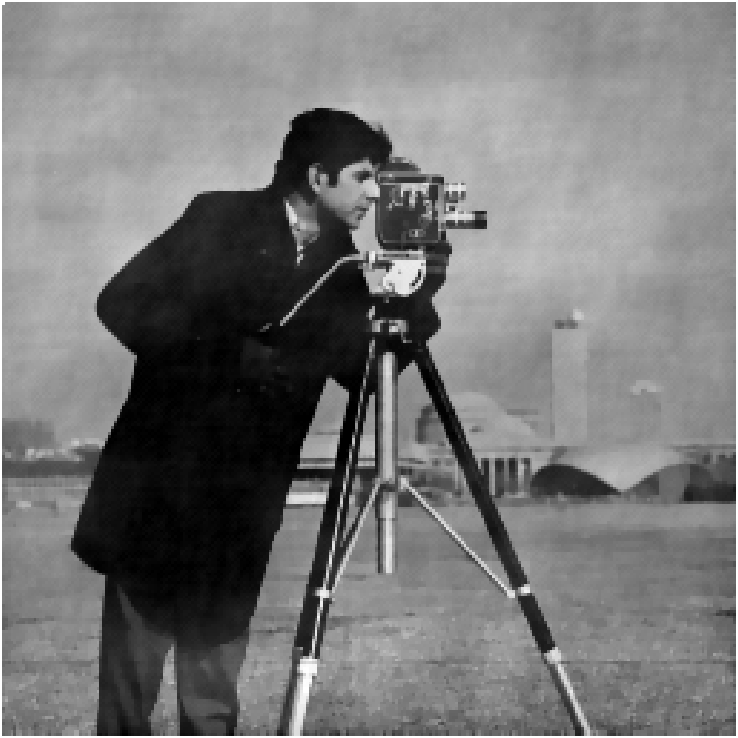}&
  \includegraphics[width=2.5cm]{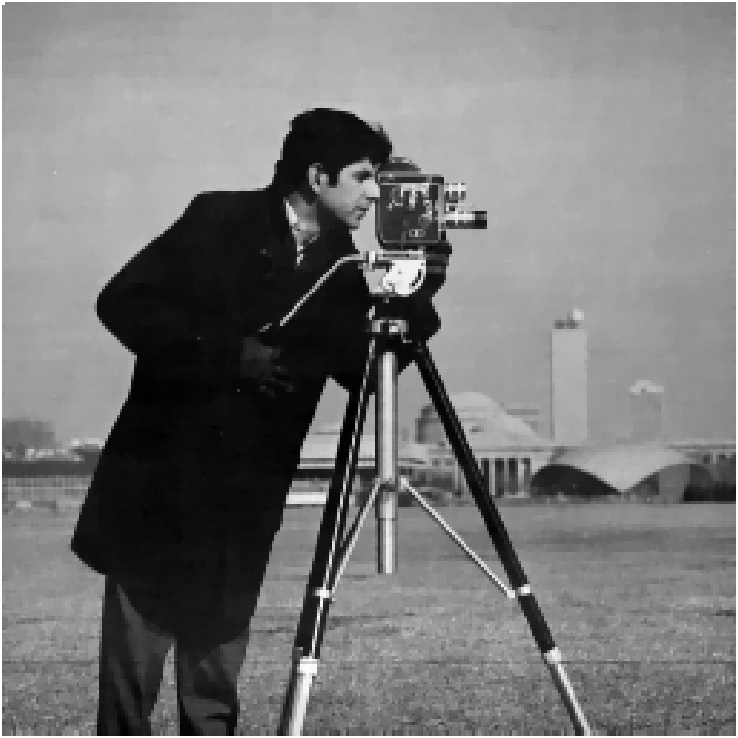}&
  \includegraphics[width=2.5cm]{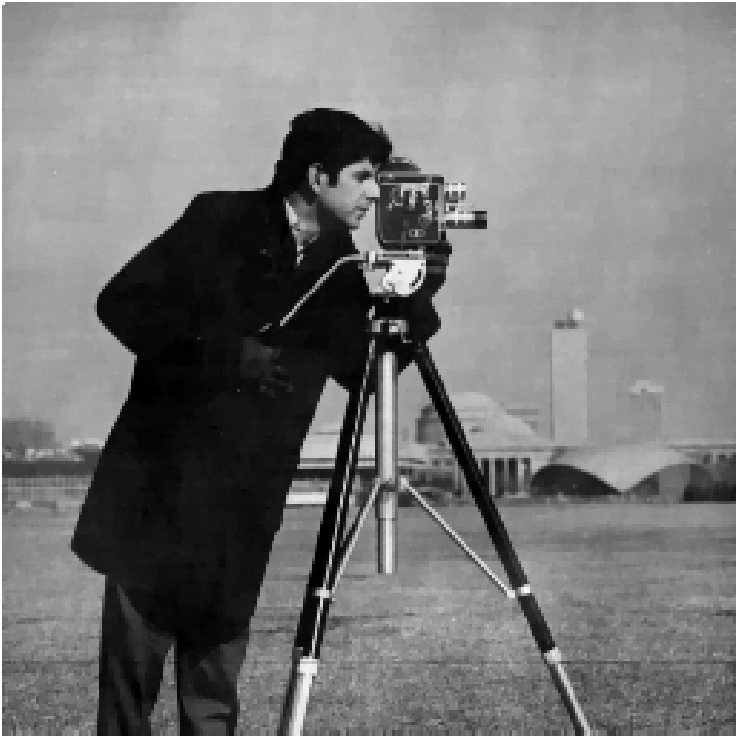}&
  \includegraphics[width=2.5cm]{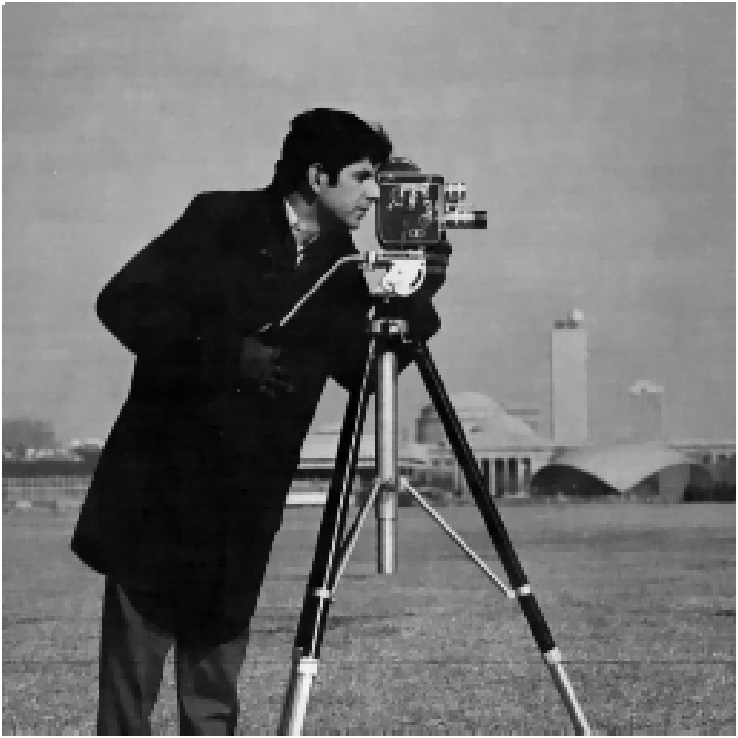}\\
  \includegraphics[width=2.5cm]{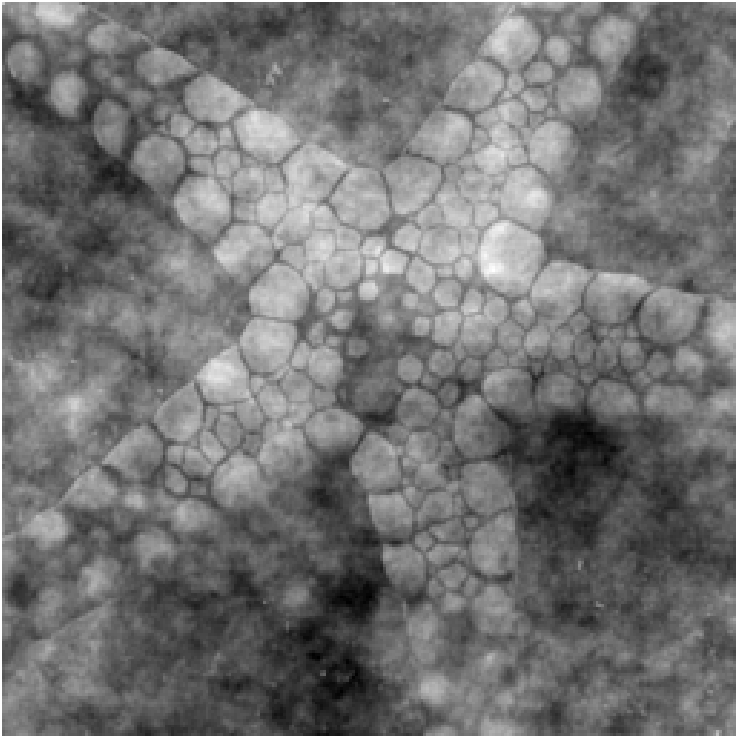}&
  \includegraphics[width=2.5cm]{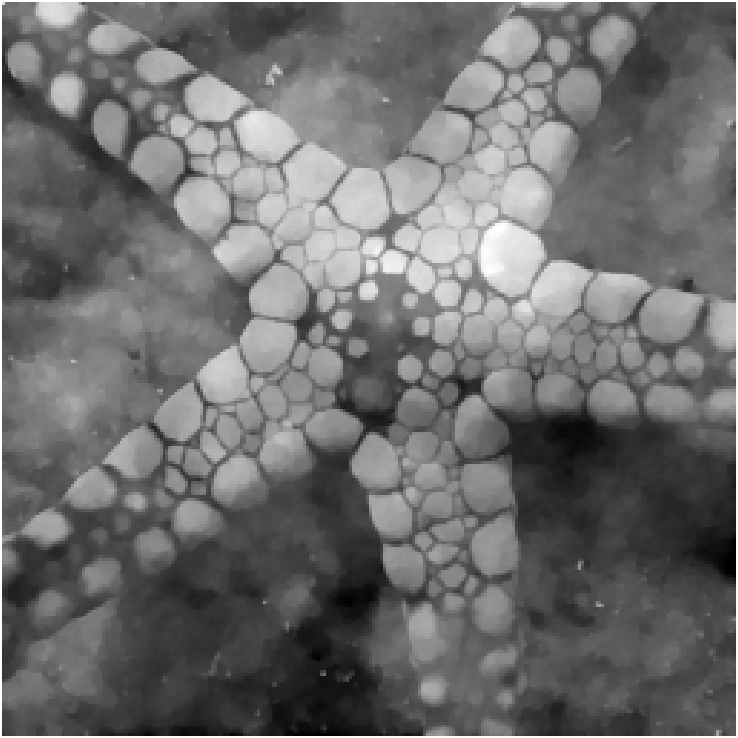}&
  \includegraphics[width=2.5cm]{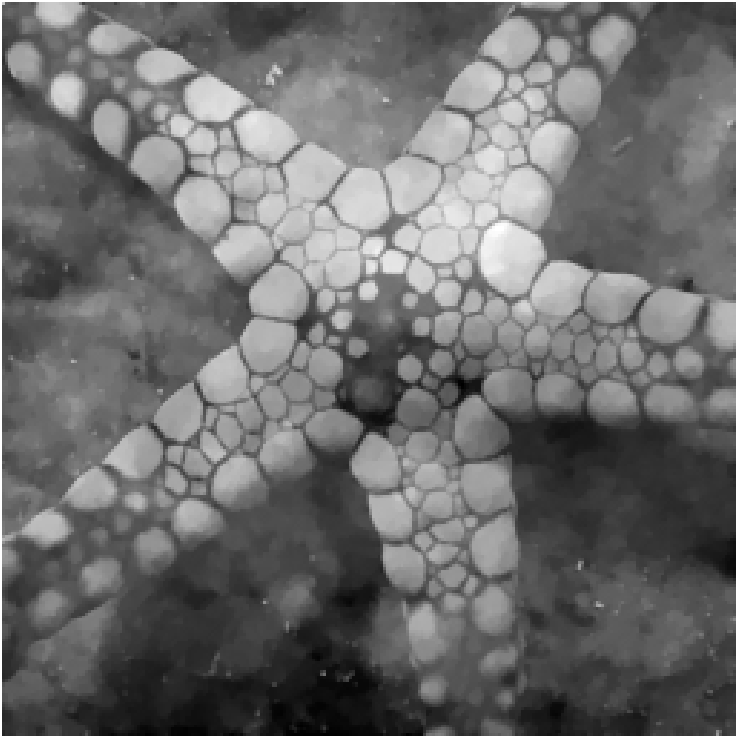}&
  \includegraphics[width=2.5cm]{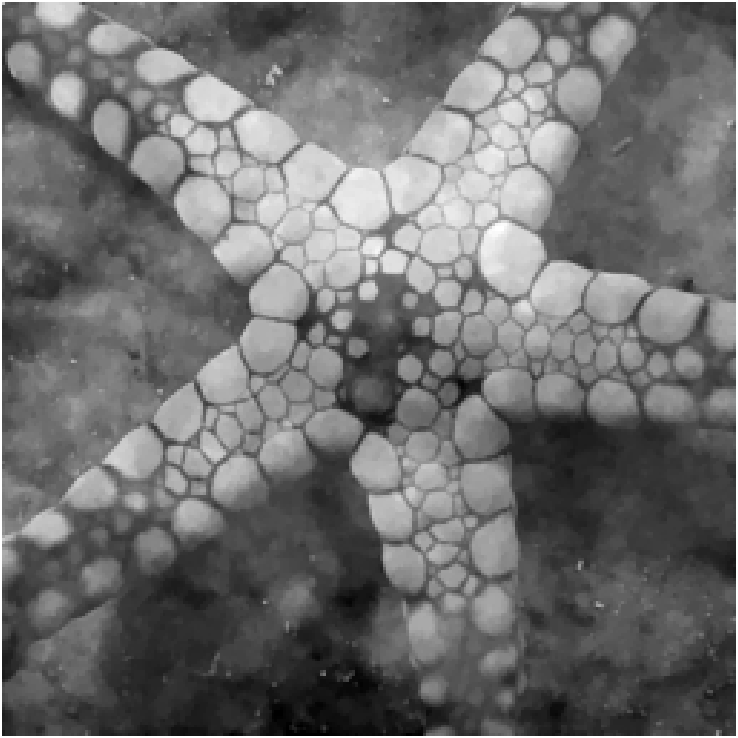}&
  \includegraphics[width=2.5cm]{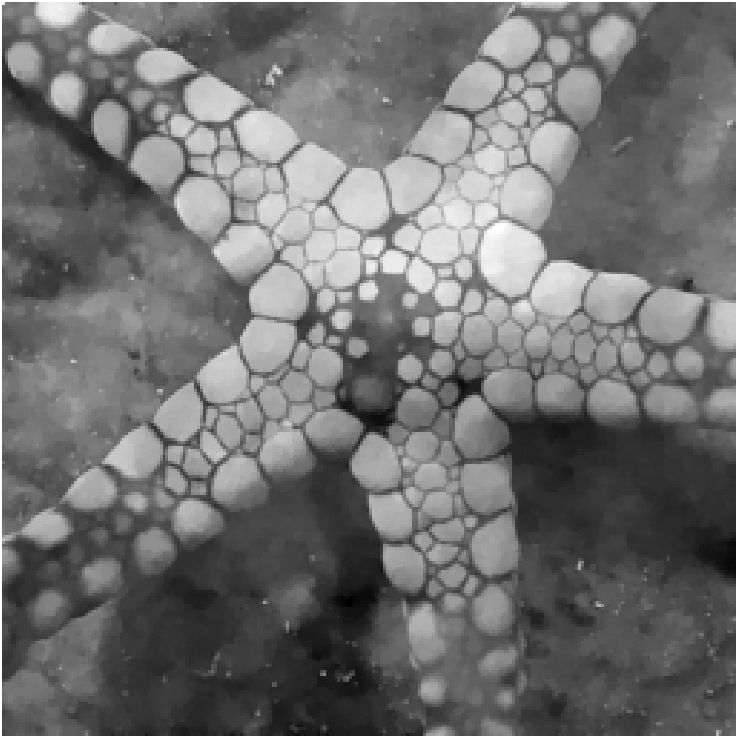}\\
  \includegraphics[width=2.5cm]{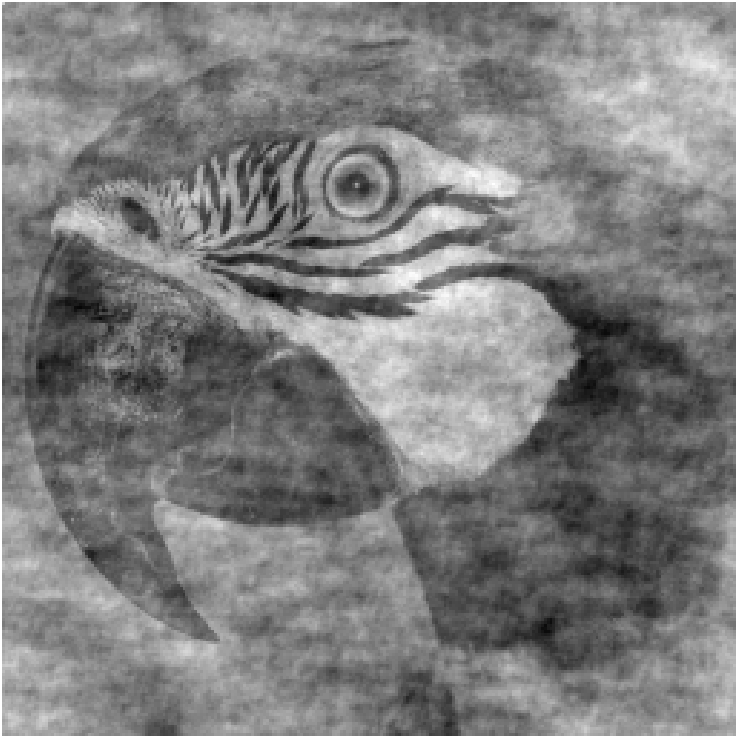}&
  \includegraphics[width=2.5cm]{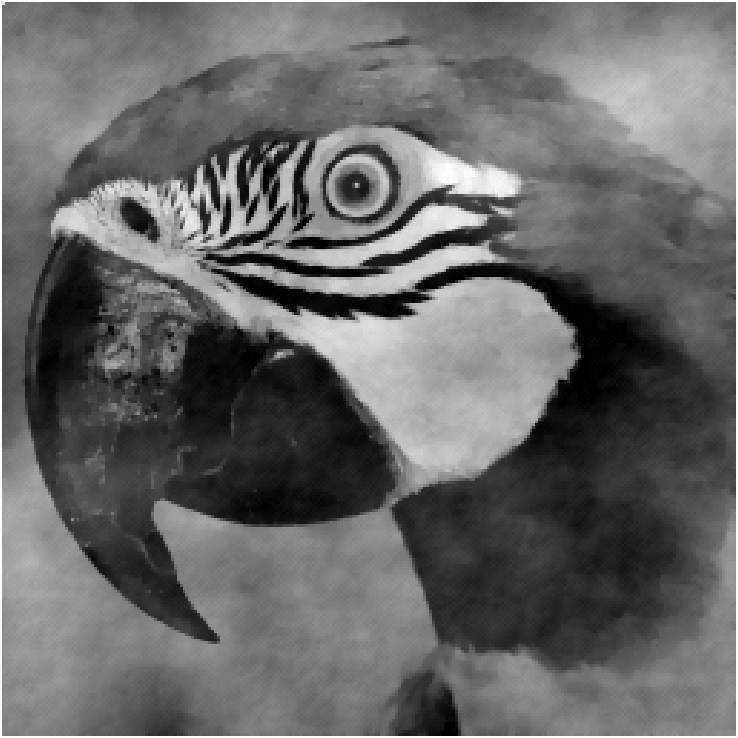}&
  \includegraphics[width=2.5cm]{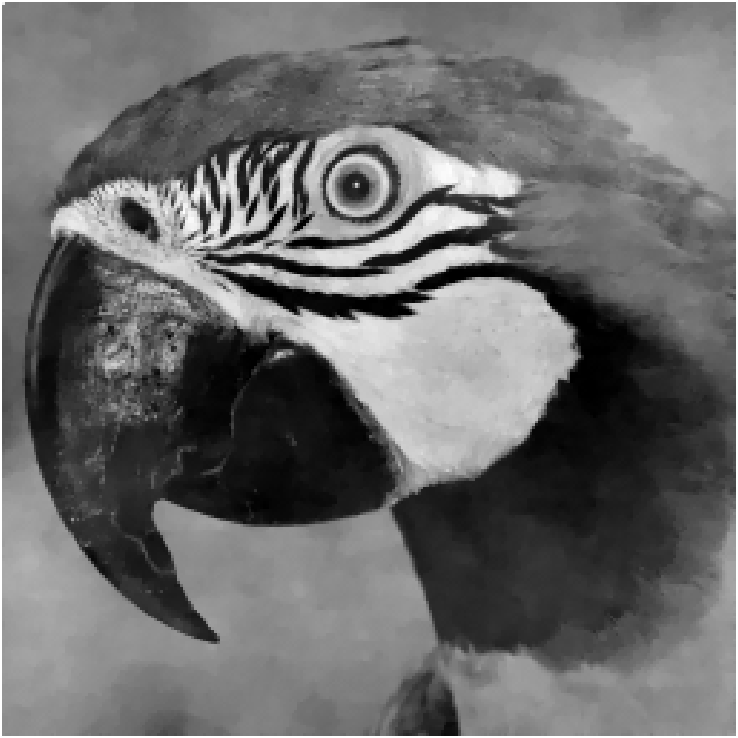}&
  \includegraphics[width=2.5cm]{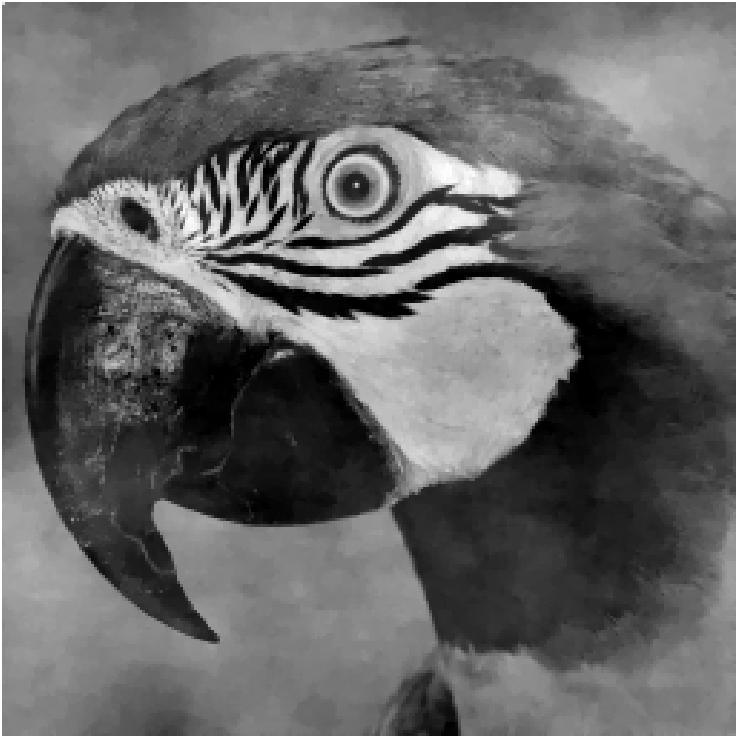}&
  \includegraphics[width=2.5cm]{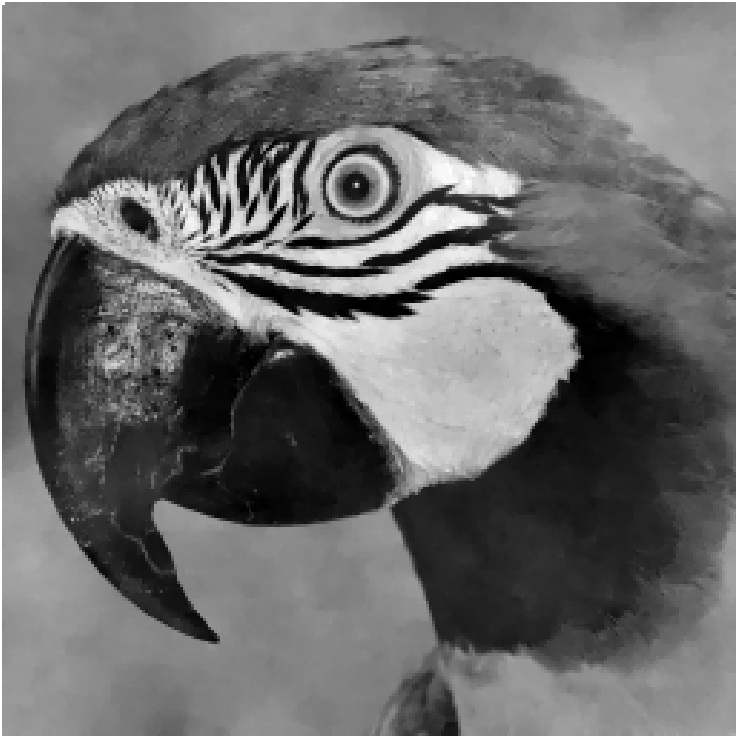}\\
  ZP& TV &$L_1-\alpha L_2$&TTV&PSV$_{a,p}$
  \end{tabular}
  \caption{Reconstructed images from data with mixed Gaussian--Poisson noise
  under 40\% sampling rates via different methods. }\label{fig-GBnoisy40}
\end{figure}

\begin{figure}[H]
  \centering
  \begin{tabular}{ccccc}
  \includegraphics[width=2.5cm]{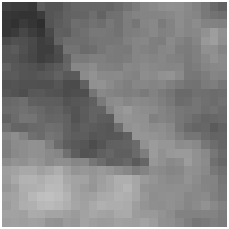}&
  \includegraphics[width=2.5cm]{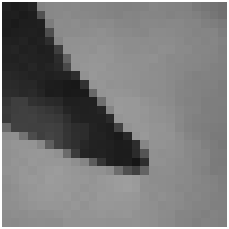}&
  \includegraphics[width=2.5cm]{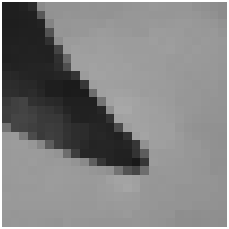}&
  \includegraphics[width=2.5cm]{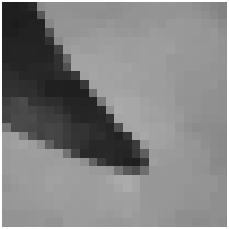}&
  \includegraphics[width=2.5cm]{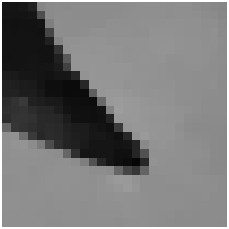}\\
  ZP& TV &$L_1-\alpha L_2$&TTV&PSV$_{a,p}$
  \end{tabular}
  \caption{Zoom-in view of the regions marked by the red box in Figure \ref{fig-noiseless40}.}\label{fig-zoomin}
\end{figure}

\begin{table}[H]
\centering
\renewcommand{\arraystretch}{1.2}
  \begin{tabular}{cccccc}
  \hline
  Method&ZP&TV&$L_1-\alpha L_2$&TTV&PSV$_{a,p}$\\
  \hline
  Cameraman &0.0037 & 4.0210&78.6043 & 69.4403 & 75.3755\\
    \hline
  Starfish & 0.0032 & 3.7917& 78.5817 & 68.4756 & 72.1294\\
    \hline
  Bird & 0.0025 & 3.8094 & 78.8976&27.8411 & 76.2701\\
  \hline
  \end{tabular}
  \caption{Total runtime (in seconds) per method for all experiments listed in Table \ref{table-NIR40}.}
  \label{table-time1}
 \end{table}

\subsection{MRI Reconstruction}\label{subsec-MRI}

In this subsection, we apply the proposed PSV$_{a,p}$
model to the magnetic resonance imaging (MRI) reconstruction.
Comparative experiments are conducted to exhibit the superiority
in both global structure preservation (high SSIM) 
and local detail preservation (low GMSD)
of the proposed PSV$_{a,p}$ model for the MRI reconstruction.

We sample the measurement along 24, 32, and 40
radial lines  in the frequency domain
after taking the Fourier transform of three brain images
from the BrainWeb Simulated Brain Database\footnote{Available at https://brainweb.bic.mni.mcgill.ca/brainweb/selection\_ normal.html}
(see Figure \ref{fig-linemask}).
To be precise, the operator $\Phi$ in \eqref{eq-unproblem} represents a composed operator $\Phi:=\widetilde{S}F$,
where $F$ denotes the Fourier matrix and $\widetilde{S}$
the sampling matrix along  radial lines.

\begin{figure}[H]
  \centering
  \begin{tabular}{cccccc}
  \includegraphics[width=2cm]{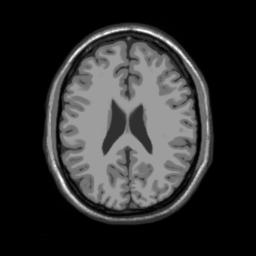}&
  \includegraphics[width=2cm]{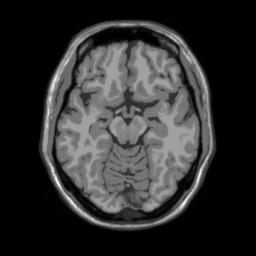}&
  \includegraphics[width=2cm]{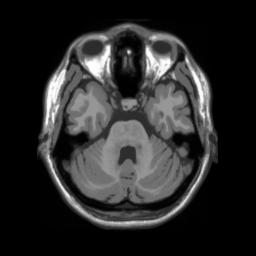}&
  \includegraphics[width=2cm]{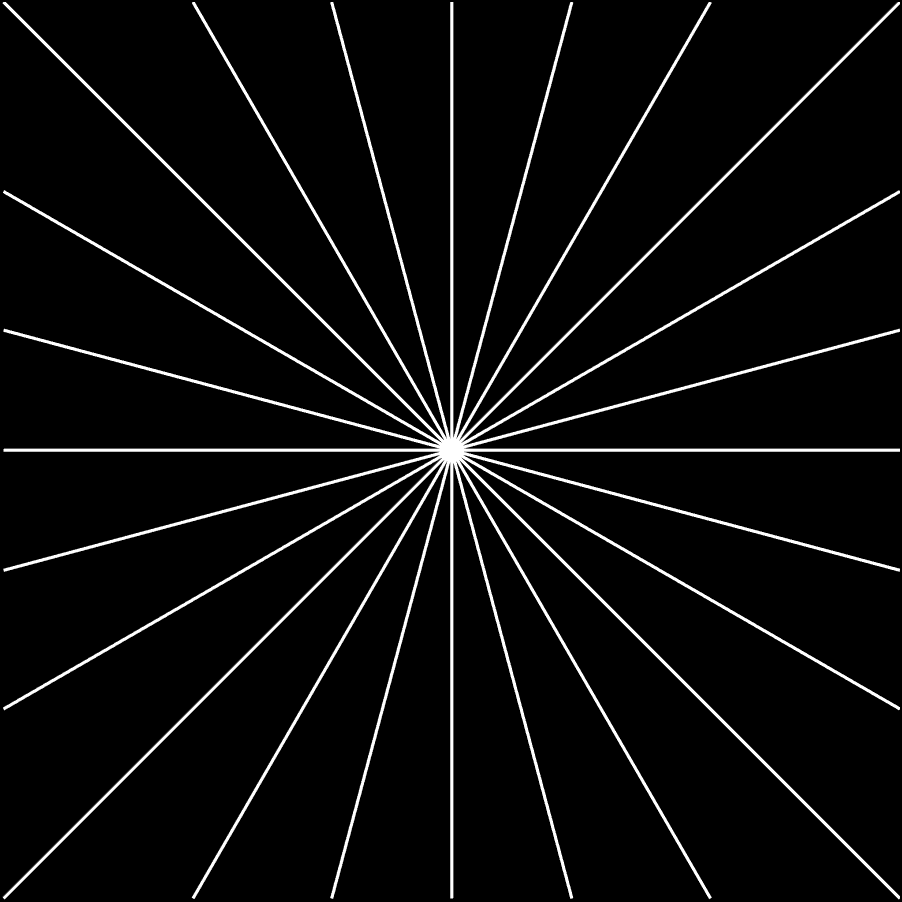}&
  \includegraphics[width=2cm]{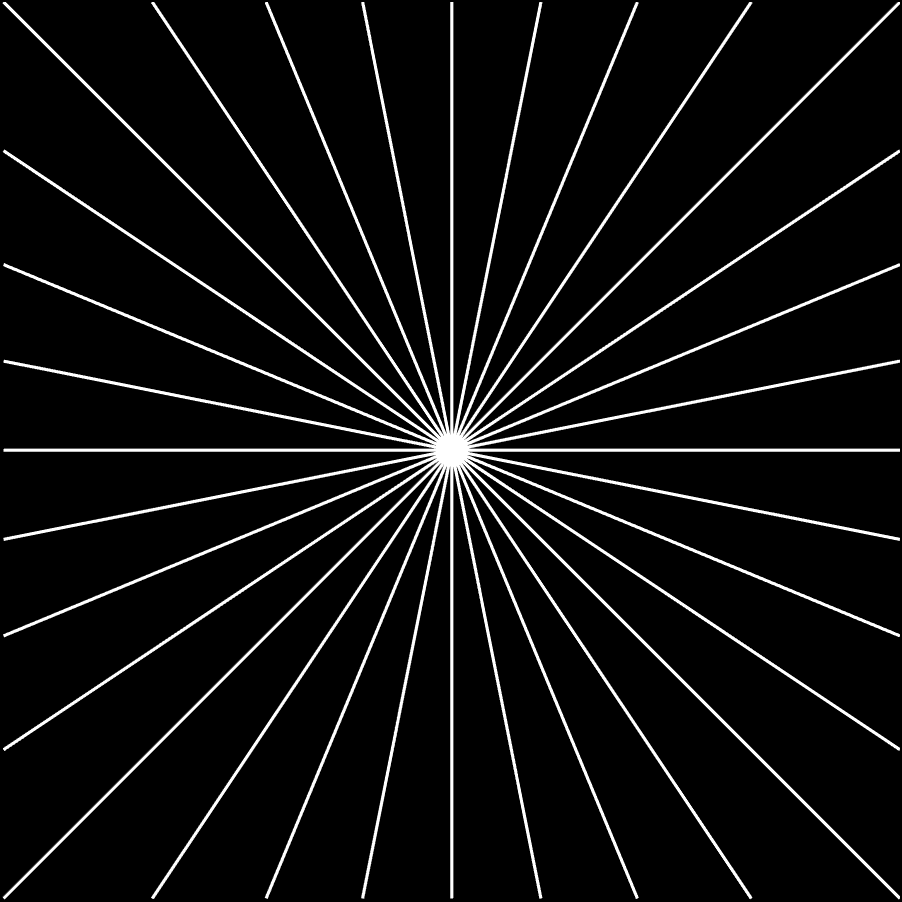}&
  \includegraphics[width=2cm]{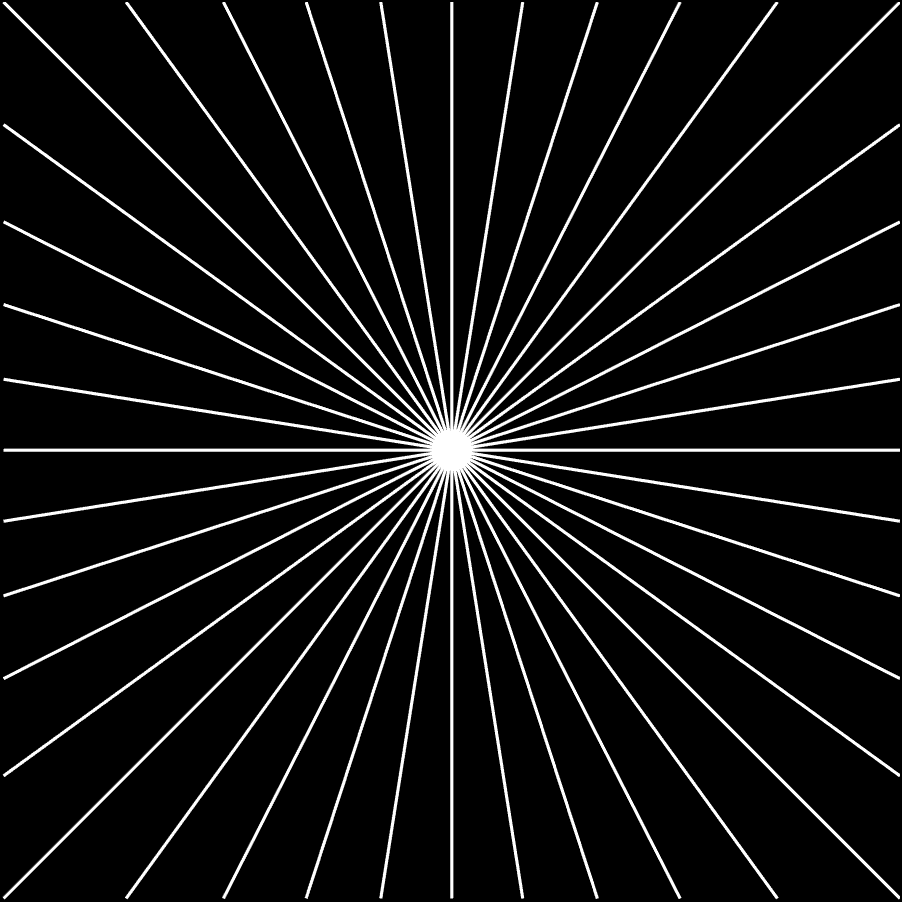}\\
  Brain A & Brain B& Brain C&24  lines & 32   lines & 40  lines
  \end{tabular}
  \caption{Three original brain images (A, B, and C) and
  three radial sampling patterns (24, 32, and 40 lines) without the k-space center.}\label{fig-linemask}
\end{figure}

The stopping criteria of the IRLSPSV algorithm
are $k_{\mathrm{out}}=200$ for the outer loop,
$k_{\mathrm{mid}}=10$ and $\varepsilon_{\mathrm{mid}}=5\times 10^{-4}$ for the mid loop,
and $k_{\mathrm{inn}}=5$ and  $\varepsilon_{\mathrm{inn}}=10^{-4}$ for the inner loop.

Guided by the sensitivity study in the last subsection,
we employ the same parameter tuning scheme: primarily tuning  $p$,
followed by fine-tuning of $a$. As a preliminary attempt,
we use 24 sampling lines to assess the reconstruction quality of Brains A, B, and C,
under different settings of $p$ with $a=1$ (see Table \ref{table-choicep}).
Based on Table \ref{table-choicep}, in the following experiments,
to achieve a better trade-off among quality metrics,
we fix $p=0.5$ and  tune $a$
according to the specific image and testing scenario.

\begin{table}[H]
\centering
\resizebox{1\textwidth}{!}{
\renewcommand{\arraystretch}{1.19}
\begin{tabular}{c|cccc|cccc|cccc}
\hline
Image & \multicolumn{4}{c|}{Brain A} & \multicolumn{4}{c|}{Brain B}& \multicolumn{4}{c}{Brain C}\\
\hline
$p$& 1 &0.7 &0.5&0.3 & 1 &0.7 &0.5&0.3& 1 &0.7 &0.5&0.3  \\
\hline
$\lambda$& 1.0e-5 &1.5e-5&2.4e-5&4.0e-5 & 1.0e-5 
&2.5e-5&8.0e-5&9.0e-5& 1.0e-5 &2.5e-5&6.5e-5&7.0e-5 \\
\hline
PSNR& 24.3947 &24.9772&\textbf{26.6741}&24.2788
& 25.3709 &25.1265&\textbf{25.4786}&25.4731
&26.8444&\textbf{27.3765}&27.0467&26.6152\\
\hline
SSIM& 0.7327 & 0.7344&0.7509&\textbf{0.8021}
& 0.7648 & 0.7411&0.7972&\textbf{0.8208}
&0.7471&0.7332&0.7811&\textbf{0.7978}\\
\hline
GMSD& 0.2737&0.2520&\textbf{0.2464}&0.2776
& 0.2283&0.2269&\textbf{0.2035}&0.2267
&0.2382&0.2344&\textbf{0.2159}&0.2401 \\
\hline
\end{tabular}}
\caption{Quality metrics for reconstructed images 
from noise-free data of Brains A, B, and C
under 24  radial sampling lines with $a=1$ and  
$p\in\{1,0.7,0.5,0.3\}$.}\label{table-choicep}
\end{table}

\begin{table}[H]
\centering
\renewcommand{\arraystretch}{1.2}
\begin{tabular}{c|cccc|cccc}
\hline
 \multirow{2}{*}{$a$} &\multicolumn{4}{c|}{Noise-free} & \multicolumn{4}{c}{Gaussian Noise}\\
 \cline{2-9}
 &$\lambda$& PSNR &SSIM &GMSD &$\lambda$& PSNR &SSIM&GMSD\\
 \hline
0.1 &1.0e-5& 29.6122 & 0.9157 & 0.2264 &2.0e-5& 29.3934& \textbf{0.8550} & 0.2623\\
\hline
0.5 &1.0e-5& 32.9195 & \textbf{0.9288} & \textbf{0.1524} &2.0e-5& \textbf{31.5531}& 0.8449 & \textbf{0.1818}\\
\hline
5&1.0e-5& \textbf{33.1998}  & 0.9105 &0.1678 &2.0e-5& 31.3307 & 0.7953 &  0.2191 \\
\hline
\end{tabular}
  \caption{Quality metrics for reconstructed images from noise-free and Gaussian-noise-corrupted measurements
  of Brain A under 40 radial sampling lines with $p=0.5$ and  $a\in\{0.1,0.5,5\}$.} \label{table-various_a}
\end{table}

To further  investigate the effect of varying $a$ on the reconstruction quality metrics,
taking Brain A with 40 radial lines as a case study,
we  evaluate the reconstruction performance
under both noise-free and Gaussian-noise-corrupted (about $5\%$ noise level)
conditions by varying the parameter $a$.
The values of the PSNR, the SSIM, and the GMSD recorded in Table \ref{table-various_a}
 demonstrate​ that, by adjusting $a$, the proposed model successfully achieves
targeted optimization for the enhancement of specific evaluation metrics.
This indicates that the proposed model offers the flexibility to trade
off certain metrics against others.
For instance, it is clinically desirable that
one can tolerate a slight drop in the PSNR to achieve a substantial gain
in structural preservation (SSIM) and visual perception (GMSD).

\begin{figure}[t]
  \centering
  \begin{tabular}{ccccc}
  \includegraphics[width=2.5cm]{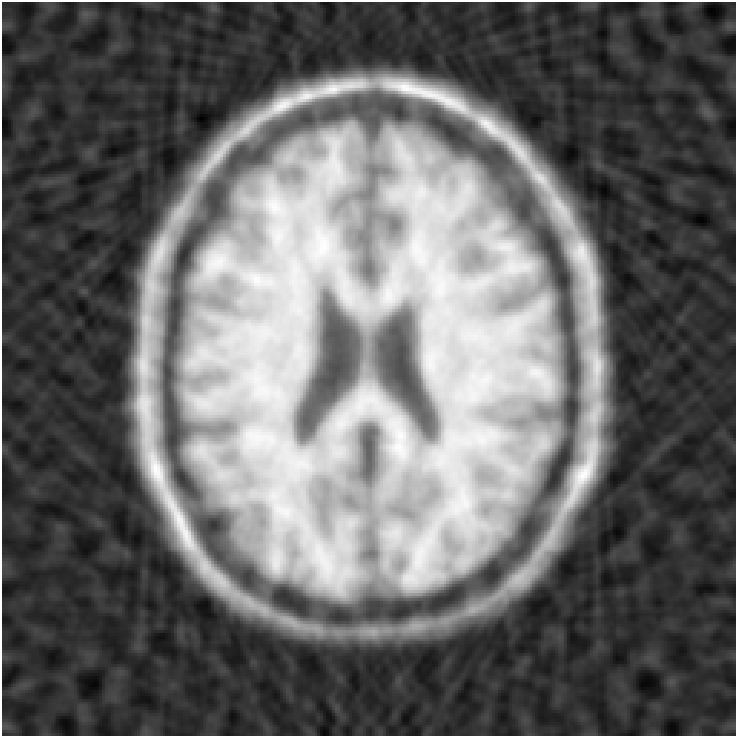}&
  \includegraphics[width=2.5cm]{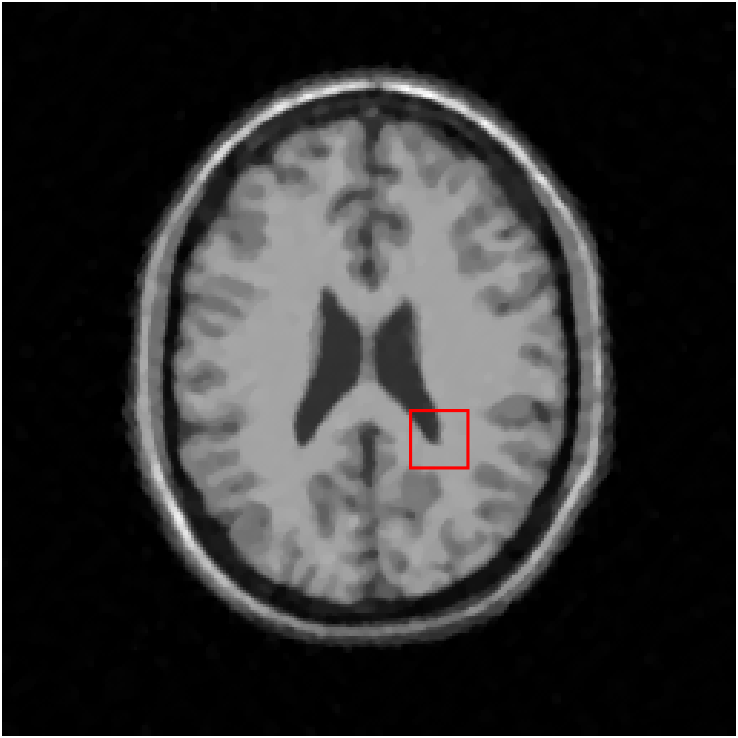}&
  \includegraphics[width=2.5cm]{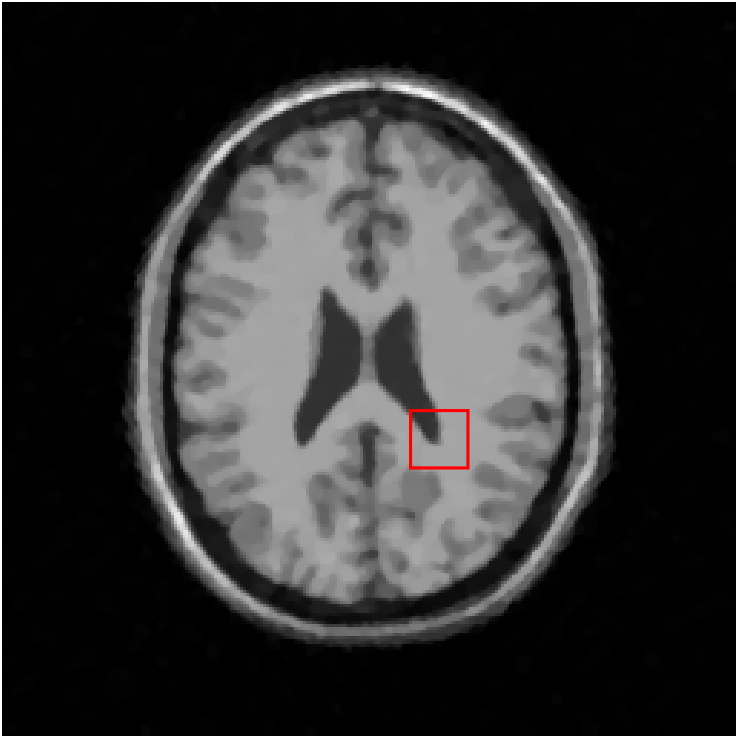}&
  \includegraphics[width=2.5cm]{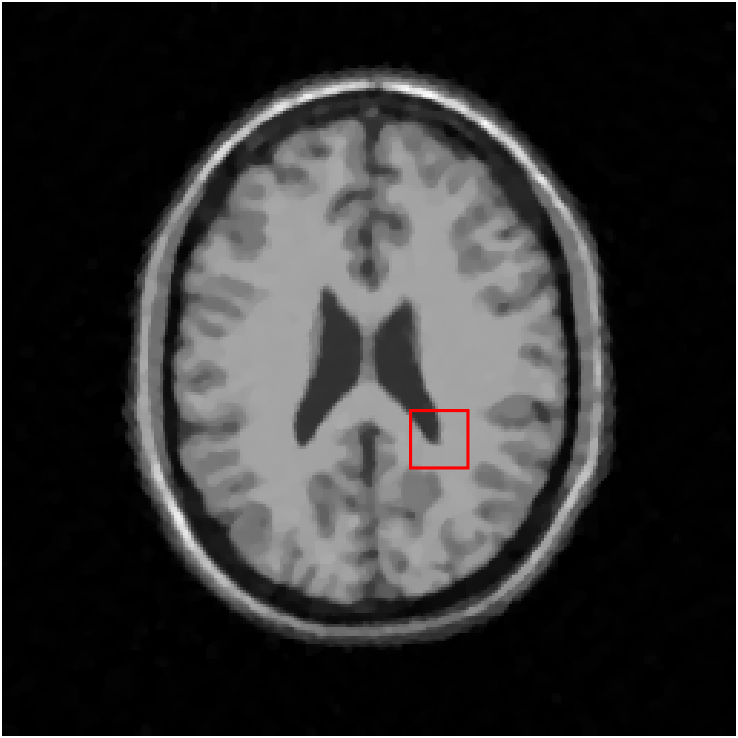}&
  \includegraphics[width=2.5cm]{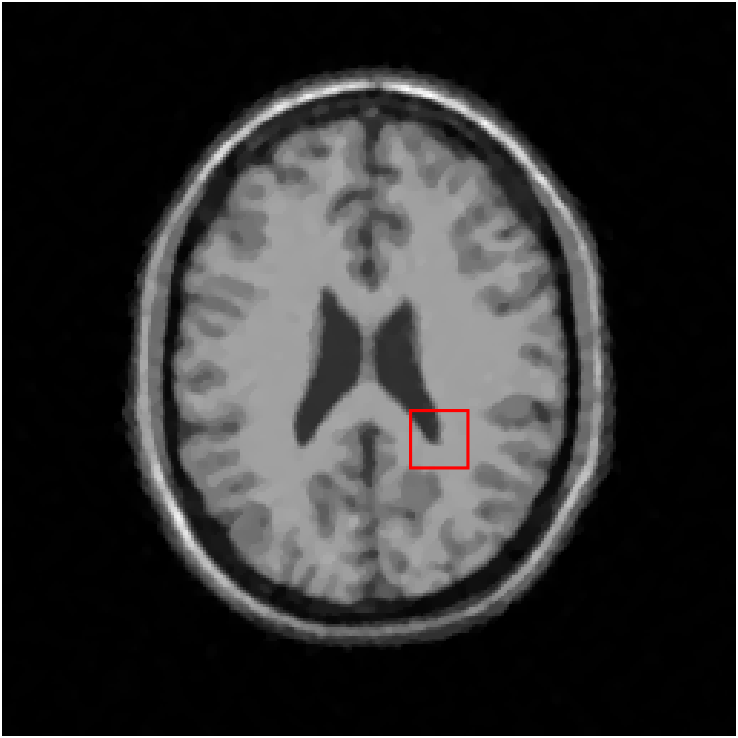}\\
  \includegraphics[width=2.5cm]{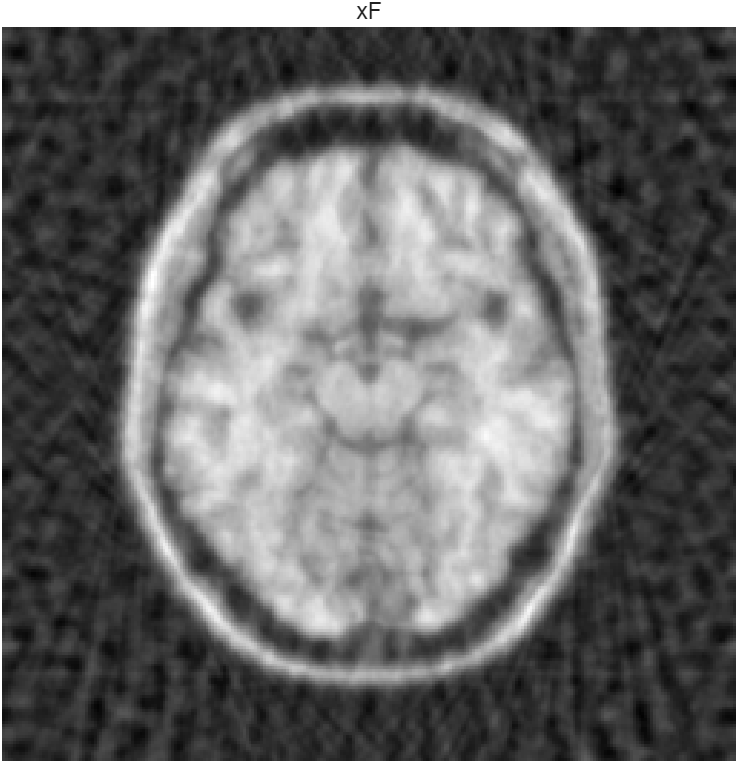}&
  \includegraphics[width=2.5cm]{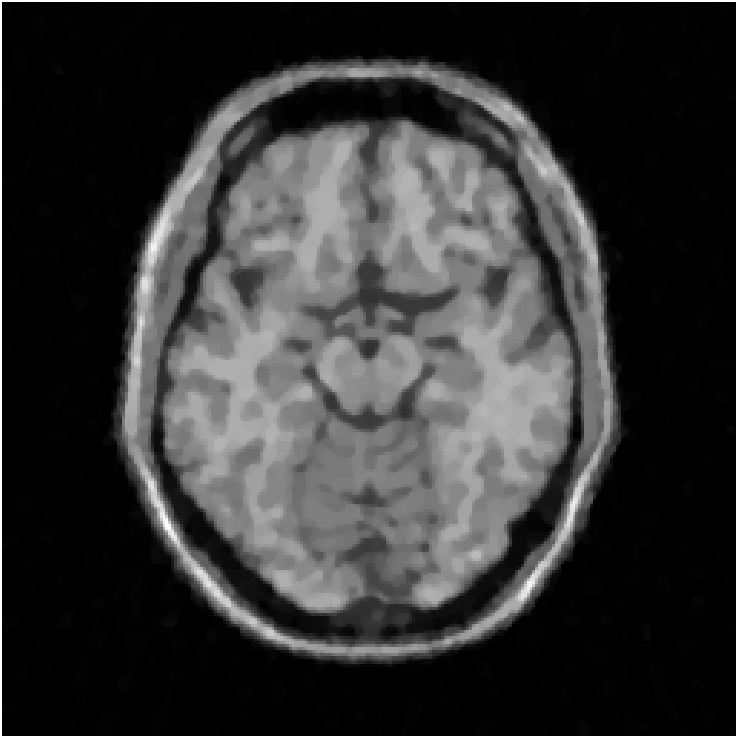}&
  \includegraphics[width=2.5cm]{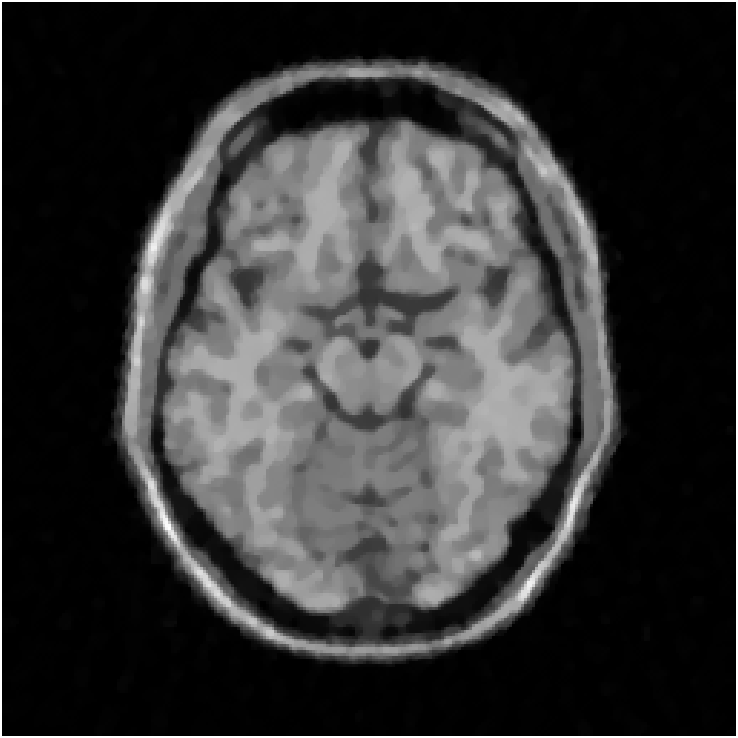}&
  \includegraphics[width=2.5cm]{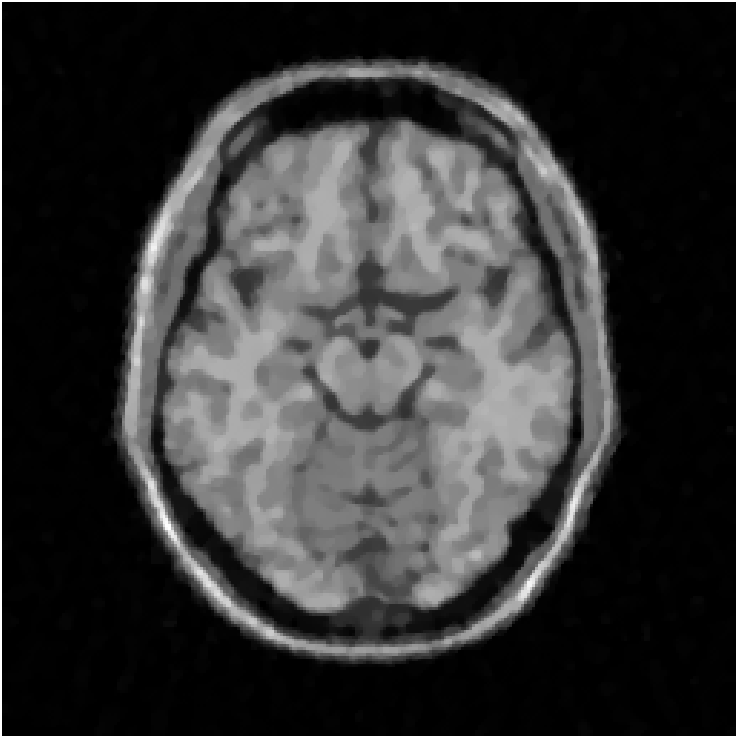}&
  \includegraphics[width=2.5cm]{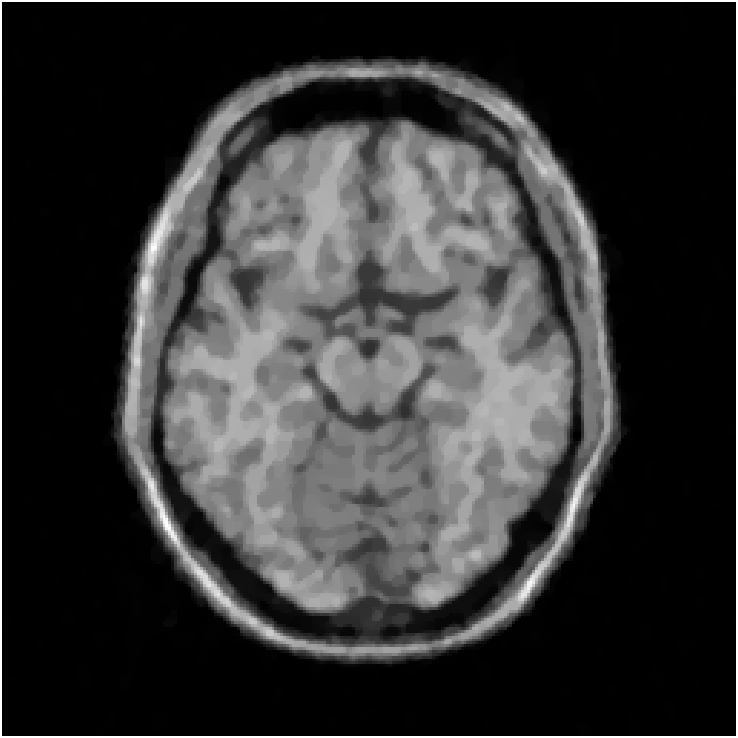}\\
  \includegraphics[width=2.5cm]{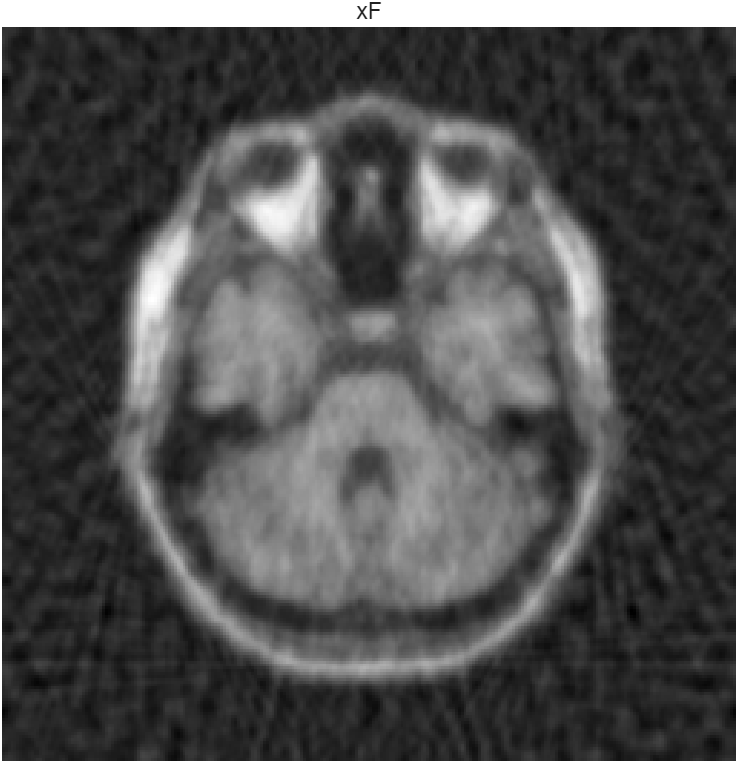}&
  \includegraphics[width=2.5cm]{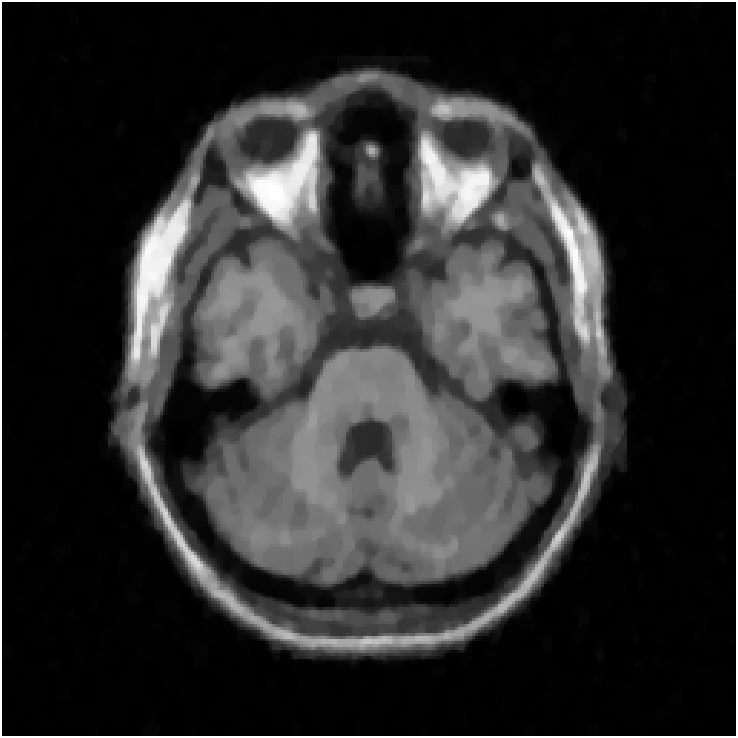}&
  \includegraphics[width=2.5cm]{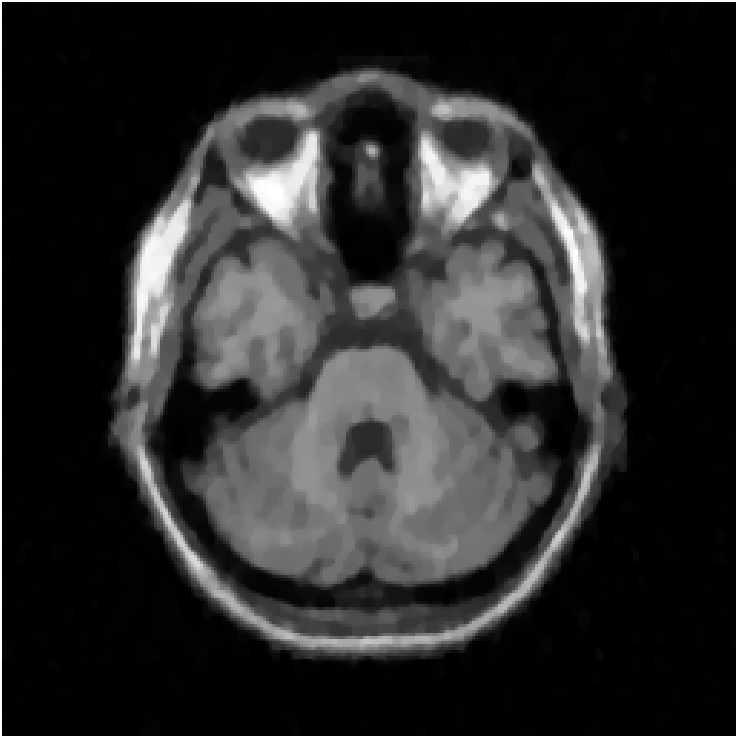}&
  \includegraphics[width=2.5cm]{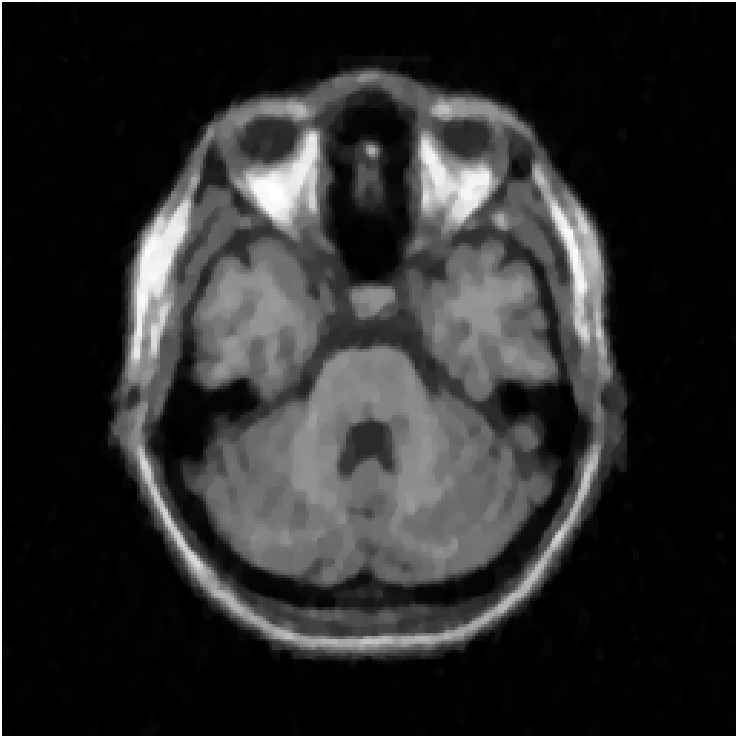}&
  \includegraphics[width=2.5cm]{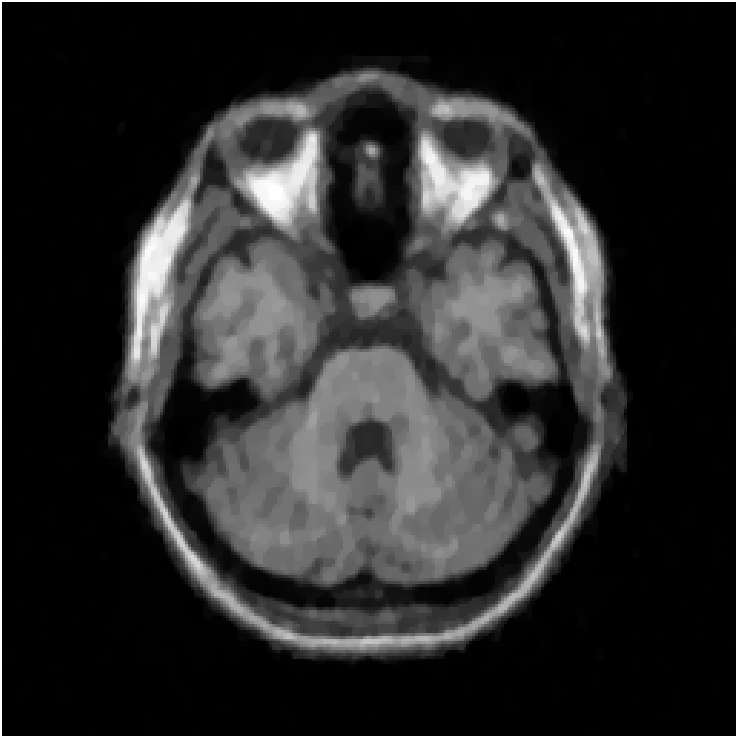}\\
  ZP& TV &$L_1-\alpha L_2$&TTV&PSV$_{a,p}$
  \end{tabular}
  \caption{Reconstructed images from noise-free 
  measurements of Brains A, B, and C via different methods under 40 radial lines.}\label{fig-noiselessbrain}
\end{figure}

\begin{table}[H]
\centering
\resizebox{0.95\textwidth}{!}{
\renewcommand{\arraystretch}{1.2}
\begin{tabular}{c|c|ccc|ccc|ccc}
\hline
\multirow{2}{*}{Measure}& Image &\multicolumn{3}{c|}{Brain A} & \multicolumn{3}{c|}{Brain B}
& \multicolumn{3}{c}{Brain C} \\
\cline{2-5} \cline{6-8} \cline{9-11}
& Sampling lines & 24 &  32 &  40&  24 &  32 &  40 &  24 &  32 &  40 \\
\hline
\multirow{5}{*}{PSNR}
& ZP & 12.5285 & 12.8657 &12.9260 &12.2420 & 12.3029& 13.6014& 13.8908&17.4919 & 17.3458 \\
& TV & 25.0992 & 30.7268 &32.4462 &25.1420&23.7661& 29.2643& 27.2735&30.0685 & 32.4772\\
& $L_1-\alpha L_2$ &25.1339&30.8424&32.3885&24.9523 &23.7980&29.5247&27.3430&30.1455&32.5819\\
& TTV & 25.1230 & 30.7466& 32.3775& 25.1598  & 24.7709 & 29.5134& 27.3468 & 30.1675& 32.5448\\
& PSV$_{a,p}$&25.4421& 30.7839 &33.3895&25.1496& 24.0318
& 29.1987& 27.3716& 29.9246& 32.1305\\
\hline
\multirow{5}{*}{SSIM}
& ZP& 0.2616, & 0.2783 & 0.3046& 0.2640 &0.2983 & 0.3319& 0.2493& 0.3080 & 0.3279\\
& TV & 0.7003 & 0.7801&0.8628& 0.7104& 0.8032& 0.8682 & 0.7109& 0.8262&0.8760\\
& $L_1-\alpha L_2$ &0.7085&0.7903&0.8810&0.7132&0.7822&0.8742&0.7176&0.8365&0.8908\\
& TTV & 0.6732&0.7560&0.8454&0.7237&0.7287&0.8528&0.7009&0.8197&0.8714\\
& PSV$_{a,p}$&\textbf{0.7603}&\textbf{0.8434}&\textbf{0.9272}&\textbf{0.8015}&\textbf{0.8505}
&\textbf{0.9283}&\textbf{0.7628}&\textbf{0.8839}&\textbf{0.9235}\\
\hline
\multirow{5}{*}{GMSD}
& ZP & 0.3421 & 0.3555& 0.3690 & 0.3503&0.3626&0.3746& 0.3283& 0.3347 & 0.3377\\
& TV& 0.2551 & 0.2119&0.1598 & 0.2388&0.2081& 0.1730 & 0.2430& 0.1980&0.1712\\
& $L_1-\alpha L_2$  &0.2500&0.2029&0.1536&0.2403&0.2099&0.1756&0.2416& 0.1986&0.1707\\
& TTV&0.2548&0.2092&0.1587&0.2412&0.2205&0.1807&0.2492&0.2036&0.1760\\
& PSV$_{a,p}$& \textbf{0.2466}&\textbf{0.2017}&0.1543&\textbf{0.2062}&\textbf{0.2014}&
\textbf{0.1675}&\textbf{0.2333}&\textbf{0.1911}&\textbf{0.1682}\\
\hline
\end{tabular}}
\caption{Comparison of quality metrics for reconstructed
images from noise-free measurements of Brains A, B, and C
via different methods under 24, 32, and 40 radial
sampling lines.}\label{table-noiselessbrain}
\end{table}

We also conduct comparative experiments on Brains A, B, and C,
respectively, using noise-free and
Gaussian-noise-corrupted (about $5\%$ noise level) measurements.
For satisfactory performance, we use the following parameter settings:
$\alpha=0.5$ for the $L_1-\alpha L_2$, $a=5$ for the TTV, and $p=0.5$ for the proposed PSV$_{a,p}$.
The values of the PSNR, the SSIM, and the GMSD for each reconstructed image are compared
in Tables \ref{table-noiselessbrain} and \ref{table-Gaussbrain},
while only the reconstructed images from noise-free and Gaussian-noise-corrupted measurements
with 40 radial lines  are shown in Figures \ref{fig-noiselessbrain} and \ref{fig-Gaussbrain}.
The total runtime per method and the corresponding hyperparameter settings
for all experiments recorded in Tables \ref{table-noiselessbrain} and \ref{table-Gaussbrain}
are respectively summarized  in Tables \ref{table-time2} and \ref{table-hyperparameter}.
As shown in Tables \ref{table-noiselessbrain},  \ref{table-Gaussbrain},   and \ref{table-time2},
although the ZP reconstruction is near-instantaneous,
the quality of the ZP-reconstructed images is clearly inferior.
Compared with  other methods, the proposed
PSV$_{a,p}$ maintains a competitive PSNR,
simultaneously improves the SSIM value from $3.6\%$ to $12.5\%$,
and, except one, yields the lowest GMSD.
Moreover,  PSV$_{a,p}$ achieves a desirable trade-off
between quality and runtime because it achieves competitive PSNR
and GMSD and a higher SSIM
while requiring no more than twice the runtime of the TTV and being faster than
the $L_1-\alpha L_2$. The zoom-in view of the regions marked by the red box in Figure \ref{fig-noiselessbrain}
is presented in Figure \ref{fig-noiselessbrainred}.

\begin{table}[H]
\centering
\resizebox{0.9\textwidth}{!}{\renewcommand{\arraystretch}{1.2}
\begin{tabular}{c|c|ccc|ccc|ccc}
\hline
\multirow{2}{*}{Measure}& Image &\multicolumn{3}{c|}{Brain A} & \multicolumn{3}{c|}{Brain B}
& \multicolumn{3}{c}{Brain C} \\
\cline{2-5} \cline{6-8} \cline{9-11}
& Sampling lines & 24 &  32 &  40&  24 &  32 &  40 &  24 &  32 &  40 \\
\hline
\multirow{5}{*}{PSNR}
& ZP & 12.3972 & 13.0221 &12.8583 &12.3369& 12.0827& 13.3159& 13.8524&17.7575& 16.8885\\
& TV & 24.4161& 26.9293 &31.3912&25.6834&26.0189 & 28.0949& 26.8649&29.1342& 30.6481\\
& $L_1-\alpha L_2$ &24.6632&27.0344&31.6954&25.7352&25.9938&28.3102&26.9528&29.3274&30.9075\\
& TTV &24.3133& 27.3917&31.2165&25.5070&26.0310&27.5372& 26.8200&29.0367&30.5254\\
& PSV$_{a,p}$& 25.1538&28.3978&30.0065& 25.5975& 26.5164
& 28.5281 & 26.9674 &29.0334&30.6600\\
\hline
\multirow{5}{*}{SSIM}
& ZP & 0.2576, &0.2757 & 0.2996& 0.2613 &0.2935& 0.3247& 0.2463& 0.3060 & 0.3210\\
& TV & 0.6654 & 0.7278&0.7476& 0.6859& 0.7212& 0.7550 & 0.6883& 0.7591&0.7712\\
& $L_1-\alpha L_2$ &0.6732&0.7295&0.7686&0.7077&0.7327&0.7655&0.6918&0.7665&0.7836\\
& TTV & 0.6308&0.7144&0.7004&0.6954&0.6937&0.7166& 0.6733& 0.7343&0.7384\\
& PSV$_{a,p}$& \textbf{0.7252} &\textbf{0.8082} &\textbf{0.8101} & \textbf{0.7620} & \textbf{0.7826}
&\textbf{0.8613}&\textbf{0.7383} & \textbf{0.8310}& \textbf{0.8480}\\
\hline
\multirow{5}{*}{GMSD}
& ZP & 0.3414&0.3561& 0.3691 & 0.3489&0.3639&0.3748 & 0.3289& 0.3356 & 0.3395 \\
& TV& 0.2771&0.2457&0.2361 & 0.2584&0.2469& 0.2361  &0.2597& 0.2314&0.2276\\
& $L_1-\alpha L_2$  &0.2673&0.2233&0.1968&0.2540&0.2398&0.2216&0.2544&0.2232&0.2132\\
& TTV&0.2823 &0.2596&0.2425 &0.2605&0.2597&0.2523&0.2682& 0.2428&0.2417\\
& PSV$_{a,p}$&\textbf{0.2475}& \textbf{0.2143}& \textbf{0.1941}&\textbf{0.2136}& \textbf{0.2189}&\textbf{0.1652}
 &\textbf{0.2405} &\textbf{0.1923}&\textbf{0.1821}\\
\hline
\end{tabular}}
\caption{Comparison of quality metrics for reconstructed images from Gaussian-noise-corrupted measurements of Brains A, B, and C
 via different methods, under 24, 32, and 40 radial sampling lines.}\label{table-Gaussbrain}
\end{table}

\begin{figure}[t]
  \centering
  \begin{tabular}{ccccc}
  \includegraphics[width=2.5cm]{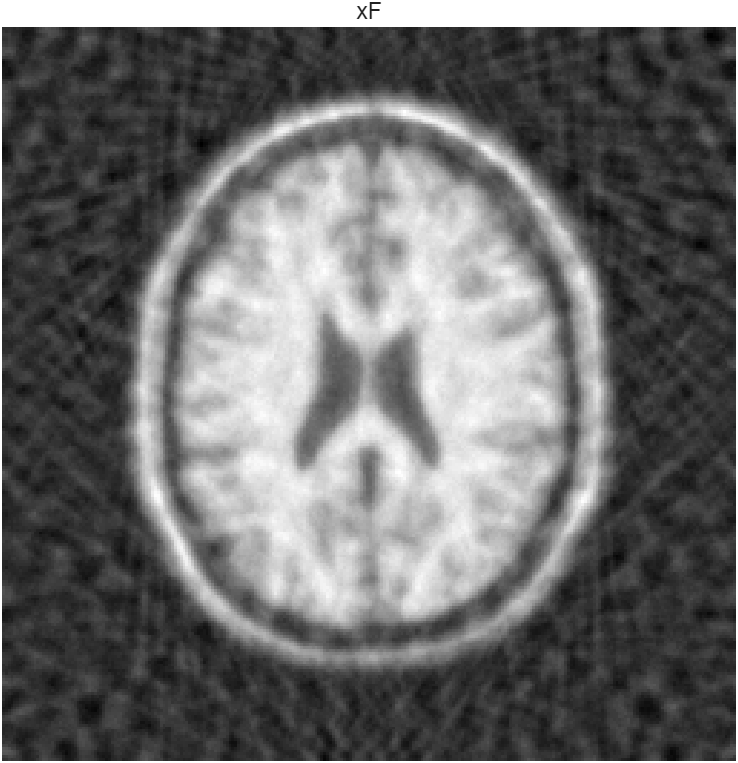}&
  \includegraphics[width=2.5cm]{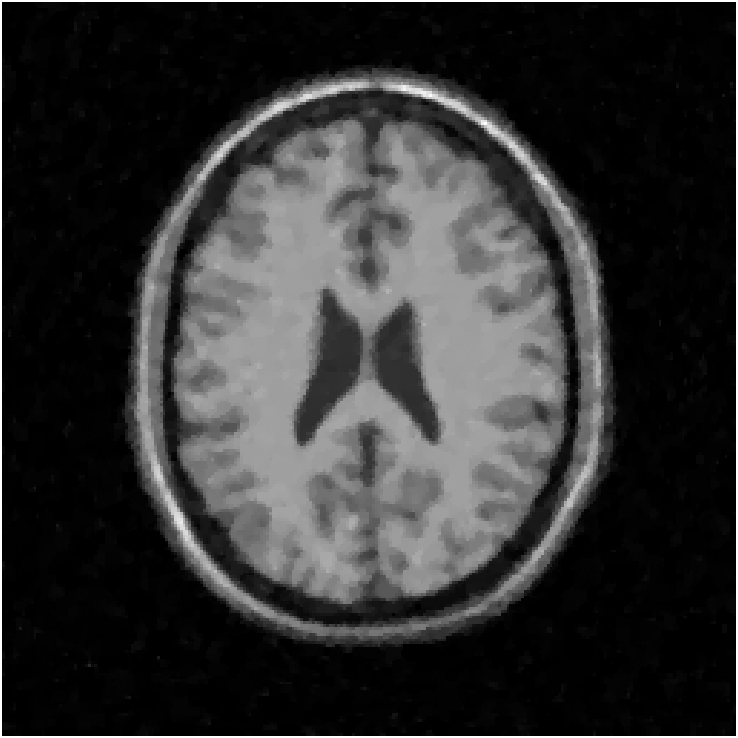}&
  \includegraphics[width=2.5cm]{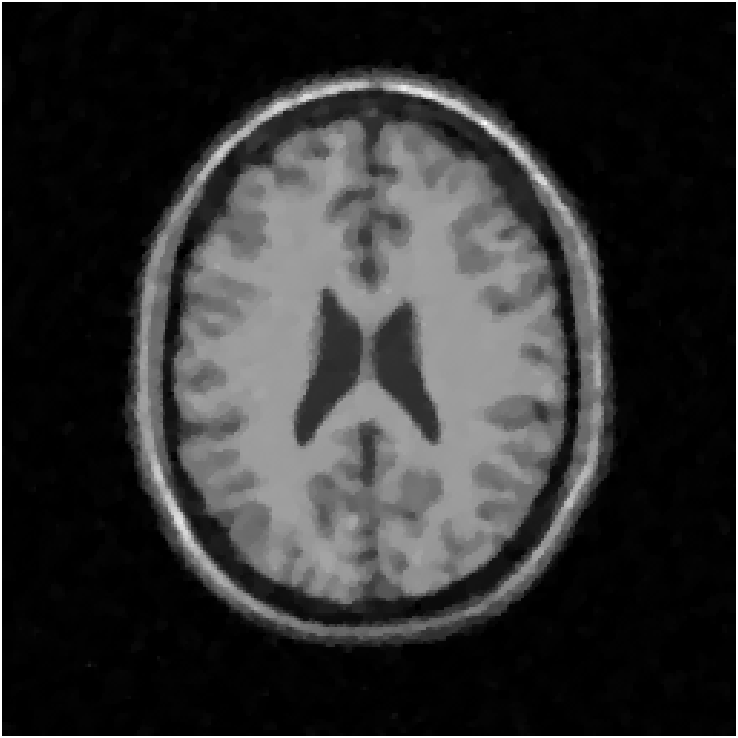}&
  \includegraphics[width=2.5cm]{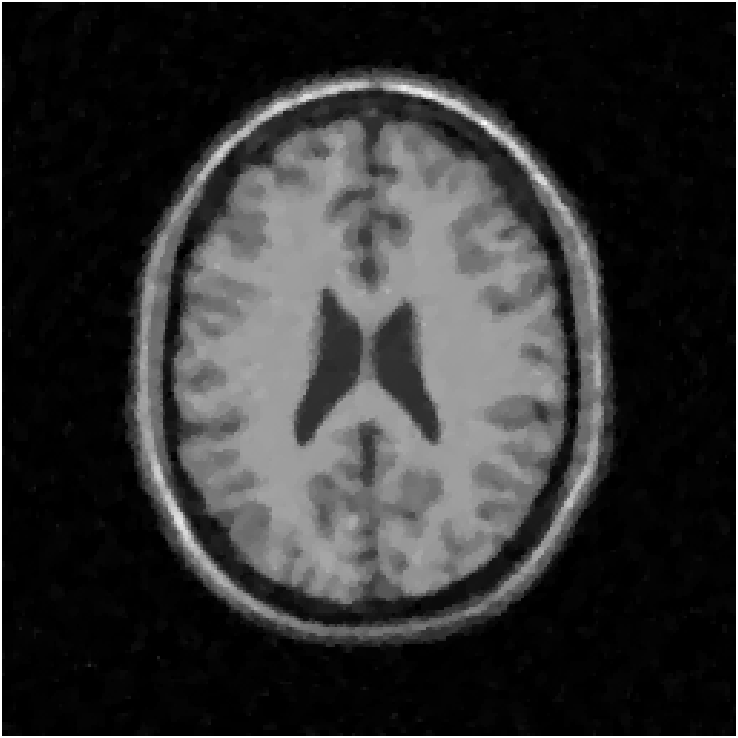}&
  \includegraphics[width=2.5cm]{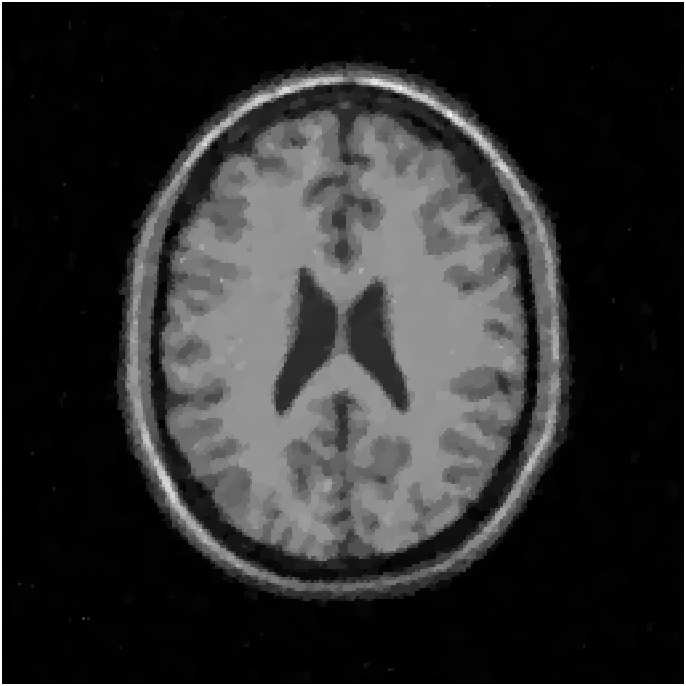}\\
  \includegraphics[width=2.5cm]{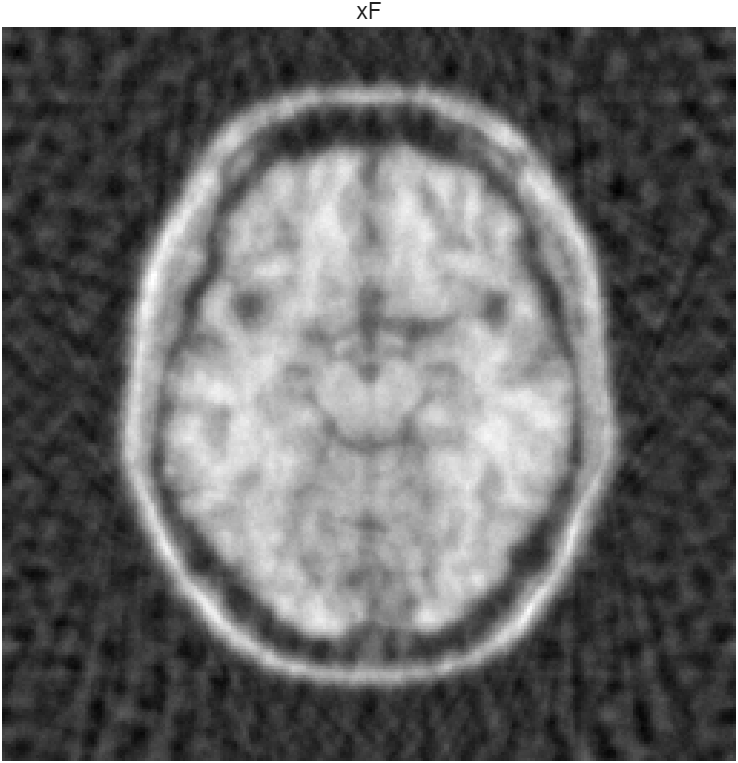}&
  \includegraphics[width=2.5cm]{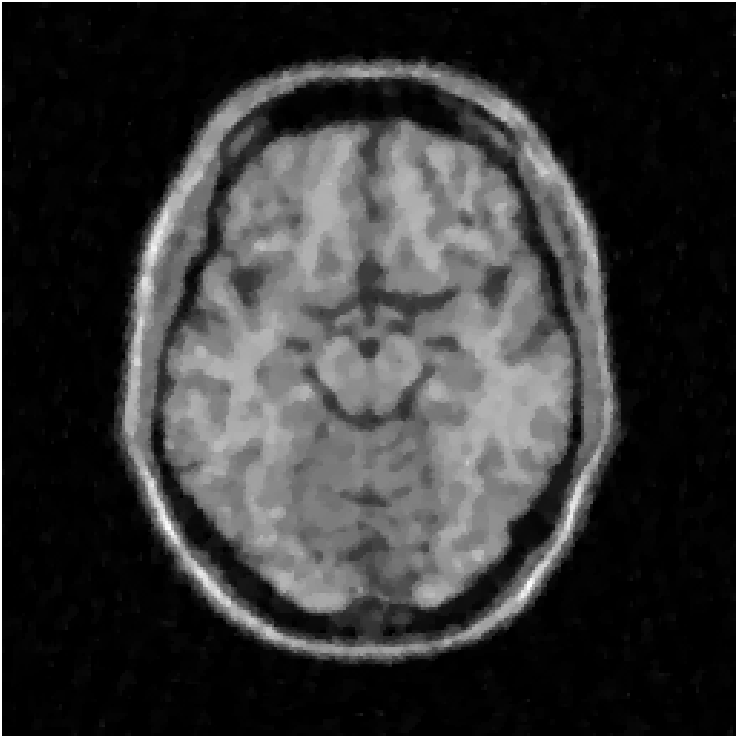}&
  \includegraphics[width=2.5cm]{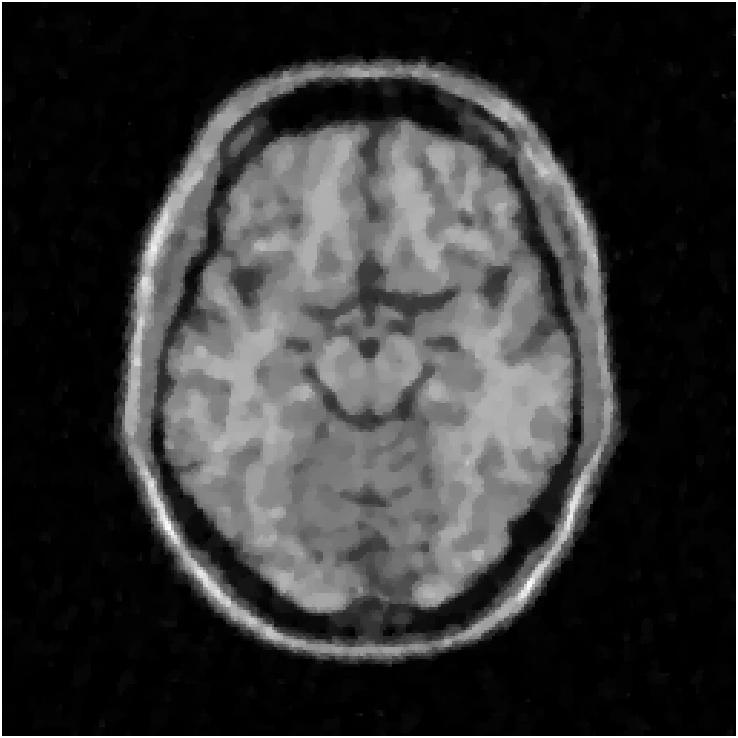}&
  \includegraphics[width=2.5cm]{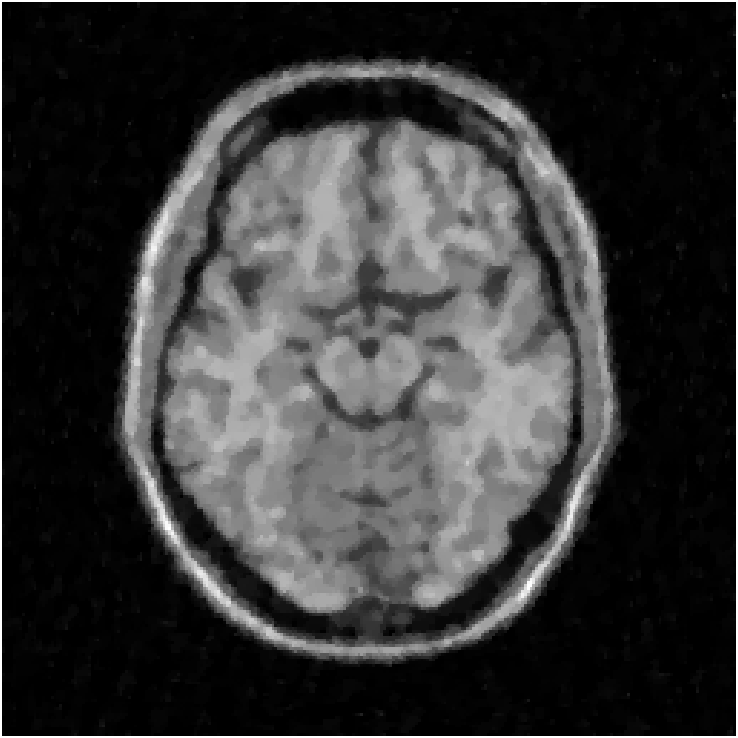}&
  \includegraphics[width=2.5cm]{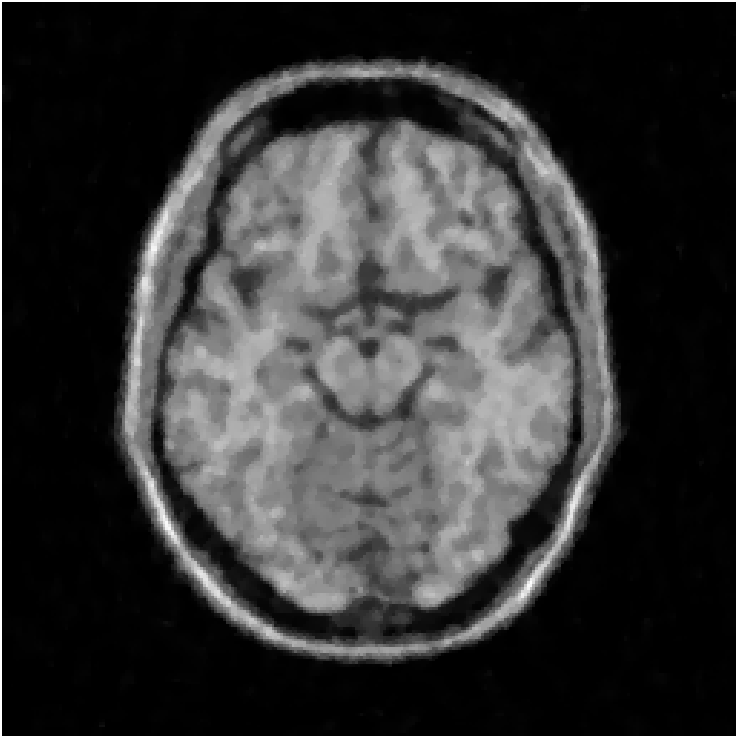}\\
  \includegraphics[width=2.5cm]{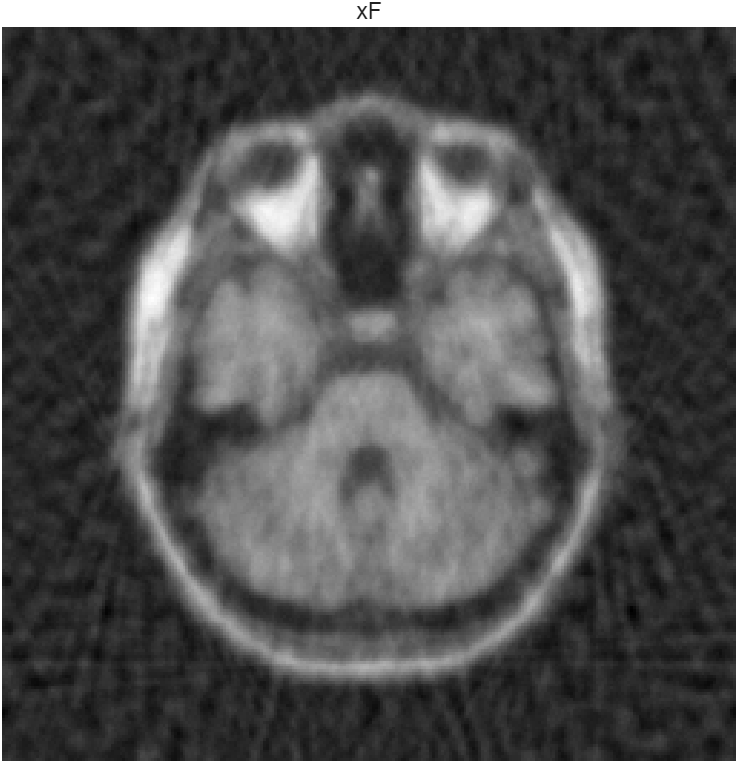}&
  \includegraphics[width=2.5cm]{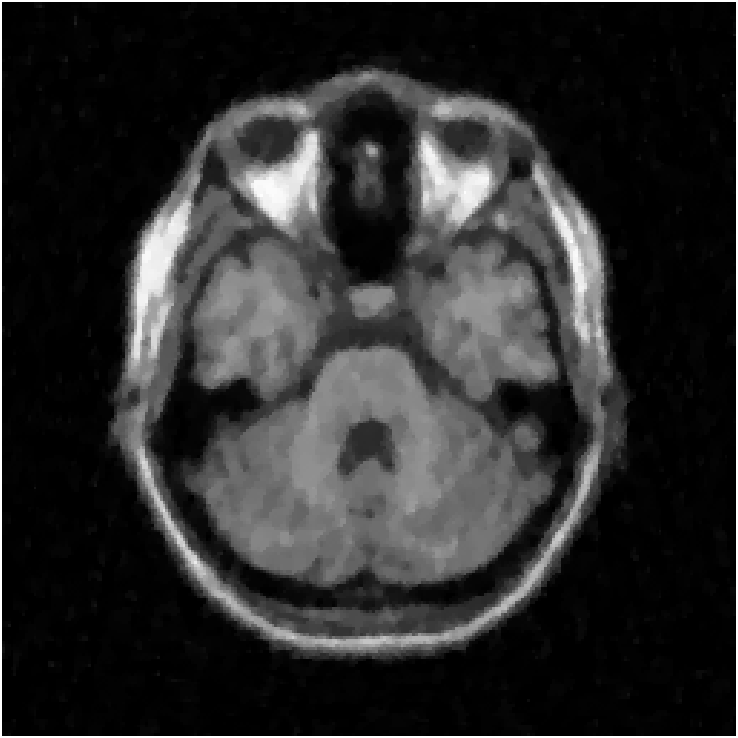}&
  \includegraphics[width=2.5cm]{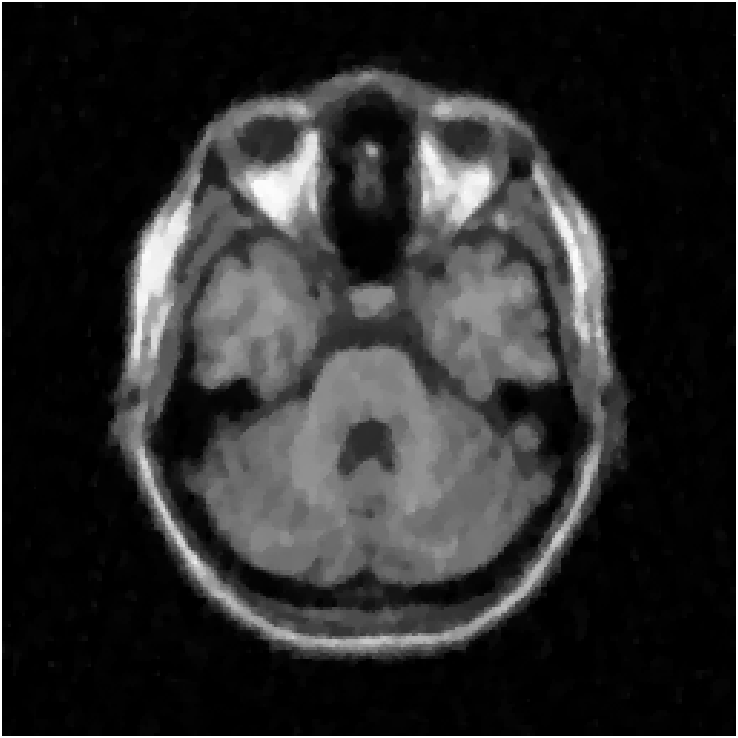}&
  \includegraphics[width=2.5cm]{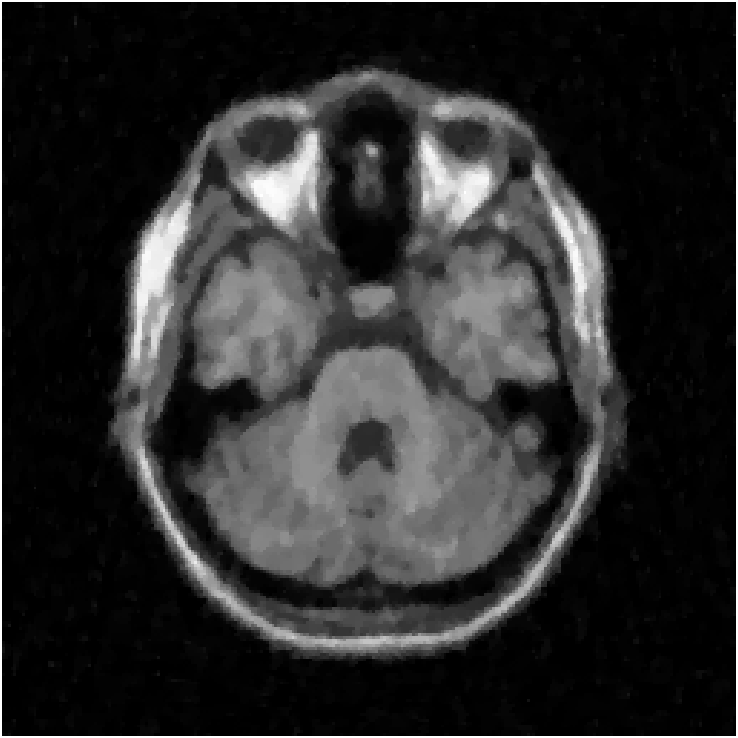}&
  \includegraphics[width=2.5cm]{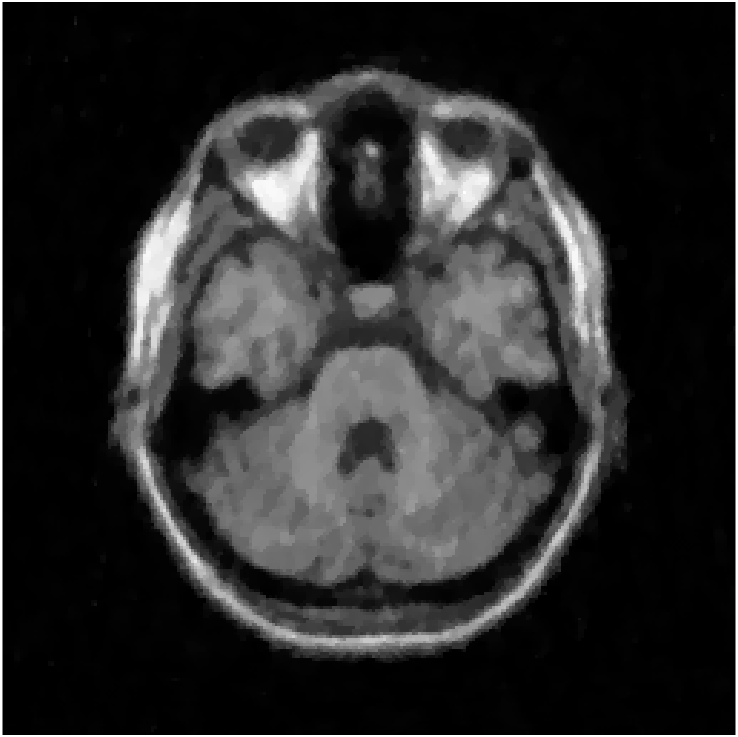}\\
  ZP & TV &$L_1-\alpha L_2$&TTV& PSV$_{a,p}$
  \end{tabular}
  \caption{Reconstructed images from Gaussian-noise-corrupted
  measurements of Brains A, B, and C via different methods,
  under 40 radial lines.}\label{fig-Gaussbrain}
\end{figure}

\begin{table}[H]
\centering
\renewcommand{\arraystretch}{1.2}
  \begin{tabular}{cccccc}
  \hline
  Method&ZP&TV&$L_1-\alpha L_2$&TTV& PSV$_{a,p}$\\
  \hline
  Noiseless& 0.0077&3.7358& 42.8243& 6.9970& 13.4143\\
  \hline
  Gaussian noise& 0.0097& 4.0066& 43.2080& 7.0346 & 10.9737 \\
  \hline
  \end{tabular}
  \caption{Total runtime  (in seconds) per method for all experiments listed in
  Tables \ref{table-noiselessbrain} and \ref{table-Gaussbrain}.}\label{table-time2}
 \end{table}

\begin{figure}[H]
  \centering
  \begin{tabular}{cccc}
  \includegraphics[width=3.2cm]{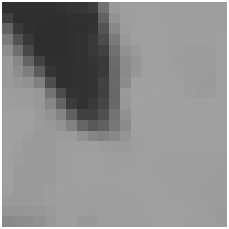}&
  \includegraphics[width=3.2cm]{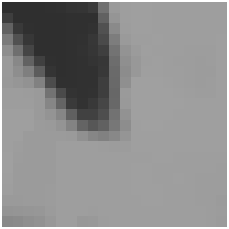}&
  \includegraphics[width=3.2cm]{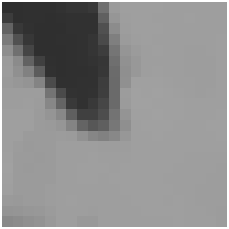}&
  \includegraphics[width=3.2cm]{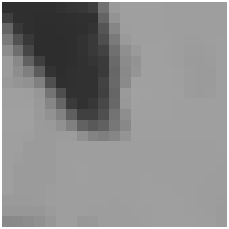}\\
TV &$L_1-\alpha L_2$&TTV& PSV$_{a,p}$
  \end{tabular}
  \caption{Zoom-in view of the regions marked by the 
  red box in Figure \ref{fig-noiselessbrain}.}
  \label{fig-noiselessbrainred}
\end{figure}

\begin{table}[H]
\centering
\resizebox{1\textwidth}{!}{
\renewcommand{\arraystretch}{1.2}
\begin{tabular}{c|c|c|ccc|ccc|ccc}
\hline
 \multicolumn{3}{c|}{Image} &\multicolumn{3}{c|}{Brain A} & \multicolumn{3}{c|}{Brain B}
& \multicolumn{3}{c}{Brain C} \\
\hline
 \multicolumn{3}{c|}{Sampling lines} & 24 &  32 &  40&  24 &  32 &  40 &  24 &  32 &  40 \\
\hline
 \multirow{5}{*}{Noise-free}&TV& $\lambda$ & 1.0e-3& 1.0e-3& 1.0e-3
 &1.0e-3& 4.0e-4& 1.0e-3
 & 1.0e-3& 1.0e-3& 1.0e-3\\
\cline{2-12}
&$L_1-\alpha L_2$& $\lambda$ & 1.0e-3& 1.0e-3& 1.0e-3
 &1.0e-4& 1.0e-4& 1.0e-3
 & 1.0e-3& 1.0e-3& 9.0e-4\\
\cline{2-12}
& TTV& $\lambda$ & 8.0e-4 &  8.0e-4 &  8.0e-4
 &3.0e-4 &1.0e-3& 6.0e-4
 &6.0e-4 &5.0e-4 & 4.0e-4\\
\cline{2-12}
&\multirow{2}{*}{ PSV$_{a,p}$}& $a$& 0.4& 0.7& 1&  0.4&  0.5&  0.4&  0.4 & 0.4& 0.4\\
& & $\lambda$& 1.0e-5& 2.0e-5 &  2.0e-5&  2.5e-5
& 9.0e-6 &  1.0e-5& 1.5e-5& 1.0e-5& 9.0e-6\\
 \hline
\multirow{5}{*}{Gaussian noise}&TV& $\lambda$ & 1.0e-3& 1.0e-3& 1.0e-3
 &1.0e-3& 1.0e-3& 1.0e-3
 & 1.0e-3& 1.0e-3& 1.0e-3\\
\cline{2-12}
& $L_1-\alpha L_2$& $\lambda$ & 1.0e-3& 1.5e-3&1.5e-3
 &5.0e-4 & 8.0e-4& 1.0e-3& 1.0e-3& 1.0e-3& 1.0e-3\\
\cline{2-12}
 &TTV& $\lambda$ & 8.0e-4 & 3.0e-4 &8.0e-4
 &2.0e-4&5.0e-4&5.0e-4&5.0e-4 &5.0e-4&5.0e-4\\
\cline{2-12}
&\multirow{2}{*}{ PSV$_{a,p}$}& $a$& 0.6& 0.7& 1&  0.4&  0.5&  0.4&  0.4 & 0.4& 0.4\\
&& $\lambda$& 2.0e-5& 3.0e-5 & 2.0e-5
&  2.0e-5& 1.5e-5&  2.0e-5 & 1.5e-5 &  1.5e-5& 1.5e-5 \\
\hline
\end{tabular}}
\caption{Hyperparameter settings for all experiments listed in
Tables \ref{table-noiselessbrain} 
and \ref{table-Gaussbrain}.}\label{table-hyperparameter}
\end{table}

\subsection{CT Reconstruction}\label{subsec-CT}

In this subsection, we apply the proposed PSV$_{a,p}$ model to
the X-ray computed tomography (CT) reconstruction.
In the CT reconstruction, the operator $\Phi$ in \eqref{eq-unproblem}
represents the system matrix.
For numerical experiments,
the projection data are simulated from the standard Shepp-Logan phantom with
256 equally spaced parallel X-ray beams over a 180$^\circ$ scan with
$1^{\circ}$ angular increment; see Figure \ref{fig-SLp}.
In particular, we consider an undersampled  pattern, characterized by
a full scan with an angular gap between $60^{\circ}$ and $90^{\circ}$, which
results in a sampling rate of $82.78\%$.
Moreover, as shown in \cite{XYMZHW12}, the measurements obey a Poisson distribution;
therefore, we also consider the case of data corrupted by Poisson noise.
To be precise, we simulate a noise intensity with
the relative projection Poisson noise level of $0.5\%$ and an intensity of $1\%$.

\begin{figure}[H]
  \centering
  \begin{tabular}{ccccc}
  \includegraphics[width=3.25cm]{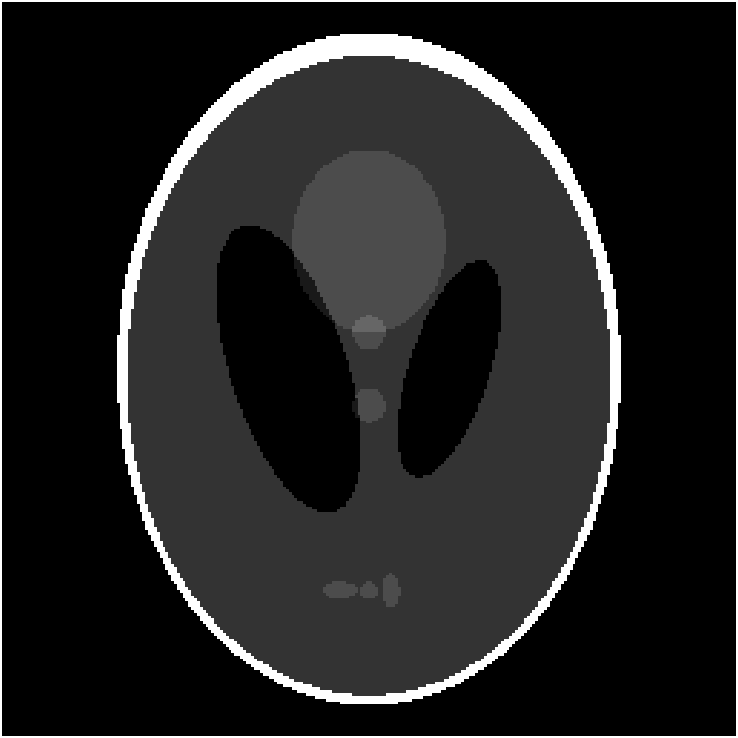}&
  \includegraphics[width=2.3cm]{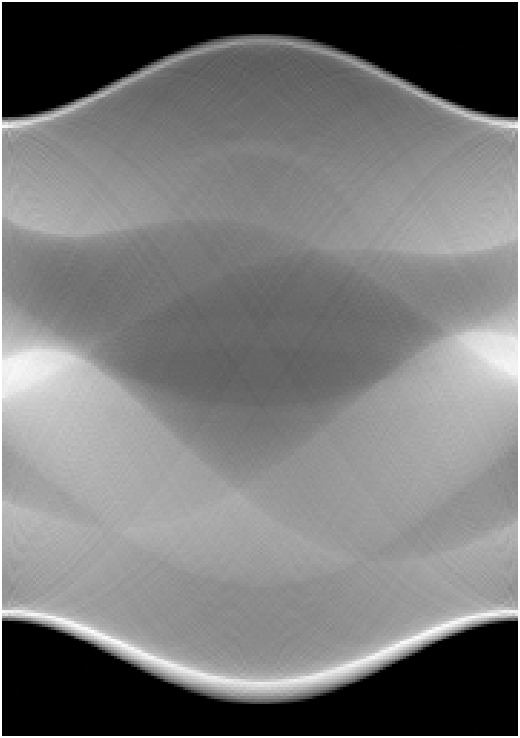}&
  \includegraphics[width=2.3cm]{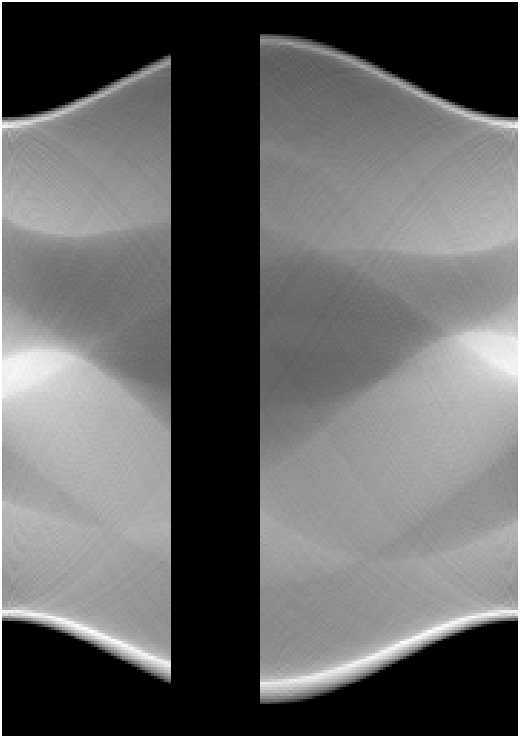}&
  \includegraphics[width=2.3cm]{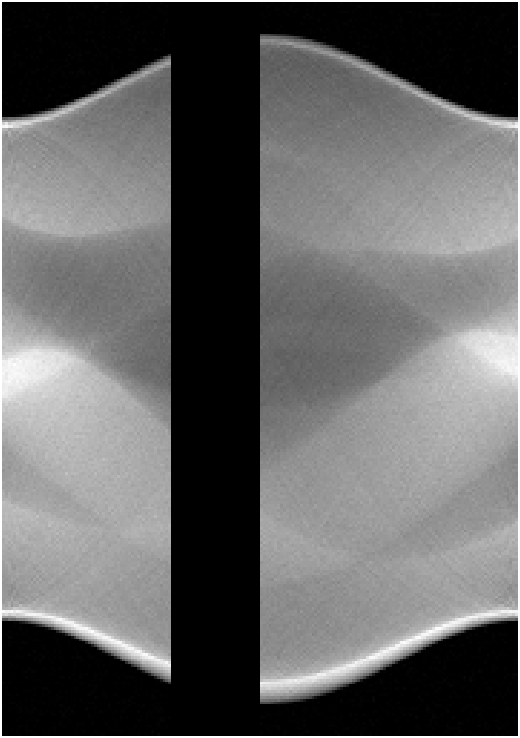}&
  \includegraphics[width=2.3cm]{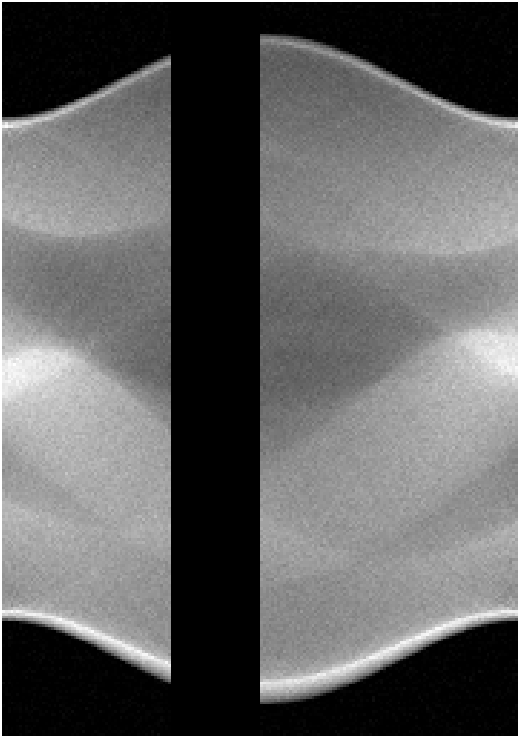}\\
(a) & (b) & (c) & (d) & (e)
  \end{tabular}
  \caption{(a) Shepp-Logan (SL) phantom; (b) Radon transform of the SL;
  (c) incomplete Radon transform of the SL with an angular gap $[60^\circ,90^\circ]$;
  (d) incomplete Radon transform of the SL with an angular gap
  $[60^\circ,90^\circ]$, corrupted by 0.5\% Poisson noise;
  (e) incomplete Radon transform of the SL with an angular
  gap $[60^\circ,90^\circ]$, corrupted by 1\% Poisson noise.}\label{fig-SLp}
\end{figure}

The stopping criteria of the IRLSPSV algorithm
are $k_{\mathrm{out}}=15$ and $\varepsilon_{\mathrm{out}}=10^{-7}$ for the outer loop,
$k_{\mathrm{mid}}=5$ and $\varepsilon_{\mathrm{mid}}=10^{-4}$ for the mid loop,
and $k_{\mathrm{inn}}=30$ and  $\varepsilon_{\mathrm{inn}}=5.0\times10^{-5}$ for the inner loop.
Furthermore, it is particularly noteworthy that the parameter $s$ in Algorithm \ref{alg1} is set to $\lfloor0.1N^2+0.5\rfloor$ here.

Taking the filtered back-projection (FBP) reconstruction as a baseline,
we conduct comparative experiments to evaluate the performance
between the PSV$_{a,p}$  against the TV, the $L_1-\alpha L_2$, and the TTV methods.
The quality metrics (the PSNR, the SSIM, and the GMSD) of each reconstructed image
and the total runtime  per method are compared in
Table \ref{table-CT}. Visually, reconstructed images are presented in Figure \ref{fig-CT}.

\begin{table}[H]
\centering
\resizebox{1\textwidth}{!}{
\renewcommand{\arraystretch}{1.3}
\begin{tabular}{c|cccc|cccc|cccc|c}
\hline
\multirow{2}{*}{Methods}
& \multicolumn{4}{c|}{Noise-free} & \multicolumn{4}{c|}{0.5\% Poisson noise}&
\multicolumn{4}{c|}{1\% Poisson noise}&\multirow{2}{*}{Time (s)}\\
\cline{2-13}
& $\lambda$& PSNR &SSIM &GMSD& $\lambda$& PSNR &SSIM &GMSD &
$\lambda$& PSNR &SSIM &GMSD& \\
\hline
FBP & -& 17.9964 &0.5721&0.3323& -& 17.6981&0.4651& 0.3323
& -& 17.3575&0.3089&0.3100&0.0768\\
\hline
TV & 3.0e-4& 31.5012 &0.9826 &0.1412& 7.0e-4& 28.9558&0.9666&0.1691
&1.2e-3& 27.1840 &0.9345&0.2053&77.5378\\
\hline
$L_1-\alpha L_2$ & 2.0e-4& 31.4075 &0.9825 &0.1424& 5.0e-4& 28.9405 &0.9667&0.1713
&8.0e-4& 27.1850&0.9341 &0.2117&91.1185\\
\hline
TTV& 1.0e-4& 31.2146&0.9817 &0.1461& 3.0e-4& 29.1654&0.9657 &0.1792
& 6.0e-4& 27.3072 &0.9328 &0.2113&76.9359 \\
\hline
PSV$_{a,p}$ & 1.2e-5& 30.5128&\textbf{0.9838} &\textbf{0.1146}
& 3.0e-5& 29.6134&\textbf{0.9739} &\textbf{0.1171}
& 5.5e-5& 25.7103&\textbf{0.9384} &\textbf{0.1556}& 80.1648\\
\hline
\end{tabular}}
\caption{Comparison of quality metrics for reconstructed
images across multiple scenarios  by different methods.}\label{table-CT}
\end{table}

\begin{figure}[H]
  \centering
  \begin{tabular}{ccccc}
  \includegraphics[width=2.5cm]{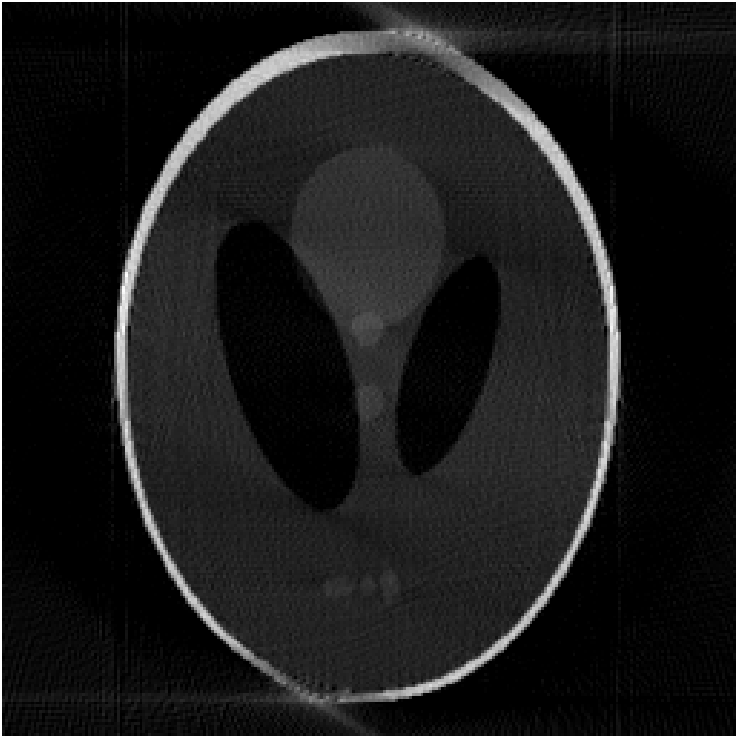}&
  \includegraphics[width=2.5cm]{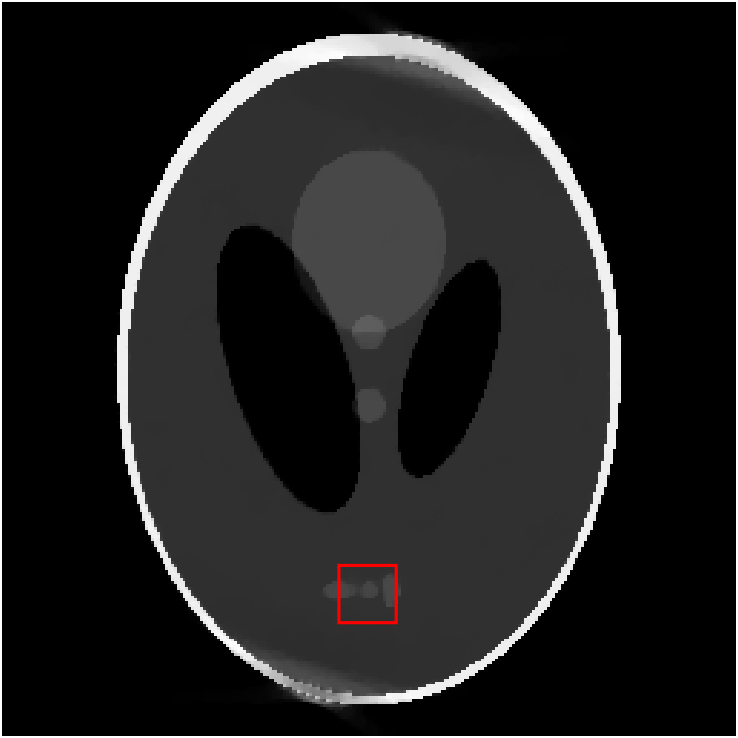}&
  \includegraphics[width=2.5cm]{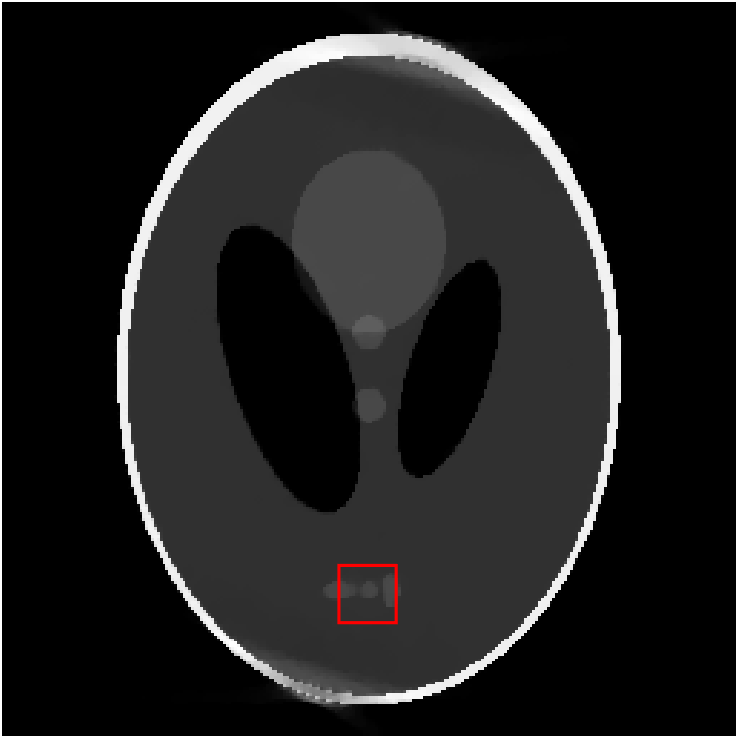}&
  \includegraphics[width=2.5cm]{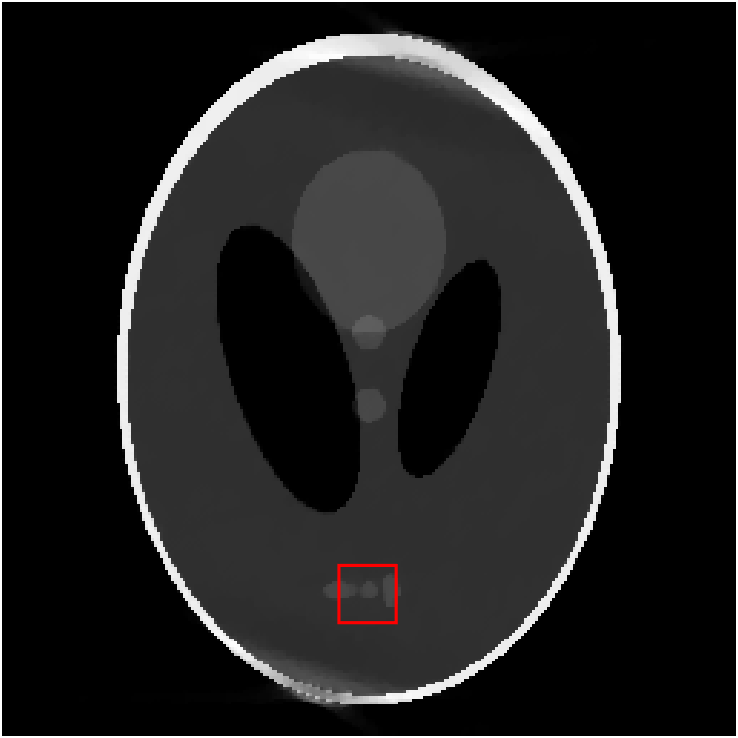}&
  \includegraphics[width=2.5cm]{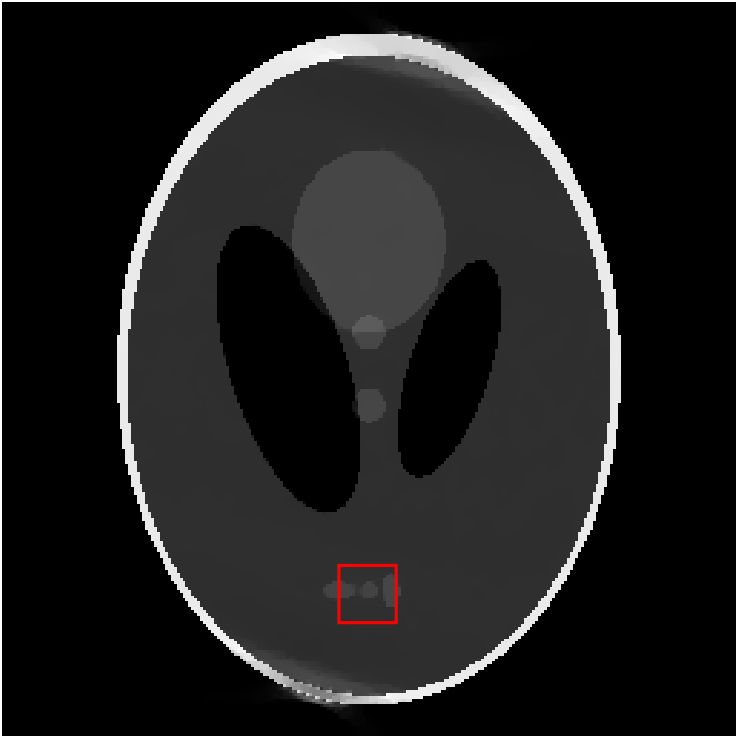}\\
  \includegraphics[width=2.5cm]{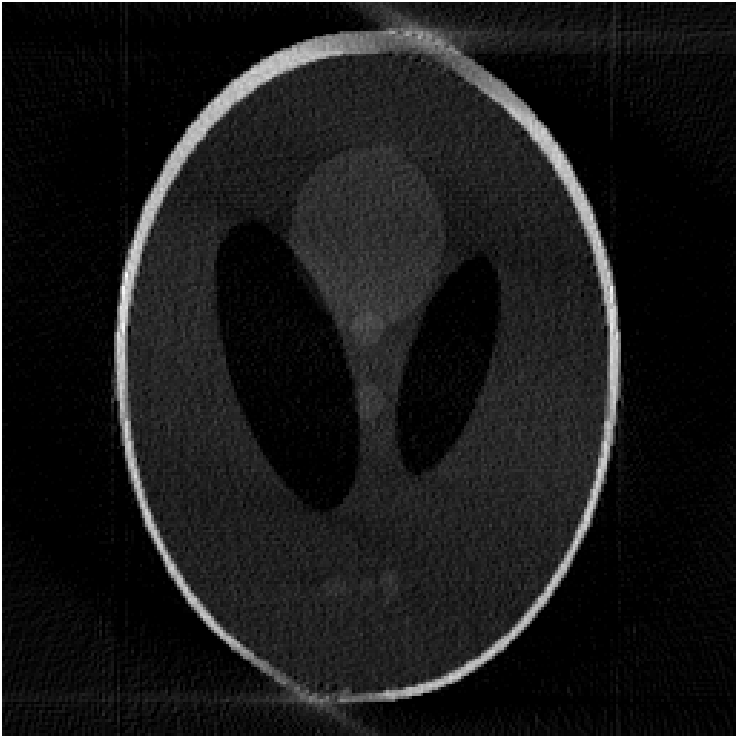}&
  \includegraphics[width=2.5cm]{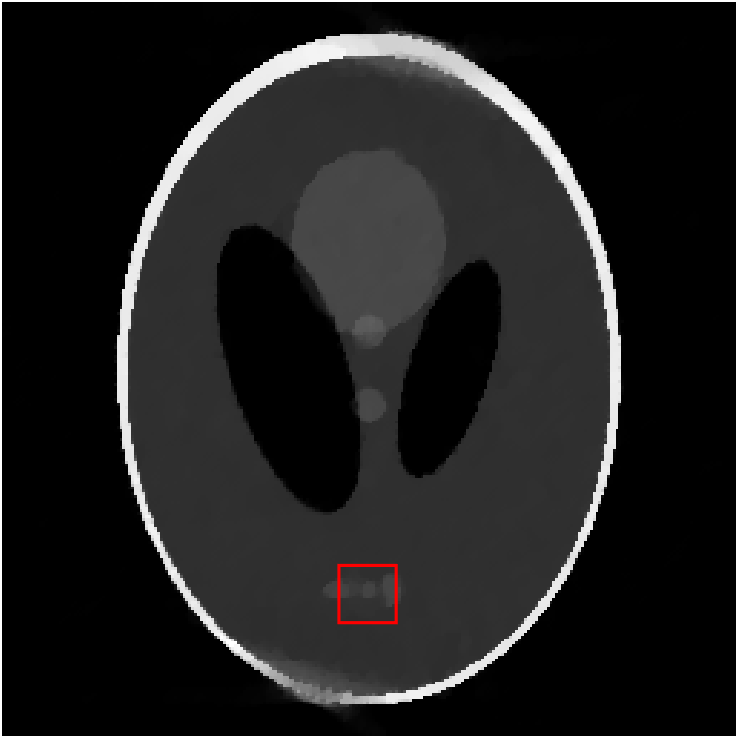}&
  \includegraphics[width=2.5cm]{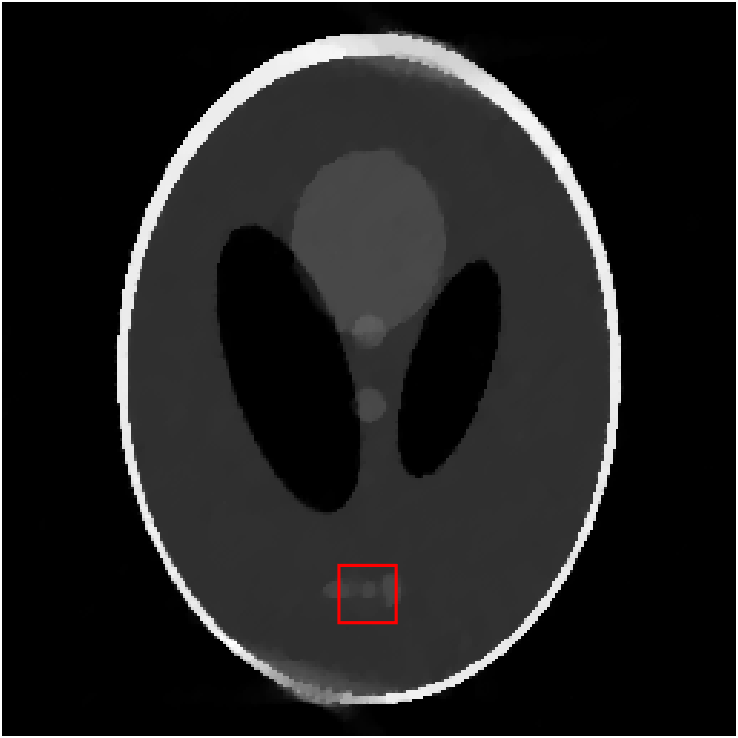}&
  \includegraphics[width=2.5cm]{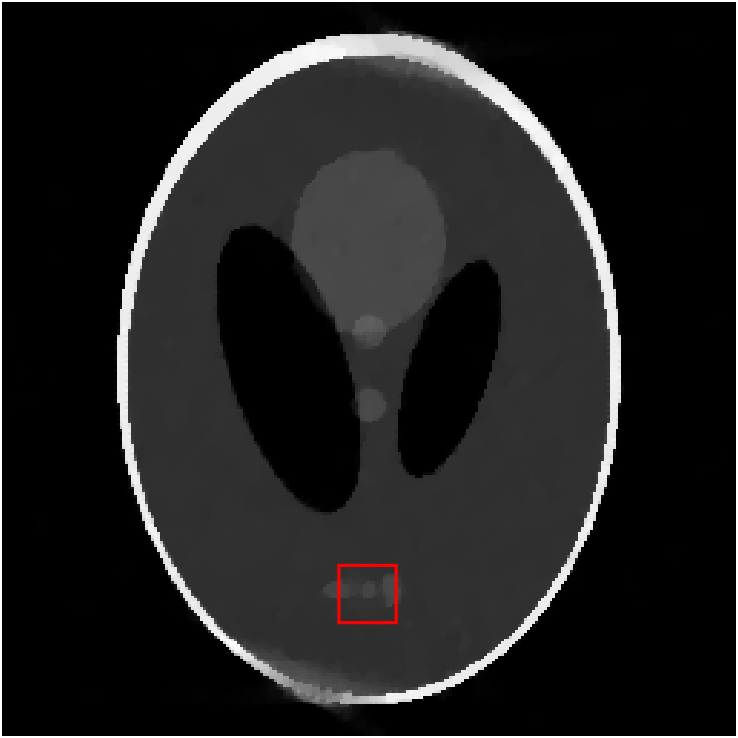}&
  \includegraphics[width=2.5cm]{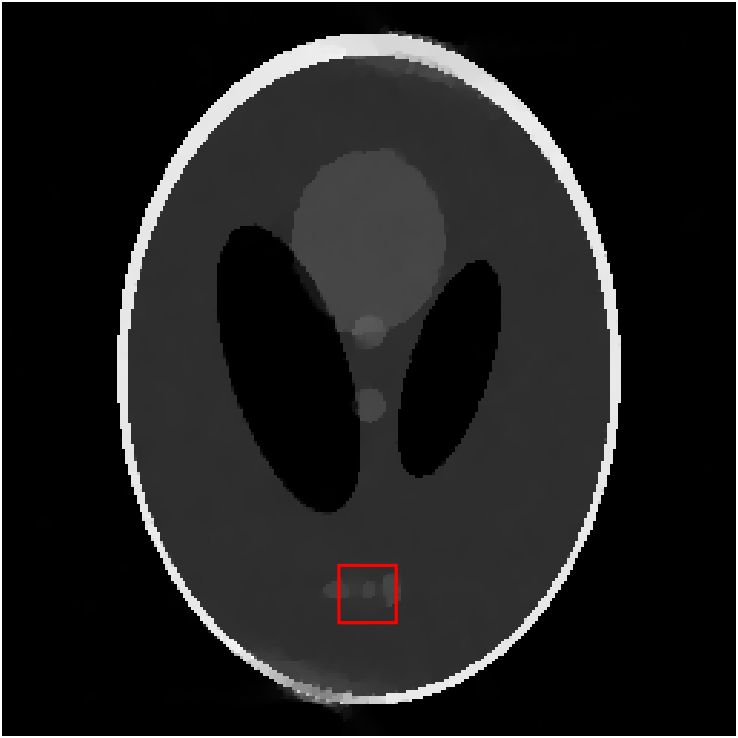}\\
  \includegraphics[width=2.5cm]{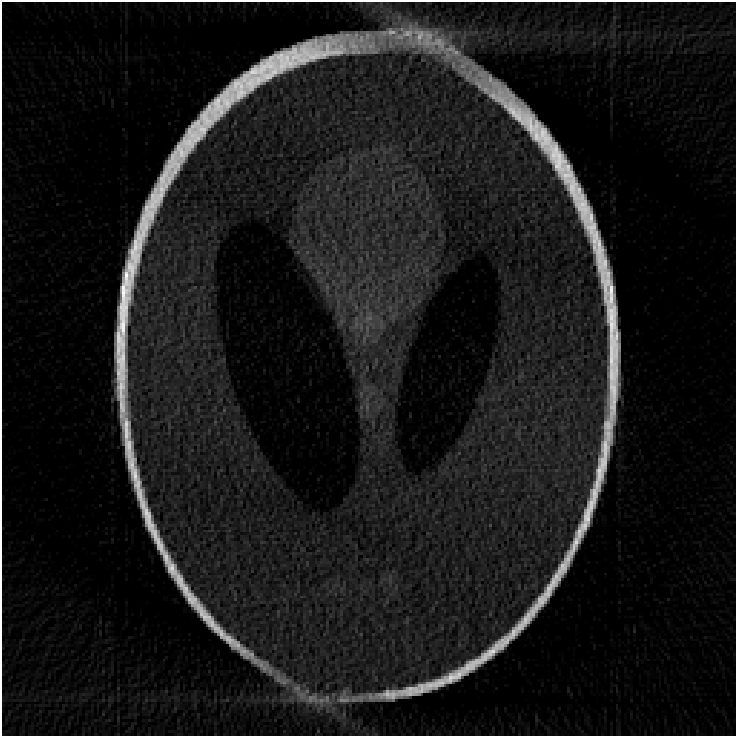}&
  \includegraphics[width=2.5cm]{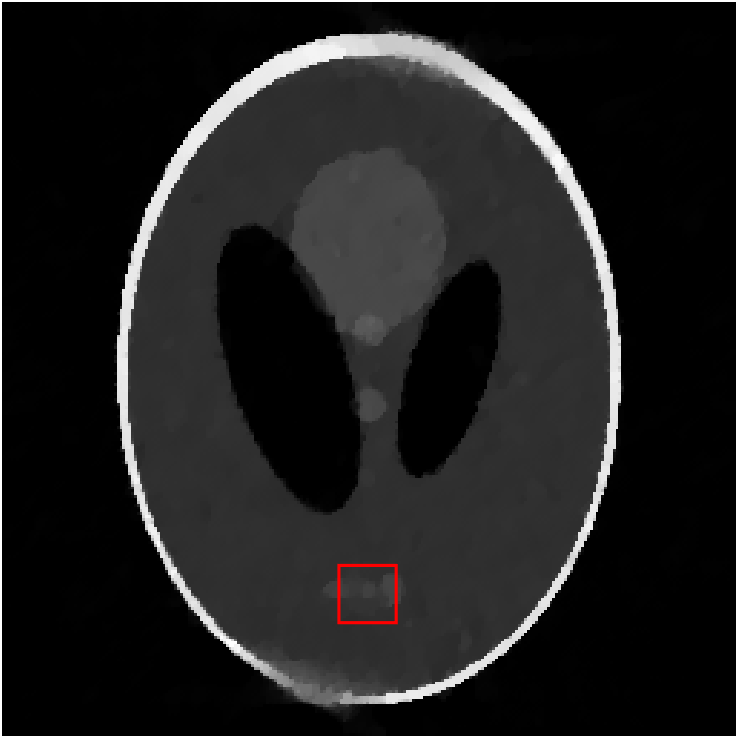}&
  \includegraphics[width=2.5cm]{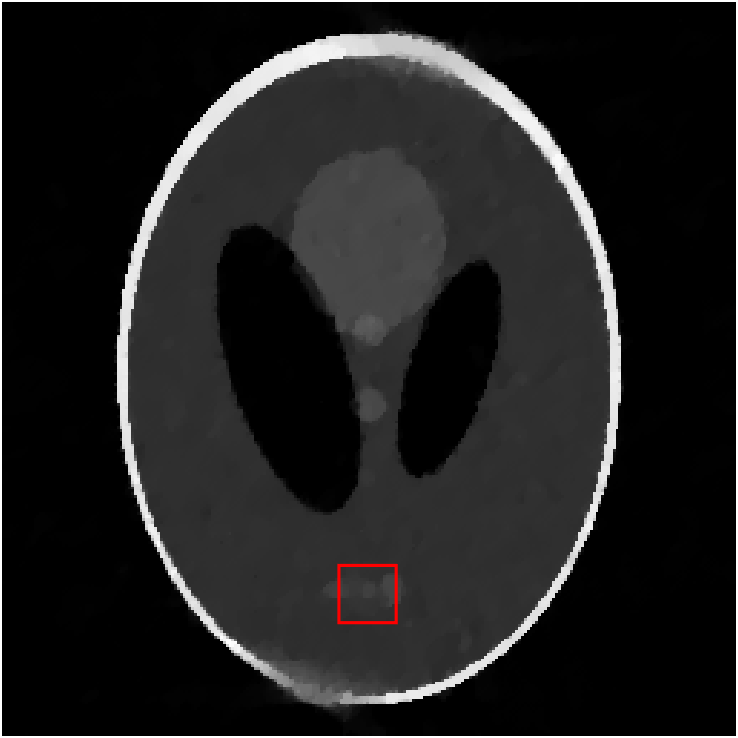}&
  \includegraphics[width=2.5cm]{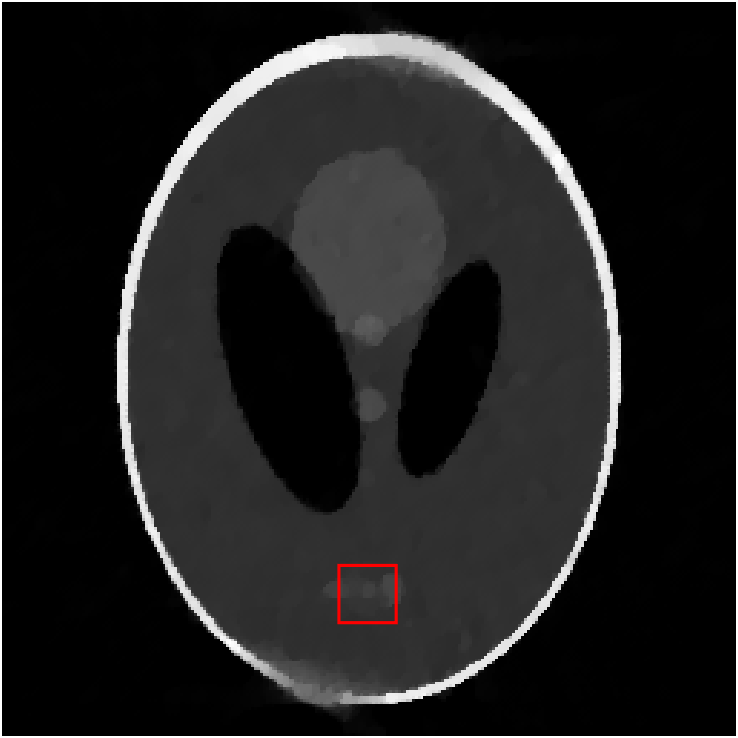}&
 \includegraphics[width=2.5cm]{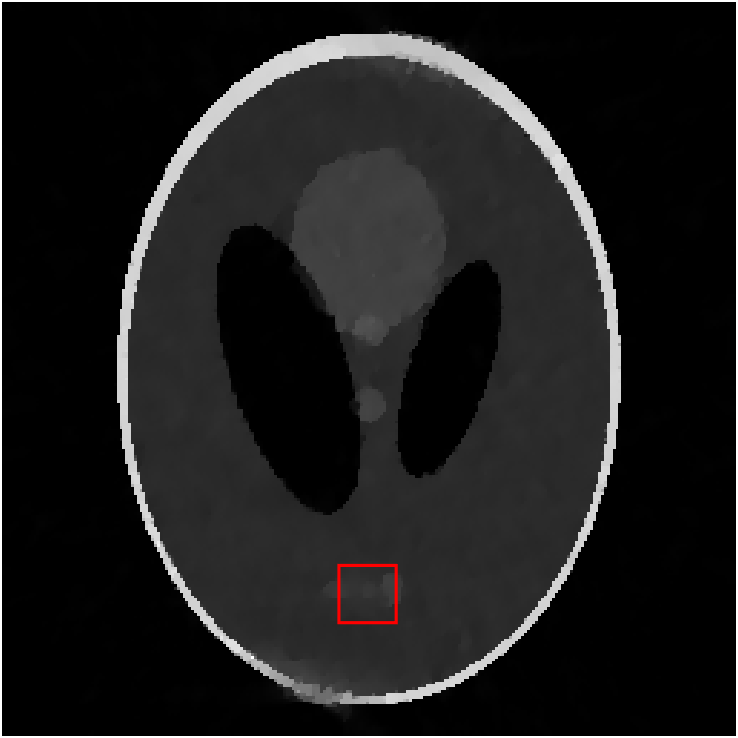}\\
FBP & TV &$L_1-\alpha L_2$&TTV& PSV$_{a,p}$
  \end{tabular}
  \caption{Reconstructed images across multiple scenarios  by different methods.
  From  top to bottom, images are reconstructed in noise-free,  $0.5\%$ Poisson noise, and $1\%$ Poisson noise scenarios.}\label{fig-CT}
\end{figure}

As shown in Table \ref{table-CT} and Figure \ref{fig-CT},
the FBP method fails to provide satisfactory  reconstructions
while all the other methods successfully​ enhance the quality metrics and
suppress the staircase artifacts. In particular, Table \ref{table-CT}
indicates that the proposed PSV$_{a,p}$
achieves the lowest GMSD among all comparative methods despite
a slight drop in the PSNR for certain experiments.
Furthermore, the zoom-in view of the red-box regions
of each reconstructed image by these comparative methods
is shown in Figure \ref{fig-CTred}.  As   illustrated,
the PSV$_{a,p}$ method further effectively suppresses artifacts in flat regions
while still remaining highly competitive in edge preservation.
Here, the parameters are set to $\alpha=0.5$ for the $L_1-\alpha L_2$,
$a=1$ for the TTV, and $a=1,\ p=0.5$ for the PSV$_{a,p}$.

\section{Conclusions\label{s5}}

In this article,
we introduce a power-scale variation (PSV$_{a,p}$),
parameterized by the sparsity-inducing exponent
$p\in(0,1]$ and the  scaling factor $a\in(0,\infty)$.
Using the PSV$_{a,p}$ minimization, we establish
the stable recovery in both the gradient and the  image domains
under the RIP  framework.
We also design the IRLSPSV algorithm to solve the unconstrained PSV$_{a,p}$ minimization.
Numerical experiments are conducted in three scenarios
including the natural image reconstruction, the MRI  reconstruction,
and the CT reconstruction, which demonstrate the superior performance
and broad applicability of the proposed  PSV$_{a,p}$ model.
The main novelties include that
(i) the PSV$_{a,p}$ minimization  promotes the great flexibility and
the wide applicability compared to single-parameter formulation
due to its two parameters $a$ and $p$,
(ii) in the limiting case $a\to\infty$, the proposed PSV$_{a,p}$ minimization reduces to
the $p$-th power total variation (TV$_p$) minimization,
which, to our knowledge, has not been previously investigated  under the RIP framework,
(iii) applying the sparse convex combination technique, we obtain an asymptotically optimal (in $a$) RIP upper bound
$\overline{\delta}$ for gradient recovery.
To be precise, as $a\to\infty$, $\overline{\delta}$ reduces to $\delta_p$
which is proved to be the tightest upper bound for gradient recovery
via the TV$_p$  minimization,
(iv) sensitivity analysis corroborates the theoretical
finding that $a$ and $p$ play distinct roles
in the PSV$_{a,p}$ minimization, which motivates
a parameter tuning scheme to enhance the practical utility of this model.

\begin{figure}[t]
  \centering
  \begin{tabular}{cccc}
  \includegraphics[width=3.1cm]{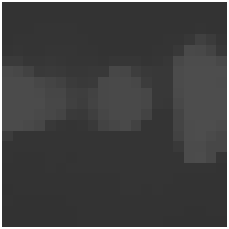}&
   \includegraphics[width=3.1cm]{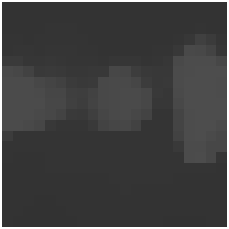}&
  \includegraphics[width=3.1cm]{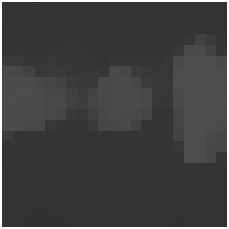}&
  \includegraphics[width=3.1cm]{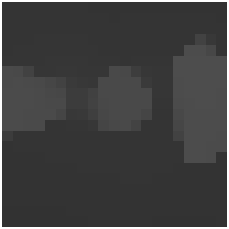}\\
    \includegraphics[width=3.1cm]{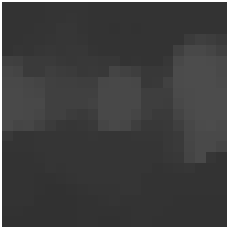}&
     \includegraphics[width=3.1cm]{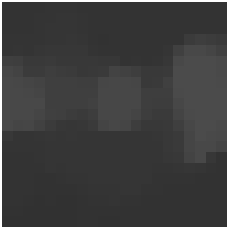}&
  \includegraphics[width=3.1cm]{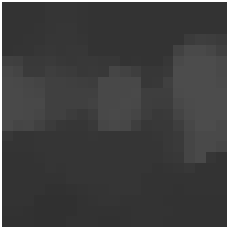}&
  \includegraphics[width=3.1cm]{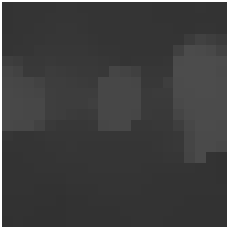}\\
    \includegraphics[width=3.1cm]{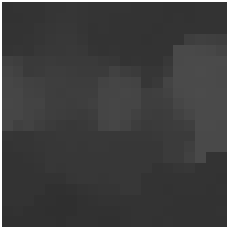}&
     \includegraphics[width=3.1cm]{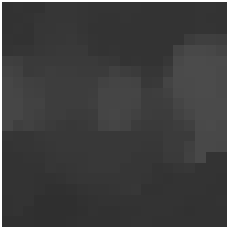}&
  \includegraphics[width=3.1cm]{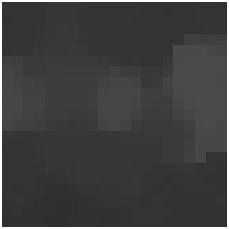}&
  \includegraphics[width=3.1cm]{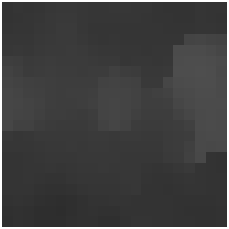}\\
TV &$L_1-\alpha L_2$&TTV& PSV$_{a,p}$
  \end{tabular}
  \caption{Zoom-in view of the regions marked by the red box in Figure \ref{fig-CT}.}\label{fig-CTred}
\end{figure}

\bigskip

\noindent Ziwei Li

\medskip

\noindent Institute of Applied Physics and Computational Mathematics,
Beijing 100088, The People's Republic of China

\smallskip

\noindent{\it E-mail:} \texttt{zwli@buct.edu.cn}

\bigskip

\noindent Wengu Chen

\medskip

\noindent Institute of Applied Physics and Computational Mathematics,
National Key Laboratory of Computational Physics,
Beijing 100088, The People's Republic of China

\smallskip

\noindent{\it E-mail:} \texttt{chenwg@iapcm.ac.cn}

\bigskip

\noindent Huanmin Ge

\medskip

\noindent School of Sports Engineering, Beijing Sport University, Beijing, 100084, China
 The People's Republic of China

\smallskip

\noindent{\it E-mail:} \texttt{gehuanmin@bsu.edu.cn}

\bigskip

\noindent Limei Huo

\medskip

\noindent School of Mathematics and Statistics, Henan University
of Science and Technology, Luoyang, 471023,
The People's Republic of China

\smallskip

\noindent{\it E-mail:} \texttt{limeih@haust.edu.cn}

\bigskip

\noindent Dachun Yang (Corresponding author)

\medskip

\noindent Laboratory of Mathematics and Complex Systems (Ministry of Education of China),
School of Mathematical Sciences, Institute for Advanced Study,
Beijing Normal University, Beijing, 100875,
The People's Republic of China

\smallskip

\noindent{\it E-mail:} \texttt{dcyang@bnu.edu.cn}


\begin{thebibliography}{99}

\bibitem{asf17}
S. A. Abbas, Q. Sun and H. Foroosh,
An exact and fast computation of discrete
Fourier transform for
polar and spherical grid,
IEEE Trans. Signal Process. 65 (2017), 2033--2048.

\vspace{-0.3cm}

\bibitem{AM15} L. Adhikari and R. F. Marcia, $p$-th power
total variation regularization in photon-limited imaging via iterative reweighting,
European Signal Processing Conference 23 (2015),  1621--1625.

\vspace{-0.3cm}

\bibitem{bshyh2007}
N. Bi, Q. Sun,  D.  Huang,  Z. Yang and J. Huang,
Robust image watermarking based on multiband wavelets and
empirical mode decomposition,
IEEE Trans. Image Process. 16 (2007), 1956--1966.

\vspace{-0.3cm}

\bibitem{BKP10} K. Bredies, K. Kunisch and T. Pock,  Total generalized variation,
SIAM J. Imaging Sci. 3 (2010),  492--526.

\vspace{-0.3cm}

\bibitem{CZ13} T. Cai and A. Zhang, Sharp RIP bound for sparse signal and
low-rank matrix recovery,
 Appl. Comput. Harmon. Anal. 35 (2013), 74--93.

\vspace{-0.3cm}

\bibitem{CZ14} T. Cai and A. Zhang, Sparse representation of a polytope
and recovery in sparse signals and low-rank matrices,
 IEEE Trans. Inform. Theory 60 (2014), 122--132.

\vspace{-0.3cm}

\bibitem{CRT06a} E. J. Cand\`{e}s, J. K.  Romberg and T. Tao,
Stable signal recovery from incomplete and inaccurate measurements,
Comm. Pure Appl. Math. 59 (2006), 1207--1223.

\vspace{-0.3cm}

\bibitem{CRT06b} E. J. Cand\`{e}s, J. K.  Romberg and T. Tao,
Robust uncertainty principles: exact signal reconstruction from highly incomplete frequency information, IEEE Trans. Inform. Theory 52 (2006), 489--509.

\vspace{-0.3cm}

\bibitem{CT05} E. J. Cand\`{e}s and T. Tao,  Decoding by linear programming,
IEEE Trans. Inform. Theory 51 (2005), 4203--4215.

\vspace{-0.3cm}

\bibitem{CMM01} T. Chan,  A. Marquina  and P.  Mulet, High-order total variation-based image restoration,
SIAM J. Sci. Comput. 22 (2001),  503--516.

\vspace{-0.3cm}

\bibitem{C07} R. Chartrand,  Exact reconstruction of sparse signals via
nonconvex minimization, IEEE Signal Process. Lett. 14 (2007), 707--710.

\vspace{-0.3cm}

\bibitem{ccls} C. Cheng, Y. Chen, Y. J. Lee and Q. Sun, SVD-based graph
Fourier transforms on directed product graphs,
IEEE Trans. Signal Inform. Process. Netw. 9 (2023), 531--541.

\vspace{-0.3cm}

\bibitem{DDFG10} I. Daubechies, R. Devore, M. Fornasier and C. S. G\"{u}nt\"{u}rk,
   Iteratively reweighted least squares minimization for sparse recovery,
   Comm. Pure Appl. Math. 63 (2010), 1--38.

\vspace{-0.3cm}

\bibitem{DZYZN25} M. Ding, X. Zhao, J. Yang, Z. Zhou and M. K. Ng, Bilateral tensor low-rank representation for
insufficient observed samples in multidimensional image clustering and recovery, SIAM J. Imaging Sci. 18 (2025),  20--59.

\vspace{-0.3cm}

\bibitem{D06} D. L. Donoho, Compressed sensing, IEEE Trans. Inform. Theory 52 (2006),
1289--1306.

\vspace{-0.3cm}

\bibitem{FL09} S. Foucart and M. J. Lai,  Sparsest solutions of
underdetermined linear systems via $l_q$-minimization
for $0 <q\le1$, Appl. Comput. Harmon. Anal. 26 (2009),
395--407.

\vspace{-0.3cm}

\bibitem{FR13} S. Foucart and H. Rauhut,
A Mathematical Introduction to Compressed Sensing, Birkh\"{a}user, Boston, 2013.

\vspace{-0.3cm}

\bibitem{HL25} G. Huang and S. Li,
Low-rank Toeplitz matrix restoration: descent cone
analysis and structured random matrix,
IEEE Trans. Inform. Theory 71 (2025), 3950--3956.

\vspace{-0.3cm}

\bibitem{HCGN22} L. Huo, W. Chen, H. Ge and M. K. Ng, Stable
image reconstruction using transformed total variation minimization,
SIAM J. Imaging Sci. 15 (2022), 1104--1139.

\vspace{-0.3cm}

\bibitem{HCGN23} L. Huo, W. Chen, H. Ge and M. K. Ng, $L_1 - \beta L_q$
minimization for signal and image recovery,
SIAM J. Imaging Sci. 16 (2023),  1886--1928.

\vspace{-0.3cm}

\bibitem{LXY13} M. Lai, Y. Xu and W. Yin, Improved iteratively reweighted
least squares for unconstrained smoothed $\ell_q$ minimization,
SIAM J. Numer. Anal. 51 (2013),  927--957.

\vspace{-0.3cm}

\bibitem{LMS16} A. Lanza, S. Morigi and F. Sgallari, Constrained TV$_p$-$\ell_2$ model for image restoration,
J. Sci. Comput. 68 (2016),  64--91.

\vspace{-0.3cm}

\bibitem{lcgy} Z. Li, W. Chen, H. Ge, and D. Yang, A novel
two-parameter penalty: relaxation degree analysis and
sparse signal recovery, Submitted or arXiv:2603.09722.

\vspace{-0.3cm}

\bibitem{LYHX15}  Y. Lou, P. Yin, Q. He and J. Xin, Computing
sparse representation in a highly coherent dictionary based on difference of
$L_1 $ and $L_2$, J. Sci. Comput. 64 (2015),  178--196.

\vspace{-0.3cm}

\bibitem{LHLXJZ24} J. Lu, L. Huang, X. Liu, N. Xie, Q. Jiang and Y. Zou,
3D Poissonian image deblurring via patch-based tensor logarithmic Schatten-$p$ minimization,
Inverse Problems 40 (2024),  Paper No. 065010, 28 pp.

\vspace{-0.3cm}

\bibitem{lxhljml} J. Lu, C. Xu, Z. Hu, X. Liu, Q. Jiang, D. Meng and Z. Lin,
A new nonlocal low-rank regularization method with
applications to magnetic resonance image denoising,
Inverse Problems 38 (2022), Paper No. 065012, 24 pp.

\vspace{-0.3cm}

\bibitem{LLCZWY17} W. Lu, L. Li, A. Cai, H. Zhang, L. Wang and B. Yan,
A weighted difference of $L_1$ and $L_2$ on the gradient minimization based on alternating direction
method for circular computed tomography, J. X-Ray Sci.  Tech. 25 (2017), 813--829.

\vspace{-0.3cm}

\bibitem{LLCS17}
S. Luo, Q. Lv, H. Chen and J. Song, Second-order total variation and primal-dual algorithm for CT image reconstruction,
Internat. J. Numer. Anal. Model. 14 (2017), 76--87.

\vspace{-0.3cm}

\bibitem{MN25} Z. Ma and M. K. Ng, Multispectral image restoration by generalized opponent transformation total variation,
SIAM J. Imaging Sci. 18 (2025),  246--279.

\vspace{-0.3cm}

\bibitem{NW13} D. Needell and R. Ward, Stable image reconstruction using total variation minimization, SIAM J. Imaging Sci. 6 (2013),  1035--1058.

\vspace{-0.3cm}

\bibitem{NW13b} D. Needell and R. Ward, Near-optimal compressed sensing guarantees for total variation minimization,
IEEE Trans. Image Process. 22 (2013), 3941--3949.

\vspace{-0.3cm}

\bibitem{PMLC20} Z.-F. Pang, G. Meng, H. Li, and K. Chen,  Image restoration via the adaptive TV$_p$ regularization,
Comput. Math. Appl. 80 (2020), 569--587.

\vspace{-0.3cm}

\bibitem{PZLA20} Z.-F. Pang, H.-L. Zhang, S. Luo and T. Zeng,
Image denoising based on the adaptive weighted TV$_p$ regularization,
Signal Process. 167 (2020), Article No. 107325, 14 pp.

\vspace{-0.3cm}

\bibitem{PMGC12} V. M. Patel, R. Maleh; A. C. Gilbert and R. Chellappa, Gradient-based image recovery methods
from incomplete Fourier measurements, IEEE Trans. Image Process. 21 (2012),  94--105.

\vspace{-0.3cm}

\bibitem{DL98} T. Pham Dinh and H. A.  Le Thi, A d.c. optimization
algorithm for solving the trust-region subproblem. SIAM J. Optim.
8 (1998), 476--505.

\vspace{-0.3cm}

\bibitem{P15} C. Poon, On the role of total variation in compressed sensing, SIAM J. Imaging Sci. 8 (2015),  682--720.

\vspace{-0.3cm}

\bibitem{ROF} L. Rudin,  S. Osher and E. Fatemi, Nonlinear total variation based noise removal algorithms, Phys. D 60 (1992), 259--268.

\vspace{-0.3cm}

\bibitem{SS08} S. Setzer and G. Steidl,  Variational methods with higher-order
derivatives in image processing, in:
Approximation Theory XII: San Antonio 2007, pp. 360--385,
Nashboro Press, Brentwood, TN, 2008.

\vspace{-0.3cm}

\bibitem{ss04}
L. Shen and Q. Sun,
Biorthogonal wavelet system for high-resolution image reconstruction,
IEEE Trans. Signal Process. 52 (2004), 1997--2011.

\vspace{-0.3cm}

\bibitem{SHB16} Y. Shen,  B. Han  and E. Braverman, Adaptive
frame-based color image denoising, Appl. Comput. Harmon. Anal.
41 (2016), 54--74.

\vspace{-0.3cm}

\bibitem{SX24} Z. Shuang and J. Xiao, Evolutionary weighted Laplace equations with applications in signal decomposition,
SIAM J. Imaging Sci. 17 (2024),  2015--2052.

\vspace{-0.3cm}

\bibitem{SCBP15} E. Y. Sidky, R. Chartrand, J. Boone, X. Pan,
Constrained TV$_p$ minimization for enhanced exploitation
of gradient sparsity: application to CT image reconstruction,
IEEE J. Transl. Eng. Health Med. 30 (2015), Article Sequence No.
1800418, 18 pp.

\vspace{-0.3cm}

\bibitem{SJP12} E. Y. Sidky, J. H. J{\o}rgensen and X. Pan, Convex optimization problem
prototyping for image reconstruction in computed tomography with the Chambolle-Pock algorithm,
Phys. Med. Biol. 57 (2012), 3065--3091.

\vspace{-0.3cm}

\bibitem{S12} Q. Sun,  Recovery of sparsest signals
via $\ell_q$-minimization, Appl. Comput. Harmon. Anal. 32 (2012),
329--341.

\vspace{-0.3cm}

\bibitem{s14}
Q. Sun,
Localized nonlinear functional equations and two sampling
problems in signal processing,
Adv. Comput. Math. 40 (2014), 415--458.

\vspace{-0.3cm}

\bibitem{TPHD20} D. N. H. Thanh,  V. B. S. Prasath, L. M. Hieu and  S. Dvoenko,
An adaptive method for image restoration based on high-order total variation and inverse gradient,
Signal Image and Video Process. 14 (2020),  1189--1197.

\vspace{-0.3cm}

\bibitem{TLZHJN25} Z. Tu, J. Lu, H. Zhu, W. Hu, Q. Jiang and M. K. Ng,
Fully-connected tensor network decomposition and group sparsity for multitemporal images cloud removal,
Inverse Probl. Imaging 19 (2025),  59--86.

\vspace{-0.3cm}

\bibitem{tlzphjl} Z. Tu, J. Lu, H. Zhu, H. Pan, W. Hu, Q. Jiang
and Z. Lu, A new nonconvex low-rank tensor approximation
method with applications to hyperspectral images denoising,
Inverse Problems 39 (2023), Paper No. 065003, 27 pp.

\vspace{-0.3cm}

\bibitem{WBSS04} Z. Wang, A. C. Bovik, H. R. Sheikh and E. P. Simoncelli,  Image
quality assessment: From error visibility to structural similarity, IEEE
Trans. Image Process.  13 (2004),  600--612.

\vspace{-0.3cm}

\bibitem{WDZ15}  J. Wen, D. Li and F. Zhu,   Stable recovery of
sparse signals via $\ell_p$-minimization, Appl. Comput. Harmon. Anal.
38 (2015), 161--176.

\vspace{-0.3cm}

\bibitem{W97} P. Wojtaszczyk,  A Mathematical Introduction
to Wavelets, London Mathematical Society Student Texts 37,
Cambridge University Press, Cambridge, 1997, xii+261 pp.

\vspace{-0.3cm}

\bibitem{WC13}  R. Wu and D.-R. Chen, The improved bounds of
restricted isometry constant for recovery via $\ell_p$-minimization,
IEEE Trans. Inform. Theory 59 (2013), 6142--6147.

\vspace{-0.3cm}

\bibitem{XY25} J. Xiao and C. Yue, A trace principle for fractional Laplacian with an application to image processing,
Matematica 4 (2025),  84--109.

\vspace{-0.3cm}

\bibitem{XY26} J. Xiao and C. Yue,  Fractional $(L_q,\text{BMO})$ interpolating and imaging,
Ann. Math. Sci. Appl. 11 (2026),  229--266.

\vspace{-0.3cm}

\bibitem{XYMZHW12} Q. Xu, H. Yu, X. Mou, L. Zhang, J. Hsieh and G, Wang,
Low-dose X-ray CT reconstruction via dictionary learning, IEEE Trans. Med. Imaging, 31 (2012), 1682--1697.

\vspace{-0.3cm}


\bibitem{XZMB14} W. Xue, L.  Zhang, X.  Mou, A. C. Bovik,  
Gradient magnitude similarity deviation: A
highly efficient perceptual image quality index,  IEEE Trans. Image Process. 23 (2014), 684--695.

\vspace{-0.3cm}

\bibitem{YL15} J. Yan and W.-S. Lu, Image denoising by generalized total variation regularization and least squares fidelity.
Multidimens. Syst. Signal Process. 26 (2015), 243--266.

\vspace{-0.3cm}

\bibitem{YLHX15} P. Yin, Y. Lou, Q. He and J. Xin, Minimization of
$\ell_{1-2}$  for compressed sensing,
SIAM J. Sci. Comput. 37 (2015),  A536--A563.

\vspace{-0.3cm}

\bibitem{YSS08} J. Yuan, G. Steidl and C. Schn\"{o}rr,
Convex hodge decomposition of image flows, in: Pattern Recognition,
pp. 416--425, Lecture Notes in Comput. Sci. 5096, Springer, Berlin, 2008.

\vspace{-0.3cm}

\bibitem{Z18} A. I. Zayed, Sampling of signals bandlimited to a disc
in the linear canonical transform domain,
IEEE Signal Process. Lett.  25 (2018), 1765--1769.

\vspace{-0.3cm}

\bibitem{Z19} A. I. Zayed, A new perspective on the 
two-dimensional fractional Fourier transform
and its relationship with the Wigner distribution, 
J. Fourier Anal. Appl. 25 (2019),  460--487.

\vspace{-0.3cm}

\bibitem{Z24} A. I. Zayed, Fractional Integral 
Transforms--Theory and Applications, CRC Press, Boca Raton, FL, 2024. xvi+263 pp.

\vspace{-0.3cm}

\bibitem{ZL18} R. Zhang and S. Li, A proof of conjecture on restricted
isometry property constants $\delta_{tk}$ $(0<t<\frac43)$,
IEEE Trans. Inform. Theory 64 (2018),  1699--1705.

\vspace{-0.3cm}

\bibitem{ZL19} R. Zhang and S. Li, Optimal RIP bounds for sparse
signals recovery via $\ell_p$ minimization,
Appl. Comput. Harmon. Anal. 47 (2019), 566--584.

\vspace{-0.3cm}

\bibitem{ZX17} S. Zhang and J. Xin,  Minimization of transformed $L_1$ penalty:
closed form representation and iterative thresholding algorithms,
Commun. Math. Sci. 15 (2017), 511--537.

\vspace{-0.3cm}

\bibitem{ZX18} S. Zhang and J. Xin,   Minimization of transformed $L_1$
penalty: theory, difference of convex function algorithm, and robust
application in compressed sensing, Math. Program. 169 (2018), 307--336.

\vspace{-0.3cm}

\bibitem{ZZWL26} K. Zhao, H. Zhang, J. Wang and Y. Lou, 
Transformed $\ell_1$ regularizations for robust principal
component analysis: toward a fine-grained understanding, 
J. Mach. Learn. (2026), doi.org/10.4208/jml.251003.

\vspace{-0.3cm}

\bibitem{ZC12} W. Zhu and T. Chan, Image denoising using 
mean curvature of image surface, SIAM J. Imaging Sci.
5 (2012), 1--32.

\end{thebibliography}
\end{document}